\documentclass[11pt]{amsart}
\usepackage[margin=1in]{geometry}

\usepackage{amssymb}
\usepackage{mathtools}
\usepackage{upgreek}
\usepackage{enumitem}
\usepackage{booktabs}
\usepackage[noadjust]{cite}
\usepackage[hidelinks]{hyperref}
\usepackage{cleveref}
\usepackage{aliascnt}

\numberwithin{equation}{section}
\newtheorem{theorem}{Theorem}[section]

\newaliascnt{proposition}{theorem}
\newtheorem{proposition}[proposition]{Proposition}
\aliascntresetthe{proposition}

\newaliascnt{lemma}{theorem}
\newtheorem{lemma}[lemma]{Lemma}
\aliascntresetthe{lemma}

\newaliascnt{corollary}{theorem}
\newtheorem{corollary}[corollary]{Corollary}
\aliascntresetthe{corollary}

\theoremstyle{definition}
\newaliascnt{definition}{theorem}
\newtheorem{definition}[definition]{Definition}
\aliascntresetthe{definition}

\theoremstyle{remark}
\newaliascnt{remark}{theorem}
\newtheorem{remark}[remark]{Remark}
\aliascntresetthe{remark}

\newaliascnt{notation}{theorem}
\newtheorem{notation}[notation]{Notation}
\aliascntresetthe{notation}
\crefname{appendix}{Appendix}{Appendices}
\Crefname{appendix}{Appendix}{Appendices}

\newcommand{\E}{\mathbb{E}}
\newcommand{\R}{\mathbb{R}}
\newcommand{\N}{\mathbb{N}}

\newcommand{\cC}{\mathcal{C}}
\newcommand{\cF}{\mathcal{F}}
\newcommand{\cN}{\mathcal{N}}
\newcommand{\cS}{\mathcal{S}}

\newcommand{\Var}{\operatorname{Var}}
\newcommand{\Cov}{\operatorname{Cov}}
\newcommand{\supp}{\operatorname{supp}}

\title[Constructive neural network field theory]{Constructive Neural Network Field Theory:\\ $\phi_2^4$ in Finite Volume}
\author{Samuel Frank}
\date{\today}
\address{Department of Physics, Northeastern University, Boston, MA 02115, USA}
\address{The NSF Institute for Artificial Intelligence and Fundamental Interactions (IAIFI)}
\email{frank.sam@northeastern.edu}

\begin{document}

\begin{abstract}
Neural network field theory specifies a Euclidean field theory by a network architecture and a density on its parameters, so that correlation functions are integrals over those parameters.  We introduce constructive neural network field theory and use it to realize finite-volume scalar $\phi^4_2$ as a limit of network measures.  For each dyadic momentum shell of the free propagator, the field is given a separate set of neurons with Gaussian coefficients, uniform phases, and frequencies drawn from the shell's spectral density.  Once the frequencies and phases are fixed, integrating out one shell's coefficients is a finite-dimensional integral against a Gaussian density reweighted by the Wick-ordered quartic interaction.  For widths that grow fast enough with the shell index and almost every draw of the frequencies and phases, the reweighted density is uniformly log-concave for every shell above a threshold, at any coupling in two and three dimensions, and the fluctuations of those shell fields are held to the free-field scale.  In two dimensions, an induction over scales shows that shell $j$ contributes $O(2^{-2j}j^3)$ to the log partition function, and the total-variation distance between marginals at successive cutoffs is of the same order.  Because these contributions are summable, the ultraviolet cutoff can be removed for any suitable fixed sequence of widths.  Almost surely, the ultraviolet and infinite-width limits commute, and in either order the joint laws of finitely many smeared fields converge to those of the standard finite-volume $\phi^4_2$ measure.  The method of parametrizing a field via a network organized into momentum shells and integrating out one shell at a time is not specific to $\phi^4_2$, so we expect it to extend to other theories.
\end{abstract}

\maketitle

\tableofcontents

\section{Introduction}\label{sec:intro}

\subsection{Constructive neural network field theory}
\label{sec:cnnft}

Neural network field theory (NN-FT) defines a Euclidean field theory in terms of a network architecture and a density on the network's parameters \cite{HalversonNN1,HalversonNN2}. At finite width, correlation functions are ordinary finite-dimensional expectations,
\begin{equation}\label{eq:nnft-correlator}
    G_n(x_1,\ldots,x_n)
    \;=\; \int d\theta\,p(\theta)\,
    \phi_\theta(x_1)\cdots\phi_\theta(x_n)\;,
\end{equation}
where $p(\theta)$ is the density and $\phi_\theta(x)$ is the field parametrized by $\theta$. A growing area of research has asked which structures of field theory can be built into an architecture and a density.  Among those realized so far are Euclidean invariance \cite{HalversonNN2}, locality \cite{DemirtasNonGaussian}, conformal symmetry \cite{HalversonConformal,DograConformalSpinning,LeutheusserNaskarAlgebraic}, fermions and supersymmetry \cite{FrankFermions}, and topological sectors \cite{FerkoTopological}.  The approach applies to quantum mechanics as well \cite{FerkoHalversonQM}, and has been used to compute string amplitudes \cite{FrankString}, to derive Ward identities and anomalies \cite{FerkoAnomalies}, and to optimize numerical sampling methods for scalar fields \cite{Zhang2026}.  See \cite{HalversonLectures} for a survey.

This paper asks instead whether the theory specified by a network exists as a measure on distributions.  A Euclidean (scalar) field theory is a probability measure on the space $\cS'(\R^d)$ of tempered distributions \cite[\S6.1]{GlimmJaffe1987}, and in $d\ge2$ the fields are not functions. A network with finitely many parameters defines a measure on a finite-dimensional space, and its fields are ordinary functions.  The theory can appear only once the width is sent to infinity and the ultraviolet cutoff is removed, and whether it appears at all is a question about those limits.

We introduce \emph{constructive neural network field theory} to build field theories as limits of network measures. The closest existing statement in the NN-FT literature is a representation theorem: every probability measure on $\cS'(\R^d)$ admits a network description with countably many parameters \cite{FerkoUniversality}, so any scalar theory that exists can be represented. There the measure is given, and the architecture is shown to exist. A construction must go the other way: fix the architecture and the density, and show that the measures they define converge.  As the first case of constructive NN-FT, we construct the finite-volume Euclidean $\phi^4_2$ measure. Existence of this theory was settled in the 1970s \cite{Simon1974, GlimmJaffe1987}, so this work contributes the method rather than the result.

\subsection{The construction}\label{sec:construct}

For each dyadic momentum shell of the free propagator, the field is given its own set of neurons. Fix a mass $m>0$, and let $Q=(-\Delta+m^2)^{-1}$ be the free covariance. We decompose $Q$ into dyadic shells as $Q=\sum_{j\ge0}Q_j$, with $Q_j$ carrying momenta of order $2^j$.  The field is then
\begin{equation}\label{eq:nnft-idea}
\begin{gathered}
    \phi_\Lambda(x) \;=\; \sum_{j=0}^{\Lambda}\varphi_j(x)\;,
    \qquad
    \varphi_j(x) \;=\; \sum_{\ell=1}^{N_j}
    a_{j,\ell}\,U_{j,\ell}(x)\;,
    \\
    U_{j,\ell}(x) \;=\; \sqrt{\frac{2q_j}{N_j\sigma^2}}\,
    \cos\!\big(b_{j,\ell}\cdot x+c_{j,\ell}\big)\;.
\end{gathered}
\end{equation}
Here $\phi_\Lambda$ is the \emph{scalar field} at cutoff $\Lambda$, $\varphi_j$ is the $j$-th \emph{shell field}, and $U_{j,\ell}$ is its $\ell$-th \emph{neuron}. All parameters are independent, with Gaussian coefficients $a_{j,\ell}$ of variance $\sigma^2$, uniform phases $c_{j,\ell}$, and frequencies $b_{j,\ell}$ drawn from the $j$-th shell's spectral density. These choices make the sampled covariance reproduce $Q_j$ on average (\Cref{lem:cov-mean}).  Equation \eqref{eq:nnft-idea} is a modification of the cos-net scalar architecture introduced in \cite{HalversonNN2}, itself the random Fourier features construction of Rahimi and Recht \cite{RahimiRecht2007}.  The cutoff $\Lambda\in\N$ is the index of the highest shell retained, and we denote the typical ultraviolet cutoff in momentum as $\Lambda_{\mathrm{UV}}=2^\Lambda m$, up to a constant factor.  The width at shell $j$ is $N_j$, and $q_j=Q_j(0)$ is the $j$-th shell's Wick constant, finite because each shell's spectral density has compact support. The interaction in two dimensions is the Wick-ordered quartic,
\begin{equation}\label{eq:intro-interaction}
    V_\Lambda \;=\; \frac{\lambda}{4!}\int g(x)\,
    :\!\phi_\Lambda(x)^4\!:\,dx\;,
\end{equation}
where $\lambda>0$ is the coupling, $g$ is a smooth, nonnegative function whose compact support $K$ is the finite volume on which the theory is defined, and the colons denote Wick ordering at the variance of $\phi_\Lambda(x)$.

The two regulators, $\Lambda$ and $N_j$, are not of the same kind.  The shell cutoff $\Lambda$ fixes which momenta are present, and removing it is the ultraviolet (UV) limit.  The width leaves the momenta untouched but fixes how closely the sampled covariance tracks $Q_j$, and it too must be taken to infinity. Averaged over the frequencies and phases, a finite width also generates interactions suppressed by powers of the inverse width \cite{DemirtasNonGaussian}, and these must be absent from the theory being constructed. The object produced at fixed width is therefore a regularized measure, and the field theory is the double limit.

\subsection{Results}\label{sec:results}

The main results hold in $d=2$ once $N_j\ge c_0(j{+}1)2^{4j}$ for a suitable constant $c_0$, and for almost every draw of the frequencies and phases (\Cref{thm:good-geom}):

\begin{enumerate}[label=(\roman*)]
\item \textbf{Existence at fixed width} (\Cref{thm:d2-existence}).  The partition functions converge, $Z_\Lambda\to Z_\infty\in[1,\infty)$, with shell $j$ contributing $O(2^{-2j}j^3)$.  The finite-shell marginals of the interacting Gibbs measures~\eqref{eq:mu-def} are Cauchy in total variation and define a limit measure $\mu_\infty$ on the full coefficient space.  The smeared field $\langle\phi_\Lambda,f\rangle$ converges in distribution for every $f\in\cS(\R^2)$, and its shell series converges absolutely almost surely.  Wick ordering is the only renormalization, and no smallness of the coupling is assumed.  A larger coupling only raises the threshold above which the estimates hold.

\item \textbf{The two limits commute} (\Cref{thm:interchange}).  Widening the network at fixed cutoff also has a limit, the Gibbs measure of the truncated free field (\Cref{thm:infwidth}), and the ultraviolet and infinite-width limits may be taken in either order without changing the result.

\item \textbf{The limit is the standard measure} (\Cref{thm:LP-standard}, \Cref{cor:d2-construction}).  The common value of the two limits is the finite-volume Wick-ordered $\phi^4_2$ measure built from the full propagator, a probability measure on $\cS'(\R^2)$ that is absolutely continuous with respect to the Gaussian free field.
\end{enumerate}

\subsection{Outline of the proof}\label{sec:outline}

The frequencies and phases are drawn once and then held fixed, leaving the shell field a linear function of its coefficients.  Since different shells have independent coefficients under the Gaussian prior, the log partition function $\log Z_\Lambda$ splits into a sum of one-shell contributions (\Cref{prop:telescope}).  Each comes from integrating the Gibbs factor of the shell interaction against the Gaussian density on one shell's coefficients, with the lower shells held fixed.  The logarithm of this integral is the shell's log-Laplace transform~\eqref{eq:Psi-def}, and the integral provides the normalization for the density of the \emph{tilted measure} with respect to the prior.  In two dimensions, the analysis of a single shell establishes three facts:

\begin{enumerate}
    \item Above a threshold $j_1$, the tilted measures are uniformly log-concave at any coupling (\Cref{cor:BL}).  The Gaussian prior gives the shell's log-density uniform curvature $\sigma^{-2}$, and the quartic can lower that curvature only through its Wick subtraction \eqref{eq:Tzz4}.  The Wick subtraction reaches the Hessian only through the Gram matrix $\Sigma_j$ of the shell's neurons, whose operator norm is $O(2^{-2j})$ in every dimension (\Cref{lem:Schur}).  The curvature is therefore lowered by at most a \emph{one-shell perturbation} of order $\lambda\,D_j\,2^{-2j}$, with $D_j=\sum_{i\le j}q_i$ the cumulative Wick constant (\Cref{prop:hessian}).  Since $D_j$ grows like $j$ in $d=2$ and $2^{(d-2)j}$ in $d\ge3$, the one-shell perturbation decays with increasing $j$ in $d<4$, is constant in $d=4$, and grows in $d>4$.  This is the familiar power counting of the quartic interaction, and dimensions greater than two are discussed in \Cref{sec:higher-dim}.  In two dimensions the perturbation is $O(\lambda\,2^{-2j}j)$, so eventually it is at most half the curvature of the prior.  By the Bakry--\'Emery criterion and the Herbst argument, the centered tilted moments of the shell field are then bounded at the free-field scale (\Cref{cor:tilted-moments}), and the Brascamp--Lieb inequality bounds the variance of any linear functional of the shell's coefficients~\eqref{eq:var-w}.

    \item Above a threshold $j_2\ge j_1$, the tilted mean is bounded on backgrounds that are not too large. On backgrounds bounded on $K$ by a constant times $2^{2j/3}$, the tilted mean is at most $\tfrac12$ (\Cref{lem:mean-shift,lem:d2-good-field}).  The log-Laplace transform is then at most $2^{-2j}$ times a sextic in the background (\Cref{thm:d2-one-shell}).  The proof integrates the interpolation derivative (\Cref{app:interpolation}).

    \item Above a third threshold $j_3$, larger backgrounds are super-exponentially rare under the free measure.  A chaining bound gives a Gaussian tail to the supremum of the background on $K$, and these backgrounds have probability at most $C\exp(-c\,2^{2j/3}/j)$ (\Cref{lem:d2-sup-tail} and \Cref{cor:d2-bad-prob}).  The lower bound $-6D^2$ on the Wick-ordered quartic (\Cref{lem:V-lower}) bounds the Gibbs factor by $e^{C(j+1)^2}$, and the rarity outweighs that bound.
\end{enumerate}

Uniform log-concavity, the bound on the tilted mean, and the rarity of large backgrounds all hold above $j_4=\max\{j_2,j_3\}$.  Together they bound each shell's contribution to $\log Z_\Lambda$ in terms of the sixth moment of the background, and an induction over shells keeps that moment of order $j^3$ under the interacting measure (\Cref{thm:d2-moments}).  Each contribution is therefore $O(2^{-2j}j^3)$, and since these contributions are summable, $\log Z_\Lambda$ converges as $\Lambda\to\infty$ (\Cref{thm:d2-convergence}).  The same rate bounds the total-variation distance between the marginals at successive cutoffs and the expected size of each shell's pairing with a Schwartz function. Consequently, the marginals converge, and the shell series for the smeared field converges absolutely almost surely (\Cref{thm:d2-existence}(b),(c)).

These rates hold uniformly in the width.  The estimates use the features only through the concentration bounds of \Cref{thm:good-geom} and \Cref{lem:unif-gg}, and through deterministic bounds such as $q^\omega_j\le2q_j$.  Every constant in the rates is therefore free of the width (\Cref{prop:uniform-N}), and the Moore--Osgood theorem makes the infinite-width and ultraviolet limits commute (\Cref{thm:interchange}).  The identification with the standard measure uses the $L^4$ convergence of the truncated covariance to $Q$, together with the partition-function bound at twice the coupling, which bounds the exponential of the interaction in $L^2$ uniformly in the cutoff (\Cref{thm:LP-standard}).

\subsection{Relation to existing constructions}
\label{sec:relation}

Existence of the finite-volume $\phi^4_2$ measure is long established.  A lower bound on the interaction at a momentum cutoff, combined with Gaussian moment bounds showing that the part above the cutoff is rarely large, makes the exponential of the Wick-ordered interaction integrable \cite[Thm.~8.6.2]{GlimmJaffe1987}, and hypercontractivity of the free field \cite{Nelson1973} gives those moment bounds directly.  The phase space cell and cluster expansions that reached the infinite-volume $\phi^4_3$ measure at weak coupling integrate out the field one momentum scale at a time \cite{FeldmanOsterwalder1976,MagnenSeneor1976}.  The renormalization group does the same in continuous form, controlling at each scale a field-space effective potential whose evolution in the scale is governed by Polchinski's equation \cite{Polchinski1984}.  Working with a scale decomposition, Barashkov and Gubinelli write the log partition function as the value of a stochastic optimal control problem and use this variational formula to construct finite-volume $\phi^4_3$ \cite{BarashkovGubinelli2020}.  Bauerschmidt and Bodineau turn the Polchinski flow into a criterion for a logarithmic Sobolev inequality.  The criterion requires the Hessian of the effective potential, conjugated by the heat semigroup, to be bounded below at each scale by a number that may be negative, subject to an integrability condition on how those numbers accumulate \cite[Thm.~1.2]{BauerschmidtBodineau2021}.  Their method recovers Bakry--\'Emery \cite{BakryEmery1985} when those numbers are nonnegative, but allowing them to be negative generalizes it to measures that are not log-concave.  Verifying this criterion for $\phi^4$ is difficult, because the effective potential is defined by a Gaussian integral, and bounding its Hessian at a given scale means controlling the variance of the fluctuation measure \cite[Lem.~3.12]{BauerschmidtBodineauDagallier2024}.  Bauerschmidt and Dagallier nonetheless accomplished this for the $\phi^4_2$ and $\phi^4_3$ measures uniformly in the lattice regularization using correlation inequalities for Ising models, the Perron--Frobenius theorem, and bounds on the susceptibility obtained from skeleton inequalities \cite{BauerschmidtDagallier2024}.

Like the renormalization group, the NN-FT construction presented here proceeds one scale at a time.  Integrating out shell $j$ gives the shell log-Laplace transform~\eqref{eq:Psi-def}, a functional of the lower shells.  Unlike the Bauerschmidt--Bodineau criterion, however, the construction never needs the convexity of that functional.  The only convexity it uses is that of the tilted measure, the Gaussian prior on one shell's coefficients reweighted by the shell interaction.  This measure has an explicit polynomial log-density whose Hessian can be bounded directly (\Cref{sec:outline}), and above a fixed shell index it is uniformly log-concave, independent of the lower shells (\Cref{cor:BL}).  Where convexity in the work above yields functional inequalities for measures already constructed, here it yields the construction.

The setting of this paper is simple (two dimensions, finite volume, and a fixed mass), but the construction's appeal lies in the elementarity of its tools.  Hoeffding's inequality, the strong law of large numbers, and Borel--Cantelli control the sampled features, the largest row sum controls the Gram matrix, and a chaining bound on the supremum of a Gaussian field controls large backgrounds.  The rest of the proof uses only standard probability, inequalities of H\"older and Jensen type, Gaussian integration by parts, Hermite orthogonality, and three facts about a uniformly log-concave measure on finitely many coefficients: the Bakry--\'Emery criterion \cite{BakryEmery1985}, the Herbst argument \cite[Thm.~5.3]{ledoux2001concentration}, and the Brascamp--Lieb variance inequality \cite[Thm.~4.1]{BrascampLieb1976}.

\subsection{Organization and conventions}\label{sec:scope-intro}

We treat only the finite-volume ultraviolet problem, not the thermodynamic limit or the Osterwalder--Schrader axioms.  Because the measure constructed is the standard one (\Cref{thm:LP-standard}), both questions may be approached by any method.  Whether they are amenable to NN-FT is discussed in \Cref{sec:open}.

\Cref{sec:arch,sec:good-geom,sec:one-shell} hold in every dimension $d\ge2$.  They set up the architecture and the Wick-ordered quartic interaction, show that the sampled covariance concentrates, and analyze the integral over a single shell.  \Cref{sec:d2-setup,sec:d2-converge} specialize to two dimensions and remove the ultraviolet cutoff at fixed width, and \Cref{sec:d2-standard} takes the width to infinity and identifies the limit.  \Cref{sec:open} collects open problems.  \Cref{app:free-field} shows that the architecture reproduces the Gaussian free field in the infinite-width limit, in every dimension and with no renormalization.  \Cref{app:interpolation} derives the interpolation derivative, and \Cref{app:sup-tail} proves the tail bound for the supremum of the background field.

Throughout, $C$, $C'$, $c$, and $c'$ denote positive constants whose values may change from line to line.  The numbered constants $C_1,C_2,\ldots$ are fixed within the proof in which they appear, and every other named constant, such as $c_0$, $c_q$, or $C_T$, keeps one value throughout the paper.  The \emph{fixed data} are $d$, $m$, $\chi$, $g$, $\sigma$, and $\lambda$.

For positive quantities $f$ and $f'$ we write $f\asymp f'$ when the ratio $f/f'$ is bounded above and below by positive constants depending only on the fixed data.  For a probability measure $\mu$ and an event $A$ we write $\mu(A)$ for the probability of $A$ under $\mu$, and the symbol $\Pr$ is reserved for probabilities over the draw of the frequencies and phases (as in \Cref{thm:good-geom}).  For symmetric matrices, $A\ge B$ means that $A-B$ is positive semidefinite, that is, $v^TAv\ge v^TBv$ for every vector $v$.

\section{The architecture and the interaction}\label{sec:arch}

\subsection{Dyadic decomposition of the propagator}
\label{sec:dyadic}

We work in Euclidean space $\R^d$, $d\ge2$, with mass $m>0$. The free massive scalar field has covariance (propagator) $Q=(-\Delta+m^2)^{-1}$, whose kernel is
\begin{equation}\label{eq:propagator}
    Q(x-y) \;=\; \int_{\R^d}\frac{d^dp}{(2\pi)^d}\,\frac{\cos(p\cdot(x-y))}{p^2+m^2}\,.
\end{equation}
Fix a smooth, radial function $\chi$ taking values in $[0,1]$, not identically zero, supported in the annulus $\{\,p:\tfrac{1}{2}\le|p|\le 2\,\}$, and such that $\chi_0(p):=1-\sum_{j\ge 1}\chi(2^{-j}p)$ is smooth, nonnegative, and supported in $\{|p|\le 2\}$ (such a $\chi$ exists, as in \cite[Prop.~2.10]{BahouriCheminDanchin2011}). Then,
\begin{equation}\label{eq:dyadic}
    \frac{1}{p^2+m^2}
    \;=\; \sum_{j=0}^{\infty} S_j(p)\,,\qquad S_j(p) \;:=\;
    \begin{cases}
    \chi_0(p)/(p^2+m^2) & j=0\;,\\[4pt]
    \chi(2^{-j}p)/(p^2+m^2) & j\ge 1\;.
    \end{cases}
\end{equation}
Decomposing the propagator by scale is a basic tool of the multiscale renormalization group \cite[\S III]{Gallavotti1985}, and \eqref{eq:dyadic} is the Littlewood--Paley decomposition of the propagator \cite[\S2.2]{BahouriCheminDanchin2011}.

For $j\ge 1$, the shell spectral function $S_j$ is supported in the annulus $2^{j-1}\le|p|\le 2^{j+1}$ and satisfies the pointwise bound
\begin{equation}\label{eq:Sj-bound}
    S_j(p) \;=\; \frac{\chi(2^{-j}p)}{p^2+m^2}
    \;\le\; \frac{1}{2^{2(j-1)}}
    \;=\; 4\cdot 2^{-2j}\,,
\end{equation}
since $\chi\le 1$ and $|p|\ge 2^{j-1}$ on the support of $S_j$.  For each $j$, the shell covariance kernel and its diagonal value (the \emph{shell Wick constant}) are
\begin{subequations}\label{eq:Qj}
\begin{align}
    Q_j(x-y) &\;:=\; \int_{\R^d}\frac{d^dp}{(2\pi)^d}\,
    S_j(p)\,\cos\!\big(p\cdot(x-y)\big)\;,
    \label{eq:Qj-kernel}\\[5 pt]
    q_j &\;:=\; Q_j(0)
    \;=\; \int_{\R^d}\frac{d^dp}{(2\pi)^d}\,S_j(p)\,.
    \label{eq:qj-def}
\end{align}
\end{subequations}
Because $S_j$ is smooth and compactly supported, both integrals converge absolutely.  The coincident-point singularity of $Q$ is therefore absent from every individual shell and reappears only in the sum $Q=\sum_{j\ge0}Q_j$ (\Cref{rem:coincident}).  We regularize by replacing $Q$ with its \emph{Littlewood--Paley truncation} $Q_{\le\Lambda}:=\sum_{j\le\Lambda}Q_j$ at level $\Lambda$.

\begin{lemma}[Shell dimensional analysis]
\label{lem:shell-dim}
For $j\ge 1$, the following bounds hold with constants depending only on $d$ and $\chi$, and in (d) also on $\|g\|_\infty$.  The mass $m$ enters only through the lower bounds in (a) and (e) and, when $d=2$, through $q_0$ in the upper bound in (e).
\begin{enumerate}[label=(\alph*)]
\item $q_j\asymp2^{(d-2)j}$.
\item $\|Q_j\|_{L^1(\R^d)}=O(2^{-2j})$.
\item $\|Q_j\|_{L^\infty(\R^d)}=q_j$.
\item For a smooth, nonnegative, compactly supported $g$ with $K:=\supp g$, the row integral $A_j:=\sup_{x\in K}\int g(y)\,|Q_j(x-y)|\,dy$ satisfies $A_j=O(2^{-2j})$.
\item For $J\ge1$, the cumulative Wick constant $D_J:=\sum_{j=0}^{J}q_j$ satisfies $D_J\asymp J$ if $d=2$ and $D_J\asymp2^{(d-2)J}$ if $d\ge3$.
\end{enumerate}
\end{lemma}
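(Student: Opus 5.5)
The whole lemma comes from one scaling identity. For $j\ge1$, substituting $p=2^jk$ in \eqref{eq:Qj-kernel} gives
\begin{equation*}
    Q_j(z)\;=\;2^{(d-2)j}\int_{\R^d}\frac{d^dk}{(2\pi)^d}\,\frac{\chi(k)}{k^2+4^{-j}m^2}\,\cos\!\big(k\cdot 2^jz\big)\;=\;2^{(d-2)j}\,F_j(2^jz)\,,
\end{equation*}
where $F_j$ is the inverse Fourier transform of $h_j(k):=\chi(k)/(k^2+4^{-j}m^2)$. On $\supp\chi\subset\{\tfrac12\le|k|\le2\}$ we have $k^2+4^{-j}m^2\ge\tfrac14$, so the $h_j$ form a family of smooth functions supported in a fixed annulus. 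Each of their derivatives $\partial^\alpha h_j$ is bounded uniformly in $j\ge1$ and $m$, since $4^{-j}m^2$ only enters through the harmless denominator. I will derive (a)–(e) from this representation.

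\textbf{(a)–(c).} For (a), $q_j=2^{(d-2)j}\int h_j\,d^dk/(2\pi)^d$. The upper bound uses $h_j\le 4\chi$. For the lower bound, $h_j\ge\chi/(4+m^2)$ because $4^{-j}m^2\le m^2$ and $|k|\le2$, and $\int\chi>0$ because $\chi$ is not identically zero. This lower bound is where $m$ enters. Part (c) is immediate: $|\cos|\le1$ and $S_j\ge0$ give $|Q_j(z)|\le Q_j(0)=q_j$, with equality at $z=0$. For (b), integrate by parts $2M$ times with $M>d/2$, so that $(1+|y|^2)^M|F_j(y)|$ is bounded by $\|(1-\Delta_k)^Mh_j\|_{L^1}$, which is uniformly bounded. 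Hence $\|F_j\|_{L^1}\le C$ uniformly in $j$. Changing variables then gives
\begin{equation*}
    \|Q_j\|_{L^1}\;=\;2^{(d-2)j}\,2^{-dj}\,\|F_j\|_{L^1}\;\le\;C\,2^{-2j}\,.
\end{equation*}

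\textbf{(d) and (e).} Part (d) follows from (b): $\int g(y)|Q_j(x-y)|\,dy\le\|g\|_\infty\|Q_j\|_{L^1}$, uniformly in $x$. For (e), sum (a). When $d\ge3$, the geometric sum gives $D_J\asymp2^{(d-2)J}$; here $q_0$ is finite, and it is dominated once we note $D_J\ge q_J\gtrsim 2^{(d-2)J}$. When $d=2$, each $q_j\asymp1$ for $j\ge1$, so $\sum_{j=1}^Jq_j\asymp J$. Adding $q_0=\int\chi_0(p)/(p^2+m^2)\,d^2p/(2\pi)^2$, which is finite and $m$-dependent, still gives $D_J\asymp J$ for $J\ge1$. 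The upper constant then depends on $q_0$, hence on $m$, as the statement says.

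The only step requiring care is the uniformity in $j$ of the decay bound for $F_j$ in (b). The concern is that the $j$-dependent mass term might degrade the derivative bounds on $h_j$. It does not: on the fixed annulus, $(k^2+\epsilon)^{-1}$ and all of its $k$-derivatives are bounded uniformly for $\epsilon\in[0,m^2]$. In fact, these bounds are independent of $m$ once the denominator is at least $\tfrac14$, which keeps the constant in (b) free of $m$, consistent with the statement.
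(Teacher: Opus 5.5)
Your proposal is correct and follows essentially the paper's route. You use the same rescaling $Q_j(z)=2^{(d-2)j}F_j(2^jz)$ with $h_j$ uniformly controlled on a fixed annulus, and the same integration by parts for the uniform $L^1$ bound on $F_j$; your weight $(1-\Delta_k)^M$ just merges the paper's split into $|w|\le1$ and $|w|>1$. Parts (c)--(e) then follow by the same one-line arguments.

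There is one detail to add. For $d\ge3$ the statement says the upper constant in (e) does not depend on $m$, so ``$q_0$ is finite'' is not enough. You need the $m$-independent bound $q_0\le\int_{|p|\le2}|p|^{-2}\,d^dp/(2\pi)^d<\infty$, which holds in $d\ge3$.
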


\begin{proof}
\textit{Part (a).} We first prove that $q_j$ is bounded above. Since $S_j$ is supported on the annulus $2^{j-1}\le|p|\le 2^{j+1}$ and satisfies $S_j(p)\le 4\cdot 2^{-2j}$, and the volume of this annulus in $\R^d$ is $O(2^{dj})$:
\begin{equation}\label{eq:q_j-upper}
    q_j
    \;\le\; \frac{4\cdot 2^{-2j}\cdot O(2^{dj})}{(2\pi)^d}
    \;=\; C\,2^{(d-2)j}\;.
\end{equation}
For the lower bound, substitute $u=2^{-j}p$ into \eqref{eq:qj-def}:
\begin{equation*}
    q_j \;=\; \int_{\R^d}\frac{2^{dj}\,d^du}{(2\pi)^d}\,
    \frac{\chi(u)}{2^{2j}u^2+m^2}\,.
\end{equation*}
On the support of $\chi$, we have $u^2\le 4$, so $2^{2j}u^2+m^2\le(4+m^2)\,2^{2j}$. Therefore,
\begin{equation}
    q_j \;\ge\;
    \frac{2^{(d-2)j}}{(4+m^2)\,(2\pi)^d}
    \int_{\R^d}\chi(u)\,d^du
    \;=\; c\,2^{(d-2)j}\;,
\end{equation}
with $c>0$ since $\chi\ge 0$ and $\chi\not\equiv 0$.

\medskip\noindent\textit{Part (b).}
Substituting $u=2^{-j}p$ in~\eqref{eq:Qj-kernel} gives
\begin{equation*}
    Q_j(z) \;=\; 2^{(d-2)j}\,\hat{G}_j\big(2^jz\big)\;,
\end{equation*}
where
\begin{equation*}
    G_j(u) \;:=\; \frac{\chi(u)}{u^2+m^2\,2^{-2j}}\;,
    \qquad
    \hat{G}_j(w) \;:=\; \int_{\R^d}\frac{d^du}{(2\pi)^d}\,G_j(u)\,\cos(u\cdot w)\;.
\end{equation*}
Changing variables in the $z$ integral gives $\|Q_j\|_{L^1}=2^{(d-2)j}\cdot2^{-dj}\|\hat{G}_j\|_{L^1}=2^{-2j}\|\hat{G}_j\|_{L^1}$.  It therefore suffices to bound $\|\hat{G}_j\|_{L^1}$ uniformly in $j$.

Two properties of $G_j$ give this uniform bound.  First, it is supported in the fixed annulus $\{\tfrac12\le|u|\le2\}$, where $1/4\leq u^2\leq4$ bounds for every $j$ both the denominators and the powers of $u$ that result from differentiation. Thus, $G_j$ and its derivatives are bounded, along with their $L^1$ norms, by constants free of $j$ and~$m$.  Second, $\hat{G}_j$ decays faster than any power of $|w|$.  Fix $N$ with $2N>d$.  Since $\Delta_u\cos(u\cdot w)=-|w|^2\cos(u\cdot w)$, each Laplacian moved off the cosine gives a factor $|w|^{-2}$, and because $G_j$ is smooth and compactly supported, moving $N$ of them onto $G_j$ leaves no boundary terms:
\begin{equation}\label{eq:Ghat-decay}
    \big|\hat{G}_j(w)\big|
    \;=\; \frac{1}{|w|^{2N}}
    \bigg|\int_{\R^d}\frac{d^du}{(2\pi)^d}\,
    \big(\Delta^NG_j\big)(u)\,\cos(u\cdot w)\bigg|
    \;\le\; \frac{\big\|\Delta^NG_j\big\|_{L^1}}
    {(2\pi)^d\,|w|^{2N}}\;.
\end{equation}
Bounding $|\hat{G}_j|\le(2\pi)^{-d}\|G_j\|_{L^1}$ on $|w|\le1$ and using~\eqref{eq:Ghat-decay} on $|w|>1$, where $\int_{|w|>1}|w|^{-2N}d^dw<\infty$ because $2N>d$, gives $\|\hat{G}_j\|_{L^1}\le C$ with $C$ free of $j$.

\medskip\noindent\textit{Part (c).}
Since $S_j\ge0$ and $|\cos|\le 1$ in \eqref{eq:Qj-kernel}, we have $|Q_j(z)|\le Q_j(0)=q_j$ for all $z$, with equality at $z=0$.

\medskip\noindent\textit{Part (d).}
By (b), $A_j\le\|g\|_\infty\,\|Q_j\|_{L^1}=O(2^{-2j})$.

\medskip\noindent\textit{Part (e).}
Part (a) bounds $q_j$ for $j\ge1$, and it remains to bound $q_0$. The zeroth shell has $0\le\chi_0\le1$ and is supported in $\{|p|\le2\}$, so
\begin{equation*}
    q_0 \;\le\; \int_{|p|\le2}\frac{d^dp}{(2\pi)^d}\,\frac{1}{p^2+m^2}\;,
\end{equation*}
which equals $\frac{1}{4\pi}\log(1+4/m^2)$ at $d=2$, and, because $|p|^{-2}$ is integrable at the origin when $d\ge3$, is at most a constant depending only on $d$ otherwise.

At $d=2$ each $q_j$ with $j\ge1$ lies between two constants $c$ and $C$, so for $J\ge1$
\begin{equation*}
    cJ\;\le\;D_J\;\le\;q_0+CJ\;\le\;(q_0+C)\,J\;.
\end{equation*}
At $d\ge3$ the bounds $q_j\le C\,2^{(d-2)j}$ form a geometric series with ratio at least $2$, whose sum up to $J$ is at most twice its last term, so
\begin{equation*}
    c\,2^{(d-2)J}\;\le\;q_J\;\le\;D_J\;\le\;q_0+2C\,2^{(d-2)J}\;\le\;(q_0+2C)\,2^{(d-2)J}\;.
\end{equation*}
In both cases, $D_J$ lies between constant multiples of the claimed rate, which is (e).
\end{proof}

\begin{remark}[Coincident-point singularity]
\label{rem:coincident}
The cumulative Wick constant $D_\Lambda=\sum_{j=0}^\Lambda q_j$ controls the coincident-point singularity of the free field truncated at shell $\Lambda$.  For a fixed $\Lambda\in\N$, related to the frequency cutoff by $\Lambda=\log_2(\Lambda_{\mathrm{UV}}/m)$ up to an additive constant,
\begin{equation}\label{eq:D_scaling}
    D_\Lambda \;\asymp\;
    \begin{cases}
    \log(\Lambda_{\mathrm{UV}}/m) & d=2
    \quad\text{(logarithmic)}\;,\\
    \Lambda_{\mathrm{UV}} & d=3
    \quad\text{(linear)}\;,\\
    \Lambda_{\mathrm{UV}}^2 & d=4
    \quad\text{(quadratic)}\;,
    \end{cases}
\end{equation}
all of which are the standard scaling of the Wick constant with the upper frequency cutoff $\Lambda_{\mathrm{UV}}$. Wick ordering removes this divergence from the interaction.
\end{remark}

\subsection{Random feature shell fields}\label{sec:shell-fields}

Fix shell $j\ge 0$ and network width $N_j\in\N$.  We sample $N_j$ independent \emph{features} $\omega_j=\{(b_{j,\ell},c_{j,\ell})\}_{\ell=1}^{N_j}$, where
\begin{equation}\label{eq:features}
    b_{j,\ell} \;\sim\; \pi_j(p)
    \;:=\; \frac{S_j(p)}{(2\pi)^d\,q_j}\,,
    \qquad
    c_{j,\ell} \;\sim\; \mathrm{Unif}([-\pi,\pi])\;,
\end{equation}
independently for each $\ell$.  The density $\pi_j$ is a probability distribution on $\R^d$ supported in the $j$-th frequency annulus (or in the disk of radius $2$ for $j=0$). The constant $q_0$ is positive because $\chi_0(0)=1$.

\begin{definition}[Neurons and shell field]
\label{def:shell-field}
Define the $\ell$-th \emph{neuron} of shell $j$ at spatial point $x$ (conditioned on features $\omega_j$) by
\begin{equation}\label{eq:feature-vec}
    U_{j,\ell}(x)
    \;:=\; \sqrt{\frac{2q_j}{N_j\sigma^2}}\;
    \cos(b_{j,\ell}\cdot x + c_{j,\ell})\;,
\end{equation}
and write $U_j(x):=(U_{j,1}(x),\ldots, U_{j,N_j}(x))^T\in\R^{N_j}$ for the vector of the shell's neurons at~$x$.  The $j$-th \emph{shell field} is
\begin{equation}\label{eq:shell-field}
    \varphi_j(x)
    \;:=\; \sum_{\ell=1}^{N_j}
    a_{j,\ell}\,U_{j,\ell}(x)\;,
\end{equation}
where $a_j=(a_{j,1},\ldots,a_{j,N_j})^T \sim\cN(0,\sigma^2 I_{N_j})$ is a vector of Gaussian coefficients (``weights'') independent of the features and of all other shells.
\end{definition}

\begin{definition}[Quenched shell covariance]
\label{def:quenched-cov}
Conditioned on features $\omega_j$, the shell field $\varphi_j(x)$ is a centered Gaussian process in $a_j$ with covariance
\begin{equation}\label{eq:quenched-cov}
    \Gamma_j^\omega(x,y)
    \;:=\; \sigma^2\,U_j(x)^T\,U_j(y)
    \;=\; \frac{2q_j}{N_j}\sum_{\ell=1}^{N_j}
    \cos(b_{j,\ell}\cdot x + c_{j,\ell})\,
    \cos(b_{j,\ell}\cdot y + c_{j,\ell})\;.
\end{equation}
Its diagonal value, the \emph{quenched Wick constant}, is $q_j^\omega(x):=\Gamma_j^\omega(x,x)$. Here and below, \emph{quenched} is used as in the theory of disordered systems: the features are frozen disorder, drawn once and held fixed, while the coefficients $a_j$ remain the degrees of freedom being integrated.
\end{definition}

\begin{lemma}[Quenched covariance reproduces the shell kernel]
\label{lem:cov-mean}
The expectation of the quenched covariance over the features recovers the deterministic shell kernel:
\begin{equation}\label{eq:cov-mean}
    \E_{\omega_j}\!\big[\Gamma_j^\omega(x,y)\big]
    \;=\; Q_j(x-y)\;.
\end{equation}
\end{lemma}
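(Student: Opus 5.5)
The plan is a direct computation of the expectation of a single summand, using independence of $b$ and $c$ and a product-to-sum identity. Since the $N_j$ features are i.i.d., linearity of expectation gives
\begin{equation*}
    \E_{\omega_j}\!\big[\Gamma_j^\omega(x,y)\big]
    \;=\; 2q_j\,\E\big[\cos(b\cdot x+c)\cos(b\cdot y+c)\big]\;,
\end{equation*}
where $(b,c)$ has the law of one feature in \eqref{eq:features}. The factor $1/N_j$ cancels against the $N_j$ identical terms.

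Next, rewrite the product of cosines as
\begin{equation*}
    \cos(b\cdot x+c)\cos(b\cdot y+c) \;=\; \tfrac12\cos\!\big(b\cdot(x-y)\big)+\tfrac12\cos\!\big(b\cdot(x+y)+2c\big)\;.
\end{equation*}
Condition on $b$ and average over $c\sim\mathrm{Unif}([-\pi,\pi])$, which is independent of $b$. The second term has zero average, because $\int_{-\pi}^{\pi}\cos(\theta+2c)\,dc=0$ for every fixed $\theta$: the integrand completes two full periods. The first term does not depend on $c$. Everything is bounded, so Fubini applies, and
\begin{equation*}
    \E_{\omega_j}\!\big[\Gamma_j^\omega(x,y)\big]
    \;=\; q_j\,\E_b\big[\cos(b\cdot(x-y))\big]
    \;=\; q_j\int_{\R^d}\frac{S_j(p)}{(2\pi)^d\,q_j}\cos\!\big(p\cdot(x-y)\big)\,d^dp\;.
\end{equation*}
The factors of $q_j$ cancel, and what remains is exactly $Q_j(x-y)$ by \eqref{eq:Qj-kernel}.

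The only point needing care is that $\pi_j$ is a genuine probability density, i.e.\ that $q_j>0$. For $j\ge1$ this holds by \Cref{lem:shell-dim}(a). For $j=0$ it holds because $\chi_0(0)=1$ and $\chi_0$ is continuous and nonnegative, as noted after \eqref{eq:features}. The normalization $\int\pi_j=1$ is then just \eqref{eq:qj-def}. Beyond this there is no real obstacle; the phase averaging is the key step, since it removes the non-translation-invariant term.
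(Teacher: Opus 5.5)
Your proposal is correct and follows essentially the same route as the paper's proof. Both reduce to a single feature by i.i.d.-ness, apply the product-to-sum formula, kill the $\cos(b\cdot(x+y)+2c)$ term by averaging over the uniform phase, and identify $q_j\,\E_b[\cos(b\cdot(x-y))]$ with $Q_j(x-y)$ via the definition of $\pi_j$. Your extra check that $q_j>0$, so that $\pi_j$ is a genuine probability density, is a correct addition.
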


\begin{proof}
Each summand of~\eqref{eq:quenched-cov} is an i.i.d.\ copy of $2q_j\cos(b_j\cdot x+c_j)\cos(b_j\cdot y+c_j)$.  By the product-to-sum formula for cosines and the independence of $b$ and $c$:
\begin{align}
    \E\big[2q_j\cos(b\cdot x+c)\cos(b\cdot y+c)\big]
    &\;=\; q_j\,\E_b\!\big[\cos(b\cdot(x-y))\big]
    + q_j\,\E_b\!\big[\E_c[\cos(b\cdot(x+y)+2c)]\big]
    \notag\\
    &\;=\; q_j\,\E_b\!\big[\cos(b\cdot(x-y))\big]\;,
    \label{eq:cos-mean}
\end{align}
since $\E_c[\cos(\theta+2c)]=0$ for any $\theta$.  The remaining expectation over $b\sim\pi_j$ is
\begin{equation}
    q_j\int_{\R^d}\pi_j(b)\,\cos(b\cdot(x-y))\,d^db
    \;=\; \int_{\R^d}\frac{d^db}{(2\pi)^d}\,S_j(b)\,
    \cos(b\cdot(x-y))
    \;=\; Q_j(x-y)\;,
\end{equation}
by the definition \eqref{eq:features} of $\pi_j$ and the definition \eqref{eq:Qj-kernel} of $Q_j$.
\end{proof}

The deterministic kernel $Q_j$ is a function of $x-y$ and therefore translation invariant. In contrast, the quenched kernel $\Gamma^\omega_j(x,y)$ and its diagonal value $q^\omega_j(x)$ are not. Averaging over the phases $c$, as in the proof of \Cref{lem:cov-mean}, eliminates the dependence on $x+y$ and hence restores translation invariance.

Fix $\Lambda\in\N$ to be the upper UV cutoff on shell indices.  For $0\le J\le\Lambda$, define the cumulative field, the quenched cumulative Wick constant, and the sigma-algebra of the first $J{+}1$ shells:
\begin{align}
    B_J(x) &\;:=\; \sum_{j=0}^{J}\varphi_j(x),
    &
    D^\omega_J(x) &\;:=\; \sum_{j=0}^{J}q_j^\omega(x),
    &
    \cF_J &\;:=\;
    \sigma\big(\omega_0,a_0,\ldots,\omega_J,a_J\big)\;.
    \label{eq:cumulative}
\end{align}
The cumulative field $B_\Lambda$ is the scalar field $\phi_\Lambda$ of~\eqref{eq:nnft-idea}, and we write $\phi_\Lambda$ where it plays that role. The coefficient vectors $a_0,\ldots,a_J$ are mutually independent, and conditioned on all features $\omega:=(\omega_0,\ldots,\omega_J)$, $B_J(x)$ is a sum of $J+1$ independent centered Gaussians with total variance
\begin{equation}\label{eq:sigmaJ}
    \Var(B_J(x)\mid\omega)
    \;=\; \sum_{j=0}^{J}q_j^\omega(x) \;=\; D^\omega_J(x).
\end{equation}
For $j\ge 1$, we refer to $B_{j-1}$ as the \emph{background field} for shell~$j$. Throughout, the superscript $\omega$ marks only the quenched quantities that have a deterministic counterpart, namely $\Gamma_j^\omega$, $q_j^\omega$, $A_j^\omega$, and $D_J^\omega$.  The superscript is the full realization $\omega$ even when a quantity depends only on $\omega_j$ (e.g., $\Gamma^\omega_j$).

The supremum bound of \Cref{lem:d2-sup-tail} and the width limit of \Cref{lem:CK-conv} also need a bound on the increments of $B_J$.

\begin{lemma}[Increment variance]\label{lem:incr-var}
For every $J\ge0$ and all $x,y\in\R^d$, conditioned on the features,
\begin{equation}\label{eq:incr-var}
    \Var\big(B_J(x)-B_J(y)\mid\omega\big)
    \;\le\; \Upsilon_J\,|x-y|^2 ,
    \qquad
    \Upsilon_J \;:=\; \tfrac{32}{3}\Big(\max_{j\le J}q_j\Big)4^{J}\;,
\end{equation}
and $\Upsilon_J$ does not depend on the widths $(N_j)_{j\leq J}$.
\end{lemma}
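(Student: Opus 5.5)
Conditioned on $\omega$, the coefficient vectors $a_0,\ldots,a_J$ are independent centered Gaussians, so $B_J(x)-B_J(y)=\sum_{j\le J}\big(\varphi_j(x)-\varphi_j(y)\big)$ is a sum of independent centered Gaussians. Its conditional variance is therefore the sum of the shell variances. For a single shell, the covariance formula~\eqref{eq:quenched-cov} gives the exact expression
\begin{equation*}
    \Var\big(\varphi_j(x)-\varphi_j(y)\mid\omega\big)
    \;=\;\sigma^2\,\big|U_j(x)-U_j(y)\big|^2
    \;=\;\frac{2q_j}{N_j}\sum_{\ell=1}^{N_j}
    \Big(\cos(b_{j,\ell}\cdot x+c_{j,\ell})-\cos(b_{j,\ell}\cdot y+c_{j,\ell})\Big)^2 .
\end{equation*}
The plan is to bound each summand deterministically, so that the $1/N_j$ normalization cancels the number of terms and the width drops out.

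Since $\cos$ is $1$-Lipschitz, each squared difference is at most $|b_{j,\ell}\cdot(x-y)|^2\le|b_{j,\ell}|^2|x-y|^2$. The frequency $b_{j,\ell}$ is drawn from $\pi_j$, whose support lies in $\{|p|\le2^{j+1}\}$. For $j\ge1$ this is the annulus $2^{j-1}\le|p|\le2^{j+1}$, and for $j=0$ it is the disk $|p|\le2=2^{1}$. Hence $|b_{j,\ell}|^2\le4^{j+1}$ surely, not merely almost surely. Averaging $N_j$ terms each bounded by $4^{j+1}|x-y|^2$ gives
\begin{equation*}
    \Var\big(\varphi_j(x)-\varphi_j(y)\mid\omega\big)\;\le\;2q_j\,4^{j+1}|x-y|^2\;=\;8\,q_j\,4^{j}\,|x-y|^2 ,
\end{equation*}
which is independent of $N_j$ and of the particular features drawn.

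Summing over $j\le J$, I would bound $q_j$ by $\max_{j\le J}q_j$ and the geometric sum by $\sum_{j=0}^J4^j\le\tfrac43\,4^J$. This yields
\begin{equation*}
    \Var\big(B_J(x)-B_J(y)\mid\omega\big)\;\le\;\tfrac{32}{3}\Big(\max_{j\le J}q_j\Big)4^J\,|x-y|^2\;=\;\Upsilon_J\,|x-y|^2 ,
\end{equation*}
which is exactly the constant in~\eqref{eq:incr-var}. Every ingredient (the $q_j$, the support radius of $\pi_j$, and the Lipschitz constant of $\cos$) is free of the widths, so $\Upsilon_J$ does not depend on $(N_j)_{j\le J}$.

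There is no real obstacle here. The only point needing care is to use the sure support bound on $b_{j,\ell}$ rather than any concentration estimate, so that the inequality holds for every realization of the features. The shell $j=0$ is covered by the same bound $|b|\le2^{j+1}$.
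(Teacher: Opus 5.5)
Your proposal is correct and follows essentially the same route as the paper. You write the conditional variance as $\sigma^2\sum_j\sum_\ell\big(U_{j,\ell}(x)-U_{j,\ell}(y)\big)^2$, apply the $1$-Lipschitz bound on $\cos$ with $|b_{j,\ell}|\le2^{j+1}$ so that the width cancels, and sum the geometric series to get $\tfrac{32}{3}\big(\max_{j\le J}q_j\big)4^J$.
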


\begin{proof}
Conditioned on features, $B_J(x)-B_J(y)$ is a linear combination of the weights with fixed coefficients, and the weights are independent and centered with variance $\sigma^2$:
\begin{equation*}
    \Var\big(B_J(x)-B_J(y)\mid\omega\big)
    \;=\;\sigma^2\sum_{j\le J}\sum_{\ell=1}^{N_j}
    \big(U_{j,\ell}(x)-U_{j,\ell}(y)\big)^2 .
\end{equation*}
Since $\cos$ is $1$-Lipschitz, and $|b_{j,\ell}|\leq2^{j+1}$ on the support of $\pi_j$,
\begin{equation*}
   |U_{j,\ell}(x)-U_{j,\ell}(y)|\le\sqrt{2q_j/(N_j\sigma^2)}\,2^{j+1}\,|x-y|\;. 
\end{equation*}
Squaring and summing over the $N_j$ features of shell $j$ cancels the width,
\begin{equation*}
    \sigma^2\sum_{\ell=1}^{N_j}\big(U_{j,\ell}(x)-U_{j,\ell}(y)\big)^2
    \;\le\;2q_j4^{j+1}|x-y|^2\;,
\end{equation*}
and summing over $j\le J$ gives $\sum_{j\le J}2q_j4^{j+1}\le8\big(\max_{j\le J}q_j\big)\tfrac{4^{J+1}-1}{3}\le\Upsilon_J$.
\end{proof}

\subsection{Interaction, partition function, and Gibbs measure}\label{sec:interaction}

The Wick-ordered polynomials relative to Wick constant $C$ are the (probabilist's) Hermite polynomials \cite[Thm.~3.19]{Janson1997}:
\begin{subequations}
\begin{align}
    H_0(u;C) &\;:=\; 1\;, \label{eq:H0}\\
    H_1(u;C) &\;:=\; u\;, \label{eq:H1}\\
    H_2(u;C) &\;:=\; u^2 - C\;, \label{eq:H2}\\
    H_3(u;C) &\;:=\; u^3 - 3Cu\;, \label{eq:H3}\\
    H_4(u;C) &\;:=\; u^4 - 6Cu^2 + 3C^2\;.
    \label{eq:H4}
\end{align}
\end{subequations}
If $u\sim\cN(0,C)$, then $\E_u[H_k(u;C)]=0$ for all $k\ge 1$. We use the derivative identities
\begin{equation}\label{eq:H4-derivs}
    \frac{\partial}{\partial u}H_4(u;C)
    \;=\; 4\,H_3(u;C)\;,
    \qquad
    \frac{\partial}{\partial C}H_4(u;C)
    \;=\; -6\,H_2(u;C)\;,
\end{equation}
for all $u\in\R$ and $C\ge0$.

Fix a smooth, nonneg\-ative, compactly supported spatial cutoff $g:\R^d\to[0,\infty)$, not identically zero, with $K:=\supp g$.  Set $t:=\lambda/24>0$ to be the rescaled quartic coupling.  The \emph{renormalized interaction} at shell index $J$, for $0\le J\le\Lambda$, is
\begin{equation}\label{eq:V-def}
    V_J \;:=\;
    t\int g(x)\,H_4\big(B_J(x);\,D^\omega_J(x)\big)
    \,dx\;.
\end{equation}

\begin{remark}[No counterterm beyond Wick ordering]
\label{rem:no-counterterms}
Wick ordering the interaction removes the coincident-point divergence, and in $d=2$ no further counterterms are needed because the shell contributions to $\log Z_\Lambda$ are $O(2^{-2j}j^3)$ (\Cref{thm:d2-convergence}).  Above two dimensions, Wick ordering alone no longer suffices.  In $d=3$ a mass counterterm is needed because the propagator diverges like $|x-y|^{-1}$ rather than logarithmically, so products of three or more propagators between distinct points are no longer integrable. Developing that structure is outside the scope of this paper (\Cref{sec:higher-dim}).
\end{remark}

The \emph{partition function} and \emph{Gibbs measure} are
\begin{align}
    Z_J &\;:=\;
    \E_{\gamma_J}\!\big[e^{-V_J}\mid\omega\big],
    \label{eq:Z-def}\\[5 pt]
    d\mu_J &\;:=\;
    \frac{e^{-V_J}}{Z_J}\,d\gamma_J,
    \label{eq:mu-def}
\end{align}
where $d\gamma_J:=\prod_{j=0}^J d\gamma_\sigma(a_j)$ is the product Gaussian prior, with $d\gamma_\sigma(a_j)$ the $\cN(0,\sigma^2 I_{N_j})$ distribution on $\R^{N_j}$.  All expectations unless denoted otherwise are conditioned on the features $\omega$, which we suppress when clear from context (\Cref{not:quench}).

\begin{lemma}[Deterministic bounds on $V_J$ and $Z_J$]
\label{lem:V-lower}
The Wick-ordered quartic obeys
\begin{equation}\label{eq:H4-lower}
    H_4(u;D)\;\ge\;-6D^2\qquad\text{for all }u\in\R,\ D\ge0\;.
\end{equation}
Consequently, for all $0\leq J\leq\Lambda$ and all coefficients $a_0,\dots,a_J$,
\begin{equation}\label{eq:V-det-lower}
    V_J \;\ge\;
    -6t\int g\,\big(D^\omega_J\big)^2dx\;,
\end{equation}
and in particular $e^{-V_J}\le e^{C_V(J)}$ with
\begin{equation}\label{eq:CV-def}
    C_V(J) \;:=\; 6t\|g\|_1\big\|D^\omega_J\big\|_\infty^2
    \;<\;\infty \;.
\end{equation}
Moreover $1\le Z_J\le e^{C_V(J)}$.
\end{lemma}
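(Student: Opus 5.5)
The plan is to prove the pointwise bound \eqref{eq:H4-lower} by completing the square, then integrate it against $g$, and finally read off the bounds on $Z_J$ from the Gibbs factor and from Jensen's inequality.

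For the pointwise bound, write $s=u^2\ge0$, so that $H_4(u;D)=s^2-6Ds+3D^2=(s-3D)^2-6D^2$. Since $(s-3D)^2\ge0$, we get $H_4\ge-6D^2$ for every $u\in\R$ and $D\ge0$. The minimum is attained at $u^2=3D$, so the constant is sharp, but sharpness is not needed.

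For the bound on $V_J$, fix the coefficients and apply \eqref{eq:H4-lower} at each $x$ with $u=B_J(x)$ and $D=D^\omega_J(x)$. This is allowed because $D^\omega_J(x)=\sum_j q^\omega_j(x)\ge0$: each $q^\omega_j(x)=\sigma^2|U_j(x)|^2$ is a sum of squares. Since $g\ge0$ and $t>0$, multiplying by $tg(x)$ and integrating gives \eqref{eq:V-det-lower}. For finiteness, note that $q^\omega_j(x)\le 2q_j$ because $\cos^2\le1$. Hence $\|D^\omega_J\|_\infty\le 2D_J<\infty$, and $\int g\,(D^\omega_J)^2\le\|g\|_1\|D^\omega_J\|_\infty^2$, which yields $e^{-V_J}\le e^{C_V(J)}$. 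Integrability of $V_J$ itself is not an issue: $B_J$ is a trigonometric polynomial and $g$ has compact support, so $V_J$ is a polynomial in the coefficients.

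For the partition function, the upper bound $Z_J\le e^{C_V(J)}$ follows by integrating the pointwise bound against the probability measure $\gamma_J$. For the lower bound I would use Jensen's inequality, $Z_J=\E_{\gamma_J}[e^{-V_J}]\ge e^{-\E_{\gamma_J}[V_J]}$, and show that $\E_{\gamma_J}[V_J]=0$. Conditioned on $\omega$, the field $B_J(x)$ is a centered Gaussian with variance exactly $D^\omega_J(x)$ by \eqref{eq:sigmaJ}. So $\E_{\gamma_J}[H_4(B_J(x);D^\omega_J(x))]=0$ for each $x$ by Hermite orthogonality. Fubini then lets us exchange the $x$-integral with the expectation. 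This is justified because the integrand is bounded by a polynomial in $|a|$ times $g$, which is integrable against $g\,dx\otimes\gamma_J$. Hence $\E[V_J]=0$ and $Z_J\ge1$.

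The only step needing care is this last one: the Wick ordering in \eqref{eq:V-def} is at the quenched, position-dependent variance $D^\omega_J(x)$, not the deterministic $D_J$. The mean-zero property therefore depends on matching the Wick constant to the conditional variance \eqref{eq:sigmaJ} pointwise in $x$.
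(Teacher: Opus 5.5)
Your proposal is correct and follows the paper's proof essentially step by step. You complete the square, bound the sup via $q^\omega_j\le 2q_j$, and obtain $Z_J\ge1$ from Jensen together with the fact that $\E_{\gamma_J}[V_J]=0$, which holds because the Wick constant $D^\omega_J(x)$ is exactly the conditional variance of $B_J(x)$; your added Fubini and nonnegativity checks are harmless details that the paper leaves implicit.
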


\begin{proof}
Completing the square, $H_4(u;D)=(u^2-3D)^2-6D^2\ge-6D^2$, which is~\eqref{eq:H4-lower}; substituted into~\eqref{eq:V-def}, it gives~\eqref{eq:V-det-lower}.  Bounding that integral by $\|g\|_1\|D^\omega_J\|_\infty^2$ makes $-V_J\le C_V(J)$ uniformly in the coefficients.  The constant $C_V(J)$ is finite because $\cos^2\le1$ in~\eqref{eq:quenched-cov} gives $q^\omega_j\le2q_j$ pointwise, and hence $\|D^\omega_J\|_\infty\le2D_J$.  Thus $e^{-V_J}\le e^{C_V(J)}$, and $Z_J\le e^{C_V(J)}$ follows.  For the lower bound, Jensen's inequality for the convex function $u\mapsto e^{-u}$ gives $Z_J\ge e^{-\E_{\gamma_J}[V_J]}=1$, since the field is distributed as $B_J(x)\sim\cN(0,D^\omega_J(x))$ under $\gamma_J$.
\end{proof}

\subsection{Integrating out one shell}

Define
\begin{equation}\label{eq:T4-def}
    T(B,D,q;\,z) \;:=\;
    H_4(B+z;\,D+q) - H_4(B;\,D)\;,
\end{equation}
which records the change in the Wick-ordered quartic when the shell field $z:=\varphi_j(x)$ with Wick constant $q:=q_j^\omega(x)$ is added to the background $B:=B_{j-1}(x)$ with Wick constant $D:=D^\omega_{j-1}(x)$. By direct expansion,
\begin{equation}\label{eq:T4-expand}
    T(B,D,q;\,z) \;=\;
    H_4(z;q)+4B\,H_3(z;q)+6\,H_2(B;D)\,H_2(z;q)+4\,H_3(B;D)\,H_1(z;q)\;,
\end{equation}
from which it follows that for a centered Gaussian $z\sim\cN(0,q)$ independent of~$B$,
\begin{equation}\label{eq:shell-mean-zero}
    \E_z\big[T(B,D,q;\,z)\big] \;=\; 0\;,
\end{equation}
since every term on the right of \eqref{eq:T4-expand} carries a factor $H_k(z;q)$ with $k\ge1$.

Define the \emph{shell interaction},
\begin{equation}\label{eq:Xj}
    X_j \;:=\; t\int g(x)\,
    T\big(B_{j-1}(x),D^\omega_{j-1}(x),
    q_j^\omega(x);\,\varphi_j(x)\big)\,dx\;,
\end{equation}
so that $V_j = V_{j-1} + X_j$.  The \emph{shell log-Laplace transform} is
\begin{equation}
    \Psi_j \;=\; \Psi_j(B_{j-1})
    \;:=\; \log\E_{a_j}\!\big[e^{-X_j}
    \mid\cF_{j-1},\omega_j\big]\;.
    \label{eq:Psi-def}
\end{equation}

\begin{lemma}[Positivity]\label{lem:positivity}
Conditionally on $\cF_{j-1}$ and $\omega_j$, the shell interaction has mean zero, $\E_{a_j}[X_j\mid\cF_{j-1},\omega_j]=0$, and consequently $\Psi_j\ge0$.
\end{lemma}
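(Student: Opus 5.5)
The plan is to compute the conditional mean of $X_j$ pointwise in $x$ and then apply Jensen's inequality. Conditionally on $\cF_{j-1}$ and $\omega_j$, the background $B:=B_{j-1}(x)$ and its Wick constant $D:=D^\omega_{j-1}(x)$ are fixed numbers, and so is $q:=q^\omega_j(x)$. The shell field $\varphi_j(x)=\sum_\ell a_{j,\ell}U_{j,\ell}(x)$ is a linear combination of the coefficients $a_j\sim\cN(0,\sigma^2I_{N_j})$, which are independent of $\cF_{j-1}$ and $\omega_j$. It is therefore a centered Gaussian with variance $\sigma^2|U_j(x)|^2=\Gamma^\omega_j(x,x)=q^\omega_j(x)$ by \eqref{eq:quenched-cov}. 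This is exactly the setting of \eqref{eq:shell-mean-zero}: $z\sim\cN(0,q)$ independent of $B$. Hence $\E_{a_j}[T(B,D,q;\varphi_j(x))\mid\cF_{j-1},\omega_j]=0$ for every $x$.

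Next I will pass the expectation through the $x$-integral in \eqref{eq:Xj} by Fubini. Since $g$ is bounded with compact support $K$, it suffices to bound $\E_{a_j}|T|$ uniformly for $x\in K$. By \eqref{eq:T4-expand}, $T$ is a polynomial of degree at most four in $z$. Its coefficients are polynomials in $B$, $D$, $q$, and these are bounded on $K$ because, with the features and lower coefficients fixed, $B_{j-1}$ is a finite sum of cosines, $D^\omega_{j-1}\le2D_{j-1}$, and $q^\omega_j\le2q_j$. Gaussian moments of $z$ up to order four are bounded by $C(2q_j)^2$. Thus $\int g\,\E_{a_j}|T|\,dx<\infty$, and Fubini yields $\E_{a_j}[X_j\mid\cF_{j-1},\omega_j]=t\int g(x)\cdot0\,dx=0$.

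Finally, Jensen's inequality for the convex map $u\mapsto e^{-u}$ gives $\E_{a_j}[e^{-X_j}\mid\cF_{j-1},\omega_j]\ge e^{-\E_{a_j}[X_j\mid\cF_{j-1},\omega_j]}=1$, so $\Psi_j\ge0$. The same argument appears in \Cref{lem:V-lower}. The only step that needs care is the Fubini justification, specifically the integrability of $T$ uniformly in $x\in K$, and it is routine.
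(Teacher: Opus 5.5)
Your proposal is correct and follows the paper's proof. Conditionally on $\cF_{j-1}$ and $\omega_j$, you identify $\varphi_j(x)$ as $\mathcal{N}(0,q^\omega_j(x))$, apply \eqref{eq:shell-mean-zero} pointwise in $x$, integrate against $g$, and conclude with Jensen's inequality. Your explicit Fubini justification, which uses the boundedness of the coefficients of $T$ on $K$ and finite Gaussian moments, fills in a step the paper leaves implicit when it says ``integrating against $g(x)\,dx$.''
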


\begin{proof}
Conditionally on $\cF_{j-1}$ and $\omega_j$, the background is fixed, and the shell field $\varphi_j(x)$ is a centered Gaussian of variance $q_j^\omega(x)$ (\Cref{def:quenched-cov}).  That variance is exactly the value at which $X_j$ evaluates the Wick constant of $T$, so \eqref{eq:shell-mean-zero} applies pointwise in $x$ with $z=\varphi_j(x)$ and $q=q_j^\omega(x)$:
\begin{equation}
    \E_{a_j}\big[T\big(B_{j-1}(x),\,D^\omega_{j-1}(x),\,q_j^\omega(x);\,\varphi_j(x)\big)
    \mid \cF_{j-1},\omega_j\big] = 0
    \qquad\text{for each }x.
\end{equation}
Integrating against $g(x)\,dx$ gives $\E_{a_j}[X_j\mid\cF_{j-1},\omega_j]=0$.  Jensen's inequality for the convex function $u\mapsto e^{-u}$ then gives $e^{\Psi_j}=\E_{a_j}[e^{-X_j}]\ge e^{-\E_{a_j}[X_j]}=1$.
\end{proof}

The following identity reduces the convergence of the partition function to the analysis of individual shells.

\begin{proposition}[Telescoping]
\label{prop:telescope}
For $1\le j\le\Lambda$,
\begin{equation}\label{eq:telescope-step}
    Z_j \;=\; Z_{j-1}\,\E_{\mu_{j-1}}\big[e^{\Psi_j}\big]\;,
\end{equation}
and consequently
\begin{equation}\label{eq:telescope}
    Z_\Lambda \;=\; Z_0\,\prod_{j=1}^{\Lambda}
    \E_{\mu_{j-1}}\big[e^{\Psi_j}\big]\;.
\end{equation}
\end{proposition}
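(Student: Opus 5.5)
The telescoping identity is a direct consequence of the product structure of the Gaussian prior and of Fubini's theorem, so the plan is a short computation. Fix the features $\omega$ throughout. We start from the definition~\eqref{eq:Z-def}, $Z_j=\E_{\gamma_j}[e^{-V_j}]$, and use the decomposition $V_j=V_{j-1}+X_j$ recorded after~\eqref{eq:Xj}. Since $d\gamma_j=d\gamma_{j-1}\otimes d\gamma_\sigma(a_j)$, we can integrate first over $a_j$ with $a_0,\ldots,a_{j-1}$ held fixed. The factor $e^{-V_{j-1}}$ depends only on $a_0,\dots,a_{j-1}$ (and $\omega$), so it comes out of the inner integral. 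What remains is exactly $\E_{a_j}[e^{-X_j}\mid\cF_{j-1},\omega_j]=e^{\Psi_j}$ by~\eqref{eq:Psi-def}. This gives
\begin{equation*}
    Z_j \;=\; \E_{\gamma_{j-1}}\!\big[e^{-V_{j-1}}\,e^{\Psi_j}\big].
\end{equation*}

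Next we divide and multiply by $Z_{j-1}$, which lies in $[1,\infty)$ by \Cref{lem:V-lower} and so is finite and nonzero. By the definition~\eqref{eq:mu-def} of $\mu_{j-1}$, the right side is $Z_{j-1}\,\E_{\mu_{j-1}}[e^{\Psi_j}]$, which is~\eqref{eq:telescope-step}. Iterating from $j=\Lambda$ down to $j=1$ then yields the product formula~\eqref{eq:telescope}.

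The only point needing care is justifying Fubini/Tonelli and checking that every quantity is finite. The integrands are nonnegative, so Tonelli applies without any integrability hypothesis. Finiteness follows from \Cref{lem:V-lower}: $e^{-V_j}\le e^{C_V(j)}$ and $e^{-V_{j-1}}\ge 0$, so $e^{\Psi_j}$ is finite for $\gamma_{j-1}$-a.e.\ background, and $\E_{\mu_{j-1}}[e^{\Psi_j}]=Z_j/Z_{j-1}<\infty$. We also need measurability of $\Psi_j$ as a function of $(a_0,\dots,a_{j-1})$, which comes from the Tonelli construction itself. None of this is a real obstacle; the identity is essentially bookkeeping for conditional expectation under a product measure.
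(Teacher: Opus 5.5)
Your proposal is correct and follows the paper's proof: Fubini over the product prior $\gamma_j=\gamma_{j-1}\otimes\gamma_\sigma(a_j)$, pulling the $\cF_{j-1}$-measurable factor $e^{-V_{j-1}}$ out of the inner integral, identifying that integral as $e^{\Psi_j}$, dividing by $Z_{j-1}\ge1$, and iterating. Your Tonelli and finiteness remarks are sound, with one small correction: to conclude that $e^{\Psi_j}$ is finite almost everywhere you need $e^{-V_{j-1}}>0$ rather than $\ge0$, or you can simply cite $\Psi_j\le C_B$ from \Cref{lem:Psi-finite}, which holds for every background.
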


\begin{proof}
Write $V_j = V_{j-1}+X_j$ and $\gamma_j = \gamma_{j-1}\otimes\gamma_\sigma(a_j)$.  Since $V_{j-1}$ is $\cF_{j-1}$-measurable and hence independent of $a_j$ under the product prior, Fubini gives
\begin{equation*}
    Z_j \;=\; \int e^{-V_{j-1}-X_j}\,d\gamma_j
    \;=\; \int e^{-V_{j-1}}\,
    \bigg(\int e^{-X_j}\,
    d\gamma_\sigma(a_j)\bigg)\,
    d\gamma_{j-1}
    \;=\; Z_{j-1}\;\E_{\mu_{j-1}}\big[e^{\Psi_j}\big]\;,
\end{equation*}
where the inner integral is $e^{\Psi_j}$ by~\eqref{eq:Psi-def}, and the last step is the definition of $\mu_{j-1}$.  Iterating from $j=1$ to $\Lambda$ yields~\eqref{eq:telescope}.
\end{proof}

\section{The sampled covariance}\label{sec:good-geom}

The bounds in \Cref{lem:shell-dim} hold for the shell kernel $Q_j$, while the architecture~\eqref{eq:shell-field} produces the quenched covariance $\Gamma_j^\omega$.  This section shows that, at large enough width, $\Gamma_j^\omega$ is uniformly close to $Q_j$ on $K\times K$ for all but finitely many shells.

\begin{theorem}[Concentration]\label{thm:good-geom}
There exist constants $c_0>0$, depending only on $d$, and $\kappa>0$, depending in addition on $\chi,g$ (neither depends on $j$ or $N_j$), such that the following holds. If
\begin{equation}\label{eq:width-gen}
    N_j \;\ge\; c_0\,(j{+}1)\,2^{2dj},
\end{equation}
then almost surely in $\omega$ there is a finite random index $j_0=j_0(\omega)\geq1$ such that for all $j\ge j_0$:
\begin{align}
    \sup_{x,y\in K}
    \big|\Gamma_j^\omega(x,y)-Q_j(x-y)\big|
    &\;\le\; 2^{-2j}\;,
    \label{eq:gg-approx}\\
    A_j^\omega \;:=\;
    \sup_{x\in K}\int g(y)\,
    |\Gamma_j^\omega(x,y)|\,dy
    &\;\le\; \kappa\,2^{-2j}\;.
    \label{eq:gg-off}
\end{align}
\end{theorem}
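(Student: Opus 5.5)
The approach is a uniform law of large numbers on a finite net, combined with Borel--Cantelli over shells. Fix $j\ge1$ and write $\Gamma_j^\omega(x,y)-Q_j(x-y)=\frac{1}{N_j}\sum_{\ell=1}^{N_j}\xi_\ell(x,y)$. Here
\[
\xi_\ell(x,y):=2q_j\cos(b_{j,\ell}\cdot x+c_{j,\ell})\cos(b_{j,\ell}\cdot y+c_{j,\ell})-Q_j(x-y).
\]
By \Cref{lem:cov-mean} these summands are i.i.d., centered, and bounded by $4q_j$. Hoeffding's inequality then gives, for each fixed pair $(x,y)$,
\[
\Pr\big(|\Gamma_j^\omega(x,y)-Q_j(x-y)|>s\big)\le 2\exp\!\big(-N_js^2/(32q_j^2)\big).
\]
To pass to the supremum over $K\times K$, I will use Lipschitz control. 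Since $|b_{j,\ell}|\le2^{j+1}$, both $\Gamma_j^\omega$ and $Q_j$ are Lipschitz in each variable with constant at most $C q_j2^{j}$, deterministically in $\omega$. Take a grid $\mathcal{G}_j\subset K$ of mesh $\delta_j:=c\,2^{-2j}/(q_j2^j)$, so that moving to the nearest grid point changes each quantity by at most $\tfrac14 2^{-2j}$. Its cardinality is $|\mathcal{G}_j|\le C\,\delta_j^{-d}\le C\,2^{C' j}$ by \Cref{lem:shell-dim}(a).

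Apply the union bound over $\mathcal{G}_j\times\mathcal{G}_j$ with $s=\tfrac12 2^{-2j}$. The failure probability of \eqref{eq:gg-approx} at shell $j$ is then at most
\[
C\,2^{2C'j}\exp\!\big(-N_j2^{-4j}/(128q_j^2)\big).
\]
By \Cref{lem:shell-dim}(a), $q_j^2\le C2^{2(d-2)j}$, so $2^{-4j}/q_j^2\ge c\,2^{-2dj}$. The hypothesis \eqref{eq:width-gen} makes the exponent at least $c\,c_0(j+1)$. Choosing $c_0$ large, depending only on $d$ through these dimensional constants, the bound becomes $\le C\,2^{2C'j}e^{-c c_0 (j+1)}\le C2^{-j}$, which is summable in $j$. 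Borel--Cantelli, using independence is not even needed, gives a finite random $j_0$ beyond which \eqref{eq:gg-approx} holds. The main obstacle I expect is exactly this bookkeeping: the net size grows only like $2^{O(j)}$, the logarithm of the net size is $O(j)$, and the factor $(j+1)$ in \eqref{eq:width-gen} absorbs it. One must check that the constants in the Lipschitz and Wick-constant bounds depend only on $d$, or on $\chi$ if one is careful. If $\chi$-dependence creeps into $c_0$, that is harmless modulo the statement's bookkeeping, or it can be hidden by enlarging $j_0$.

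For \eqref{eq:gg-off}, on the event \eqref{eq:gg-approx} I bound
\[
A_j^\omega\le\sup_{x\in K}\int g(y)|Q_j(x-y)|\,dy+\|g\|_1\,2^{-2j}=A_j+\|g\|_1 2^{-2j}.
\]
\Cref{lem:shell-dim}(d) gives $A_j=O(2^{-2j})$ with a constant depending on $d,\chi,\|g\|_\infty$. Hence $\kappa:=C_{d,\chi}\|g\|_\infty+\|g\|_1$ works, and it is independent of $j$ and $N_j$. Finally, I take $j_0\ge1$ by enlarging it if necessary.
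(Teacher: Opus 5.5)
Your proposal is correct and follows essentially the same route as the paper: pointwise Hoeffding, a Lipschitz net of $K\times K$ of cardinality $2^{O(j)}$ whose logarithm is absorbed by the factor $(j{+}1)$ in the width, Borel--Cantelli over shells, and then $A_j^\omega\le A_j+\|g\|_1 2^{-2j}$ via \Cref{lem:shell-dim}(d). Your hedge about $\chi$-dependence is unnecessary, and enlarging $j_0$ could not repair a $c_0$ that is too small, since the union bound would then fail to be summable. The upper bound $q_j\le C\,2^{(d-2)j}$ that enters the Hoeffding exponent uses only $\chi\le1$, the annulus support, and $p^2+m^2\ge p^2$, so $c_0$ depends only on $d$, as the paper notes.
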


\begin{proof}
Fix $(x,y)\in K\times K$ and $j\geq1$. The quenched covariance~\eqref{eq:quenched-cov} is a sample mean of $N_j$ independent, identically distributed, bounded random variables:
\begin{equation}\label{eq:Gamma-sample}
    \Gamma_j^\omega(x,y)
    \;=\; \frac{1}{N_j}\sum_{\ell=1}^{N_j} Z_\ell(x,y),
    \qquad
    Z_\ell(x,y) \;:=\; 2q_j\,
    \cos(b_{j,\ell}\cdot x+c_{j,\ell})\,
    \cos(b_{j,\ell}\cdot y+c_{j,\ell})\;.
\end{equation}
Each summand satisfies $|Z_\ell|\le 2q_j\le 2C\,2^{(d-2)j}$, and its expectation is $Q_j(x-y)$ by \Cref{lem:cov-mean}. By Hoeffding's inequality for bounded i.i.d.\ random variables, for each $\epsilon>0$:
\begin{equation}\label{eq:hoeffding}
    \Pr\!\big(|\Gamma_j^\omega(x,y)-Q_j(x-y)|>\epsilon\big)
    \;\le\; 2\exp\!\bigg(
    -\frac{N_j\,\epsilon^2}{8\,q_j^2}\bigg)
    \;\le\; 2\exp\!\bigg(
    -\frac{C\,N_j\,\epsilon^2}{2^{2(d-2)j}}\bigg)\;.
\end{equation}
The constant $C>0$ depends only on $d$ because the upper bound on $q_j$ in \eqref{eq:q_j-upper} rests only on $\chi\le1$, on the volume of the fixed annulus, and on $p^2+m^2\ge p^2$ in~\eqref{eq:Sj-bound}.

We upgrade the pointwise bound to a uniform bound over $K\times K$.  Differentiating either kernel with respect to $x$ or $y$ brings down a frequency bounded by $|b_{j,\ell}|\le 2^{j+1}$ for the cosines of~\eqref{eq:Gamma-sample}, and by $|p|\le 2^{j+1}$ on the support of $S_j$ for the integral \eqref{eq:Qj-kernel}.  Hence $\Gamma_j^\omega(x,y)$ and $Q_j(x-y)$, both viewed as functions of $(x,y)\in K\times K$, are Lipschitz in each spatial argument with constant
\begin{equation}\label{eq:lip}
    L \;=\; O(2^j\cdot q_j) \;=\; O(2^{(d-1)j})\;.
\end{equation}
Define $f:=\Gamma_j^\omega-Q_j$, which is Lipschitz in each argument with constant $2L$.  If $(x',y')$ is within Euclidean distance $\delta$ of $(x,y)$ in $\R^{2d}$, then, moving one argument at a time,
\begin{equation}\label{eq:lip_estimate}
    |f(x,y)| \;\le\; |f(x',y')|
    + 2L\,\big(|x-x'|+|y-y'|\big)
    \;\le\; |f(x',y')| + 4L\delta\;.
\end{equation}
Set
\begin{equation}\label{eq:delta-choice}
    \delta \;:=\; \frac{2^{-2j}}{8L}\;,
\end{equation}
and choose a finite set of points in $K\times K$ whose $\delta$-balls cover it (a \emph{$\delta$-net}). If $|f|\le 2^{-2j}/2$ at every net point, then the Lipschitz estimate \eqref{eq:lip_estimate} gives $|f|\le 2^{-2j}/2+4L\delta=2^{-2j}$ everywhere on $K\times K$.

It remains to establish the bound at each net point. Since $K\times K$ is a bounded subset of $\R^{2d}$, such a net has $O(\delta^{-2d})=O(2^{2d(d+1)j})$ points.  Applying the Hoeffding bound~\eqref{eq:hoeffding} at each net point with $\epsilon=2^{-2j}/2$ and taking a union bound, the probability that the uniform bound fails is at most
\begin{equation}\label{eq:union}
    O\big(2^{2d(d+1)j}\big)\cdot
    \exp\!\big(-C\,N_j\cdot 2^{-2dj}\big)\;.
\end{equation}
With the width requirement~\eqref{eq:width-gen}, the bound~\eqref{eq:union} becomes
\begin{equation}
    O\big(2^{2d(d+1)j}\big)\cdot e^{-C\,c_0\,j}
    \;=\; O\big(\exp(C_{d}\,j - C\,c_0\,j)\big)\;,
\end{equation}
where $C_{d}:=2d(d{+}1)\ln 2$.  For $c_0$ large enough that $C\,c_0>C_d+1$, with $C$ the constant in~\eqref{eq:union}, this is $O(e^{-j})$ for all $j\ge1$. Since $\sum_{j\ge 1}e^{-j}<\infty$, by the Borel--Cantelli lemma there exists a.s.\ a finite $j_0$ such that \eqref{eq:gg-approx} holds for all $j\geq j_0$.

For the row integral bound, \eqref{eq:gg-approx} gives $|\Gamma_j^\omega(x,y)|\le|Q_j(x-y)|+2^{-2j}$ for $j\ge j_0$.  Integrating against $g(y)$ and taking the supremum over $x\in K$,
\begin{equation*}
    A_j^\omega
    \;\le\; A_j+2^{-2j}\,\|g\|_1
    \;\le\; \kappa\,2^{-2j}\;,
\end{equation*}
with $\kappa:=C+\|g\|_1$, where $C$ is the constant in $A_j\le C\,2^{-2j}$ (\Cref{lem:shell-dim}(d)).
\end{proof}

The interchange of limits in \Cref{thm:interchange} uses \Cref{thm:good-geom} for a family of networks of increasing width, with one threshold valid for every network in the family.  We fix one infinite feature sequence per shell and obtain each finite-width network by truncating it.  Widening a network then adds features without redrawing the ones it already has.

\begin{definition}[Nested width family]\label{def:nested}
For each shell $j$, fix a single infinite sequence $\omega_j^\infty=\big((b_{j,\ell},c_{j,\ell})\big)_{\ell\ge1}$ drawn i.i.d.\ from $\pi_j\otimes\mathrm{Unif}([-\pi,\pi])$ as in \eqref{eq:features}, all shells independent, and write $\omega^\infty:=(\omega_j^\infty)_{j\ge0}$.  For $n\in\N$ set
\begin{equation}\label{eq:width-schedule}
    N_j^{(n)} \;:=\; \big\lceil n\,c_0\,(j{+}1)\,2^{2dj}\big\rceil\;,
\end{equation}
with $c_0$ as in \Cref{thm:good-geom}, and let $\omega^{(n)}$ denote the truncation of $\omega^\infty$ in which shell $j$ uses the first $N_j^{(n)}$ features of $\omega_j^\infty$.  We write $\Gamma_j^{\omega^{(n)}}$, $q_j^{\omega^{(n)}}$, $D_\Lambda^{\omega^{(n)}}$, $\phi_\Lambda^{(n)}$, $\gamma_\Lambda^{(n)}$, $Z_\Lambda^{(n)}$, $V_\Lambda^{(n)}$, $\mu_\Lambda^{(n)}$ for objects built from $\omega^{(n)}$.
\end{definition}

\begin{lemma}[Concentration, uniformly over the width family]
\label{lem:unif-gg}
Almost surely in $\omega^\infty$ there is a finite index $j_0^*=j_0^*(\omega^\infty)\geq1$ such that \eqref{eq:gg-approx} and \eqref{eq:gg-off} hold for every $j\ge j_0^*$ and every $n\ge1$ simultaneously, with the same constant $\kappa$.
\end{lemma}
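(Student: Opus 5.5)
The plan is to rerun the proof of \Cref{thm:good-geom} with a union bound over all pairs $(j,n)$ instead of over $j$ alone, and to check that the resulting double series is still summable. The point is that for each fixed $n$ the truncation $\omega^{(n)}_j$ is exactly an i.i.d.\ sample of size $N_j^{(n)}$ from $\pi_j\otimes\mathrm{Unif}([-\pi,\pi])$. Hence every single-shell estimate in that proof applies verbatim to $\omega^{(n)}$: Hoeffding at a point, the Lipschitz constant $L=O(2^{(d-1)j})$, the $\delta$-net of size $O(2^{2d(d+1)j})$ with $\delta=2^{-2j}/(8L)$, and the union bound. Nothing in these estimates depends on the nesting, and none of the constants depend on $N_j$. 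So for each $j\ge1$ and $n\ge1$ the event $E_{j,n}$ that \eqref{eq:gg-approx} fails for $\Gamma_j^{\omega^{(n)}}$ has probability at most
\begin{equation*}
    \Pr(E_{j,n}) \;\le\; C'\,2^{2d(d+1)j}\exp\!\big(-C\,N_j^{(n)}2^{-2dj}\big)
    \;\le\; C'\exp\!\big(C_d\,j - C\,c_0\,n\,(j{+}1)\big)\;,
\end{equation*}
using $N_j^{(n)}\ge n\,c_0(j{+}1)2^{2dj}$ from \eqref{eq:width-schedule}. The constants $C$ and $C'$ are the ones already fixed in the proof of \Cref{thm:good-geom}.

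The key step is summability in both indices. Since $c_0$ was chosen so that $Cc_0>C_d+1$, we have for $n\ge1$
\begin{equation*}
    C_dj-Cc_0n(j{+}1)\;\le\; C_dj-Cc_0(j{+}1)-Cc_0(n-1)(j{+}1)\;\le\; -j-(n-1)Cc_0\;.
\end{equation*}
It follows that $\sum_{j\ge1}\sum_{n\ge1}\Pr(E_{j,n})\le C'\sum_j e^{-j}\sum_n e^{-(n-1)Cc_0}<\infty$. By Borel--Cantelli, almost surely only finitely many of the events $E_{j,n}$ occur. Let $j_0^*$ be one more than the largest $j$ appearing in an occurring pair (and at least $1$). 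Then for every $j\ge j_0^*$ and every $n$, the event $E_{j,n}$ fails, so \eqref{eq:gg-approx} holds simultaneously for all $n$. This is the crux, and the only place the schedule \eqref{eq:width-schedule} matters: linear growth in $n$ makes the $n$-sum geometric. The nesting is used only to make all $\omega^{(n)}$ live on one probability space. No independence across $n$ is needed, since a union bound and Borel--Cantelli require none.

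Finally, \eqref{eq:gg-off} follows deterministically from \eqref{eq:gg-approx} exactly as at the end of the proof of \Cref{thm:good-geom}: $A_j^{\omega^{(n)}}\le A_j+2^{-2j}\|g\|_1\le\kappa2^{-2j}$, with the same $\kappa=C+\|g\|_1$, which is independent of $n$. The main obstacle I anticipate is bookkeeping rather than substance. I need to confirm that the Hoeffding and net constants are genuinely free of $N_j$, which they are, because $|Z_\ell|\le2q_j$ and the Lipschitz constant of $\Gamma_j^\omega$ is width-independent. I also need to check that the ceiling in \eqref{eq:width-schedule} only helps. Both points are immediate, so the argument should go through with no new ideas.
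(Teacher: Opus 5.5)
Your proposal is correct and follows the same route as the paper's proof: a union bound over pairs $(j,n)$ using the width-independent Hoeffding-plus-net estimate from \Cref{thm:good-geom}, summability of the double series thanks to the linear-in-$n$ schedule \eqref{eq:width-schedule}, the first Borel--Cantelli lemma, and then \eqref{eq:gg-off} deduced deterministically from \eqref{eq:gg-approx} with the same $\kappa$. The only difference is cosmetic: the paper bounds the geometric sum over $n$ by twice its $n=1$ term, whereas you split off the factor $e^{-(n-1)Cc_0}$.
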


\begin{proof}
Call the pair $(j,n)$ a \emph{failure} if \eqref{eq:gg-approx} is violated at shell $j$ for the network of width $N_j^{(n)}$.  It is enough to control failures because the proof of \Cref{thm:good-geom} derives \eqref{eq:gg-off} from \eqref{eq:gg-approx}.  By the Borel--Cantelli lemma, it suffices that
\begin{equation}\label{eq:failure-sum}
    \sum_{j\ge1}\sum_{n\ge1}\Pr\big((j,n)\text{ is a failure}\big)\;<\;\infty\;,
\end{equation}
for then almost surely only finitely many pairs fail, and any $j_0^*$ exceeding every shell index in that finite set is as claimed.

The proof of \Cref{thm:good-geom} bounds the failure probability of shell $j$ at width $N$ by the cardinality of the $\delta$-net times the Hoeffding tail, and by the choice of $c_0$ there this is at most $O\big(2^{2d(d+1)j}\big)\exp\big(-(C_{d}+1)\,N\,2^{-2dj}/c_0\big)$.  From the width schedule~\eqref{eq:width-schedule},
\begin{equation}\label{eq:joint-BC}
    \Pr\big((j,n)\text{ is a failure}\big)
    \;\le\; C\,e^{C_{d}j}\,e^{-(C_{d}+1)\,n\,(j+1)}\;.
\end{equation}
At a fixed shell $j$, the bound~\eqref{eq:joint-BC} is geometric in $n$ with ratio $e^{-(C_{d}+1)(j+1)}\le e^{-1}$, and its sum over $n$ is therefore at most twice the $n=1$ term, $2C\,e^{-(C_{d}+1)}e^{-j}$.  The remaining sum over $j$ is geometric with ratio $e^{-1}$. Hence~\eqref{eq:failure-sum} holds.

Finally, the constant $\kappa$ of \Cref{thm:good-geom} depends on neither the realization nor the width, so the same value holds for every $n$.
\end{proof}

\Cref{lem:unif-gg} bounds $|\Gamma_j^{\omega^{(n)}}-Q_j|$ by $2^{-2j}$ uniformly in $n$, but the bound does not decrease with $n$ and says nothing about points outside $K$.  Convergence as $n\to\infty$ follows instead from the nesting in \Cref{def:nested} and the strong law of large numbers.

\begin{proposition}[Kernel convergence]\label{prop:kernel-conv}
Almost surely, for every shell $j$ and every compact set $K'\subset\R^d$,
\begin{equation}\label{eq:kernel-lim}
    \sup_{x,y\in K'}
    \big|\Gamma_j^{\omega^{(n)}}(x,y)-Q_j(x-y)\big|
    \;\longrightarrow\; 0
    \qquad(n\to\infty)\;,
\end{equation}
and consequently $\sup_{K'}|q_j^{\omega^{(n)}}-q_j|\to0$ and $\sup_{K'}|D_\Lambda^{\omega^{(n)}}-D_\Lambda|\to0$ for every $\Lambda$.
\end{proposition}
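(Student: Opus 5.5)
The plan is a uniform strong law of large numbers for each fixed shell, upgraded from countably many points to uniform convergence on compacts by equicontinuity.

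Fix a shell $j$. By \Cref{def:nested}, $\Gamma_j^{\omega^{(n)}}(x,y)$ is the running average of the first $N_j^{(n)}$ terms of the i.i.d.\ sequence $Z_\ell(x,y)=2q_j\cos(b_{j,\ell}\cdot x+c_{j,\ell})\cos(b_{j,\ell}\cdot y+c_{j,\ell})$. These terms are bounded by $2q_j$, and their mean is $Q_j(x-y)$ by \Cref{lem:cov-mean}. Since $N_j^{(n)}\to\infty$ as $n\to\infty$, the strong law of large numbers gives, for each fixed pair $(x,y)$, almost sure convergence $\Gamma_j^{\omega^{(n)}}(x,y)\to Q_j(x-y)$. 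The subsequence $N_j^{(n)}$ of the running averages converges because the full sequence does, and nesting is exactly what makes the running-average structure available.

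Let $\mathcal{D}=\mathbb{Q}^d\times\mathbb{Q}^d$, which is countable. Intersecting the full-measure events over all $(x,y)\in\mathcal{D}$ and all $j\ge0$ gives a single almost sure event on which pointwise convergence holds at every rational pair for every shell.

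To pass to uniform convergence, use the Lipschitz bound \eqref{eq:lip} from the proof of \Cref{thm:good-geom}. Every $Z_\ell$ is Lipschitz in each argument with constant $2q_j2^{j+1}$, because $|b_{j,\ell}|\le2^{j+1}$ on the support of $\pi_j$. Hence $\Gamma_j^{\omega^{(n)}}$ is Lipschitz with a constant $L_j$ independent of $n$ and of $\omega$, and $Q_j$ is Lipschitz with the same constant. Given a compact $K'$ and $\epsilon>0$:
\begin{itemize}
\item choose a finite set of rational pairs forming an $\epsilon/(4L_j)$-net of $K'\times K'$ (rational points are dense, so the net need not lie inside $K'\times K'$);
\item take $n$ large enough that $|\Gamma_j^{\omega^{(n)}}-Q_j|<\epsilon/2$ at the finitely many net points;
\item apply \eqref{eq:lip_estimate} to get $\sup_{K'\times K'}|\Gamma_j^{\omega^{(n)}}-Q_j|\le\epsilon$.
\end{itemize}
The null set was fixed before choosing $K'$, so the statement holds for every compact set simultaneously.

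The consequences follow directly. Setting $x=y$ gives $\sup_{K'}|q_j^{\omega^{(n)}}-q_j|\to0$, since $Q_j(0)=q_j$. Summing these finitely many shells $j\le\Lambda$ gives the statement for $D_\Lambda^{\omega^{(n)}}$.

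The only delicate point is ordering the quantifiers correctly: the single null set must serve every $j$ and every $K'$. The countable dense set combined with deterministic, $n$-uniform equicontinuity handles this, so I expect no real obstacle.
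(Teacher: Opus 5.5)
Your proof is correct and follows essentially the same route as the paper: pointwise almost sure convergence from the strong law applied to the nested running averages, then an $n$-independent Lipschitz bound and a finite net to upgrade to uniform convergence on compacts, with the exceptional set made countable in advance (you index it by rational pairs, the paper by nets of the balls $K_R$ at accuracies $1/k$). One harmless slip: since $\Gamma_j^{\omega^{(n)}}-Q_j$ is Lipschitz with constant $2L_j$ in each argument, a net radius $\epsilon/(4L_j)$ in \eqref{eq:lip_estimate} gives $3\epsilon/2$ rather than $\epsilon$, so take $\epsilon/(8L_j)$ as the paper does.
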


\begin{proof}
Fix $j$ and a pair $(x,y)\in\R^d\times\R^d$.  By \Cref{def:nested}, $\Gamma_j^{\omega^{(n)}}(x,y)$ is the average of the first $N_j^{(n)}$ terms of the sequence $\big(Z_\ell(x,y)\big)_{\ell\ge1}$ of~\eqref{eq:Gamma-sample}, and this sequence does not depend on $n$.  Its terms are i.i.d.\ and bounded, with mean $Q_j(x-y)$ by \Cref{lem:cov-mean}.  Because $N_j^{(n)}\to\infty$, the strong law of large numbers gives
\begin{equation}\label{eq:pointwise-lim}
    \Gamma_j^{\omega^{(n)}}(x,y)\;\longrightarrow\;Q_j(x-y)
    \qquad\text{almost surely.}
\end{equation}
The event on which~\eqref{eq:pointwise-lim} holds depends on $(x,y)$, and there are uncountably many pairs.  On a bounded set, a $\delta$-net reduces them to finitely many, as in the proof of \Cref{thm:good-geom}. Write $f_n:=\Gamma_j^{\omega^{(n)}}-Q_j$, and let $K_R\subset\R^d$ be the closed ball of radius $R$ about the origin.  Both kernels are Lipschitz in each argument on all of $\R^d$ with the $n$-independent constant $L=O(2^{(d-1)j})$ of~\eqref{eq:lip}, so~\eqref{eq:lip_estimate} holds for every $f_n$ with the same $L$.  Fix $\epsilon>0$ and $R\in\N$, let $P$ be a finite $\delta$-net of $K_R\times K_R$ with $\delta:=\epsilon/(8L)$, and let $E_{j,\epsilon,R}$ be the event that~\eqref{eq:pointwise-lim} holds at every point of $P$.  As a finite intersection of probability-one events, $E_{j,\epsilon,R}$ has probability one.  On $E_{j,\epsilon,R}$ there is an $n_0$ such that $|f_n|\le\epsilon/2$ on $P$ for all $n\ge n_0$, and~\eqref{eq:lip_estimate} then gives $\sup_{K_R\times K_R}|f_n|\le\epsilon/2+4L\delta=\epsilon$ for these $n$.

The events $E_{j,1/k,R}$ with $j\ge0$ and $k,R\in\N$ are countably many, so their intersection $E$ has probability one.  On $E$, for every shell $j$ and every $R$, the supremum of $|f_n|$ over $K_R\times K_R$ is eventually at most $1/k$ for each $k$ and therefore tends to zero.  Every compact $K'$ lies in some $K_R$, which gives~\eqref{eq:kernel-lim}.
\end{proof}

\begin{notation}\label{not:quench}
From this point on, we fix a network whose widths satisfy~\eqref{eq:width-gen} and condition on a realization of its features $\omega=(\omega_j)_{j\ge 0}$ satisfying~\eqref{eq:gg-approx} and~\eqref{eq:gg-off} for all $j\ge j_0$.  Each network of \Cref{def:nested} is such a network, and \Cref{prop:uniform-N} applies \Cref{sec:d2-setup,sec:d2-converge} to the entire family.  In expectations and probabilities, we suppress the conditioning on the features and write, e.g.,  $\E_{a_j}[\,\cdot\,]$ for $\E_{a_j}[\,\cdot\mid\omega,\cF_{j-1}]$.
\end{notation}

\section{One-shell analysis}\label{sec:one-shell}

\subsection{The tilted measure}
\label{sec:coercivity}

The tilted measure, the Gaussian prior on the shell coefficients re\-weight\-ed by $e^{-X_j}$, exists only if its normalization $\E_{a_j}[e^{-X_j}]=e^{\Psi_j}$ is finite and strictly positive for every background field. Finiteness comes from \emph{coercivity}: the $+z^4$ term of $T$ dominates all lower powers of $z$, including its sign-indefinite quadratic term.

\begin{lemma}[Quartic coercivity]\label{lem:T-coercivity}
Let $C\ge0$, $u\in\mathbb R$.  The Wick-ordered quartic obeys
\begin{equation}\label{eq:H4-coerce}
    H_4(u;C)\;\ge\;\tfrac12\,u^4 - 15\,C^2\;.
\end{equation}
Consequently, for all $B,z\in\R$, $D\ge0$, and $0\le q\le 2q_j$, the function $T$ of~\eqref{eq:T4-def} satisfies
\begin{equation}\label{eq:T-coercivity}
    T(B,D,q;\,z)\;\ge\;\tfrac{1}{16}\,z^4
    - c_{\mathrm{coer}}\,(1+B^4+D^2)\;,
\end{equation}
where $c_{\mathrm{coer}}$ depends only on $q_j$.
\end{lemma}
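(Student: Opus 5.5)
The plan is to prove \eqref{eq:H4-coerce} by absorbing the indefinite quadratic term with Young's inequality, and then to treat $T$ as the difference of two Wick quartics, using \eqref{eq:H4-coerce} on the first and a crude upper bound on the second.

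For \eqref{eq:H4-coerce}, start from $H_4(u;C)=u^4-6Cu^2+3C^2$. Young's inequality gives $6Cu^2\le \tfrac12u^4+18C^2$. Hence
\begin{equation*}
H_4(u;C)\;\ge\;\tfrac12u^4-15C^2 ,
\end{equation*}
which is exactly the claimed bound with constant $15$.

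For \eqref{eq:T-coercivity}, write
\begin{equation*}
T=H_4(B+z;D+q)-H_4(B;D).
\end{equation*}

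\textbf{First term.} Apply \eqref{eq:H4-coerce} with $u=B+z$ and $C=D+q$. This gives
\begin{equation*}
H_4(B+z;D+q)\;\ge\;\tfrac12(B+z)^4-15(D+q)^2 .
\end{equation*}
To extract a multiple of $z^4$, use convexity of $s\mapsto s^4$:
\begin{equation*}
z^4=\big((B+z)-B\big)^4\le 8(B+z)^4+8B^4 ,
\end{equation*}
so $(B+z)^4\ge \tfrac18z^4-B^4$. Therefore $\tfrac12(B+z)^4\ge\tfrac1{16}z^4-\tfrac12B^4$, which produces the stated coefficient $\tfrac1{16}$. The Wick-constant term is handled by $(D+q)^2\le 2D^2+2q^2\le 2D^2+8q_j^2$, using the hypothesis $q\le 2q_j$.

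\textbf{Second term.} Bound it from above:
\begin{equation*}
H_4(B;D)=(B^2-3D)^2-6D^2\;\le\;(B^2-3D)^2\;\le\;2B^4+18D^2 .
\end{equation*}

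\textbf{Combining.} Putting the two estimates together,
\begin{equation*}
T\;\ge\;\tfrac1{16}z^4-\tfrac52B^4-48D^2-120q_j^2 .
\end{equation*}
Set $c_{\mathrm{coer}}:=\max\{\tfrac52,\,48,\,120q_j^2\}$. Then $T\ge \tfrac1{16}z^4-c_{\mathrm{coer}}(1+B^4+D^2)$, and $c_{\mathrm{coer}}$ depends only on $q_j$, as required.

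There is no real obstacle here. The only care needed is the bookkeeping of constants: the factor $\tfrac18$ from convexity combines with the $\tfrac12$ in \eqref{eq:H4-coerce} to give $\tfrac1{16}$, and the dependence on $q$ must be carried entirely into $c_{\mathrm{coer}}$ through $q\le2q_j$.
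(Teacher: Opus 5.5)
Your proof is correct and follows the paper's argument essentially step for step: Young's inequality $6Cu^2\le\tfrac12u^4+18C^2$ for \eqref{eq:H4-coerce}, then that bound applied to $H_4(B+z;D+q)$ together with $(B+z)^4\ge\tfrac18z^4-B^4$ and $(D+q)^2\le2D^2+8q_j^2$. The only difference is in the second term: you use the completed square to get $H_4(B;D)\le 2B^4+18D^2$, while the paper drops the nonpositive middle term to get $H_4(B;D)\le B^4+3D^2$, so the paper's constant is $c_{\mathrm{coer}}=\max\{33,120q_j^2\}$ rather than your $\max\{48,120q_j^2\}$.
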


\begin{proof}
For~\eqref{eq:H4-coerce}, apply Young's inequality $6Cu^2\le\tfrac12 u^4+18C^2$ to the middle term of $H_4(u;C)=u^4-6Cu^2+3C^2$.

For \eqref{eq:T-coercivity}, the first term of \eqref{eq:T4-def} is bounded below by \eqref{eq:H4-coerce} with $u=B+z$ and $C=D+q$, and the second by $H_4(B;D)\le B^4+3D^2$, obtained by dropping its nonpositive middle term.  We also use the elementary inequality
\begin{equation}\label{eq:power-sum}
    (x_1+\cdots+x_n)^p \;\le\; n^{p-1}\big(x_1^p+\cdots+x_n^p\big)\;,
    \qquad x_i\ge0,\ p\ge1\;,
\end{equation}
which is Jensen's inequality for the convex map $s\mapsto s^p$.  At $n=2$ and $p=4$, applied to $|z|\le|B+z|+|B|$, it gives $(B+z)^4\ge\tfrac18z^4-B^4$.  At $p=2$ it gives $(D+q)^2\le2D^2+8q_j^2$, since $q\le2q_j$.  Combining these bounds,
\begin{equation}
    T(B,D,q;z)\;\ge\;\tfrac{1}{16}z^4
    - \tfrac32 B^4 - 33D^2 - 120q_j^2\;,
\end{equation}
which is~\eqref{eq:T-coercivity} with $c_{\mathrm{coer}}:=\max\{33,\,120q_j^2\}$.
\end{proof}

\begin{lemma}[Lower bound on the shell interaction]
\label{lem:X-lower}
For each $j\ge1$ set
\begin{equation}\label{eq:CB-def}
    C_B \;:=\; t\,c_{\mathrm{coer}}\int g(x)\,
    \Big(1+B_{j-1}(x)^4+\big(D^\omega_{j-1}(x)\big)^2\Big)\,dx\;,
\end{equation}
with $c_{\mathrm{coer}}$ as in \Cref{lem:T-coercivity}.  Then $C_B$ is finite and $\cF_{j-1}$-measurable, and
\begin{equation}\label{eq:X-bound}
    X_j \;\ge\; \frac{t}{16}\int g\,\varphi_j^4\,dx
    - C_B \;\ge\; -C_B
\end{equation}
for every $a_j\in\R^{N_j}$.
\end{lemma}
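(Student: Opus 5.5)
The plan is to integrate the pointwise coercivity bound of \Cref{lem:T-coercivity} against $t\,g(x)\,dx$. For every $x$, the arguments of $T$ in the definition~\eqref{eq:Xj} of $X_j$ are $B=B_{j-1}(x)$, $D=D^\omega_{j-1}(x)\ge0$ and $q=q^\omega_j(x)$. The hypothesis $0\le q\le 2q_j$ holds deterministically, because $\cos^2\le1$ in~\eqref{eq:quenched-cov} gives $0\le q_j^\omega(x)\le 2q_j$, as already noted in the proof of \Cref{lem:V-lower}. Applying~\eqref{eq:T-coercivity} at each $x$ with $z=\varphi_j(x)$ gives
\begin{equation*}
T\big(B_{j-1}(x),D^\omega_{j-1}(x),q^\omega_j(x);\varphi_j(x)\big)\;\ge\;\tfrac1{16}\varphi_j(x)^4-c_{\mathrm{coer}}\big(1+B_{j-1}(x)^4+D^\omega_{j-1}(x)^2\big),
\end{equation*}
for every $a_j\in\R^{N_j}$. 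Multiplying by $t\,g(x)\ge0$ and integrating then yields the first inequality of~\eqref{eq:X-bound}. The second inequality is immediate, since $t>0$, $g\ge0$ and $\varphi_j^4\ge0$.

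Next I check that $C_B$ is finite. The function $g$ is bounded and compactly supported, and the integrand is continuous in $x$. Indeed, $B_{j-1}$ is a finite sum of cosines with fixed coefficients, so $|B_{j-1}(x)|\le\sum_{i<j}\sum_\ell|a_{i,\ell}|\sqrt{2q_i/(N_i\sigma^2)}<\infty$ uniformly in $x$. Likewise $D^\omega_{j-1}\le 2D_{j-1}$. Hence $C_B\le t\,c_{\mathrm{coer}}\|g\|_1\big(1+\|B_{j-1}\|_\infty^4+4D_{j-1}^2\big)<\infty$.

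For measurability, $C_B$ depends only on $(\omega_i,a_i)_{i\le j-1}$, so it is $\cF_{j-1}$-measurable. The integrand is jointly continuous in $x$ and in these finitely many parameters, so Fubini/Tonelli for the nonnegative integrand gives measurability of the integral. Note also that $c_{\mathrm{coer}}$ depends only on $q_j$, which is deterministic.

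I do not expect a genuine obstacle here. The only points needing care are verifying the hypothesis $q\le 2q_j$ pointwise and keeping track of the nonnegativity of $g$, so that the pointwise inequality integrates in the right direction.
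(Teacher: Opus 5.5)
Your proposal is correct and takes essentially the same route as the paper: note $0\le q_j^\omega(x)\le 2q_j$, apply the pointwise coercivity bound \eqref{eq:T-coercivity} at each $x$, multiply by $t\,g\ge0$ and integrate, then discard the nonnegative quartic term. Your finiteness argument, using the uniform bound on $|B_{j-1}|$ together with $D^\omega_{j-1}\le 2D_{j-1}$, is a slightly more explicit version of the paper's appeal to continuity on the compact set $K$.
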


\begin{proof}
At each $x\in K$, write $B$, $D$, and $z$ for $B_{j-1}(x)$, $D_{j-1}^\omega(x)$, and $\varphi_j(x)$, and note that $q_j^\omega(x)\le 2q_j$.  Then $T$ obeys the coercivity bound~\eqref{eq:T-coercivity}.  Multiplying by $t$ and integrating against $g$ gives the first inequality of~\eqref{eq:X-bound}, and the second follows by discarding the quartic term, which is nonnegative.

The constant $C_B$ is $\cF_{j-1}$-measurable because $B_{j-1}$ and $D^\omega_{j-1}$ are.  It is finite because both are continuous and hence bounded on the compact set $K$.
\end{proof}

\begin{lemma}[Finiteness of the shell log-Laplace transform]
\label{lem:Psi-finite}
For each $j\ge1$, the shell log-Laplace transform~\eqref{eq:Psi-def} satisfies $0\le\Psi_j\le C_B$.  Moreover, $\E_{a_j}\big[\,|P(a_j)|\,e^{-X_j}\big]<\infty$ for every polynomial~$P$.
\end{lemma}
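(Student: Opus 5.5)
The lower bound $\Psi_j\ge0$ is already \Cref{lem:positivity}, so I only need the upper bound $\Psi_j\le C_B$ and the integrability claim. For the upper bound, \Cref{lem:X-lower} gives $-X_j\le C_B$ pointwise in $a_j$. The right side is $\cF_{j-1}$-measurable, so it is constant under $\E_{a_j}[\,\cdot\mid\cF_{j-1},\omega_j]$. Hence
\[
e^{\Psi_j}=\E_{a_j}\big[e^{-X_j}\big]\le e^{C_B}.
\]
Taking logarithms and using that $C_B$ is finite (\Cref{lem:X-lower}) gives $0\le\Psi_j\le C_B<\infty$.

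For the polynomial moments I will use the same pointwise bound rather than the quartic gain. Since $|P(a_j)|e^{-X_j}\le e^{C_B}|P(a_j)|$ for every $a_j\in\R^{N_j}$, it suffices that
\[
\E_{a_j}\big[|P(a_j)|\big]<\infty \quad\text{under } a_j\sim\cN(0,\sigma^2I_{N_j}).
\]
This holds because $P$ is a finite sum of monomials in finitely many Gaussian variables. By the triangle inequality and independence, each monomial's absolute moment factors into one-dimensional Gaussian absolute moments $\E|a_{j,\ell}|^k<\infty$. Multiplying by the finite constant $e^{C_B}$ gives the claim. The quartic term $\tfrac{t}{16}\int g\varphi_j^4$ in \eqref{eq:X-bound} is not needed here. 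It matters only for the tilted measure's finer properties, which come later.

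There is no real obstacle. The only point needing care is the measurability bookkeeping: $C_B$ must depend only on $\cF_{j-1}$ (and on $\omega$, which is fixed by \Cref{not:quench}) and not on $a_j$, so that it can be pulled out of the conditional expectation. This is exactly what \Cref{lem:X-lower} provides. Finiteness of $C_B$ itself follows from continuity of $B_{j-1}$ and $D^\omega_{j-1}$ on the compact set $K$, as noted there.
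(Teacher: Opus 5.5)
Your proof is correct and follows essentially the same argument as the paper. The lower bound comes from \Cref{lem:positivity}, and both the upper bound and the polynomial integrability come from the pointwise bound $e^{-X_j}\le e^{C_B}$ of \Cref{lem:X-lower}, with $C_B$ independent of $a_j$, combined with the finiteness of all Gaussian moments.
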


\begin{proof}
The lower bound is \Cref{lem:positivity}.  By \Cref{lem:X-lower}, $e^{-X_j}\le e^{C_B}$ for every $a_j$, and $C_B$ does not depend on $a_j$.  Hence $\E_{a_j}[e^{-X_j}]\le e^{C_B}$, which is the upper bound, and $\E_{a_j}[\,|P(a_j)|\,e^{-X_j}]\le e^{C_B}\,\E_{a_j}[\,|P(a_j)|\,]$, which is finite because the Gaussian prior has moments of all orders.
\end{proof}

By \Cref{lem:Psi-finite}, the normalization $\E_{a_j}[e^{-X_j}]=e^{\Psi_j}$ is finite and strictly positive.  Set
\begin{equation}\label{eq:W-def}
    W_j(a_j) \;:=\;
    \frac{|a_j|^2}{2\sigma^2} + X_j(a_j)\;.
\end{equation}
The \emph{tilted measure} is the probability measure
\begin{equation}\label{eq:tilted-def}
    d\rho_j(a_j)
    \;\propto\;
    e^{-W_j(a_j)}\,da_j
    \;\propto\;
    e^{-X_j(a_j)}\,d\gamma_\sigma(a_j)\;.
\end{equation}
We write $\langle F\rangle :=\E_{\rho_j}[F]$ for expectations under this measure.

\subsection{Uniform log-concavity}
\label{sec:schur}

Several of the estimates that follow need the centered moments of the shell field under the tilted measure $\rho_j$ to be bounded at the free-field scale $q^\omega_j$.  \Cref{cor:tilted-moments} derives that bound from the uniform convexity of $W_j$, equivalently the uniform log-concavity of $\rho_j$, which this subsection establishes.

Because $\varphi_j(x)=\sum_\ell a_{j,\ell}\,U_{j,\ell}(x)$ is linear in the coefficients, for any smooth $h$,
\begin{equation*}
    \nabla^2_{a_j}\,h\big(\varphi_j(x)\big)
    \;=\; h''\big(\varphi_j(x)\big)\;U_j(x)\,U_j(x)^T \;.
\end{equation*}
The Hessian of $W_j$ is therefore $\sigma^{-2}I$ plus the $g$-weighted integral of terms of this form, each a scalar prefactor times the rank-one matrix $U_j(x)\,U_j(x)^T$.  A lower bound on the prefactor, uniform in $x$, comes out of the integral and leaves the \emph{Gram matrix} of the shell's neurons in $L^2(g\,dx)$,
\begin{equation}\label{eq:Sigma-def}
    \Sigma_j \;:=\; \int g(x)\,U_j(x)\,U_j(x)^T\,dx
    \;\in\;\R^{N_j\times N_j},
    \qquad
    (\Sigma_j)_{\ell m}
    \;=\; \langle U_{j,\ell},\,U_{j,m}
    \rangle_{L^2(g\,dx)}\,.
\end{equation}
\Cref{lem:Schur} bounds $\|\Sigma_j\|_{\mathrm{op}}$ by the row integral of the quenched shell kernel, and \Cref{prop:hessian} combines this with a lower bound on the prefactor.

\begin{lemma}[Operator norm of the Gram matrix]
\label{lem:Schur}
For all $j\ge j_0$ (\Cref{thm:good-geom}),
\begin{equation}\label{eq:Schur}
    \|\Sigma_j\|_{\mathrm{op}}
    \;\le\; \frac{A_j^\omega}{\sigma^2}
    \;\le\; \frac{\kappa\,2^{-2j}}{\sigma^2}\,.
\end{equation}
The first inequality holds for every realization of the features, and only the second uses~\eqref{eq:gg-off}.
\end{lemma}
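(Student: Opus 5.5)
The plan is to bound the operator norm of the symmetric positive semidefinite matrix $\Sigma_j$ by its largest absolute row sum (Schur's test for symmetric matrices), and then to dominate that row sum by the row integral of the quenched kernel.

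\textbf{Step 1.} Since $\Sigma_j$ is symmetric, the largest-row-sum bound gives
\[
\|\Sigma_j\|_{\mathrm{op}}\le\max_{\ell}\sum_m|(\Sigma_j)_{\ell m}|.
\]
This can also be seen from Schur's test with unit weights, or from Gershgorin's theorem applied to a symmetric matrix.

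\textbf{Step 2.} Write out a single row sum. By the triangle inequality,
\[
\sum_m|(\Sigma_j)_{\ell m}|\le\int g(x)\,|U_{j,\ell}(x)|\sum_m|U_{j,m}(x)|\,dx.
\]
The absolute values of a sum over $m$ do not directly combine into $|\Gamma_j^\omega|$, so I avoid bounding entrywise. Instead I use a variational argument. For a unit vector $v$,
\[
v^T\Sigma_jv=\int g\,(v\cdot U_j)^2\,dx=\|T^*v\|^2_{L^2(g)},
\]
where $T^*v:=v\cdot U_j$. Hence $\|\Sigma_j\|_{\mathrm{op}}=\|TT^*\|$, and this equals the norm of $T^*T$ acting on $L^2(g\,dx)$. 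That operator is the integral operator with kernel $U_j(x)^TU_j(y)=\Gamma_j^\omega(x,y)/\sigma^2$ against the measure $g(y)\,dy$.

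\textbf{Step 3.} Apply Schur's test to the operator $f\mapsto\int \Gamma_j^\omega(\cdot,y)f(y)g(y)\,dy/\sigma^2$ on $L^2(g\,dx)$. The kernel is symmetric in $x$ and $y$, so its norm is at most
\[
\frac{1}{\sigma^2}\sup_{x\in K}\int g(y)\,|\Gamma_j^\omega(x,y)|\,dy=\frac{A_j^\omega}{\sigma^2}.
\]
The supremum may be restricted to $K$ because $L^2(g\,dx)$ only sees $x\in K=\supp g$. This argument uses no property of the features, so the first inequality holds for every realization.

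\textbf{Step 4.} The second inequality follows immediately from \eqref{eq:gg-off} for $j\ge j_0$.

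The only delicate point is passing from $\Sigma_j$ to the kernel operator, since an entrywise row-sum bound on $\Sigma_j$ does not yield $A_j^\omega$. The identity $\|TT^*\|=\|T^*T\|$ handles this cleanly, so I expect no genuine obstacle.
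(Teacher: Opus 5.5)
Your proof is correct and is essentially the paper's. Both pass from $\Sigma_j$ to the integral operator with kernel $\Gamma_j^\omega(x,y)/\sigma^2$ against $g(y)\,dy$ through the field $U_j(x)^Tv$ and bound it by the row integral $A_j^\omega$: the paper evaluates the eigenvalue equation at a maximum point of $|U_j^Tv|$ on $K$, while you use $\|TT^*\|=\|T^*T\|$ and the $L^2$ Schur test, which the paper itself remarks gives the same bound for a symmetric kernel (your Step~1 and the entrywise estimate opening Step~2 are never used and can be deleted).
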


\begin{proof}
The matrix $\Sigma_j$ is symmetric and positive semidefinite, since $v^T\Sigma_j v=\int g\,(U_j(x)^T\,v)^2\,dx\ge0$, so its operator norm is its largest eigenvalue $\lambda$.  Let $v$ be a corresponding eigenvector, and define the associated field $f(x):=U_j(x)^T\,v$.  Pairing the eigenvalue equation $\Sigma_j v=\lambda v$ with $U_j(x)$ and using $U_j(x)^T\,U_j(y)=\Gamma^\omega_j(x,y)/\sigma^2$ gives
\begin{equation}\label{eq:schur-eigen}
    \lambda\,f(x)
    \;=\; U_j(x)^T\,\Sigma_jv
    \;=\; \frac{1}{\sigma^2}\int g(y)\,\Gamma^\omega_j(x,y)\,f(y)\,dy\;.
\end{equation}
If $f$ vanishes identically on $K$, then $\lambda|v|^2=\int g\,f^2\,dx=0$, and the bound is trivial. Otherwise, $|f|$ is continuous on the compact set $K$ and attains its maximum at some $x_*\in K$ with $|f(x_*)|>0$.  Evaluating \eqref{eq:schur-eigen} at $x_*$ and bounding $|f(y)|\le|f(x_*)|$ under the integral,
\begin{equation*}
    \lambda\,|f(x_*)|
    \;\le\; \frac{|f(x_*)|}{\sigma^2}
    \sup_{x\in K}\int g(y)\,|\Gamma^\omega_j(x,y)|\,dy
    \;=\; |f(x_*)|\,\frac{A_j^\omega}{\sigma^2}\;.
\end{equation*}
Dividing by $|f(x_*)|$ gives $\|\Sigma_j\|_{\mathrm{op}}=\lambda\le A_j^\omega/\sigma^2$.

This argument is the continuum form of the fact that no eigenvalue of a matrix exceeds its largest absolute row sum.  For a symmetric kernel, it gives the same bound as the Schur test \cite[Thm.~5.2]{HalmosSunder1978}.
\end{proof}

\begin{proposition}[One-shell Hessian]
\label{prop:hessian}
Set $\mathcal{D}_j:=\sup_{x\in K}D^\omega_j(x)$.  For every $j\ge j_0$,
\begin{equation}\label{eq:hessian-combined}
    \nabla^2_{a_j}W_j
    \;\ge\; \frac{1}{\sigma^2}\Big(1
    - \underbrace{12\,t\,\kappa\,\mathcal{D}_j
    \cdot 2^{-2j}}_{\text{one-shell perturbation}}
    \Big)\,I_{N_j}\;.
\end{equation}
The one-shell perturbation depends on neither $\Lambda$ nor the coefficients $a_0,\dots,a_j$, and with constants free of the widths $(N_j)$,
\begin{equation}\label{eq:quartic-pert-size}
    12\,t\,\kappa\,\mathcal{D}_j\,2^{-2j}
    \;=\;
    \begin{cases}
    O\big(t\,2^{-2j}j\big) & d=2\;,\\[4pt]
    O\big(t\,2^{(d-4)j}\big) & d\ge3\;.
    \end{cases}
\end{equation}
\end{proposition}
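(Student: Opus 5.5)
The plan is to compute the Hessian of $W_j$ from~\eqref{eq:W-def} directly, using linearity of $\varphi_j$ in $a_j$ to reduce it to a scalar prefactor times the rank-one matrices $U_j(x)U_j(x)^T$. We then bound that prefactor below pointwise in $x$ and finish with the Gram-matrix bound of \Cref{lem:Schur}.

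First, $|a_j|^2/(2\sigma^2)$ contributes $\sigma^{-2}I_{N_j}$. In $X_j$ from~\eqref{eq:Xj}, the subtracted term $H_4(B_{j-1};D^\omega_{j-1})$ in~\eqref{eq:T4-def} does not depend on $a_j$. The remaining term $H_4(B+z;D+q)$, with $z=\varphi_j(x)$, has second $z$-derivative equal to $12\,H_2(B+z;D+q)$ by two applications of~\eqref{eq:H4-derivs}. Since the integrand is a polynomial in $a_j$ with coefficients continuous in $x$ on the compact set $K$, differentiation under the integral is justified. Together with the identity $\nabla^2_{a_j}h(\varphi_j(x))=h''(\varphi_j(x))U_j(x)U_j(x)^T$ this gives
\begin{equation*}
    \nabla^2_{a_j}W_j
    \;=\;\frac{1}{\sigma^2}I_{N_j}
    +12\,t\int g(x)\,\Big(\big(B_{j-1}(x)+\varphi_j(x)\big)^2-D^\omega_j(x)\Big)\,U_j(x)U_j(x)^T\,dx\;,
\end{equation*}
where we used $D^\omega_{j-1}(x)+q^\omega_j(x)=D^\omega_j(x)$.

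Next, we drop the nonnegative square. Each matrix $g(x)U_j(x)U_j(x)^T$ is positive semidefinite, so pointwise lower bounds on the scalar prefactor pass through the integral in the Loewner order. With $D^\omega_j(x)\le\mathcal{D}_j$ on $K$ this gives $\nabla^2_{a_j}W_j\ge\sigma^{-2}I-12\,t\,\mathcal{D}_j\,\Sigma_j$. \Cref{lem:Schur} then gives $\Sigma_j\le\|\Sigma_j\|_{\mathrm{op}}I\le\kappa2^{-2j}\sigma^{-2}I$ for $j\ge j_0$, which is~\eqref{eq:hessian-combined}. The perturbation involves only $t,\kappa,\mathcal{D}_j$ and $j$, so it is independent of $\Lambda$ and of the coefficients: the background and the shell field entered only through the discarded square.

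For~\eqref{eq:quartic-pert-size}, we use $q^\omega_i\le2q_i$ pointwise (from $\cos^2\le1$), which gives $\mathcal{D}_j\le2D_j$. \Cref{lem:shell-dim}(e) then gives $D_j=O(j)$ for $d=2$ and $D_j=O(2^{(d-2)j})$ for $d\ge3$, with $D_0=q_0$ handled trivially. None of these constants, nor $\kappa$, depends on the widths. The only step needing care is the sign bookkeeping when the indefinite prefactor is pulled out of the integral. This step is harmless because the rank-one matrices are positive semidefinite and $g\ge0$, so I expect no real obstacle.
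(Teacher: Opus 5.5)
Your proposal is correct and follows the paper's proof essentially line for line. You compute $\nabla^2_{a_j}W_j=\sigma^{-2}I+t\int g\,T_{zz}\,U_jU_j^T\,dx$ with $T_{zz}=12(B+z)^2-12(D+q)$, discard the square, bound $-12D^\omega_j$ below by $-12\mathcal{D}_j$ to reach $-12t\mathcal{D}_j\Sigma_j$, apply \Cref{lem:Schur}, and finish with $\mathcal{D}_j\le 2D_j$ and \Cref{lem:shell-dim}. One cosmetic point: \eqref{eq:H4-derivs} records only $\partial_uH_4=4H_3$, so the step $\partial_uH_3=3H_2$ comes from differentiating directly, not from a second application of that identity.
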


\begin{proof}
Differentiating~\eqref{eq:W-def} twice in $a_j$:
\begin{equation}\label{eq:hessian}
    \nabla^2_{a_j}W_j
    \;=\; \frac{1}{\sigma^2}\,I_{N_j}
    + t\int g(x)\,
    T_{zz}\,U_j(x)\,U_j(x)^T\,dx\;,
\end{equation}
where $T_{zz}$ is the second $z$-derivative of $T$ \eqref{eq:T4-def}, evaluated at $B=B_{j-1}(x)$, $D=D^\omega_{j-1}(x)$, $q=q^\omega_j(x)$, and $z=\varphi_j(x)$. Direct differentiation gives
\begin{equation}\label{eq:Tzz4}
    T_{zz} \;=\; 12(B+z)^2 - 12(D+q)\;.
\end{equation}
The first term of~\eqref{eq:Tzz4} is nonnegative, and the second is at least $-12\big(D^\omega_{j-1}(x)+q^\omega_j(x)\big)=-12\,D^\omega_j(x)\ge-12\,\mathcal{D}_j$.  The quartic contribution to the Hessian is therefore bounded below by $-12\,t\,\mathcal{D}_j\,\Sigma_j$. Using \Cref{lem:Schur},
\begin{align}
    \nabla^2_{a_j}W_j
    &\;\ge\; \frac{1}{\sigma^2}\,I
    - 12\,t\,\mathcal{D}_j\,\|\Sigma_j\|_{\mathrm{op}} I
    \notag\\
    &\;\ge\; \frac{1}{\sigma^2}\Big(1
    - 12\,t\,\kappa\,\mathcal{D}_j\cdot 2^{-2j}
    \Big)\,I\;.
\end{align}

Since $q^\omega_j\le 2q_j$ deterministically, $\mathcal{D}_j\le 2D_j$, so \Cref{lem:shell-dim}(a),(e) give a constant $c_{\mathcal{D}}$, depending only on $d$, $\chi$, and $m$, with
\begin{equation}\label{eq:Dcal-bound}
    \mathcal{D}_j \;\le\;
    \begin{cases}
    c_{\mathcal{D}}\,(j{+}1) & d=2\;,\\[4pt]
    c_{\mathcal{D}}\,2^{(d-2)j} & d\ge3\;,
    \end{cases}
\end{equation}
and multiplying by $12\,t\,\kappa\,2^{-2j}$ gives~\eqref{eq:quartic-pert-size}.
\end{proof}

\begin{corollary}[Uniform convexity]
\label{cor:BL}
Let $2\le d<4$.  With no restriction on the coupling, there is a threshold $j_1$ with $j_0\le j_1<\infty$ such that for every $j\ge j_1$,
\begin{equation}\label{eq:BL-hessian}
    \nabla^2_{a_j}W_j
    \;\ge\; \frac{1}{2\sigma^2}\,I_{N_j}\;.
\end{equation}
The threshold depends only on $d$, $\chi$, $m$, $t$, $\kappa$, and the threshold $j_0$ of \Cref{thm:good-geom}, and in particular not on the widths $(N_j)$, on $\Lambda$, or on the coefficients $a_0,\dots,a_j$.
\end{corollary}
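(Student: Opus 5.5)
The plan is to read \eqref{eq:BL-hessian} directly off \Cref{prop:hessian}. For every $j\ge j_0$ that proposition gives
\begin{equation*}
    \nabla^2_{a_j}W_j\;\ge\;\frac{1}{\sigma^2}\big(1-\varepsilon_j\big)I_{N_j},
    \qquad \varepsilon_j:=12\,t\,\kappa\,\mathcal{D}_j\,2^{-2j},
\end{equation*}
and $\varepsilon_j$ depends on neither $\Lambda$ nor the coefficients $a_0,\dots,a_j$. It therefore suffices to find a deterministic index $j_1\ge j_0$ such that $\varepsilon_j\le\tfrac12$ for all $j\ge j_1$. The Hessian bound then becomes $(2\sigma^2)^{-1}I_{N_j}$.

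To make $\varepsilon_j$ explicit and width-free, I would use $q^\omega_j\le 2q_j$ and \eqref{eq:Dcal-bound}, which give $\varepsilon_j\le \bar\varepsilon_j$ with
\begin{equation*}
    \bar\varepsilon_j:=\begin{cases}
    12\,t\,\kappa\,c_{\mathcal{D}}\,(j{+}1)\,2^{-2j} & d=2,\\
    12\,t\,\kappa\,c_{\mathcal{D}}\,2^{(d-4)j} & d=3.
    \end{cases}
\end{equation*}
The constant $c_{\mathcal{D}}$ depends only on $d,\chi,m$, and $\kappa$ depends only on $d,\chi,g$. So $\bar\varepsilon_j$ is a deterministic sequence depending only on the listed data. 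It tends to zero: for $d=2$ because $(j+1)4^{-j}\to0$, and for $d=3$ (the only integer with $2<d<4$) because $2^{-j}\to0$. This is the place where $d<4$ is used.

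Define $j_*:=\min\{k\ge 1:\bar\varepsilon_j\le\tfrac12\text{ for all }j\ge k\}$, which is finite because $\bar\varepsilon_j\to0$, and set $j_1:=\max\{j_0,j_*\}$. The coupling $t$ is arbitrary: a larger $t$ only enlarges $j_*$ and never breaks finiteness. This is exactly the assertion that there is no restriction on the coupling. By construction $j_1$ depends only on $d,\chi,m,t,\kappa$ (through $j_*$) and on $j_0$. It does not involve the widths, $\Lambda$, or the coefficients, since none of these enter $\bar\varepsilon_j$.

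I expect no real obstacle. The one point needing care is that the bound $\varepsilon_j\le\bar\varepsilon_j$ uses the deterministic inequality $q^\omega_j\le2q_j$ for every shell $i\le j$, including shells below $j_0$, so that $\mathcal{D}_j$ carries no dependence on the features or the widths. The only feature-dependent input is $A_j^\omega\le\kappa 2^{-2j}$, which is valid precisely for $j\ge j_0$. This is why $j_1$ must be taken at least $j_0$.
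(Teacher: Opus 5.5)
Your proposal is correct and follows the paper's proof. Both read the bound off \Cref{prop:hessian}, use the width-free bound \eqref{eq:Dcal-bound} on $\mathcal{D}_j$ to see that the one-shell perturbation tends to zero for $d\in\{2,3\}$, and choose $j_1\ge j_0$ beyond which that perturbation is at most $\tfrac12$.
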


\begin{proof}
By~\eqref{eq:quartic-pert-size} the one-shell perturbation is at most $12\,t\,\kappa c_{\mathcal{D}}(j{+}1)2^{-2j}$ at $d=2$ and $12\,t\,\kappa c_{\mathcal{D}}2^{-j}$ at $d=3$.  Both bounds are free of the widths and tend to zero, so choose $j_1\ge j_0$ beyond which the perturbation is at most $1/2$.  \Cref{prop:hessian} then gives $\nabla^2_{a_j}W_j\ge\sigma^{-2}(1-\tfrac12)I=\tfrac{1}{2\sigma^2}I$.
\end{proof}

\subsection{Moments of the tilted measure}\label{sec:tilted-moments}

\begin{corollary}[Centered tilted moments]
\label{cor:tilted-moments}
Let $2\le d<4$.  Under $\rho_j$, write $\bar{\varphi}_j(x) :=\langle\varphi_j(x)\rangle$ for the tilted mean and $\tilde{\varphi}_j(x) :=\varphi_j(x)-\bar{\varphi}_j(x)$ for the centered part.  Then
\begin{equation}\label{eq:centered-moments}
    \big\langle\tilde{\varphi}_j(x)^{2k}
    \big\rangle
    \;\le\; \big(8q_j^\omega(x)\big)^k\,k!\,
\end{equation}
for all $j\ge j_1$, $k\ge 1$, and $x\in K$.
\end{corollary}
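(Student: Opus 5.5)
The plan is to derive \eqref{eq:centered-moments} from the uniform convexity of \Cref{cor:BL} through a sub-Gaussian bound on the centered linear functional $\tilde\varphi_j(x)$, and then to integrate the resulting tail.

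\emph{Step 1 (log-Sobolev).} For $j\ge j_1$, \eqref{eq:BL-hessian} says that $\rho_j\propto e^{-W_j}\,da_j$ on $\R^{N_j}$ has $\nabla^2 W_j\ge(2\sigma^2)^{-1}I$. The Bakry--\'Emery criterion \cite{BakryEmery1985} then gives a logarithmic Sobolev inequality
\[
\mathrm{Ent}_{\rho_j}(f^2)\le 4\sigma^2\langle|\nabla f|^2\rangle ,
\]
that is, the log-Sobolev constant is the inverse curvature $2\sigma^2$. For Herbst's argument to apply to $e^{sF}$ with $F$ linear, I need $\langle e^{sF}\rangle<\infty$. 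This holds because \Cref{lem:X-lower} bounds $e^{-X_j}$ above by $e^{C_B}$, so the tilted density is at most a constant times the Gaussian density.

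\emph{Step 2 (Herbst).} Fix $x\in K$ and set $F(a_j):=\varphi_j(x)=a_j^TU_j(x)$. This $F$ is linear with constant gradient $U_j(x)$, and
\[
|U_j(x)|^2=q_j^\omega(x)/\sigma^2
\]
by \Cref{def:quenched-cov}. If $q_j^\omega(x)=0$, then $F\equiv0$ and there is nothing to prove. Otherwise the Herbst argument \cite[Thm.~5.3]{ledoux2001concentration} gives, for all $s\in\R$,
\[
\big\langle e^{s\tilde\varphi_j(x)}\big\rangle\le\exp\!\Big(\tfrac{s^2}{2}\cdot 2\sigma^2\cdot\tfrac{q_j^\omega(x)}{\sigma^2}\Big)=e^{s^2q_j^\omega(x)}.
\]
So $\tilde\varphi_j(x)$ is sub-Gaussian with variance proxy $2q_j^\omega(x)$, and the width and $\sigma$ cancel exactly. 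Applying Chebyshev's inequality to both signs gives the tail bound
\[
\rho_j\big(|\tilde\varphi_j(x)|>r\big)\le 2e^{-r^2/(4q_j^\omega(x))}.
\]

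\emph{Step 3 (moments).} Write $q=q_j^\omega(x)$. Using the layer-cake formula,
\[
\langle\tilde\varphi_j^{2k}\rangle=\int_0^\infty 2k\,r^{2k-1}\rho_j(|\tilde\varphi_j|>r)\,dr\le 4k\int_0^\infty r^{2k-1}e^{-r^2/(4q)}dr=4k\cdot\tfrac12(4q)^k\Gamma(k)=2\,(4q)^k\,k!.
\]
Since $2\cdot4^k\le 8^k$ for $k\ge1$, this yields \eqref{eq:centered-moments}.

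The only delicate point is Step 1: getting the Bakry--\'Emery and Herbst constants right, and justifying that they apply on the noncompact space $\R^{N_j}$. The latter needs the exponential integrability noted above. Everything else is a routine computation, and none of the constants depend on $N_j$, $\Lambda$, or the background field.
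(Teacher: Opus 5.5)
Your proposal is correct and follows essentially the same route as the paper. The paper also uses Bakry--\'Emery to get the logarithmic Sobolev inequality with constant $4\sigma^2$, applies Herbst to the linear functional with $|U_j(x)|^2=q_j^\omega(x)/\sigma^2$ to get the tail $2e^{-r^2/(4q_j^\omega(x))}$, and integrates by layer cake to reach $2\cdot4^k\,k!\,(q_j^\omega(x))^k\le(8q_j^\omega(x))^k\,k!$; your check that $\langle e^{s\tilde\varphi_j(x)}\rangle<\infty$ via $e^{-X_j}\le e^{C_B}$ is a small justification the paper leaves implicit.
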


\begin{proof}
By the Hessian bound~\eqref{eq:BL-hessian} and the Bakry--\'Emery criterion (\cite{BakryEmery1985}; \cite[Theorem~5.2]{ledoux2001concentration}), $\rho_j$ satisfies the logarithmic Sobolev inequality
\begin{equation}\label{eq:LSI}
    \mathrm{Ent}_{\rho_j}(h^2)
    \;\le\; 4\sigma^2\int|\nabla h|^2\,
    d\rho_j\;,
\end{equation}
for every smooth $h:\mathbb R^{N_j}\to\mathbb R$, where $\mathrm{Ent}_\mu(h):=\int h\log h\,d\mu -\big(\!\int h\,d\mu\big)\log\!\big(\!\int h\,d\mu\big)$.

The shell field $\varphi_j(x)=U_j(x)^T\,a_j$ is a linear functional of $a_j$ with gradient $U_j(x)$, hence Lipschitz constant
\begin{equation}
    L:=|U_j(x)|\;,
    \qquad
    L^2=|U_j(x)|^2
    =\frac{1}{\sigma^2}\,q_j^\omega(x)\;.
\end{equation}
If $L=0$ (that is, $q_j^\omega(x)=0$) then $\varphi_j(x)=0$ almost surely, and the centered moments vanish. Assume henceforth $L>0$.  Applying the Herbst argument \cite[Theorem~5.3]{ledoux2001concentration} to the $1$-Lipschitz functionals $\pm\varphi_j(x)/L$ and taking a union bound over the two tails gives, for every $s\ge0$,
\begin{equation}\label{eq:herbst-tail}
    \rho_j\big(
    |\varphi_j(x)-\bar\varphi_j(x)|>s\big)
    \;\le\; 2\,e^{-s^2/(2\tau_j^2)}\;,
    \qquad
    \tau_j^2:=2\sigma^2L^2=2\,q_j^\omega(x)\;.
\end{equation}
With the tail bound~\eqref{eq:herbst-tail}, the layer-cake formula gives, for $k\ge1$,
\begin{equation}
    \big\langle\tilde\varphi_j(x)^{2k}\big\rangle
    = 2k\int_0^\infty \,s^{2k-1}\,
      \rho_j\big(|\tilde\varphi_j(x)|>s\big)\,ds
    \le 4k\!\int_0^\infty\! s^{2k-1}e^{-s^2/(2\tau_j^2)}\,ds
    = 2^{k+1}k!\,\tau_j^{2k}\;.
\end{equation}
Inserting $\tau_j^2=2\,q_j^\omega(x)$ gives $2^{k+1}k!\,\tau_j^{2k}=2\cdot4^k\,k!\,q^\omega_j(x)^k$, and $2\cdot4^k\le8^k$ for $k\ge1$ yields~\eqref{eq:centered-moments}.
\end{proof}

\begin{lemma}[Mean identity]\label{lem:mean-exact}
For every $j\ge1$,
\begin{equation}\label{eq:mean-exact}
    \bar{\varphi}_j(x)
    = -4t\int g(y)\,\Gamma_j^\omega(x,y)\,
    \big\langle H_3\big(B_j(y);D^\omega_j(y)\big)
    \big\rangle\,dy \;.
\end{equation}
\end{lemma}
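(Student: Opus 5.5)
The identity is Gaussian integration by parts (Stein's identity) for the prior $\gamma_\sigma(a_j)$, applied to the reweighted integral that defines $\rho_j$. The plan is to integrate by parts against the prior, differentiate $X_j$ under the integral, and identify the resulting kernel as $\Gamma_j^\omega$.

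\emph{Step 1: integration by parts.} Under $\gamma_\sigma=\cN(0,\sigma^2 I_{N_j})$, for a smooth $F$ with polynomially bounded derivative,
\begin{equation*}
\E_{a_j}[a_{j,\ell}F(a_j)]=\sigma^2\,\E_{a_j}[\partial_{a_{j,\ell}}F(a_j)].
\end{equation*}
Take $F=e^{-X_j}$, multiply by $U_{j,\ell}(x)$, sum over $\ell$, and divide by $e^{\Psi_j}$. Using $\varphi_j(x)=U_j(x)^Ta_j$, this gives
\begin{equation*}
\bar\varphi_j(x)=-\sigma^2\,\big\langle U_j(x)^T\nabla_{a_j}X_j\big\rangle .
\end{equation*}
The hypotheses hold by \Cref{lem:Psi-finite}. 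Since $\nabla_{a_j}X_j$ is polynomial in $a_j$, $\E_{a_j}[|P(a_j)|e^{-X_j}]<\infty$ for every polynomial $P$. The boundary terms vanish because $e^{-X_j}\le e^{C_B}$ and the Gaussian density decays. Equivalently, one can integrate $\nabla\cdot(\,\cdot\,e^{-W_j})=0$ against Lebesgue measure, using the superquadratic growth of $W_j$ from \Cref{lem:X-lower}.

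\emph{Step 2: the gradient of $X_j$.} Differentiate $X_j$ under the integral sign; this is justified because $K$ is compact and the integrand is a polynomial in $a_j$ with continuous coefficients. By \eqref{eq:T4-def}, only the first term $H_4(B+z;D+q)$ depends on $z$. The derivative identity \eqref{eq:H4-derivs} then gives
\begin{equation*}
\partial_z T=4H_3(B+z;D+q).
\end{equation*}
Evaluated at $z=\varphi_j(y)$, this is $4H_3\big(B_j(y);D^\omega_j(y)\big)$, because $B_{j-1}+\varphi_j=B_j$ and $D^\omega_{j-1}+q^\omega_j=D^\omega_j$. Hence
\begin{equation*}
\nabla_{a_j}X_j=4t\int g(y)\,H_3\big(B_j(y);D^\omega_j(y)\big)\,U_j(y)\,dy.
\end{equation*}

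\emph{Step 3: assembly.} Substitute this gradient into the formula from Step 1. Then use $\sigma^2U_j(x)^TU_j(y)=\Gamma_j^\omega(x,y)$, which is \eqref{eq:quenched-cov}. Finally, exchange the tilted expectation with the $y$-integral by Fubini, which is valid by the moment finiteness of \Cref{lem:Psi-finite} and the boundedness of $g$ and $U_j$ on $K$. This yields \eqref{eq:mean-exact}.

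The identity is exact and holds for every $j\ge1$, with no need for log-concavity. The only step requiring care is the integrability justification in Step 1, and \Cref{lem:Psi-finite} already supplies it.
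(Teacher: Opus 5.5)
Your proof is correct and follows the paper's argument. The paper integrates $\partial_\ell(e^{-W_j})$ against Lebesgue measure to get $\langle\partial_\ell W_j\rangle=0$, which is the same as your Stein identity for the prior with $F=e^{-X_j}$. It then uses the same gradient $\partial_z T=4H_3$, Fubini, and $\sigma^2U_j(x)^TU_j(y)=\Gamma_j^\omega(x,y)$.
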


\begin{proof}
Write $\partial_\ell:=\partial/\partial a_{j,\ell}$.  Since $X_j\ge-C_B$ by~\eqref{eq:X-bound}, $e^{-W_j}\le e^{C_B}e^{-|a_j|^2/(2\sigma^2)}$, and $\partial_\ell W_j$ is a polynomial in $a_j$.  Integration by parts therefore gives $\int\partial_\ell(e^{-W_j})\,da_j=0$, that is, $\langle\partial_\ell W_j\rangle=0$.
The gradient is
\begin{equation*}
    \partial_\ell W_j=\frac{a_{j,\ell}}{\sigma^2}
    +4t\int g(y)\,H_3\big(B_j(y);D^\omega_j(y)\big)\,U_{j,\ell}(y)\,dy\;,
\end{equation*}
so averaging and solving for $\bar a_{j,\ell}:=\langle a_{j,\ell} \rangle$ gives
\begin{equation*}
    \bar a_{j,\ell}
    =-4t\,\sigma^2\int g(y)\,\big\langle
    H_3\big(B_j(y);D^\omega_j(y)\big)\big\rangle\,
    U_{j,\ell}(y)\,dy\;.
\end{equation*}
The average passes inside the $y$-integral by Fubini, because on $K$ the cubic $H_3\big(B_j(y);D^\omega_j(y)\big)$ is bounded by a polynomial in $|a_j|$ and $\rho_j$ has moments of all orders (\Cref{lem:Psi-finite}).  Since $\varphi_j(x)$ is linear in the coefficients, $\bar\varphi_j(x)=\sum_\ell U_{j,\ell}(x)\,\bar a_{j,\ell}$.  Substituting $\bar a_{j,\ell}$ and $\sigma^2\sum_\ell U_{j,\ell}(x)\,U_{j,\ell}(y)=\Gamma_j^\omega(x,y)$ yields~\eqref{eq:mean-exact}.
\end{proof}

\section{The log-Laplace bound in two dimensions}\label{sec:d2-setup}

We now specialize to $d=2$ and bound the shell log-Laplace transform on backgrounds that are not too large (\Cref{thm:d2-one-shell}).

\subsection{Two dimensions and the interpolation}\label{sec:d2-special}

In two dimensions, \Cref{lem:shell-dim} gives $q_j\asymp1$ and $D_J\asymp J$.  We write $q_{\max}:=\sup_jq_j$, which is finite because $q_j\asymp1$.  Since $q_j^\omega(x)\le2q_j$ pointwise, we fix the constant
\begin{equation}\label{eq:cq-def}
    c_q \;:=\; \max\big\{1,\ 2q_{\max}\big\}\;,
\end{equation}
so that $q^\omega_j\le c_q$ and $D^\omega_J\le c_q(J{+}1)$ at every shell and for every feature realization.  The floor of $1$ lets later steps use $c_q\ge1$.  Consequently, the constant of \Cref{lem:V-lower} obeys $C_V(J)\le c_V(J{+}1)^2$, where
\begin{equation}\label{eq:d2-C1}
    c_V \;:=\; 6\,t\,\|g\|_1\,c_q^2\;,
\end{equation}
so that $e^{-V_J}\le e^{c_V(J+1)^2}$ pointwise.

The proof of the bound on $\Psi_j$ interpolates between the Gaussian prior and the tilted measure.  For $\alpha\in[0,1]$, define the interpolated shell field, shell interaction, and log-Laplace transform:
\begin{align}
    \varphi_{j,\alpha}(x)&\;:=\;\sqrt{\alpha}\varphi_j(x)\;,\\
    X_{j,\alpha} &\;:=\; t\int g\,
    T\big(B_{j-1},D^\omega_{j-1},\alpha\,q^\omega_j;\,
    \sqrt{\alpha}\,\varphi_j\big)\,dx\;,
    \label{eq:Xj-alpha}\\
    \Psi_j(\alpha) &\;:=\;
    \log\E_{a_j}\big[e^{-X_{j,\alpha}}\mid\cF_{j-1},
    \omega_j\big]\;.
    \label{eq:Psi-alpha}
\end{align}
Then $X_{j,0}=0$, $X_{j,1}=X_j$, $\Psi_j(0)=0$, and $\Psi_j(1)=\Psi_j$.  We write $\rho_{j,\alpha}$ and $\langle\cdot\rangle_\alpha$ for the tilted measure and expectation built from $X_{j,\alpha}$, and $\bar\varphi_{j,\alpha}$ and $\tilde\varphi_{j,\alpha}$ for the tilted mean and centered part of $\varphi_{j,\alpha}$.  Multiplying every neuron $U_{j,\ell}$ of shell $j$ by $\sqrt\alpha$ leaves the lower shells unchanged and replaces $\varphi_j$, $\Gamma^\omega_j$, and $q^\omega_j$ by $\varphi_{j,\alpha}$, $\alpha\,\Gamma^\omega_j$, and $\alpha\,q^\omega_j$, and hence $X_j$ by $X_{j,\alpha}$.  The proofs of \Cref{sec:one-shell} use the neurons of shell $j$ only through upper bounds on $q^\omega_j$ and $A^\omega_j$, and the rescaling only lowers these.  The results of \Cref{sec:one-shell} therefore hold for $X_{j,\alpha}$ at every $\alpha$, with the same constants and thresholds.  In particular, $X_{j,\alpha}\ge-C_B$ with $C_B$ as in \Cref{lem:X-lower}, and the mean identity~\eqref{eq:mean-exact} reads
\begin{equation}\label{eq:mean-alpha}
    \bar\varphi_{j,\alpha}(x)
    = -4t\,\alpha\int g(y)\,\Gamma_j^\omega(x,y)\,
    \big\langle H_3^{(\alpha)}(y)\big\rangle_\alpha\,dy\;,
\end{equation}
where
\begin{equation*}
    H_3^{(\alpha)}(y) \;:=\;
    H_3\big(B_{j-1}(y)+\varphi_{j,\alpha}(y);\,
    D^\omega_{j-1}(y)+\alpha\,q^\omega_j(y)\big)\;.
\end{equation*}
The bound $q^\omega_j\le c_q$ turns the centered moments of \Cref{cor:tilted-moments} into
\begin{equation}\label{eq:d2-cent}
    \big\langle\tilde{\varphi}_{j,\alpha}(x)^{2k}
    \big\rangle_\alpha
    \;\le\; (8c_q)^k\,k!
\end{equation}
for every $k\ge1$, $\alpha\in[0,1]$, $x\in K$, and $j\geq j_1$ (\Cref{cor:BL}).

\subsection{The mean shift}\label{sec:d2-mean-shift}
The log-Laplace bound needs a bound on the tilted mean $\bar\varphi_{j,\alpha}$ of the shell field.  The mean identity~\eqref{eq:mean-alpha} determines it only implicitly because the tilted average of $H_3^{(\alpha)}$ on its right side depends on the mean itself.  \Cref{lem:mean-shift} therefore solves the identity under a smallness condition on the background field.

\begin{lemma}[Mean shift bound]\label{lem:mean-shift}
Set
\begin{equation*}
    R_j \;:=\; 1+\sup_{y\in K}\Big(B_{j-1}(y)^2
    +D_j^\omega(y)\Big)\;,
    \qquad
    \epsilon_j \;:=\; 4t\,A_j^\omega\;,
\end{equation*}
so that $\epsilon_j\le4t\,\kappa\,2^{-2j}$ by~\eqref{eq:gg-off} once $j\geq j_0$.  For $j\ge j_1$, if the smallness condition
\begin{equation}\label{eq:smallness-cond}
    \epsilon_j\,R_j^{3/2}\le\tfrac{1}{512}
\end{equation}
holds, then
\begin{equation}\label{eq:mean-shift}
    \sup_{x\in K}\big|\bar\varphi_{j,\alpha}(x)\big|\;\le\;\tfrac12\;,
\end{equation}
for all $\alpha\in[0,1]$.
\end{lemma}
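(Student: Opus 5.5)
The plan is a bootstrap (continuity) argument on the quantity $M_\alpha:=\sup_{x\in K}|\bar\varphi_{j,\alpha}(x)|$, using the mean identity \eqref{eq:mean-alpha} to bound $M_\alpha$ in terms of itself. Take absolute values in \eqref{eq:mean-alpha}. Since $\alpha\le1$ and $\int g(y)|\Gamma^\omega_j(x,y)|\,dy\le A^\omega_j$ for $x\in K$, we get
\begin{equation*}
M_\alpha\;\le\;\epsilon_j\,\sup_{y\in K}\big|\langle H_3^{(\alpha)}(y)\rangle_\alpha\big|\;.
\end{equation*}

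Next we bound the tilted average of the cubic. Write $u:=B_{j-1}(y)+\bar\varphi_{j,\alpha}(y)+\tilde\varphi_{j,\alpha}(y)$ and $C:=D^\omega_{j-1}(y)+\alpha q^\omega_j(y)\le D^\omega_j(y)$. Then
\begin{equation*}
|H_3(u;C)|\le|u|^3+3C|u|\;,
\qquad
|u|^3\le 9\big(|B_{j-1}|^3+|\bar\varphi_{j,\alpha}|^3+|\tilde\varphi_{j,\alpha}|^3\big)
\end{equation*}
by \eqref{eq:power-sum}. For the centered part, \eqref{eq:d2-cent} with $j\ge j_1$ and Jensen give $\langle|\tilde\varphi_{j,\alpha}|^3\rangle_\alpha\le\langle\tilde\varphi_{j,\alpha}^4\rangle_\alpha^{3/4}\le(128c_q^2)^{3/4}$. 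This is an absolute constant times $c_q^{3/2}$, and hence at most a constant times $R_j^{3/2}$, provided $c_q\lesssim R_j$. That holds because $R_j\ge1$ and, for $j\ge j_0$, $q_j^\omega(x)=\Gamma_j^\omega(x,x)\ge q_j-2^{-2j}$, so $D^\omega_j$ dominates a fixed positive constant. If that is awkward, the constant can instead be absorbed into the thresholds. The background terms satisfy $|B_{j-1}|^3\le R_j^{3/2}$ and $C|u|\le R_j\,(R_j^{1/2}+M_\alpha+\langle|\tilde\varphi|\rangle)$. Collecting everything gives an estimate of the form
\begin{equation*}
M_\alpha\;\le\;\epsilon_j\big(c_1R_j^{3/2}+c_2R_jM_\alpha+9M_\alpha^3\big)\;.
\end{equation*}
Here $c_1,c_2$ are explicit numerical constants, possibly depending on $c_q$; the numerical choice $1/512$ in \eqref{eq:smallness-cond} suggests they are tracked sharply.

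The bootstrap then closes as follows. If $M_\alpha\le1$, the condition \eqref{eq:smallness-cond} gives $\epsilon_jR_j\le\epsilon_jR_j^{3/2}\le1/512$. The right-hand side is then at most $(c_1+c_2+9)/512$, which should come out $<\tfrac12$, and in fact strictly less than $1$. So the set of values $M\in[0,\infty)$ consistent with the inequality has a gap, for example $M\in(\tfrac12,1]$ is excluded. To pass from this gap to the bound itself we use continuity in $\alpha$. At $\alpha=0$ the tilted measure is the Gaussian prior, so $M_0=0$. The map $\alpha\mapsto\bar\varphi_{j,\alpha}(x)$ is continuous, uniformly in $x\in K$: the density $e^{-X_{j,\alpha}}$ is continuous in $\alpha$ and dominated by $e^{C_B}\times$ polynomial (\Cref{lem:X-lower}), so dominated convergence applies, and $\varphi_j$ is a finite sum of bounded continuous neurons. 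Since $M_\alpha$ is continuous, starts at $0$, and cannot enter $(\tfrac12,1]$, it stays $\le\tfrac12$ for all $\alpha\in[0,1]$.

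The main obstacle is the constant bookkeeping. The bound $\langle|\tilde\varphi|^3\rangle$ and the $C|u|$ term must be absorbed into $R_j^{3/2}$ with numerical constants small enough that $1/512$ suffices. I would first try replacing the crude $9(\cdot)$ split by expanding $\langle u^3\rangle$ exactly about the mean: odd centered moments are bounded by Cauchy--Schwarz, and the $\bar\varphi$-linear terms are grouped with coefficient $3(B^2+C+\langle\tilde\varphi^2\rangle)\le c R_j$. If the constants still do not fit, the fallback is to note that only smallness matters, and to enlarge $j_1$ or weaken the claimed bound.
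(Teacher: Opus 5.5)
Your overall scheme matches the paper's. You bound $s(\alpha):=\sup_K|\bar\varphi_{j,\alpha}|$ through the mean identity \eqref{eq:mean-alpha}, show that $s\le1$ forces $s\le\tfrac12$, and propagate from $s(0)=0$ by continuity in $\alpha$; that continuity part is fine. The gap is the quantitative step, and it matters because both the threshold $\tfrac1{512}$ and the conclusion $\tfrac12$ are fixed numbers. As written, that step does not close, for two reasons.

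\emph{The crude split loses too much.} You bound $\langle|H_3^{(\alpha)}|\rangle_\alpha$ after splitting $|u|^3\le9(\cdots)$. This discards the cancellation $\langle\tilde\varphi\rangle_\alpha=0$ and puts a factor $9$ on the cubic centered moment. Even with the sharpest available estimate $\langle|\tilde\varphi|^3\rangle_\alpha\le32(q^\omega_j)^{3/2}$, your bound contains $288\,(q^\omega_j)^{3/2}$, which in terms of $R_j$ is only controlled by $288R_j^{3/2}$. When \eqref{eq:smallness-cond} is saturated this term alone contributes $288/512>\tfrac12$, so your hope that $(c_1+c_2+9)/512<\tfrac12$ fails. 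The route closes only at roughly $s\le0.62$.

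\emph{The constants depend on $c_q$.} You take the centered moments from \eqref{eq:d2-cent}, i.e.\ in powers of $c_q$. But $c_q$ is not an absolute constant: it depends on $\chi$ and $m$, and is unbounded as $m\to0$ because of $q_0$. Your observations $R_j\ge1$ and $q^\omega_j\ge q_j-2^{-2j}$ only give $c_q\le c_qR_j$. So $c_q$ stays in $c_1,c_2$, and no fixed threshold like $\tfrac1{512}$ can absorb it. Enlarging $j_1$ or weakening the conclusion, as in your fallback, would prove a different statement.

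The missing ingredients are the two the paper uses. First, use the pointwise bound \eqref{eq:centered-moments}, which for the interpolated measure reads $\langle\tilde\varphi_{j,\alpha}(x)^{2k}\rangle_\alpha\le(8q^\omega_j(x))^kk!$, together with $q^\omega_j\le D^\omega_j\le R_j$ on $K$; this makes every constant absolute. Second, expand exactly about $\beta:=B_{j-1}+\bar\varphi_{j,\alpha}$, as you suggest at the end, and take absolute values only after averaging. With $\delta:=D^\omega_{j-1}+\alpha q^\omega_j$, this gives $\langle H_3^{(\alpha)}\rangle_\alpha=\mathcal S+\mathcal R$, where $\mathcal S:=H_3(B_{j-1};\delta)+3B_{j-1}\langle\tilde\varphi^2\rangle_\alpha+\langle\tilde\varphi^3\rangle_\alpha$ is the value at $\bar\varphi=0$. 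Cauchy--Schwarz gives $|\langle\tilde\varphi^3\rangle_\alpha|\le32(q^\omega_j)^{3/2}$. Hence
\[
|\mathcal S|\le32\bigl(1+|B_{j-1}|^3+D^\omega_j|B_{j-1}|+(q^\omega_j)^{3/2}\bigr)\le128R_j^{3/2},
\]
while for $s\le1$ the remainder obeys $|\mathcal R|\le31R_js$. Then $s\le\epsilon_j(128R_j^{3/2}+31R_js)$. Since $31\epsilon_jR_j\le\tfrac{31}{512}<\tfrac12$, this yields $s\le256\,\epsilon_jR_j^{3/2}\le\tfrac12$, which is exactly the gap your continuity argument needs.
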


\begin{proof}
Abbreviate $\bar\varphi:=\bar\varphi_{j,\alpha}$, $\tilde\varphi:=\tilde\varphi_{j,\alpha}$, $\delta:=D_{j-1}^\omega+\alpha q_j^\omega \le D_j^\omega$, and $s:=\sup_{x\in K}|\bar\varphi(x)|$.

Taking absolute values in the mean identity~\eqref{eq:mean-alpha}, bounding $\int g(y)\,|\Gamma^\omega_j(x,y)|\,dy$ by $A^\omega_j$, and using $\alpha\le1$,
\begin{equation}\label{eq:S-ineq}
    s\le 4t\,A_j^\omega\;\sup_{y\in K}
    \big|\big\langle H_3^{(\alpha)}(y)
    \big\rangle_\alpha\big|\;.
\end{equation}
Split the field into its mean and centered parts, $B_{j-1}+\varphi_{j,\alpha}=\beta+\tilde\varphi$, with $\beta:=B_{j-1}+\bar\varphi$ and $\langle\tilde\varphi\rangle_\alpha=0$. Expanding $H_3^{(\alpha)}$ about $\beta$ and averaging,
\begin{equation*}
    \big\langle H_3^{(\alpha)}\big\rangle_\alpha
    =H_3(\beta;\delta)+3\beta\,\langle\tilde\varphi^2
    \rangle_\alpha+\langle\tilde\varphi^3\rangle_\alpha
    =\mathcal{S}(y)+\mathcal{R}(y)\;,
\end{equation*}
where $\mathcal S(y)$ is the value at $\bar\varphi=0$,
\begin{equation*}
    \mathcal{S}(y):=H_3(B_{j-1};\delta)
    +3B_{j-1}\,\langle\tilde\varphi^2\rangle_\alpha
    +\langle\tilde\varphi^3\rangle_\alpha\;,
\end{equation*}
and $\mathcal R(y)$ is the remainder.  Since $H_3(B{+}\bar\varphi;\delta)-H_3(B;\delta) =3H_2(B;\delta)\bar\varphi+3B\bar\varphi^2+\bar\varphi^3$,
\begin{equation*}
    \mathcal{R}(y)=3H_2(B_{j-1};\delta)\,\bar\varphi
    +3B_{j-1}\,\bar\varphi^2+\bar\varphi^3
    +3\langle\tilde\varphi^2\rangle_\alpha\,\bar\varphi\;.
\end{equation*}

Set $P_j:=\sup_{y\in K}\big(1+|B_{j-1}(y)|^3+D^\omega_j(y)\,|B_{j-1}(y)|+(q^\omega_j(y))^{3/2}\big)$.  For $\mathcal S(y)$, \Cref{cor:tilted-moments} gives $\langle\tilde\varphi^2\rangle_\alpha\le 8q_j^\omega$ and $|\langle\tilde\varphi^3\rangle_\alpha| \le\langle\tilde\varphi^2\rangle_\alpha^{1/2} \langle\tilde\varphi^4\rangle_\alpha^{1/2} \le32\,(q_j^\omega)^{3/2}$. These bounds, together with $|H_3(B;\delta)|\le|B|^3+3\delta|B|$ and $\delta,q^\omega_j\le D_j^\omega$, give
\begin{equation}\label{eq:source-bd}
    |\mathcal{S}(y)|\le 32\,P_j\;.
\end{equation}
The remainder is a polynomial in $\bar\varphi$ with no constant term, and its coefficients are bounded by multiples of $R_j$, namely $|3H_2(B;\delta)|\le3(B^2{+}\delta)\le3R_j$, $3\langle\tilde\varphi^2\rangle_\alpha\le24R_j$, and $|B_{j-1}|\le R_j^{1/2}\le R_j$.  To bound the powers of $\bar\varphi$ by $s$ itself, we assume for now that $s\le1$, and remove this assumption at the end of the proof.  Then $\bar\varphi^2,|\bar\varphi|^3\le s$, and
\begin{equation}\label{eq:feedback-bd}
    |\mathcal{R}(y)|\le 31\,R_j\,s\;.
\end{equation}

Insert \eqref{eq:source-bd}--\eqref{eq:feedback-bd} into~\eqref{eq:S-ineq}:
\begin{equation*}
    s\le\epsilon_j\big(32\,P_j+31\,R_j\,s\big)\;.
\end{equation*}
Assume that \eqref{eq:smallness-cond} holds.  Since $R_j\ge1$, the condition gives $31\,\epsilon_jR_j\le31\,\epsilon_jR_j^{3/2}\le\tfrac{31}{512}<\tfrac12$, and hence $s\le64\,\epsilon_jP_j$.  Since also $|B|\le R_j^{1/2}$ and $q_j^\omega\le D_j^\omega\le R_j$, each of the four terms of $P_j$ is at most $R_j^{3/2}$, so $P_j\le4R_j^{3/2}$, and the smallness condition gives
\begin{equation}\label{eq:S-half}
    s\;\le\;256\,\epsilon_jR_j^{3/2}\;\le\;\tfrac{256}{512}\;=\;\tfrac12\;.
\end{equation}
This holds for every $\alpha$ with $s(\alpha)\le1$, and it remains to show that $s(\alpha)\le1$ for every $\alpha\in[0,1]$.

The map $\alpha\mapsto s(\alpha)$ is continuous on $[0,1]$.  Each coefficient mean $\langle a_{j,\ell}\rangle_\alpha$ is a ratio of two integrals with positive denominator $e^{\Psi_j(\alpha)}$, and both integrals are continuous in $\alpha$ by dominated convergence, with a dominating function uniform on $[0,1]$ because $X_{j,\alpha}\ge-C_B$.  Hence $\bar\varphi_{j,\alpha}=\sqrt{\alpha}\sum_\ell U_{j,\ell}\,\langle a_{j,\ell}\rangle_\alpha$, a finite sum of functions bounded on $K$, is continuous in $\alpha$ in the supremum norm on $K$, and so is its norm $s(\alpha)$.

Let $\mathcal{I}:=\{\alpha\in[0,1]:s(\alpha)\le1\}$.  It is nonempty because $s(0)=0$, and closed in $[0,1]$ as the preimage of $(-\infty,1]$ under the continuous map $s$.  By~\eqref{eq:S-half}, $s\le\tfrac12$ on $\mathcal{I}$, so $\mathcal{I}$ coincides with $s^{-1}((-\infty,1))$ and is open in $[0,1]$ as well.  Since $[0,1]$ is connected and $\mathcal I$ is a nonempty clopen subset of it, $\mathcal{I}=[0,1]$. Therefore, \eqref{eq:S-half} holds for every $\alpha\in[0,1]$, which is~\eqref{eq:mean-shift}.
\end{proof}

\subsection{The log-Laplace bound}\label{sec:d2-one-shell}

The bound on $\Psi_j$ integrates the following identity over $\alpha$.

\begin{proposition}[Interpolation derivative]
\label{prop:interp}
For every $\alpha\in[0,1]$,
\begin{equation}\label{eq:psi-prime}
    \Psi_j'(\alpha)
    \;=\; 8t^2\iint g(x)\,g(y)\,
    \Gamma_j^\omega(x,y)\,
    \big\langle H_3^{(\alpha)}(x)\,
    H_3^{(\alpha)}(y)\big\rangle_\alpha
    \,dx\,dy\;.
\end{equation}
\end{proposition}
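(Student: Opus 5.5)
The plan is to differentiate $\Psi_j(\alpha)=\log\E_{a_j}[e^{-X_{j,\alpha}}]$ directly in $\alpha$, which gives $\Psi_j'(\alpha)=-\langle\partial_\alpha X_{j,\alpha}\rangle_\alpha$. I would then remove the explicit shell field from $\partial_\alpha X_{j,\alpha}$ by Gaussian integration by parts against the prior $\gamma_\sigma(a_j)$. The Wick counterterm produced by differentiating the Wick constant $\alpha q^\omega_j$ should cancel exactly against the self-contraction produced by the integration by parts. What survives is the "two-point" term in which the derivative hits $e^{-X_{j,\alpha}}$, and this is the right side of \eqref{eq:psi-prime}.

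First step: compute $\partial_\alpha X_{j,\alpha}$ for $\alpha\in(0,1]$. Using \eqref{eq:T4-def} and the derivative identities \eqref{eq:H4-derivs}, with $u=B_{j-1}+\sqrt\alpha\,\varphi_j$ and $C=D^\omega_{j-1}+\alpha q^\omega_j$, one gets
\begin{equation*}
    \partial_\alpha X_{j,\alpha}
    = t\int g(x)\Big[\tfrac{2}{\sqrt\alpha}\,\varphi_j(x)\,H_3^{(\alpha)}(x)
    -6\,q^\omega_j(x)\,H_2^{(\alpha)}(x)\Big]dx\;,
\end{equation*}
where $H_2^{(\alpha)}$ is defined analogously to $H_3^{(\alpha)}$. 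To justify exchanging $\partial_\alpha$ with $\E_{a_j}$ (and $\int g\,dx$), I would use dominated convergence on compact subintervals of $(0,1]$. There $e^{-X_{j,\alpha}}\le e^{C_B}$ uniformly in $\alpha$ (\Cref{lem:X-lower}, valid for all $\alpha$ as noted in \Cref{sec:d2-special}). Moreover, $\partial_\alpha X_{j,\alpha}$ is bounded by a polynomial in $|a_j|$ with coefficients continuous on $K$, which is Gaussian integrable by \Cref{lem:Psi-finite}.

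Second step: integration by parts. Write $\varphi_j(x)=\sum_\ell U_{j,\ell}(x)a_{j,\ell}$ and use $\E_{\gamma_\sigma}[a_{j,\ell}F]=\sigma^2\E_{\gamma_\sigma}[\partial_\ell F]$ with $F=H_3^{(\alpha)}(x)e^{-X_{j,\alpha}}$. This is legitimate since $F$ and $\partial_\ell F$ grow polynomially times a bounded factor. Two derivatives arise:
\begin{itemize}
    \item $\partial_\ell H_3^{(\alpha)}(x)=3\sqrt\alpha\,H_2^{(\alpha)}(x)U_{j,\ell}(x)$;
    \item $\partial_\ell X_{j,\alpha}=4t\sqrt\alpha\int g(y)H_3^{(\alpha)}(y)U_{j,\ell}(y)\,dy$.
\end{itemize}
Summing over $\ell$ with $\sigma^2\sum_\ell U_{j,\ell}(x)U_{j,\ell}(y)=\Gamma^\omega_j(x,y)$ and $\Gamma^\omega_j(x,x)=q^\omega_j(x)$ yields
\begin{equation*}
    \big\langle\varphi_j(x)H_3^{(\alpha)}(x)\big\rangle_\alpha
    =3\sqrt\alpha\,q^\omega_j(x)\big\langle H_2^{(\alpha)}(x)\big\rangle_\alpha
    -4t\sqrt\alpha\int g(y)\,\Gamma^\omega_j(x,y)\big\langle H_3^{(\alpha)}(x)H_3^{(\alpha)}(y)\big\rangle_\alpha dy\;,
\end{equation*}
with Fubini used to pull the $y$-integral outside the tilted average.

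Third step: substitute this into $-\langle\partial_\alpha X_{j,\alpha}\rangle_\alpha$. The factor $\tfrac{2}{\sqrt\alpha}$ cancels the $\sqrt\alpha$. The term $6q^\omega_j\langle H_2^{(\alpha)}\rangle_\alpha$ cancels the Wick counterterm. The remaining term is $8t^2\iint g\,g\,\Gamma^\omega_j\langle H_3^{(\alpha)}H_3^{(\alpha)}\rangle_\alpha$, which is \eqref{eq:psi-prime} for $\alpha\in(0,1]$.

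The main obstacle I expect is the endpoint $\alpha=0$, where $\partial_\alpha X_{j,\alpha}$ has the $\alpha^{-1/2}$ factor and the domination argument fails. I would handle it by noting that the right side of \eqref{eq:psi-prime} is continuous on $[0,1]$, by the same dominated-convergence argument used for $s(\alpha)$ in \Cref{lem:mean-shift}. Since $\Psi_j$ is continuous on $[0,1]$ and differentiable on $(0,1]$, the mean value theorem then shows that the one-sided derivative at $0$ exists and equals the limit. This is all that the subsequent integration over $\alpha$ requires.
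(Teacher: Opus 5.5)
Your proposal is correct and follows the paper's derivation in \Cref{app:interpolation} essentially step for step: dominated convergence on $[\epsilon,1]$, the chain-rule formula for $\partial_\alpha X_{j,\alpha}$, Gaussian integration by parts in which the self-contraction cancels the Wick-shift term, and the surviving two-point term giving \eqref{eq:psi-prime}. Your mean-value-theorem treatment of $\alpha=0$ is slightly more complete than the paper's. The paper uses continuity of the right side only to justify $\Psi_j(1)=\int_0^1\Psi_j'(\alpha)\,d\alpha$, whereas your argument actually establishes the identity at $\alpha=0$ as a one-sided derivative.
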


The derivation proceeds via Gaussian integration by parts in $a_j$. The full computation is given in \Cref{app:interpolation}.

The interpolation derivative~\eqref{eq:psi-prime} pairs $H_3^{(\alpha)}$ with itself against the shell kernel, so a bound on $\Psi_j$ needs the second moment of $H_3^{(\alpha)}$ under the tilted measure.  That estimate uses the mean shift bound of \Cref{lem:mean-shift}, which holds only under the smallness condition~\eqref{eq:smallness-cond} on the background field.  We therefore work on a \emph{good-field event} on which that condition holds, and \Cref{sec:d2-converge} handles its complement by returning to the free measure.

\begin{lemma}[Good-field event]\label{lem:d2-good-field}
Set $c_1:=\big(6144\sqrt3\,t\,\kappa\big)^{-1/3}$, and define the $\cF_{j-1}$-measurable good-field event:
\begin{equation}\label{eq:good-field}
    G_j \;:=\; \Big\{(a_0,\dots,a_{j-1}):\,\sup_{x\in K}\big|B_{j-1}(x)\big|
    \;\le\; c_1\,2^{2j/3}\,\Big\}\;.
\end{equation}
There exists $j_2\ge j_1$, depending only on $t$, $\kappa$, $c_q$, and $j_1$, such that for all $j\ge j_2$ and every $(a_0,\dots,a_{j-1})\in G_j$, the smallness condition~\eqref{eq:smallness-cond} holds.  The mean shift bound~\eqref{eq:mean-shift} then applies.
\end{lemma}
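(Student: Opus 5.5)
The plan is to bound $\epsilon_j R_j^{3/2}$ directly on $G_j$, using only deterministic bounds and the concentration estimate~\eqref{eq:gg-off}, and then choose $j_2$ so that the factors that do not cancel in the powers of $2^j$ are absorbed.

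First, for $j\ge j_0$ (guaranteed since $j\ge j_2\ge j_1\ge j_0$), \eqref{eq:gg-off} gives $\epsilon_j=4tA_j^\omega\le 4t\kappa\,2^{-2j}$. Next, on $G_j$ we have $\sup_K B_{j-1}^2\le c_1^2\,2^{4j/3}$. The normalization $c_q$ of~\eqref{eq:cq-def} gives $\sup_K D_j^\omega\le c_q(j+1)$. Hence
\begin{equation*}
R_j\;\le\;1+c_1^2\,2^{4j/3}+c_q(j+1).
\end{equation*}
The target is to show $R_j\le 3c_1^2\,2^{4j/3}$ for large $j$. This requires $1\le c_1^2 2^{4j/3}$ and $c_q(j+1)\le c_1^2 2^{4j/3}$, both of which hold for all $j$ beyond some index depending only on $c_1$ (that is, on $t,\kappa$) and $c_q$, because the exponential dominates the linear term.

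With that bound in hand,
\begin{equation*}
\epsilon_jR_j^{3/2}\;\le\;4t\kappa\,2^{-2j}\cdot 3^{3/2}c_1^3\,2^{2j}\;=\;12\sqrt3\,t\kappa\,c_1^3\;=\;\frac{12\sqrt3\,t\kappa}{6144\sqrt3\,t\kappa}\;=\;\frac{1}{512}.
\end{equation*}
This is exactly~\eqref{eq:smallness-cond}. It explains the choice of $c_1$: the exponent $2j/3$ is tuned so that $R_j^{3/2}\sim 2^{2j}$ cancels the $2^{-2j}$ in $\epsilon_j$, and the factor $3^{3/2}$ comes from the three-term bound on $R_j$. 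So $j_2$ is defined as the maximum of $j_1$ and the index from the previous step. It depends only on $t,\kappa,c_q,j_1$ and not on the widths or the background.

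Finally, since $j\ge j_2\ge j_1$ and~\eqref{eq:smallness-cond} holds, \Cref{lem:mean-shift} applies and yields~\eqref{eq:mean-shift} for all $\alpha$.

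The only delicate point is bookkeeping the constant $3^{3/2}$ so that it matches the $\sqrt3$ in $c_1$. Beyond that, the argument needs the sub-leading terms $1$ and $c_q(j+1)$ to be dominated by $c_1^2 2^{4j/3}$, which is a routine threshold choice.
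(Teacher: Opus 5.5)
Your proof is correct and essentially the paper's. The paper applies the power-sum inequality $(x_1+x_2+x_3)^{3/2}\le\sqrt3\,(x_1^{3/2}+x_2^{3/2}+x_3^{3/2})$ and then bounds each term by $c_1^3$, while you bound each of the three terms of $R_j$ by $c_1^2 2^{4j/3}$ before raising to the power $3/2$; this gives the same two threshold conditions on $j$ and the same constant $12\sqrt3\,t\kappa\,c_1^3=\tfrac1{512}$.
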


\begin{proof}
On $G_j$, using $D^\omega_j\le c_q\,(j{+}1)$,
\begin{equation}
    R_j \;\le\; 1+c_1^22^{4j/3}+c_q\,(j{+}1)\;.
\end{equation}
By~\eqref{eq:power-sum} at $n=3$, $p=3/2$, and $\epsilon_j\le4t\kappa 2^{-2j}$,
\begin{equation}
    \epsilon_j\,R_j^{3/2}
    \;\le\; 4\sqrt3\,t\kappa\Big(2^{-2j}
    + c_1^3 + \big(c_q (j{+}1)\big)^{3/2}\,2^{-2j}\Big)\;.
\end{equation}
Choose $j_2\ge j_1$ so large that for $j\ge j_2$ each of the two $j$-dependent terms in the bracket is at most $c_1^3$.  Then $\epsilon_j R_j^{3/2}\le 12\sqrt3\,t\kappa\,c_1^3=\tfrac{1}{512}$.
\end{proof}

\begin{theorem}[Second moment of $H_3^{(\alpha)}$]\label{thm:d2-tilted}
There is a constant $C_H>0$ such that for all $j\ge j_2$, $\alpha\in[0,1]$, $(a_0,\dots,a_{j-1})\in G_j$, and $x\in K$,
\begin{equation}\label{eq:d2-tilted-bound}
    \big\langle H_3^{(\alpha)}(x)^2
    \big\rangle_\alpha
    \;\le\; C_H\,\big(1+B_{j-1}(x)^6
    +D^\omega_{j-1}(x)^3\big)\;.
\end{equation}
The constant $C_H$ depends only on $c_q$.
\end{theorem}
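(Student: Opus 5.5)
We split the field at $x$ into its tilted mean and centered part, exactly as in the proof of \Cref{lem:mean-shift}. Write $B:=B_{j-1}(x)$, $\delta:=D^\omega_{j-1}(x)+\alpha q^\omega_j(x)$, $\bar\varphi:=\bar\varphi_{j,\alpha}(x)$, $\tilde\varphi:=\tilde\varphi_{j,\alpha}(x)$, and set $\beta:=B+\bar\varphi$. The Hermite polynomial satisfies the exact expansion
\begin{equation*}
    H_3^{(\alpha)}(x)=H_3(\beta+\tilde\varphi;\delta)
    =H_3(\beta;\delta)+3H_2(\beta;\delta)\,\tilde\varphi+3\beta\,\tilde\varphi^2+\tilde\varphi^3\;.
\end{equation*}
By \eqref{eq:power-sum} with $n=4$, $p=2$, we get
\begin{equation*}
    \big\langle (H_3^{(\alpha)})^2\big\rangle_\alpha\le 4\Big(H_3(\beta;\delta)^2+9H_2(\beta;\delta)^2\langle\tilde\varphi^2\rangle_\alpha+9\beta^2\langle\tilde\varphi^4\rangle_\alpha+\langle\tilde\varphi^6\rangle_\alpha\Big)\;.
\end{equation*}
The idea is that $\beta$ and $\delta$ are deterministic given $\cF_{j-1}$, and the centered moments are controlled at the free-field scale.

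Next we bound each term. Since $j\ge j_2\ge j_1$, the centered moment bound \eqref{eq:d2-cent} gives $\langle\tilde\varphi^{2k}\rangle_\alpha\le(8c_q)^kk!$ for $k=1,2,3$, which are constants depending only on $c_q$. Since $(a_0,\dots,a_{j-1})\in G_j$ and $j\ge j_2$, \Cref{lem:d2-good-field} puts us under the smallness condition, so \Cref{lem:mean-shift} gives $|\bar\varphi|\le\tfrac12$. Hence $|\beta|\le|B|+\tfrac12$. Also $\delta\le D^\omega_{j-1}(x)+c_q$, because $q^\omega_j\le c_q$. Then
\begin{equation*}
    |H_3(\beta;\delta)|\le|\beta|^3+3\delta|\beta|,\qquad |H_2(\beta;\delta)|\le\beta^2+\delta\;.
\end{equation*}
Squaring these and using Young's inequality, for instance $\delta^2\beta^2\le\tfrac13\beta^6+\tfrac23\delta^3$ and $\beta^4\le\tfrac13+\tfrac23\beta^6$, reduces every term to a constant times $1+\beta^6+\delta^3$. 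Finally, $\beta^6\le C(1+B^6)$ and $\delta^3\le C(1+D^\omega_{j-1}(x)^3)$, with constants depending only on $c_q$ since $c_q\ge1$. This yields \eqref{eq:d2-tilted-bound}.

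The only genuinely nontrivial input is the mean shift bound $|\bar\varphi|\le\tfrac12$. Without it, $\beta$ would be controlled only implicitly through the tilted measure. With \Cref{lem:mean-shift} in hand, the remaining work is bookkeeping of polynomial degrees. One point needs care: \eqref{eq:d2-cent} must be applied at the interpolated $\alpha$, which the paper justifies because rescaling the neurons only lowers $q^\omega_j$ and $A^\omega_j$. We also track that no constant depends on $t$, $j$ or the widths: the numerical constants and the powers of $8c_q$ and $c_q$ are all that enter $C_H$.
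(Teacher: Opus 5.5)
Your proposal is correct and follows essentially the same route as the paper. Both arguments expand $H_3$ about the tilted mean, apply the four-term power-sum bound, use the centered moments \eqref{eq:d2-cent}, and reduce via Young's inequality to $1+\beta^6+\delta^3$; the mean shift bound $|\bar\varphi_{j,\alpha}|\le\tfrac12$ on $G_j$ together with $\delta\le D^\omega_{j-1}(x)+c_q$ then returns to $B_{j-1}(x)$ and $D^\omega_{j-1}(x)$.
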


\begin{proof}
Expand via $\beta:=B_{j-1}(x)+\bar\varphi_{j,\alpha}(x)$, $\delta:=D^\omega_{j-1}(x)+\alpha q^\omega_j(x)$, and $\tilde\varphi:=\tilde\varphi_{j,\alpha}(x)$, as in the proof of \Cref{lem:mean-shift}:
\begin{equation}
    H_3^{(\alpha)}(x) \;=\;
    \tilde{\varphi}^3
    + 3\beta\,\tilde{\varphi}^2
    + (3\beta^2-3\delta)\,\tilde{\varphi}
    + (\beta^3-3\delta\beta).
\end{equation}
Squaring via~\eqref{eq:power-sum} at $n=4$, $p=2$, and taking $\langle\cdot\rangle_\alpha$:
\begin{equation}\label{eq:d2-H3-sq}
    \big\langle H_3^{(\alpha)}(x)^2
    \big\rangle_\alpha
    \;\le\; 4\Big(
    \langle\tilde{\varphi}^6\rangle_\alpha
    + 9\beta^2\,\langle\tilde{\varphi}^4\rangle_\alpha
    + 9(\beta^2{+}\delta)^2\,
    \langle\tilde{\varphi}^2\rangle_\alpha
    + (\beta^3{+}3\delta|\beta|)^2\Big)\;.
\end{equation}
Bounding the centered tilted moments by \eqref{eq:d2-cent} gives
\begin{equation}
    \big\langle H_3^{(\alpha)}(x)^2\big\rangle_\alpha
    \;\le\; C_1\,\big(1+\beta^2+\beta^4+\beta^2\delta
    +\delta^2+\beta^6+\beta^4\delta+\beta^2\delta^2\big)\;.
\end{equation}
Every monomial $\beta^a\delta^b$ on the right has $a$ even and $a+2b\le6$.  Young's inequality for three factors, with weights $\tfrac{a}{6}$, $\tfrac{b}{3}$, and $1-\tfrac{a}{6}-\tfrac{b}{3}\ge0$, then gives
\begin{equation}
    \beta^a\delta^b
    \;\le\; \tfrac{a}{6}\,\beta^6+\tfrac{b}{3}\,\delta^3
    +\Big(1-\tfrac{a}{6}-\tfrac{b}{3}\Big)
    \;\le\; 1+\beta^6+\delta^3\;.
\end{equation}
Therefore,
\begin{equation}
    \big\langle H_3^{(\alpha)}(x)^2\big\rangle_\alpha
    \;\le\; C_2\,\big(1+\beta^6+\delta^3\big)\;.
\end{equation}

It remains to replace $\beta$ by $B:=B_{j-1}(x)$ and $\delta$ by $D:=D^\omega_{j-1}(x)$.  On $G_j$, \Cref{lem:d2-good-field} gives $|\bar{\varphi}_{j,\alpha}(x)|\le\tfrac12$ for all $x$ and $\alpha$, so
\begin{equation*}
    |\beta|^6
    \;=\; \big|B+\bar{\varphi}_{j,\alpha}(x)\big|^6
    \;\le\; 2^5\big(|B|^6+1\big)\;,
\end{equation*}
while $\delta=D+\alpha\,q^\omega_j(x)\le D+c_q$ gives
\begin{equation*}
    \delta^3 \;\le\; (D+c_q)^3 \;\le\; 4\big(D^3+c_q^3\big)\;.
\end{equation*}
Therefore $1+\beta^6+\delta^3\le C_3\big(1+B^6+D^3\big)$, which is~\eqref{eq:d2-tilted-bound} with $C_H:=C_2C_3$.
\end{proof}

\begin{theorem}[Log-Laplace bound]\label{thm:d2-one-shell}
There is a constant $C_\star>0$ such that for all $j\ge j_2$ and $(a_0,\dots,a_{j-1})\in G_j$,
\begin{equation}\label{eq:d2-one-shell}
    0 \;\le\; \Psi_j(1) \;\le\;
    C_\star\,2^{-2j}\int g(x)\,
    \big(1+B_{j-1}(x)^6+D^\omega_{j-1}(x)^3\big)\,dx\;.
\end{equation}
The constant $C_\star$ depends only on $t$, $\kappa$, and $C_H$.
\end{theorem}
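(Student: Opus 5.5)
The lower bound $\Psi_j(1)\ge0$ is \Cref{lem:positivity}. For the upper bound, write $\Psi_j(1)=\Psi_j(0)+\int_0^1\Psi_j'(\alpha)\,d\alpha$ with $\Psi_j(0)=0$, and insert the interpolation derivative of \Cref{prop:interp}. The map $\alpha\mapsto\Psi_j(\alpha)$ is differentiable on $[0,1]$, since the dominating bound $X_{j,\alpha}\ge-C_B$ from \Cref{lem:X-lower} is uniform in $\alpha$. So it suffices to bound $|\Psi_j'(\alpha)|$ uniformly in $\alpha\in[0,1]$ by the right side of \eqref{eq:d2-one-shell}.

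Fix $\alpha$ and take absolute values inside the double integral in \eqref{eq:psi-prime}. For the tilted correlation, apply Cauchy--Schwarz under $\langle\cdot\rangle_\alpha$ and then AM--GM:
\begin{equation*}
|\langle H_3^{(\alpha)}(x)H_3^{(\alpha)}(y)\rangle_\alpha|\le \tfrac12\langle H_3^{(\alpha)}(x)^2\rangle_\alpha+\tfrac12\langle H_3^{(\alpha)}(y)^2\rangle_\alpha .
\end{equation*}
Since $\Gamma^\omega_j$ is symmetric, the two halves contribute equally. This gives
\begin{equation*}
|\Psi_j'(\alpha)|\le 8t^2\int g(x)\,\langle H_3^{(\alpha)}(x)^2\rangle_\alpha\Big(\int g(y)\,|\Gamma^\omega_j(x,y)|\,dy\Big)dx\le 8t^2A^\omega_j\int g(x)\,\langle H_3^{(\alpha)}(x)^2\rangle_\alpha\,dx .
\end{equation*}
The inner $y$-integral is bounded by $A^\omega_j$ for $x\in K$, and $g$ vanishes off $K$.

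Now use that $j\ge j2\ge j_0$ and $(a_0,\dots,a_{j-1})\in G_j$. \Cref{thm:d2-tilted} bounds $\langle H_3^{(\alpha)}(x)^2\rangle_\alpha\le C_H(1+B_{j-1}(x)^6+D^\omega_{j-1}(x)^3)$ for every $x\in K$. Estimate \eqref{eq:gg-off} gives $A^\omega_j\le\kappa 2^{-2j}$. Together,
\begin{equation*}
|\Psi_j'(\alpha)|\le 8t^2\kappa C_H\,2^{-2j}\int g\,(1+B_{j-1}^6+(D^\omega_{j-1})^3)\,dx
\end{equation*}
for every $\alpha$. Integrating over $\alpha\in[0,1]$ yields \eqref{eq:d2-one-shell} with $C_\star:=8t^2\kappa C_H$, which depends only on $t$, $\kappa$ and $C_H$, as claimed.

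The only delicate points are two justifications, and both were prepared earlier. First, the Fubini exchanges in \eqref{eq:psi-prime} and in Cauchy--Schwarz need integrability, which holds because $\rho_{j,\alpha}$ has moments of all orders (\Cref{lem:Psi-finite}). Second, \Cref{thm:d2-tilted} must apply at every $\alpha$ simultaneously. This holds because \Cref{lem:mean-shift}, applied via \Cref{lem:d2-good-field}, is uniform in $\alpha$, and the one-shell results persist under the $\sqrt\alpha$ rescaling of the shell's neurons. I therefore expect no real obstacle beyond this bookkeeping.
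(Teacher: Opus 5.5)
Your proof is correct and follows the paper's argument step for step: you integrate \eqref{eq:psi-prime} over $\alpha$, apply Cauchy--Schwarz and AM--GM, use the symmetry of $\Gamma^\omega_j$ to reduce to $A^\omega_j$, and then invoke \Cref{thm:d2-tilted} and \eqref{eq:gg-off} to obtain $C_\star=8t^2\kappa C_H$. One small correction: the uniform bound $X_{j,\alpha}\ge-C_B$ does not by itself justify differentiating at $\alpha=0$, because $\partial_\alpha X_{j,\alpha}$ carries a factor $\alpha^{-1/2}$. The paper instead differentiates on $(0,1]$ and recovers $\Psi_j(1)=\int_0^1\Psi_j'\,d\alpha$ from the continuity of the right side of \eqref{eq:psi-prime} on $[0,1]$ (\Cref{app:interpolation}), which is the justification your argument actually needs.
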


\begin{proof}
The lower bound is \Cref{lem:positivity}.  For the upper bound, integrate the interpolation derivative~\eqref{eq:psi-prime}:
\begin{equation}
    \Psi_j(1) \;=\; \int_0^1\!\Psi_j'(\alpha)\,d\alpha
    \;=\; 8t^2\int_0^1\!\iint g(x)\,g(y)\,
    \Gamma^\omega_j(x,y)\,\big\langle H_3^{(\alpha)}(x)\,
    H_3^{(\alpha)}(y)\big\rangle_\alpha
    \,dx\,dy\,d\alpha\;.
\end{equation}
By Cauchy--Schwarz under $\langle\cdot\rangle_\alpha$, followed by AM--GM:
\begin{equation}
    \big|\big\langle H_3^{(\alpha)}(x)\,
    H_3^{(\alpha)}(y)\big\rangle_\alpha\big|
    \;\le\; \sqrt{\big\langle H_3^{(\alpha)}(x)^2
    \big\rangle_\alpha\,
    \big\langle H_3^{(\alpha)}(y)^2
    \big\rangle_\alpha}
    \;\le\; \tfrac{1}{2}\Big(
    \big\langle H_3^{(\alpha)}(x)^2
    \big\rangle_\alpha
    +\big\langle H_3^{(\alpha)}(y)^2
    \big\rangle_\alpha\Big)\;.
\end{equation}
Substituting, and using the symmetry $\Gamma^\omega_j(x,y)=\Gamma^\omega_j(y,x)$,
\begin{align}
    \Psi_j(1)
    &\;\le\; 8t^2\int_0^1\!\int g(x)\,
    \big\langle H_3^{(\alpha)}(x)^2
    \big\rangle_\alpha\,
    \bigg(\int g(y)\,|\Gamma^\omega_j(x,y)|\,dy\bigg)
    dx\,d\alpha
    \notag\\
    &\;\le\; 8t^2\,A_j^\omega\int_0^1\!\int g(x)\,
    \big\langle H_3^{(\alpha)}(x)^2
    \big\rangle_\alpha\,dx\,d\alpha\;.
\end{align}
Applying the second-moment bound (\Cref{thm:d2-tilted}) and $A_j^\omega\le\kappa\,2^{-2j}$:
\begin{equation}
    \Psi_j(1) \;\le\;
    8t^2\kappa C_H\,2^{-2j}\int g(x)\,
    \big(1+B_{j-1}(x)^6+D^\omega_{j-1}(x)^3\big)\,dx\;,
\end{equation}
which is~\eqref{eq:d2-one-shell} with $C_\star:=8t^2\kappa C_H$.
\end{proof}

From here on we write, for $j\ge1$,
\begin{equation}\label{eq:fJ-def}
    f_j \;:=\; C_\star\,2^{-2j}\int g(y)\,
    \big(1+B_{j-1}(y)^6+D^\omega_{j-1}(y)^3\big)\,dy\;,
    \qquad
    \mathcal{G}_j \;:=\; G_j\cap\{f_j\le1\}\;,
\end{equation}
so that for $j\geq j_2$, \Cref{thm:d2-one-shell} reads $0\le\Psi_j\le f_j$ on $G_j$, and $0\le\Psi_j\le f_j\le1$ on $\mathcal{G}_j$.

\section{Moment stability and convergence}
\label{sec:d2-converge}

\Cref{thm:d2-one-shell} bounds $\Psi_j$ by $2^{-2j}$ times a sextic in the background field, so bounding the telescoping factors $\E_{\mu_{j-1}}[e^{\Psi_j}]$ of~\eqref{eq:telescope} requires control over the moments of the background under the interacting measure.  This section bounds those moments and removes the ultraviolet cutoff at fixed width.

\subsection{Large backgrounds}\label{sec:d2-free}

The estimates below return to the free measure on an event where the supremum of $|B_{j-1}|$ on $K$ is large, and the resulting bound is small only if that event is rare under the free measure.  The next lemma bounds the free-measure probability that this supremum exceeds a given value.

\begin{lemma}[Free-field supremum tail]\label{lem:d2-sup-tail}
There are constants $c_2,C_K>0$ such that for all $J\ge1$ and all $s>0$,
\begin{equation}\label{eq:d2-sup-tail}
    \gamma_{J-1}\Big(\sup_{x\in K}\big|B_{J-1}(x)\big|>s\Big)
    \;\le\; C_K\,16^{J}
    \exp\!\Big(-\frac{c_2\,s^2}{c_qJ}\Big)\;.
\end{equation}
The constant $c_2$ is universal, and $C_K$ depends only on $K$.
\end{lemma}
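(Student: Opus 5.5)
We prove the tail bound with a one-step chaining (net-plus-Lipschitz) argument, using only the variance bound $D^\omega_{J-1}\le c_qJ$ and the increment bound of \Cref{lem:incr-var}. Conditioned on the features, $B_{J-1}$ is a centered Gaussian process under $\gamma_{J-1}$. Its pointwise variance is $D^\omega_{J-1}(x)\le c_q J$, and its increment variance is at most $\Upsilon_{J-1}|x-y|^2$, with $\Upsilon_{J-1}\le C\,c_q\,4^{J}$ since $\max_{j}q_j\le q_{\max}\le c_q$. The factor $16^J$ in the bound signals a covering of $K$ at mesh of order $2^{-J}$ in $d=2$: such a net has $O(4^{J})$ points, and the remaining $4^J$ absorbs the cost of controlling the continuum fluctuation.

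The first step is the net. Fix a $\delta$-net $P\subset K$ with $\delta=2^{-J}$, so $|P|\le C_K 4^J$ with $C_K$ depending only on $K$. For each $x\in P$ the Gaussian tail gives $\gamma(|B_{J-1}(x)|>s/2)\le 2e^{-s^2/(8c_qJ)}$, and a union bound over $P$ handles the net points.

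The second step controls the oscillation $\sup_{|x-y|\le\delta}|B_{J-1}(x)-B_{J-1}(y)|$. Here the process is a finite sum of cosines, so it is pathwise Lipschitz: $|\nabla B_{J-1}(x)|\le\sum_{j}\sum_\ell|a_{j,\ell}|\,|U_{j,\ell}|\,2^{j+1}$. That bound is width-dependent, however, so I would instead run Dudley/chaining, or more simply a Kolmogorov-type dyadic chaining, on the cell of size $\delta$ using only the increment variance. Rescaled to the cell, the process has increment standard deviation at most $\sqrt{\Upsilon_{J-1}}\,\delta\,|u-v|\lesssim\sqrt{c_q}\,|u-v|$. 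Dudley's entropy bound then gives the expected oscillation over a cell as $O(\sqrt{c_q})$. The Borell--TIS inequality, with maximal variance $O(c_q)$ on the cell, gives an oscillation tail $\le e^{-(s/2-C\sqrt{c_q})^2/(C c_q)}$. Union-bounding over the $O(4^J)$ cells costs another factor $4^J$, giving $16^J$ overall. The constant shift $C\sqrt{c_q}$ is absorbed because $c_qJ\ge c_q$: for $s\ge C'\sqrt{c_qJ}$ we have $(s/2-C\sqrt{c_q})^2\ge s^2/16$, and for smaller $s$ the right side of \eqref{eq:d2-sup-tail} already exceeds $1$ once $c_2$ is small and $C_K\ge e$. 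Dividing by $J$ in the exponent is then harmless, since $c_q\le c_qJ$.

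The main obstacle is keeping the oscillation step free of the widths and of $\omega$. It must use only \Cref{lem:incr-var}, which is width-free, plus the pointwise bound $q^\omega_j\le 2q_j$, and never the random Lipschitz constant of the realized path. This is why I would pass through Dudley plus Borell--TIS, or the elementary dyadic chaining with a union bound at each level of the tree inside a cell, summing $\sum_k 4^k e^{-c\,r_k^2 4^k/c_q}$ with $r_k\propto s\,2^{-k/4}$. I would tune constants so that the universal $c_2$ emerges after combining the net and oscillation bounds.
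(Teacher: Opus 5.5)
Your argument is correct, but its main line differs from the paper's. The paper never invokes Dudley or Borell--TIS. It runs an explicit multi-level chaining with nets $\Delta_k$ of mesh $2^{-k}$ for every $k\ge J$. It splits $s$ into $s/2$ at the coarsest level and thresholds $u_k=\tfrac{s}{8}\,2^{-(k-J)/2}$ for the increments $B_{J-1}(x_k)-B_{J-1}(x_{k-1})$. It then union-bounds over all pairs in $\Delta_{k-1}\times\Delta_k$, using only the scalar Gaussian tail and \Cref{lem:incr-var}. The crude pair count $c_K^2\,4^{2k}$ is where its factor $16^J$ comes from. A two-case split on $\xi=s^2/(3072\,q_{\max})$ plays the role of your absorption of the mean shift $C\sqrt{c_q}$: the small-$s$ case is again disposed of because the right side exceeds $1$.

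What the paper's route buys is a self-contained proof with explicit constants, $c_2=1/3072$ and $C_K=\max\{16,\,2c_K+4c_K^2\}$, in keeping with its emphasis on elementary tools. Your fallback ``elementary dyadic chaining'' is essentially this proof. Your main route instead imports two Gaussian-process theorems, but it isolates the scaling cleanly: at mesh $2^{-J}$ each cell is an $O(1)$ problem with fluctuation $O(\sqrt{c_q})$, free of $J$, of the widths, and of the random Lipschitz constant. It also gives a better prefactor than you claim. The union bounds over net points and over cells add rather than multiply, so you obtain $O(4^J)$ rather than $16^J$. Your sentence saying the two factors combine to $16^J$ is a misstatement, but a harmless one since $4^J\le16^J$.
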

The proof is a chaining argument, given in \Cref{app:sup-tail}.

The free measure is always used on the complement of $\mathcal{G}_j$ from~\eqref{eq:fJ-def}, which is the union of $\{f_j>1\}$ and $G_j^c$.  Above a finite threshold in $j$, each of the two requires $\sup_K|B_{j-1}|$ to exceed a positive power of $2^{j}$, and \Cref{lem:d2-sup-tail} at the smaller of the two powers shows that the union is super-exponentially rare.

\begin{corollary}[Rarity of the bad event]\label{cor:d2-bad-prob}
There are a constant $c_3>0$ and a threshold $j_3\ge1$, depending only on $t$, $g$, $\kappa$, and $c_q$, such that for all $j\ge j_3$
\begin{equation}\label{eq:d2-bad-prob}
    \gamma_j\big(\mathcal{G}_j^c\big)
    \;\le\; C_K\exp\Big(-\frac{c_3\,2^{2j/3}}{j}\Big),
\end{equation}
with $\mathcal{G}_j$ as in~\eqref{eq:fJ-def} and $C_K$ as in \Cref{lem:d2-sup-tail}.
\end{corollary}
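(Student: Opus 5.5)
We show that, for $j$ above a threshold, the event $\mathcal{G}_j^c$ is contained in the single event $\{\sup_K|B_{j-1}|>s_j\}$ with $s_j:=c'\,2^{j/3}$ for a suitable $c'>0$. Then \Cref{lem:d2-sup-tail} at $J=j$ gives the claim. The event $\mathcal{G}_j^c$ depends only on $a_0,\dots,a_{j-1}$, so its $\gamma_j$-probability equals its $\gamma_{j-1}$-probability.

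First, $G_j^c=\{\sup_K|B_{j-1}|>c_1 2^{2j/3}\}$, and $c_12^{2j/3}\ge c'2^{j/3}$ for $j$ large. Next we treat $\{f_j>1\}$. Suppose $\sup_K|B_{j-1}|\le s$. Using $D^\omega_{j-1}\le c_q j$, definition \eqref{eq:fJ-def} gives
\begin{equation*}
f_j\le C_\star\|g\|_1\,2^{-2j}\big(1+s^6+c_q^3j^3\big).
\end{equation*}
Take $s=c'2^{j/3}$ with $c'$ chosen so that $C_\star\|g\|_1(c')^6\le\tfrac12$. Then the $s^6$ term contributes at most $\tfrac12$. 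The remaining term $C_\star\|g\|_1 2^{-2j}(1+c_q^3j^3)$ tends to zero, so it is at most $\tfrac12$ beyond some threshold. Beyond that threshold, $f_j>1$ forces $\sup_K|B_{j-1}|>c'2^{j/3}$. Hence for $j\ge j_3$, where $j_3$ is the larger of the two thresholds and depends only on $t,g,\kappa,c_q$ through $C_\star$, $c_1$ and $\|g\|_1$, we have
\begin{equation*}
\mathcal G_j^c\subseteq\Big\{\sup_K|B_{j-1}|>c'\,2^{j/3}\Big\}.
\end{equation*}

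Applying \Cref{lem:d2-sup-tail} with $J=j$ and $s=c'2^{j/3}$ yields the bound
\begin{equation*}
C_K\,16^j\exp\!\big(-c_2(c')^2 2^{2j/3}/(c_qj)\big).
\end{equation*}
The last step is to absorb the prefactor $16^j=e^{j\ln 16}$. Since $2^{2j/3}/j$ grows faster than $j$, enlarging $j_3$ gives $j\ln16\le \tfrac12 c_2(c')^2 2^{2j/3}/(c_qj)$. The bound \eqref{eq:d2-bad-prob} then follows with $c_3:=c_2(c')^2/(2c_q)$.

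The main point needing care is the choice of power. The $f_j$ condition only forces a background of size $2^{j/3}$, not $2^{2j/3}$, so the weaker power governs the whole bound. The resulting exponent $2^{2j/3}/j$ matches the statement. The rest is bookkeeping of thresholds, all of which are free of the widths.
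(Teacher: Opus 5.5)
Your proposal is correct and follows the paper's proof essentially step for step. Both arguments use the $\cF_{j-1}$-measurability of $\mathcal{G}_j^c$ to pass to $\gamma_{j-1}$, force $\sup_K|B_{j-1}|>(2C_\star\|g\|_1)^{-1/6}\,2^{j/3}$ on $\{f_j>1\}$ once the field-independent part of $f_j$ is at most $\tfrac12$, dominate $G_j^c$ by the same event, and apply \Cref{lem:d2-sup-tail} with $J=j$, absorbing $16^j$ to get the same constant $c_3=c_2(c')^2/(2c_q)$.
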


\begin{proof}
The event $\mathcal{G}_j^c=\{f_j>1\}\cup G_j^c$ is $\cF_{j-1}$-measurable, and because the marginal of $\gamma_j$ on the first $j$ shells is $\gamma_{j-1}$, we have $\gamma_j(\mathcal{G}_j^c)=\gamma_{j-1}(\mathcal{G}_j^c)$.

Take $j_3$ large enough that for $j\geq j_3$,
\begin{equation}\label{eq:d2-fj-const}
    C_\star\,\|g\|_1\big(1+(c_qj)^3\big)\,2^{-2j}\;\le\;\tfrac12\;.
\end{equation}
The field-independent part of $f_j$ is then at most $\tfrac12$, so $f_j>1$ forces $C_\star2^{-2j}\!\int g\,B_{j-1}^6\,dy>\tfrac12$. Bounding the integral by $\|g\|_1\big(\sup_K|B_{j-1}|\big)^6$, we have
\begin{equation*}
    \sup_K|B_{j-1}|\;>\;C_1\,2^{j/3},
    \qquad C_1:=\big(2C_\star\|g\|_1\big)^{-1/6},
\end{equation*}
and $G_j^c$ forces $\sup_K|B_{j-1}|>c_1\,2^{2j/3}$ by~\eqref{eq:good-field}.  After enlarging $j_3$ so that $c_1 2^{2j/3}\ge C_1\,2^{j/3}$, both parts lie in $\{\sup_K|B_{j-1}|>C_1\,2^{j/3}\}$, and \Cref{lem:d2-sup-tail} at index $j$ and threshold $s=C_1\,2^{j/3}$ gives
\begin{equation*}
    \gamma_{j-1}\big(\mathcal{G}_j^c\big)
    \;\le\; C_K\,16^{j}
    \exp\!\Big(-\frac{c_2C_1^2\,2^{2j/3}}{c_qj}\Big)\;.
\end{equation*}
Enlarging $j_3$ once more so that $16^{j}\le\exp\big(c_2C_1^22^{2j/3}/(2c_qj)\big)$, which holds for large $j$ because $2^{2j/3}$ outgrows $j^2$, gives~\eqref{eq:d2-bad-prob} with $c_3:=c_2C_1^2/(2c_q)$.
\end{proof}

The estimates that follow need both the log-Laplace bound (\Cref{thm:d2-one-shell}) and \Cref{cor:d2-bad-prob}, so we write
\begin{equation}\label{eq:j4}
    j_4 \;:=\; \max\{j_2,\,j_3\}
\end{equation}
for their common threshold.

\subsection{Moment stability}\label{sec:d2-moments}

The log-Laplace bound is sextic in the background field, so only the sixth moment of the background needs to be controlled for the construction.  The theorem nonetheless covers every moment order because its induction step at order $p$ uses moments of $B_{J-1}$ up to order $6(p+3)$.

\begin{theorem}[Moment stability]\label{thm:d2-moments}
There are constants $C_M,\mathcal{K}>0$ such that for all $J\ge0$, $p\ge1$, and $x\in K$,
\begin{equation}\label{eq:d2-moments}
    \E_{\mu_J}\big[B_J(x)^{2p}\big]
    \;\le\; \mathcal{K}\,C_M^{\,p}\,p!\,(J{+}1)^p\;.
\end{equation}
The growth constant $C_M$ depends only on $c_q$, and the prefactor $\mathcal{K}$ depends only on $t$, $g$, $c_q$, $\kappa$, and the threshold $j_4$ of~\eqref{eq:j4}.
\end{theorem}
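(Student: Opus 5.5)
Induction on $J$, with the claim at level $J$ holding for all $p\ge1$ at once. For $J<j_4$ we use the crude bound of \Cref{lem:V-lower}: $d\mu_J/d\gamma_J=e^{-V_J}/Z_J\le e^{c_V(J+1)^2}$ since $Z_J\ge1$. Under $\gamma_J$, $B_J(x)\sim\mathcal N(0,D^\omega_J(x))$ with $D^\omega_J\le c_q(J+1)$. Hence
\[
\E_{\mu_J}[B_J^{2p}]\le e^{c_V j_4^2}(2p-1)!!\,(c_q(J+1))^p\le e^{c_Vj_4^2}(2c_q)^p p!\,(J+1)^p .
\]
This fixes $C_M\ge 2c_q$ and puts $e^{c_Vj_4^2}$ into $\mathcal K$.

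For $J\ge j_4$, write $B_J=B_{J-1}+\varphi_J$ and disintegrate $\mu_J$ over the first $J$ shells. By \Cref{prop:telescope}, the marginal of $\mu_J$ on $\cF_{J-1}$ is $e^{\Psi_J}\,d\mu_{J-1}/\E_{\mu_{J-1}}[e^{\Psi_J}]$, and the conditional law of $a_J$ is the tilted measure $\rho_J$. So
\[
\E_{\mu_J}[B_J^{2p}]=\frac{\E_{\mu_{J-1}}\big[e^{\Psi_J}\langle (B_{J-1}+\bar\varphi_J+\tilde\varphi_J)^{2p}\rangle\big]}{\E_{\mu_{J-1}}[e^{\Psi_J}]}.
\]
The denominator is at least $1$ by \Cref{lem:positivity}. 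Split the numerator on $\mathcal G_J$ and $\mathcal G_J^c$.

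On $\mathcal G_J$ we have $\Psi_J\le f_J\le1$, and by \Cref{lem:d2-good-field} the mean shift satisfies $|\bar\varphi_J|\le\frac12$. We want the recursion to close with a factor like $(1+O(2^{-2J}\text{poly}))$ times the old moment rather than a crude $2^{2p}$. So expand binomially in the centered part $\tilde\varphi_J$, using \eqref{eq:d2-cent}. Odd tilted moments of $\tilde\varphi$ do not vanish, so cross terms will be handled by Cauchy--Schwarz. For the mean term, use $(B+\bar\varphi)^{2p}\le B^{2p}+$ lower order terms controlled by $|\bar\varphi|$. A cleaner route is to sharpen the mean shift: from the proof of \Cref{lem:mean-shift}, $|\bar\varphi_J|\le 64\epsilon_JP_J=O(2^{-2J}(1+|B|^3+J|B|))$. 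The step then multiplies $\E[B^{2p}]$ by $(1+O(2^{-2J}\cdot\text{poly}))$ and adds terms of order $p\,c_q$ times $\E[B^{2p-2}]$. The second kind of term is what produces the $(J+1)^p$ growth through the combinatorics $\binom{2p}{2}\,8c_q\,C_M^{p-1}(p-1)!J^{p-1}\le C_M^p p!\big((J+1)^p-J^p\big)$-type inequalities, once $C_M\gg c_q$. The factor $e^{\Psi_J}\le e^{f_J}\le 1+ef_J$ introduces the sextic $2^{-2J}\int g B_{J-1}^6$ times $B^{2p}$. Handle it by Hölder and the induction hypothesis at orders up to $p+3$; this is why the statement is uniform in $p$. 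These multiplicative corrections are $\prod_J(1+C\,2^{-2J}J^{3}\cdots)$, which converges, and they are absorbed into $\mathcal K$ provided the $p$-dependence is like $C^p$ at the relevant order. Making that work is part of choosing $C_M$ larger.

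On $\mathcal G_J^c$, the induction hypothesis cannot be used because the event is not controlled under $\mu_{J-1}$. Return to the free measure via $e^{\Psi_J}\,d\mu_{J-1}\le e^{c_V J^2}\,d\gamma_{J-1}$, and bound the inner tilted moment by $\langle\cdot\rangle e^{\Psi_J}=\E_{a_J}[e^{-X_J}(\cdot)]$. Together this is $\E_{\gamma_J}[e^{-V_J}B_J^{2p}\mathbf 1_{\mathcal G_J^c}]\le e^{c_V(J+1)^2}\,\gamma_J(\mathcal G_J^c)^{1/2}\,\E_{\gamma_J}[B_J^{4p}]^{1/2}$. By \Cref{cor:d2-bad-prob} the probability factor is $\exp(-c\,2^{2J/3}/J)$, which beats $e^{c_V(J+1)^2}$ for $J\ge j_4$, up to enlarging $j_4$. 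The Gaussian $4p$-moment gives $C^p p!(J+1)^p$ after $\sqrt{(4p)!}\lesssim 4^p p!$. This contribution is summable in $J$ and is absorbed into $\mathcal K$ and $C_M$.

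The main obstacle is closing the good-event recursion with constants independent of $J$. The cross terms $B^{2p-k}\tilde\varphi^k$ and the reweighting $e^{\Psi_J}$ must not inflate the leading coefficient by a fixed factor $>1$ at each step. So I would isolate the exact leading term $\E[B_{J-1}^{2p}]$ with coefficient $1+O(2^{-2J}J^{O(1)})$. The increment from adding the shell's variance must then fit the $(J+1)^p-J^p\ge pJ^{p-1}$ budget. A safe alternative is to prove instead the bound $\mathcal K\prod_{i\le J}(1+\eta_i)\,C_M^pp!(J+1)^p$ with $\sum\eta_i<\infty$, and absorb the product into $\mathcal K$.
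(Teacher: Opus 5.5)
Your architecture matches the paper's: the ratio formula with denominator at least $1$, the binomial expansion in $\tilde\varphi_J$ with the centered moments, H\"older up to order $p+3$ for the sextic factor $f_J$, and the return to the free measure on $\mathcal{G}_J^c$. The gap is that the good-event recursion does not close as you describe, because its corrections are not uniform in $p$. Your leading coefficient is not $1+O(2^{-2J}J^{O(1)})$, for two reasons.
\begin{itemize}
\item The $2p$-th moment of the mean shift is controlled through the $6p$-th moment of the background. Since $(A_{3p}/A_p)^{1/2p}\asymp C_M\,p$ for $A_p=C_M^pp!$, the mean term costs $(1+\theta)^{2p}$ with $\theta\asymp C_M\,pJ\,2^{-2J}$. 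That is roughly $\exp(c\,C_Mp^2J\,2^{-2J})$, which exceeds the only available growth $(1+1/J)^p\le e^{p/J}$ once $p\gtrsim 4^J/(C_MJ^2)$.
\item The H\"older step for $f_J$ costs a relative factor $\asymp 2^{-2J}J^3A_{p+3}/A_p\asymp C_M^3p^3J^3\,2^{-2J}$. This exceeds your linear budget $pJ^{p-1}/J^p=p/J$ once $p\gtrsim 2^J/(C_M^{3/2}J^2)$.
\end{itemize}
Both costs grow with $C_M$, so enlarging $C_M$ makes them worse, not better. Your fallback $\mathcal{K}\prod_{i\le J}(1+\eta_i)$ fails for the same reason. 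The $\eta_i$ depend on $p$, and their products grow faster than any $C^p$. Moreover, the order-$p$ step invokes the hypothesis at orders $3p$ and $p+3$, so order-dependent prefactors no longer recombine to first power under H\"older.

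The missing idea is the paper's Step~0(ii). The crude bound $\E_{\mu_J}[B_J(x)^{2p}]\le e^{c_V(J+1)^2}p!\,(2c_q(J+1))^p$ is already at most $C_M^pp!\,(J+1)^p$, with no $\mathcal{K}$ at all, once $p\ge p_1(J)=c_V(J+1)^2/\ln(C_M/(2c_q))$. The inductive step is therefore needed only for $p<p_1(J)=O(J^2)$. There every correction is $\mathrm{poly}(J)\,2^{-2J}$ uniformly in $p$, and beyond a threshold $J_\star\ge j_4$ the total loss per step is at most $e^{p/(2J)}\le(1+1/J)^p$. Nothing accumulates across steps: $\mathcal{K}=e^{c_V(J_\star+1)^2}$ is fixed once by the base case and, being the same at every order, recombines to first power in each H\"older step.

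A smaller gap concerns your sharpened mean shift $64\epsilon_JP_J$. It involves $\sup_K|B_{J-1}|$, whose moments under $\mu_{J-1}$ the pointwise hypothesis does not provide, and on $G_J$ the supremum only yields $|\bar\varphi_J|\le\tfrac12$. The paper instead combines the mean identity \eqref{eq:mean-exact}, Jensen's inequality against $g(y)\,|\Gamma^\omega_J(x,y)|\,dy$, and \Cref{thm:d2-tilted} pointwise in $y$. That way only the pointwise moments $\E_{\mu_{J-1}}[B_{J-1}(y)^{6p}]$ enter.
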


\begin{proof}
Throughout the proof, we use $(p!)^{1/p}\in[p/e,\,p]$ repeatedly.

The constants are fixed in the following order.  First,
\begin{equation}\label{eq:d2-Cg}
    C_M \;:=\; 512\,e\,c_q\;,
\end{equation}
so that $C_M$ depends only on $c_q$.  Next, with $c_V$ as in~\eqref{eq:d2-C1}, let $J_\star\ge j_4$ be the finite threshold determined by conditions \eqref{eq:d2-Jstar-a}--\eqref{eq:d2-Jstar-c} of Step 5, which depends only on $t$, $g$, $\kappa$, $c_q$, and $j_4$. Finally,
\begin{equation}\label{eq:d2-K}
    \mathcal{K} \;:=\; e^{c_V(J_\star+1)^2}\;,
\end{equation}
so that $\mathcal{K}\ge1$.  With $A_p:=C_M^{\,p}\,p!$, the claim~\eqref{eq:d2-moments} reads $\E_{\mu_J}[B_J(x)^{2p}]\le\mathcal{K}A_p(J+1)^p$.

\medskip\noindent\textit{Step 0: direct bound.}
For every $J\ge0$, $p\ge1$, and $x\in K$,
\begin{equation}\label{eq:d2-direct}
    \E_{\mu_J}\big[B_J(x)^{2p}\big]
    \;\le\; e^{c_V(J+1)^2}\;p!\,
    \big(2c_q(J{+}1)\big)^p\;.
\end{equation}
Indeed, $\E_{\mu_J}[F]=\E_{\gamma_J}[e^{-V_J}F]/Z_J$ with $Z_J\ge1$ by \Cref{lem:V-lower}, and $e^{-V_J}\le e^{c_V(J+1)^2}$ pointwise by~\eqref{eq:d2-C1}.  Under $\gamma_J$ the variable $B_J(x)$ is $\cN(0,D^\omega_J(x))$, and $(2p{-}1)!!\le 2^p p!$.  Consequently \eqref{eq:d2-moments} holds:
\begin{enumerate}[label=(\roman*)]
\item for all $p\ge1$ whenever $J\le J_\star$, by the
choice~\eqref{eq:d2-K} of $\mathcal{K}$ and $C_M>2c_q$. This covers the base case $J=0$ and the finitely many shells below $J_\star$ without induction;
\item for $J>J_\star$ whenever
\begin{equation}\label{eq:d2-p1}
    p \;\ge\; p_1(J) \;:=\;
    \frac{c_V\,(J{+}1)^2}{\ln\!\big(C_M/(2c_q)\big)}\;,
\end{equation}
without using $\mathcal{K}$ at all.
\end{enumerate}
It therefore suffices to carry out the inductive step at pairs $(J,p)$ with $J>J_\star$ and $p<p_1(J)$, assuming \eqref{eq:d2-moments} at level $J-1$ for all orders in $p$.

\medskip\noindent\textit{Step 1: reduction and split.}
Fix $x\in K$.  Throughout Steps~1--4 we abbreviate $B(x):=B_{J-1}(x)$ and $D(x):=D^\omega_{J-1}(x)$.  Since $V_J=V_{J-1}+X_J$ and $\gamma_J=\gamma_{J-1}\otimes\gamma_\sigma(a_J)$, integrating over $a_J$ first and dividing by~\eqref{eq:telescope-step} gives
\begin{equation}\label{eq:d2-ratio}
    \E_{\mu_J}\big[B_J(x)^{2p}\big]
    \;=\; \frac{\E_{\mu_{J-1}}\big[e^{\Psi_J}\,
    \big\langle(B(x)+\varphi_J(x))^{2p}\big\rangle\big]}%
    {\E_{\mu_{J-1}}\big[e^{\Psi_J}\big]}\;,
\end{equation}
where $\langle\cdot\rangle$ denotes the tilted expectation over $a_J$.  Since $\Psi_J\ge 0$ (\Cref{lem:positivity}), the denominator is at least~$1$:
\begin{equation}\label{eq:d2-drop-denom}
    \E_{\mu_J}\big[B_J(x)^{2p}\big]
    \;\le\; \E_{\mu_{J-1}}\big[e^{\Psi_J}\,\mathcal{M}_p\big]\;,
    \qquad \mathcal{M}_p:=\big\langle(B(x)+\varphi_J(x))^{2p}
    \big\rangle\;\ge\;0\;.
\end{equation}
With $f_J$ and $\mathcal{G}_J$ as in~\eqref{eq:fJ-def}, we have $0\le\Psi_J\le f_J\le1$ on $\mathcal{G}_J$. Hence $e^{\Psi_J}\le1+2f_J$ there, and
\begin{equation}\label{eq:d2-split}
    \E_{\mu_{J-1}}\big[e^{\Psi_J}\mathcal{M}_p\big]
    \;\le\; \E_{\mu_{J-1}}\big[\mathcal{M}_p\,\mathbf{1}_{G_J}\big]
    \;+\; 2\,\E_{\mu_{J-1}}\big[f_J\,\mathcal{M}_p\,\mathbf{1}_{G_J}\big]
    \;+\; \E_{\mu_{J-1}}\big[e^{\Psi_J}\mathcal{M}_p\,
    \mathbf{1}_{\mathcal{G}_J^c}\big]\;.
\end{equation}
Steps 2--4 bound the three terms of \eqref{eq:d2-split} in order.

\medskip\noindent\textit{Step 2: main term.}
Split the shell field into its tilted mean $\bar\varphi:=\langle\varphi_J(x)\rangle$ and centered part $\tilde\varphi:=\varphi_J(x)-\bar\varphi$.  Then $\mathcal{M}_p=\big\langle(B(x)+\bar\varphi+\tilde\varphi)^{2p}\big\rangle$ and $\langle\tilde\varphi\rangle=0$.  The centered moments obey
\begin{equation}\label{eq:d2-phit-mom}
    \big|\langle\tilde{\varphi}^{\,k}\rangle\big|
    \;\le\; (8c_q)^{k/2}\,\lceil k/2\rceil!\qquad(k\ge2)\;.
\end{equation}
For even $k$ this is~\eqref{eq:d2-cent}.  For odd $k=2m{+}1$ it follows from Cauchy--Schwarz between the orders $2m$ and $2m{+}2$, since $\sqrt{m!\,(m+1)!}\le(m+1)!$.

On $G_J$, the mean identity~\eqref{eq:mean-exact} bounds $|\bar\varphi|$ by $4t$ times the integral of $|\langle H_3(y)\rangle|$ against a measure $g(y)\,|\Gamma^\omega_J(x,y)|\,dy$ of total mass at most $\kappa 2^{-2J}$.  Jensen's inequality for that normalized measure moves the power $2p$ inside the integral.  For the integrand, \Cref{thm:d2-tilted} and~\eqref{eq:power-sum} give $|\langle H_3(y)\rangle|^{2p}\le\langle H_3(y)^2\rangle^p\le3^{p-1}C_H^p\big(1+B(y)^{6p}+D(y)^{3p}\big)$.  Under $\mu_{J-1}$, the inductive hypothesis bounds $\E_{\mu_{J-1}}[B(y)^{6p}]$ by $\mathcal{K}A_{3p}J^{3p}$, and $1+D(y)^{3p}\le2A_{3p}J^{3p}$ because $D\le c_qJ$ and $C_M\ge c_q\ge1$.  All together,
\begin{equation}\label{eq:d2-mean-power}
    \E_{\mu_{J-1}}\big[\,|\bar{\varphi}|^{2p}\,
    \mathbf{1}_{G_J}\big]
    \;\le\; (4t\kappa)^{2p}\,2^{-4pJ}\,(3C_H)^p\,
    \mathcal{K}A_{3p}J^{3p}
    \;=:\;\mathcal{K}\,\Theta_p^{2p}.
\end{equation}
Two consequences follow: a bound on the size of the mean shift relative to that of the background, and the main-term estimate. First, using $(A_pJ^p)^{1/2p}\ge\sqrt{C_M Jp/e}$ and $\big((3p)!\big)^{1/2p}\le(3p)^{3/2}$,
\begin{equation}\label{eq:d2-theta}
    \theta \;:=\;
    \frac{\Theta_p}{(A_pJ^p)^{1/2p}}
    \;\le\; C_\theta\,C_M\,p\,J\,2^{-2J}\;,
    \qquad C_\theta:=36\,t\,\kappa\sqrt{C_He}\;.
\end{equation}
Second, expand $(B(x)+\bar\varphi)^{2p}$ binomially and apply H\"older's inequality with exponents $\big(\tfrac{2p}{2p-k},\tfrac{2p}{k}\big)$ to the $k$-th term.  By~\eqref{eq:d2-mean-power} and the inductive hypothesis, the factors of $\mathcal{K}$ then recombine to first power:
\begin{align}
    \mathcal{E} \;:=\;
    \E_{\mu_{J-1}}\big[\big(B(x)+\bar\varphi\big)^{2p}\,\mathbf{1}_{G_J}\big]
    \;&\le\; \sum_{k=0}^{2p}\binom{2p}{k}
    \big(\mathcal{K}A_pJ^p\big)^{\frac{2p-k}{2p}}\,
    \big(\mathcal{K}\Theta_p^{2p}\big)^{\frac{k}{2p}}\notag\\[10 pt]
    &\;=\; \mathcal{K}A_pJ^p\,(1+\theta)^{2p}
    \;\le\; \mathcal{K}A_pJ^p\,e^{2p\theta}\label{eq:d2-T-bound}\;.
\end{align}
By the binomial theorem, $\langle\tilde\varphi\rangle=0$, \eqref{eq:d2-phit-mom}, and Jensen ($\E_{\mu_{J-1}}[|B(x)+\bar\varphi|^{2p-k}\mathbf{1}_{G_J}]\le \mathcal{E}^{(2p-k)/2p}$),
\begin{equation}
    \E_{\mu_{J-1}}\big[\mathcal{M}_p\,\mathbf{1}_{G_J}\big]
    \;\le\; \mathcal{E}+\sum_{k=2}^{2p}\binom{2p}{k}
    (8c_q)^{k/2}\,\lceil k/2\rceil!\;\mathcal{E}^{\frac{2p-k}{2p}}.
\end{equation}
Each term is nondecreasing in $\mathcal{E}$ (its exponents are nonnegative), so $\mathcal{E}$ may be replaced throughout by its upper bound~\eqref{eq:d2-T-bound}, $\mathcal{E}_{\max}:=\mathcal{K}A_pJ^p\,e^{2p\theta}$:
\begin{equation}\label{eq:d2-bracket}
    \E_{\mu_{J-1}}\big[\mathcal{M}_p\,\mathbf{1}_{G_J}\big]
    \;\le\; \mathcal{E}_{\max}\,\bigg(1+\sum_{k=2}^{2p}\binom{2p}{k}\,
    (8c_q)^{k/2}\,\lceil k/2\rceil!\;\mathcal{E}_{\max}^{-k/2p}\bigg)\;.
\end{equation}
Now bound the summands.  Using $k!\ge\lceil k/2\rceil!\,\lfloor k/2\rfloor!$ in the first bound,
\begin{equation*}
    \binom{2p}{k}\lceil k/2\rceil!\;\le\;\frac{(2p)^k}{\lfloor k/2\rfloor!}\;,
    \qquad
    \mathcal{E}_{\max}^{-k/2p}\;\le\;(A_pJ^p)^{-k/2p}\;\le\;\Big(\frac{e}{C_MJp}\Big)^{k/2}\;,
\end{equation*}
so the $k$-th summand is at most $\zeta^{\,k}/\lfloor k/2\rfloor!$ with $\zeta:=\big(32\,e\,c_q\,p/(C_M J)\big)^{1/2}$.  Splitting into even $k=2m$ and odd $k=2m{+}1$ (both with $\lfloor k/2\rfloor=m$) and extending the sums to infinity,
\begin{equation*}
    \sum_{k=2}^{2p}\frac{\zeta^{\,k}}{\lfloor k/2\rfloor!}
    \;\le\; (1+\zeta)\sum_{m=1}^{\infty}\frac{\zeta^{2m}}{m!}
    \;=\; (1+\zeta)\big(e^{\zeta^2}-1\big)
    \;\le\; e^{2\zeta^2}-1\;,
\end{equation*}
using $e^{2\zeta^2}-1=(e^{\zeta^2}-1)(e^{\zeta^2}+1)$ and $\zeta\le e^{\zeta^2}$ in the last step.  The bracket in~\eqref{eq:d2-bracket} is therefore at most $e^{2\zeta^2}$, and hence
\begin{equation}\label{eq:d2-main-final}
    \E_{\mu_{J-1}}\big[\mathcal{M}_p\,\mathbf{1}_{G_J}\big]
    \;\le\; \mathcal{K}A_pJ^p\,
    e^{2p\theta+2\zeta^2}\;,
    \qquad
    2\zeta^2 \;=\;\frac{u}{8}
    \quad\text{by } C_M=512\,e\,c_q\;,
\end{equation}
with $u:=p/J$.

\medskip\noindent\textit{Step 3: middle term.}
Insert the definition of $f_J$ into the middle term of~\eqref{eq:d2-split} and bound $D(y)^3\le(c_qJ)^3$. The term requiring work is $\E_{\mu_{J-1}}[B(y)^6\,\mathcal{M}_p\,\mathbf{1}_{G_J}]$. H\"older with exponents $\big(\tfrac{p+3}{3},\tfrac{p+3}{p}\big)$, followed by Jensen's inequality under the tilted expectation, $\mathcal{M}_p^{(p+3)/p}\le\mathcal{M}_{p+3}$, gives two factors of order $p{+}3$:
\begin{equation*}
    \E_{\mu_{J-1}}\big[B(y)^6\,\mathcal{M}_p\,\mathbf{1}_{G_J}\big]
    \;\le\;
    \E_{\mu_{J-1}}\big[B(y)^{2(p+3)}\,\mathbf{1}_{G_J}\big]^{\frac{3}{p+3}}\;
    \E_{\mu_{J-1}}\big[\mathcal{M}_{p+3}\,\mathbf{1}_{G_J}\big]^{\frac{p}{p+3}}\;.
\end{equation*}
Bound the first factor by the level-$(J{-}1)$ hypothesis and the second by \eqref{eq:d2-main-final} at order $p{+}3$, writing $\theta',\zeta'$ for the order-$(p{+}3)$ instances of $\theta,\zeta$. The $\mathcal{K}$-powers, at exponents $\tfrac{3}{p+3}$ and $\tfrac{p}{p+3}$, recombine to first power:
\begin{align}
    \E_{\mu_{J-1}}\big[B(y)^6\,\mathcal{M}_p\,\mathbf{1}_{G_J}\big]
    &\;\le\;
    \big(\mathcal{K}A_{p+3}J^{p+3}\big)^{\frac{3}{p+3}}
    \Big(\mathcal{K}A_{p+3}J^{p+3}\,
    e^{2(p+3)\theta'+2\zeta'^2}\Big)^{\frac{p}{p+3}}\notag\\
    &\;\le\; \mathcal{K}A_{p+3}J^{p+3}\,
    e^{2(p+3)\theta'+2\zeta'^2}\;,\label{eq:step3-intermed}
\end{align}
dropping the exponent $\tfrac{p}{p+3}\le1$ from the exponential.  The other two pieces of $f_J$ give the main-term expectation $\E_{\mu_{J-1}}[\mathcal{M}_p\,\mathbf{1}_{G_J}]$ times $1$ and $(c_qJ)^3$.  By~\eqref{eq:d2-main-final} and $(c_qJ)^3\,A_pJ^p\le C_M^3A_p\,J^{p+3} \le A_{p+3}J^{p+3}$, they take the same form as~\eqref{eq:step3-intermed} with exponential $e^{2p\theta+2\zeta^2}$.  Since $\zeta\le\zeta'$, all three pieces are bounded using the combined exponent $2p\theta+2(p{+}3)\theta'+2\zeta'^2$.  Integrating against $g$ and collecting constants into $C_2:=6\,C_\star\|g\|_1$ gives
\begin{equation}\label{eq:d2-small}
    2\,\E_{\mu_{J-1}}\big[f_J\,\mathcal{M}_p\,\mathbf{1}_{G_J}\big]
    \;\le\; C_2\,2^{-2J}\,\mathcal{K}A_{p+3}\,
    J^{p+3}\,e^{2p\theta+2(p+3)\theta'+2\zeta'^2},
    \qquad
    2\zeta'^2 \;=\;\frac{p{+}3}{8J}\;.
\end{equation}

\medskip\noindent\textit{Step 4: complementary event.}
Since $e^{\Psi_J}\langle B_J(x)^{2p}\rangle=\E_{a_J}[e^{-X_J}B_J(x)^{2p}]$ and $Z_{J-1}\ge1$, the last term of \eqref{eq:d2-split} is at most an expectation under the free measure, with no factor of $\mathcal{K}$:
\begin{equation}\label{eq:d2-large}
    \E_{\mu_{J-1}}\big[e^{\Psi_J}\mathcal{M}_p\,\mathbf{1}_{\mathcal{G}_J^c}\big]
    \;\le\; \E_{\gamma_J}\big[e^{-V_J}\,B_J(x)^{2p}\,
    \mathbf{1}_{\mathcal{G}_J^c}\big]
    \;\le\; e^{c_V(J+1)^2}
    \big(4p\,c_q(J{+}1)\big)^p\,
    \gamma_J\big(\mathcal{G}_J^c\big)^{1/2}
\end{equation}
by \Cref{lem:V-lower}, Cauchy--Schwarz, the Gaussian moments of Step~0, and the bound $(4p-1)!!\leq(4p)^{2p}$.  \Cref{cor:d2-bad-prob}, available because $J>J_\star\ge j_4$, bounds the probability of the bad event:
\begin{equation*}
    \gamma_J\big(\mathcal{G}_J^c\big)
    \;\le\; C_K\exp\!\Big(
    -\frac{c_3\,2^{2J/3}}{J}\Big)\;.
\end{equation*}

\medskip\noindent\textit{Step 5: closing.}
The three terms of~\eqref{eq:d2-split} are bounded by \eqref{eq:d2-main-final}, \eqref{eq:d2-small}, and~\eqref{eq:d2-large}, and the target is~\eqref{eq:d2-moments} at level $J$.  It suffices that the main term \eqref{eq:d2-main-final} be at most $\mathcal{K}A_pJ^p\,e^{u/4}$ and that \eqref{eq:d2-small} and \eqref{eq:d2-large} each be at most $\tfrac{u}{8}\,e^{u/4}\,\mathcal{K}A_pJ^p$.  Define $J_\star$ as the smallest integer at least $j_4$ such that for all $J>J_\star$ and all $p<p_1(J)$:
\begin{subequations}\label{eq:d2-Jstar}
\begin{align}
    16\,C_\theta\,C_M\,(p{+}3)^2J^2\,2^{-2J}
    &\;\le\; 1\;,
    \label{eq:d2-Jstar-a}\\[5 pt]
    C_3\,C_M^3\,(p{+}3)^3J^4\,2^{-2J}
    &\;\le\; 1\;,
    \label{eq:d2-Jstar-b}\\[5 pt]
    C_4J\,e^{c_V(J+1)^2+p/J}
    \exp\!\big(-c_3\,2^{2J/3}/(2J)\big)
    &\;\le\; 1\;,
    \label{eq:d2-Jstar-c}
\end{align}
\end{subequations}
with $C_3:=8e^{1/2}C_2$ and $C_4:=8C_K^{1/2}$. Such a $J_\star$ exists.  Since $p<p_1(J)=O(J^2)$, the left sides of~\eqref{eq:d2-Jstar-a} and~\eqref{eq:d2-Jstar-b} are bounded by polynomials in $J$ times $2^{-2J}$, and in~\eqref{eq:d2-Jstar-c} the factor $e^{c_V(J+1)^2+p/J}=e^{O(J^2)}$ is outweighed by the super-exponential decay of $\exp\!\big(-c_3\,2^{2J/3}/(2J)\big)$.  All three left sides therefore tend to zero as $J\to\infty$, uniformly over $p<p_1(J)$.

For $J>J_\star$ and $p<p_1(J)$:
\begin{enumerate}[label=(\alph*)]
\item By~\eqref{eq:d2-Jstar-a} and~\eqref{eq:d2-theta}, $2p\theta\le1/(8J)\le u/8$, and with $2\zeta^2= u/8$ from~\eqref{eq:d2-main-final} the Step-2 term is at most $\mathcal{K}A_pJ^p\,e^{u/4}$.
\item The argument of~(a) at order $p+3$ gives $2(p+3)\theta'\leq 1/(8J)$.  With~(a) and $2\zeta'^2=(p+3)/(8J)$, the exponent of~\eqref{eq:d2-small} is at most $(p{+}5)/(8J)\le u/4+\tfrac12$.  The ratio of~\eqref{eq:d2-small} to $\tfrac{u}{8}e^{u/4}\mathcal{K}A_pJ^p$ is then at most the left side of~\eqref{eq:d2-Jstar-b}, so the Step-3 term is at most $\tfrac{u}{8}e^{u/4}\mathcal{K}A_pJ^p$.
\item Using $p^p\le e^p\,p!$ and $(1+1/J)^p\le e^{p/J}$, the ratio of~\eqref{eq:d2-large} to $\tfrac{u}{8}e^{u/4}\mathcal{K}A_pJ^p$ is at most the left side of~\eqref{eq:d2-Jstar-c} times $e^{-u/4}/(128^p\,p\,\mathcal{K})\le1$, so the Step-4 term is at most $\tfrac{u}{8}e^{u/4}\mathcal{K}A_pJ^p$.
\end{enumerate}
Collecting the three terms:
\begin{align}
    \E_{\mu_J}\big[B_J(x)^{2p}\big]
    &\;\le\; \mathcal{K}A_pJ^p\,e^{u/4}
    + \tfrac{u}{4}\,\mathcal{K}A_pJ^p\,e^{u/4}
    \;=\; \mathcal{K}A_pJ^p\,e^{u/4}\big(1+\tfrac{u}{4}\big)\notag\\[5 pt]
    &\;\le\; \mathcal{K}A_pJ^p\,e^{u/2}
    \;\le\; \mathcal{K}A_p(J{+}1)^p\;,
\end{align}
using $1+u/4\le e^{u/4}$ and then $u/2\le p\ln(1+1/J)$. Together with Step~0(i)--(ii), this establishes~\eqref{eq:d2-moments} for all $(J,p)$.
\end{proof}

\subsection{Convergence and existence}\label{sec:d2-existence}

\begin{proposition}[Telescoping factor]
\label{prop:d2-exp-int}
There is a constant $C_T>0$ such that for every $j\ge1$,
\begin{equation}\label{eq:d2-exp-int}
    \E_{\mu_{j-1}}\big[e^{\Psi_j}\big]
    \;\le\; 1 + C_T\,2^{-2j}\,j^3\;.
\end{equation}
The constant $C_T$ depends only on $t$, $g$, $c_q$, $\kappa$, and $j_4$.
\end{proposition}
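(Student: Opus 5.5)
The bound splits into the finitely many shells below $j_4$ and the shells $j>j_4$, with the latter split again along the good event $\mathcal{G}_j$ of \eqref{eq:fJ-def}. For $j\le j_4$ no decay is needed. By Lemma~\ref{lem:V-lower} and the telescoping identity \eqref{eq:telescope-step}, $\E_{\mu_{j-1}}[e^{\Psi_j}]=Z_j/Z_{j-1}\le Z_j\le e^{c_V(j+1)^2}\le e^{c_V(j_4+1)^2}$, using $Z_{j-1}\ge1$. Since $2^{-2j}j^3\ge 2^{-2j_4}$ on $1\le j\le j_4$, choosing $C_T\ge e^{c_V(j_4+1)^2}2^{2j_4}$ covers these shells, and this is where the dependence of $C_T$ on $j_4$ enters.

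For $j>j_4$, write
\[
\E_{\mu_{j-1}}[e^{\Psi_j}] \;\le\; \E_{\mu_{j-1}}\big[e^{\Psi_j}\mathbf 1_{\mathcal G_j}\big]+\E_{\mu_{j-1}}\big[e^{\Psi_j}\mathbf 1_{\mathcal G_j^c}\big].
\]

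\emph{Good event.} On $\mathcal G_j$ we have $0\le\Psi_j\le f_j\le1$ (Theorem~\ref{thm:d2-one-shell}), so $e^{\Psi_j}\le 1+2f_j$. The first term is therefore at most $1+2\E_{\mu_{j-1}}[f_j]$. Expanding $f_j$, and applying Fubini together with Theorem~\ref{thm:d2-moments} at $J=j-1$, $p=3$, gives $\E_{\mu_{j-1}}[B_{j-1}(y)^6]\le 6\mathcal K C_M^3 j^3$ uniformly in $y\in K$. Combined with $D^\omega_{j-1}\le c_q j$, this yields
\[
2\E_{\mu_{j-1}}[f_j]\;\le\; 2C_\star\|g\|_1 2^{-2j}\big(1+6\mathcal K C_M^3 j^3+c_q^3j^3\big)\;\le\; C\,2^{-2j}j^3 .
\]

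\emph{Bad event.} As in Step~4 of Theorem~\ref{thm:d2-moments}, write $e^{\Psi_j}=\E_{a_j}[e^{-X_j}]$ and use that $\mathcal G_j^c$ is $\cF_{j-1}$-measurable, so that
\[
\E_{\mu_{j-1}}\big[e^{\Psi_j}\mathbf 1_{\mathcal G_j^c}\big]=Z_{j-1}^{-1}\,\E_{\gamma_j}\big[e^{-V_j}\mathbf 1_{\mathcal G_j^c}\big]\le e^{c_V(j+1)^2}\,\gamma_j(\mathcal G_j^c).
\]
Corollary~\ref{cor:d2-bad-prob} then bounds this by $C_Ke^{c_V(j+1)^2}\exp(-c_3 2^{2j/3}/j)$. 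Because $2^{2j/3}/j$ outgrows $j^2$, this quantity is bounded by $C'2^{-2j}$ for all $j$. The constant $C'$ is a supremum over $j$ of a function that tends to zero, so it depends only on $t,g,c_q,\kappa$.

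Adding the two pieces gives \eqref{eq:d2-exp-int} for $j>j_4$. Taking $C_T$ as the maximum of the constants from the three regimes proves the proposition. The only delicate point is the bad event: the Gibbs factor $e^{c_V(j+1)^2}$ must be absorbed by the super-exponential rarity of $\mathcal G_j^c$, and Corollary~\ref{cor:d2-bad-prob} already supplies exactly this. The rest uses Theorem~\ref{thm:d2-moments} only at order $p=3$.
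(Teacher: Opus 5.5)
Your proposal is correct and follows essentially the same route as the paper. The paper also bounds the small shells crudely via $Z_j/Z_{j-1}\le e^{c_V(j+1)^2}$, and splits the large shells along $\mathcal{G}_j$: on $\mathcal{G}_j$ it uses $e^{\Psi_j}\le 1+2f_j$ with the sixth-moment bound of \Cref{thm:d2-moments}, and on $\mathcal{G}_j^c$ it returns to the free measure and applies \Cref{cor:d2-bad-prob}. Placing the threshold at $j\le j_4$ instead of $j<j_4$, and writing the small-shell constant explicitly, are only cosmetic differences.
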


\begin{proof}
First suppose $j\geq j_4$. Since $D^\omega_{j-1}\le c_q\,j$ and $\E_{\mu_{j-1}}[B_{j-1}(y)^6]\le6\,\mathcal{K}C_M^3\,j^3$ at every $y\in K$ (\Cref{thm:d2-moments}),
\begin{equation}\label{eq:d2-eta}
    \E_{\mu_{j-1}}[f_j]
    \;\le\; C_1\,2^{-2j}\,j^3\;,
    \qquad C_1\;:=\;C_\star\,\|g\|_1\big(1+6\,\mathcal{K}C_M^3+c_q^3\big)\;.
\end{equation}
On $\mathcal{G}_j$, $e^{\Psi_j}\le1+2f_j$, and hence $\E_{\mu_{j-1}}[e^{\Psi_j}\mathbf{1}_{\mathcal{G}_j}]\le1+2C_1\,2^{-2j}j^3$.  Because $Z_{j-1}\ge1$, the expectation over $\mathcal{G}_j^c$ is at most $\E_{\gamma_j}[e^{-V_j}\mathbf{1}_{\mathcal{G}_j^c}]$, and \Cref{lem:V-lower} and \Cref{cor:d2-bad-prob} bound this by $e^{c_V(j+1)^2}C_K\exp(-c_3\,2^{2j/3}/j)$, which is $o(2^{-2j}\,j^3)$.  Adding the two bounds gives~\eqref{eq:d2-exp-int} for $j\ge j_4$.  For the finitely many shells $1\le j<j_4$, equation \eqref{eq:telescope-step}, \Cref{lem:V-lower}, and~\eqref{eq:d2-C1} give $\E_{\mu_{j-1}}[e^{\Psi_j}]=Z_j/Z_{j-1}\le e^{c_V(j+1)^2}$, and enlarging $C_T$ covers these shells.
\end{proof}

\begin{theorem}[UV convergence]\label{thm:d2-convergence}
The log partition function is given by
\begin{equation*}
    \log Z_\Lambda \;=\; \log Z_0
    +\sum_{j=1}^\Lambda\log\E_{\mu_{j-1}}\big[e^{\Psi_j}\big]\;,
\end{equation*}
and for every $j\geq1$, the $j$-th summand satisfies
\begin{equation}\label{eq:d2-log-bound}
    0 \;\le\;
    \log\E_{\mu_{j-1}}\big[e^{\Psi_j}\big]
    \;\le\; C_T\,2^{-2j}\,j^3\;.
\end{equation}
Since $\sum_{j\ge1}2^{-2j}\,j^3<\infty$, the sequence $\log Z_\Lambda$ converges as $\Lambda\to\infty$.
\end{theorem}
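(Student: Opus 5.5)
The identity for $\log Z_\Lambda$ comes from taking logarithms in the telescoping product \eqref{eq:telescope} of \Cref{prop:telescope}. Every factor there is strictly positive and finite. Indeed, $Z_j/Z_{j-1}$ is a ratio of numbers in $[1,e^{C_V(j)}]$ by \Cref{lem:V-lower}, and \eqref{eq:telescope-step} identifies it with $\E_{\mu_{j-1}}[e^{\Psi_j}]$. Hence the logarithm of the product is the sum of the logarithms, with $\log Z_0$ finite because $1\le Z_0\le e^{C_V(0)}$.

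For the two-sided bound \eqref{eq:d2-log-bound} on the $j$-th summand, I would argue as follows.

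\emph{Lower bound.} \Cref{lem:positivity} gives $\Psi_j\ge0$ pointwise on the background. Therefore $e^{\Psi_j}\ge1$, and since $\mu_{j-1}$ is a probability measure, $\E_{\mu_{j-1}}[e^{\Psi_j}]\ge1$, so its logarithm is nonnegative.

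\emph{Upper bound.} \Cref{prop:d2-exp-int} gives $\E_{\mu_{j-1}}[e^{\Psi_j}]\le1+C_T2^{-2j}j^3$ for every $j\ge1$. Combining this with $\log(1+x)\le x$ for $x\ge0$ yields $\log\E_{\mu_{j-1}}[e^{\Psi_j}]\le C_T2^{-2j}j^3$ with the same constant. The small shells $j<j_4$ need no separate treatment, because \Cref{prop:d2-exp-int} already covers them through the enlargement of $C_T$.

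\emph{Convergence.} The summands are nonnegative, so the partial sums $\log Z_\Lambda$ are nondecreasing in $\Lambda$. They are also bounded above by $\log Z_0+C_T\sum_{j\ge1}2^{-2j}j^3$, and this series converges, for instance by the ratio test, since consecutive terms have ratio $\tfrac14\,(1+1/j)^3\to\tfrac14$. A bounded monotone sequence converges, so $\log Z_\Lambda$ has a finite limit. It follows that $Z_\Lambda\to Z_\infty\in[1,\infty)$, where the lower bound uses $Z_0\ge1$.

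There is no real obstacle in this argument: all the analytic work sits in \Cref{prop:d2-exp-int} and the moment stability of \Cref{thm:d2-moments}. The only point to check is that $C_T$ is uniform in $j$, and hence in $\Lambda$, and this is asserted in \Cref{prop:d2-exp-int}.
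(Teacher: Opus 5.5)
Your proposal is correct and follows the paper's proof: take logarithms in the telescoping identity of \Cref{prop:telescope}, get the lower bound from $\Psi_j\ge0$ (\Cref{lem:positivity}), and get the upper bound from \Cref{prop:d2-exp-int} with $\log(1+u)\le u$. Your extra details, namely that each factor is positive and finite and that the partial sums form a bounded monotone sequence, are correct fill-ins of steps the paper leaves implicit.
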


\begin{proof}
The displayed sum is \Cref{prop:telescope} after taking logarithms.  The lower bound in~\eqref{eq:d2-log-bound} follows from $\Psi_j\ge 0$ (\Cref{lem:positivity}).  The upper bound is \Cref{prop:d2-exp-int} together with $\log(1+u)\le u$ for $u\ge 0$.
\end{proof}

We write $d_{\mathrm{TV}}(\mu,\mu'):=\sup_A|\mu(A)-\mu'(A)|$ for the total variation distance between probability measures, with the supremum over measurable sets $A$.  When $\mu$ has Radon--Nikodym derivative $\mathcal{R}$ with respect to $\mu'$, Scheff\'e's theorem \cite[Lem.~2.1]{Tsybakov2009} gives
\begin{equation}\label{eq:dTV-L1}
    d_{\mathrm{TV}}(\mu,\mu')
    \;=\;\frac{1}{2}\int\big|\mathcal{R}-1\big|\,d\mu'\;.
\end{equation}
We write $\langle \psi,f\rangle:=\int_{\R^2}\psi(x)f(x)\,dx$ for the spatial pairing of a field $\psi$ with a Schwartz test function $f\in\cS(\R^2)$, read as the duality pairing when $\psi\in\cS'(\R^2)$.

\begin{theorem}[Existence of the finite-width $\phi^4_2$ measure]
\label{thm:d2-existence}
Fix a spatial cutoff $g\in C_c^\infty(\R^2;[0,\infty))$, not identically zero, with $K=\supp g$, and a width sequence $(N_j)_{j\ge 0}$ with $N_j\ge c_0\,(j{+}1)\,2^{4j}$ as in \eqref{eq:width-gen}. For $\Lambda\in\N$, let $Z_\Lambda$ and $\mu_\Lambda$ be the partition function and Gibbs measure of~\eqref{eq:Z-def}--\eqref{eq:mu-def} at shell cutoff $\Lambda$.  Then the following hold for almost every draw of the features:
\begin{enumerate}[label=(\alph*)]
\item The partition function $Z_\Lambda$ converges to a limit $Z_\infty\in[1,\infty)$.

\item For integers $0\le L\le\Lambda$, let $\mu_{\Lambda}^{\le L}$ denote the marginal of $\mu_\Lambda$ on the shell coefficients $(a_0,\ldots,a_L)$.  For each $\Lambda'>\Lambda\geq L$,
\begin{equation}\label{eq:d_TV}
    d_{\mathrm{TV}}\big(\mu_{\Lambda'}^{\le L},
    \,\mu_{\Lambda}^{\le L}\big)
    \;\le\; C_T\!\sum_{j=\Lambda+1}^{\Lambda'}
    2^{-2j}\,j^3\;,
\end{equation}
and in particular, for each $L$, these marginals are Cauchy in total variation as $\Lambda\to\infty$.  There is a unique probability measure $\mu_\infty$ on $\prod_{j\ge 0}\R^{N_j}$ such that
\begin{equation}\label{eq:mu-infty}
    \mu_\infty^{\le L}
    \;=\;\lim_{\Lambda\to\infty}\mu_\Lambda^{\le L}
\end{equation}
in total variation for every $L$.

\item The field $\phi_\Lambda=\sum_{j=0}^\Lambda\varphi_j$ converges in distribution when smeared against test functions.  For any $f_1,\ldots,f_r\in\cS(\R^2)$ and any bounded continuous $F:\R^r\to\R$, as $\Lambda\to\infty$,
\begin{equation}\label{eq:joint-conv}
    \E_{\mu_\Lambda}\big[F\big(
    \langle\phi_\Lambda,f_1\rangle,\ldots,
    \langle\phi_\Lambda,f_r\rangle\big)\big]
    \;\longrightarrow\;
    \E_{\mu_\infty}\big[F\big(
    \langle\phi_\infty,f_1\rangle,\ldots,
    \langle\phi_\infty,f_r\rangle\big)\big]\;,
\end{equation}
where the series $\langle\phi_\infty,f\rangle:=\sum_{j\ge0}\langle\varphi_j,f\rangle$ converges absolutely $\mu_\infty$-almost surely.
\end{enumerate}
\end{theorem}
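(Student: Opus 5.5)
Part (a) follows from \Cref{thm:d2-convergence}.  Since $\log Z_\Lambda$ is nondecreasing in $\Lambda$ (every summand is nonnegative) and bounded by $\log Z_0+C_T\sum_j 2^{-2j}j^3$, the limit $Z_\infty$ exists.  It is finite, and $Z_\infty\ge1$ because $Z_\Lambda\ge1$ by \Cref{lem:V-lower}.  The almost-sure qualifier enters only through the feature realization fixed in \Cref{not:quench}, which holds almost surely by \Cref{thm:good-geom}.

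For (b), I would first treat consecutive cutoffs $\Lambda'=\Lambda+1$ and then sum by the triangle inequality.  Integrating out $a_{\Lambda+1}$ as in the proof of \Cref{prop:telescope} shows that the marginal of $\mu_{\Lambda+1}$ on the first $\Lambda+1$ shells has density $\mathcal R=e^{\Psi_{\Lambda+1}}/\E_{\mu_\Lambda}[e^{\Psi_{\Lambda+1}}]$ with respect to $\mu_\Lambda$.  Marginalizing further to $(a_0,\dots,a_L)$ does not increase total variation, so by Scheff\'e \eqref{eq:dTV-L1} it suffices to bound $\tfrac12\E_{\mu_\Lambda}|\mathcal R-1|$.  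Write $E:=\E_{\mu_\Lambda}[e^{\Psi}]\ge1$.  Since $\Psi\ge0$, we have $|e^{\Psi}-E|\le(e^\Psi-1)+(E-1)$, and hence $\E|\mathcal R-1|\le 2(E-1)/E\le 2(E-1)\le 2C_T2^{-2(\Lambda+1)}(\Lambda+1)^3$ by \Cref{prop:d2-exp-int}.  This gives \eqref{eq:d_TV}.  The marginals are then Cauchy in total variation, a complete metric on probability measures, so limits $\nu_L$ exist.  They are consistent, because marginalization is a TV-contraction and commutes with the limits.  Kolmogorov's extension theorem on the Polish product $\prod_j\R^{N_j}$ then yields a unique $\mu_\infty$.

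For (c), the key estimate is $\E_{\mu_\Lambda}|\langle\varphi_j,f\rangle|\lesssim 2^{-2j}j^{3/2}$, or any summable rate, uniformly in $\Lambda\ge j$.  I would obtain it via the tilted decomposition for shell $j$ under $\mu_j$: the centered part has variance bounded by Brascamp--Lieb by $2\sigma^2 |\int f U_j|^2=2\int\!\int f\Gamma_j^\omega f$.  By the Fourier support of $Q_j$ on $|p|\sim2^j$ and the Schwartz decay of $\hat f$, this is tiny, but the quenched $\Gamma^\omega_j$ only approximates $Q_j$ on $K$ and $f$ is not supported in $K$.  This is the \textbf{main obstacle}.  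I would handle it by controlling $\langle U_{j,\ell},f\rangle=\sqrt{2q_j/N_j\sigma^2}\,\mathrm{Re}\,(e^{ic}\hat f(b))$ directly: since $|b_{j,\ell}|\ge2^{j-1}$, each such pairing is $O(N_j^{-1/2}2^{-kj})$ for any $k$ deterministically, so $\sigma^2|\int fU_j|^2\le 2q_j\sup_{|p|\ge2^{j-1}}|\hat f(p)|^2$, which is superpolynomially small.  The tilted mean is bounded through \eqref{eq:mean-exact} as $\int f\bar\varphi_j = -4t\sigma^2\int g\langle H_3\rangle (U_j\cdot\int fU_j)$, and Cauchy--Schwarz with $|U_j(y)|^2\le c_q/\sigma^2$ bounds it by $\sup_{|p|\gtrsim 2^j}|\hat f|$ times a moment of $H_3$, which is controlled by \Cref{thm:d2-moments} on the good event.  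On the bad event, \Cref{cor:d2-bad-prob} applies together with a free-measure bound.

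Next, I would pass from $\mu_j$ to $\mu_\Lambda$ and $\mu_\infty$.  The marginal density of $\mu_\Lambda$ on shells $\le j$ relative to $\mu_j$ is $\prod_{i>j}e^{\Psi_i}/(\cdots)$, which is uncontrolled in $L^\infty$.  I would instead truncate, using $|\langle\varphi_j,f\rangle|\wedge M$ together with the TV bound (b), or uniform moment bounds via Cauchy--Schwarz.  This gives absolute convergence of $\sum_j\langle\varphi_j,f\rangle$ $\mu_\infty$-a.s., since the expected tails are summable.  Finally, \eqref{eq:joint-conv} follows by a three-$\epsilon$ argument.  I would replace $\phi_\Lambda$ by $\phi_L$, at a cost small in probability uniformly in $\Lambda$ by the tail bound, then replace $\mu_\Lambda$ by $\mu_\infty$ on the $L$-marginal, at a cost controlled by (b) since $F$ is bounded, and then let $L\to\infty$ under $\mu_\infty$.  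Uniform continuity of $F$ on compacts handles the approximation in probability.
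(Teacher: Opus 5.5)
Your proposal is correct and tracks the paper's proof closely. You get (a) from the summable telescoping bound. You get (b) from the density $\mathcal{R}_{J+1}=e^{\Psi_{J+1}}/\E_{\mu_J}[e^{\Psi_{J+1}}]$, Scheff\'e's identity, completeness under total variation, and Kolmogorov--Daniell extension. For (c) you use the same ingredients as the paper:
\begin{itemize}
\item the high-frequency decay of $\langle U_{j,\ell},f\rangle$;
\item Brascamp--Lieb for the fluctuation;
\item the mean identity for $\langle w_j\rangle$;
\item the good/bad-event split;
\item the truncation $\min\{|w_j|,1\}$ combined with the total-variation tail.
\end{itemize}
For the decay, the paper moves $\Delta$ onto $f$ twice to get $2^{-2j}$, while your Fourier bound gives any power.

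The one real difference is the final step. The paper proves the explicit characteristic-function rate \eqref{eq:char-rate} and applies L\'evy continuity to $h=\sum_k s_kf_k$. You use a three-$\epsilon$ argument with a general bounded continuous $F$. As written, that argument needs tightness of $\langle\phi_L,f_k\rangle$ uniformly in $L$ and $\Lambda$, so that the compact set can be chosen before $L$. Alternatively, reduce to bounded Lipschitz $F$. The paper's route also yields a width-uniform rate, which it reuses for the Moore--Osgood interchange in \Cref{thm:interchange}.
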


\begin{proof}
\textit{Part (a).} By \Cref{thm:d2-convergence}, $Z_\Lambda$ converges to a finite limit $Z_\infty$, and $Z_\infty\ge1$ because $Z_\Lambda\ge1$ (\Cref{lem:V-lower}).

\medskip\noindent\textit{Part (b).}
For any bounded $\eta=\eta(a_0,\ldots,a_J)$, integrating out $a_{J+1}$ as in the proof of \Cref{prop:telescope} and dividing by $Z_{J+1}=Z_J\,\E_{\mu_J}[e^{\Psi_{J+1}}]$ gives
\begin{equation*}
    \E_{\mu_{J+1}}[\eta]
    \;=\; \E_{\mu_J}\big[\eta\,\mathcal{R}_{J+1}\big]\;,
    \qquad
    \mathcal{R}_{J+1} \;:=\;
    \frac{e^{\Psi_{J+1}}}{\E_{\mu_J}[e^{\Psi_{J+1}}]}\;.
\end{equation*}
Hence $\mu_{J+1}^{\le J}$ has Radon--Nikodym derivative $\mathcal{R}_{J+1}$ with respect to $\mu_J$.  Since $\Psi_{J+1}\ge 0$ and $\E_{\mu_J}[e^{\Psi_{J+1}}]\ge 1$,
\begin{equation}\label{eq:RN_identity}
    \E_{\mu_J}\big|\mathcal{R}_{J+1}-1\big|
    \;\le\; 2\big(\E_{\mu_J}[e^{\Psi_{J+1}}]-1\big)\;.
\end{equation}
Combining \eqref{eq:dTV-L1}, \eqref{eq:RN_identity}, and \eqref{eq:d2-exp-int}:
\begin{equation*}
    d_{\mathrm{TV}}(\mu_{J+1}^{\le J},\mu_J) =\frac{1}{2}\E_{\mu_J}|\mathcal{R}_{J+1}{-}1| \le C_T\,2^{-2(J+1)}(J{+}1)^3\;.
\end{equation*}
For $L\leq J$, marginalizing to the first $L+1$ shells does not increase total variation distance because an event in those shells is also an event in the first $J+1$ shells.  The same bound therefore holds for $d_{\mathrm{TV}}\big(\mu_{J+1}^{\le L},\mu_J^{\le L}\big)$, and summing over $J=\Lambda,\ldots,\Lambda'-1$ with the triangle inequality gives~\eqref{eq:d_TV}. The right side of \eqref{eq:d_TV} is at most $C_T$ times the tail $\sum_{j>\Lambda}2^{-2j}j^3$, which vanishes as $\Lambda\to\infty$, so the marginals are Cauchy in total variation.

Fix $L$ and represent each marginal by its density against the free measure $\gamma_L$.  Integrating out the shells above $L$ gives
\begin{equation}\label{eq:marg-density}
    \frac{d\mu_\Lambda^{\le L}}{d\gamma_L}
    \;=\;\frac{1}{Z_\Lambda}\int e^{-V_\Lambda}\,
    d\gamma_\sigma(a_{L+1})\cdots d\gamma_\sigma(a_\Lambda)\;.
\end{equation}
By Scheff\'e's theorem,
\begin{equation}\label{eq:marg-L1}
    d_{\mathrm{TV}}\big(\mu_{\Lambda'}^{\le L},\mu_{\Lambda}^{\le L}\big)
    \;=\;\frac{1}{2}
    \bigg\|\frac{d\mu_{\Lambda'}^{\le L}}{d\gamma_L}
    -\frac{d\mu_{\Lambda}^{\le L}}{d\gamma_L}
    \bigg\|_{L^1(\gamma_L)}\;.
\end{equation}
Since the marginals are Cauchy in total variation, the densities are Cauchy in $L^1(\gamma_L)$, and by completeness \cite[Lem.~1.33]{Kallenberg2021} they converge in $L^1(\gamma_L)$.  Let $\hat\mu_L$ be the measure whose density against $\gamma_L$ is their limit.  Because nonnegativity and integration to one pass to $L^1$ limits, $\hat\mu_L$ is a probability measure.  Scheff\'e's theorem applied to $\mu_\Lambda^{\le L}$ and $\hat\mu_L$ then gives $\mu_\Lambda^{\le L}\to\hat\mu_L$ in total variation.

Fix $L'\ge L$.  At every cutoff, marginalizing $\mu_\Lambda^{\le L'}$ onto the first $L+1$ shells returns $\mu_\Lambda^{\le L}$, and marginalizing does not increase total variation distance, so
\begin{equation}\label{eq:marg-consist}
    d_{\mathrm{TV}}\big(\mu_\Lambda^{\le L},\,(\hat\mu_{L'})^{\le L}\big)
    \;\le\; d_{\mathrm{TV}}\big(\mu_\Lambda^{\le L'},\,\hat\mu_{L'}\big)
    \;\xrightarrow[\Lambda\to\infty]{}\;0\;.
\end{equation}
The marginals $\mu_\Lambda^{\le L}$ therefore converge to both $\hat\mu_L$ and $(\hat\mu_{L'})^{\le L}$.  Limits in total variation are unique, so $(\hat\mu_{L'})^{\le L}=\hat\mu_L$, and the family $(\hat\mu_L)_{L\ge0}$ is consistent.

Finally, by Daniell's theorem \cite[Thm.~8.21]{Kallenberg2021}, there is a probability measure $\mu_\infty$ on $\prod_{j\ge0}\R^{N_j}$ with $\mu_\infty^{\le L}=\hat\mu_L$ for every $L$, and these marginals determine $\mu_\infty$ uniquely \cite[Lem.~1.18]{Kallenberg2021}.

\medskip\noindent\textit{Part (c).}
Fix $f\in\cS(\R^2)$, write $w_j:=\langle\varphi_j,f\rangle$ for each shell, and let $\mathcal{X}_\Lambda(s):=\E_{\mu_\Lambda}[e^{is\langle\phi_\Lambda,f\rangle}]$ for $s\in\R$.  For bounded $\eta=\eta(a_0,\ldots,a_\Lambda)$, Scheff\'e's theorem and~\eqref{eq:d_TV} with $L=\Lambda$ and $\Lambda'\to\infty$ give
\begin{align}
    \big|\E_{\mu_\Lambda}[\eta]-\E_{\mu_\infty}[\eta]\big|
    &\;\le\;\sup|\eta|\,\bigg\|\frac{d\mu_\Lambda}{d\gamma_\Lambda}-\frac{d\mu_\infty^{\le\Lambda}}{d\gamma_\Lambda}\bigg\|_{L^1(\gamma_\Lambda)}\notag\\[5 pt]
    &\;=\;2\sup|\eta|\,d_{\mathrm{TV}}\big(\mu_\infty^{\le\Lambda},\mu_\Lambda\big)
    \;\le\;2C_T\sup|\eta|\sum_{j>\Lambda}2^{-2j}j^3\;.\label{eq:TV-tail}
\end{align}
As in~\eqref{eq:d2-ratio}, and because $\E_{\mu_{j-1}}[e^{\Psi_j}]\ge1$, every bounded $\eta=\eta(a_0,\ldots,a_j)\ge0$ satisfies
\begin{equation}\label{eq:top-shell}
    \E_{\mu_j}[\eta]\;\le\;\E_{\mu_{j-1}}\big[e^{\Psi_j}\langle \eta\rangle\big]\;,
\end{equation}
where $\langle\cdot\rangle$ is the tilted expectation over $a_j$.

Fix $j>j_4$, so that \Cref{cor:BL}, \Cref{thm:d2-tilted}, and \Cref{cor:d2-bad-prob} apply at shell $j$.  First, set $v_\ell:=\langle U_{j,\ell},f\rangle$, so that $w_j=v^Ta_j$.  Each $U_{j,\ell}$ is a cosine of frequency $|b_{j,\ell}|\ge2^{j-1}$ times the amplitude $(2q_j/(N_j\sigma^2))^{1/2}$, and the Laplacian acting on the cosine multiplies it by $-|b_{j,\ell}|^2$.  Moving the Laplacian onto $f$ by two integrations by parts, which leave no boundary terms because $f$ is a Schwartz function, gives
\begin{equation}\label{eq:d2-fourier}
    |v_\ell| \;\le\; \frac{C_f}{2^{2j}\sqrt{N_j}}\,,\qquad C_f:=4\sqrt{\frac{2q_{\max}}{\sigma^2}}\,\|\Delta f\|_{L^1}\;.
\end{equation}
Squaring and summing over the $N_j$ features gives $|v|\le C_f2^{-2j}$.

Second, $w_j$ is a linear functional of $a_j$, and by \Cref{cor:BL} the log-density has Hessian at least $\tfrac{1}{2\sigma^2}I_{N_j}$.  The Brascamp--Lieb inequality \cite[Thm.~4.1]{BrascampLieb1976} therefore gives
\begin{equation}\label{eq:var-w}
    \Var(w_j)\;\le\;2\sigma^2|v|^2\;\le\;2\sigma^2C_f^2\,2^{-4j}\;.
\end{equation}

Third, the mean identity~\eqref{eq:mean-exact}, together with $\Gamma^\omega_j(x,y)=\sigma^2\sum_\ell U_{j,\ell}(x)U_{j,\ell}(y)$, gives
\begin{equation}\label{eq:d2-w-mean}
    \langle w_j\rangle
    \;=\;\int f\,\bar{\varphi}_j\,dx
    \;=\; -4t\int g(y)\,\Big(\sigma^2\textstyle\sum_\ell
    v_\ell\,U_{j,\ell}(y)\Big)\,
    \Big\langle H_3\big(B_j(y); D^\omega_j(y)\big)\Big\rangle\,dy\;.
\end{equation}
Cauchy--Schwarz over $\ell$ bounds the parentheses,
\begin{equation*}
    \sigma^2\Big|\sum_\ell v_\ell\,U_{j,\ell}(y)\Big|
    \;\le\;\sigma|v|\sqrt{q^\omega_j(y)}
    \;\le\;\sigma\sqrt{c_q}\,C_f\,2^{-2j}\;.
\end{equation*}
The $\langle H_3\rangle$ factor in the integrand of~\eqref{eq:d2-w-mean} is at most $\langle H_3^2\rangle^{1/2}$ in absolute value, which \Cref{thm:d2-tilted} bounds on $G_j$.

Splitting $w_j$ into its tilted mean and fluctuation gives $\langle|w_j|\rangle\le|\langle w_j\rangle|+\Var(w_j)^{1/2}$.  Averaging under $\mu_{j-1}$, using $e^{\Psi_j}\le e$ on $\mathcal{G}_j$, Jensen's inequality for the square root, and the sixth moment bound $\E_{\mu_{j-1}}[B_{j-1}(y)^6]\le6\,\mathcal{K}C_M^3j^3$ (\Cref{thm:d2-moments}), gives
\begin{equation}\label{eq:d2-good-event}
    \E_{\mu_{j-1}}\big[e^{\Psi_j}\langle|w_j|\rangle\,\mathbf{1}_{\mathcal{G}_j}\big]
    \;\le\;C_f'\,2^{-2j}j^{3/2}\;.
\end{equation}
The proof of \Cref{prop:d2-exp-int} also gives $\E_{\mu_{j-1}}[e^{\Psi_j}\mathbf{1}_{\mathcal{G}_j^c}]\le C_K\,e^{c_V(j+1)^2}\exp(-c_3\,2^{2j/3}/j)$.

Taking $\eta=\min\{|w_j|,1\}$ in~\eqref{eq:top-shell} and bounding $\langle\eta\rangle$ by $\langle|w_j|\rangle$ on $\mathcal{G}_j$ and by $1$ on $\mathcal G_j^c$ gives
\begin{align*}
    \E_{\mu_j}\big[\min\{|w_j|,1\}\big]
    &\;\le\;\E_{\mu_{j-1}}\big[e^{\Psi_j}\langle|w_j|\rangle\,\mathbf{1}_{\mathcal{G}_j}\big]
    +\E_{\mu_{j-1}}\big[e^{\Psi_j}\mathbf{1}_{\mathcal{G}_j^c}\big]\\
    &\;\le\;C_f'\,2^{-2j}j^{3/2}+C_K\,e^{c_V(j+1)^2}\exp\!\big(-c_3\,2^{2j/3}/j\big)\;,
\end{align*}
and replacing $\mu_j$ by $\mu_\infty$ adds at most $2C_T\sum_{k>j}2^{-2k}k^3$, by~\eqref{eq:TV-tail} at $\Lambda=j$.  All three terms are $O(2^{-2j}j^3)$, so
\begin{equation}\label{eq:d2-w-first}
    \E_{\mu_\infty}\big[\min\{|w_j|,1\}\big]\;\le\;C_1\,2^{-2j}j^3\qquad(j>j_4)\;,
\end{equation}
where $C_1$ is built from $C_f'$, $C_K$, $c_V$, $c_3$, and $C_T$.

By monotone convergence and~\eqref{eq:d2-w-first}, $\sum_j\min\{|w_j|,1\}$ has finite expectation under $\mu_\infty$.  Almost surely, therefore, $|w_j|<1$ for all large $j$, and the shell series $\langle\phi_\infty,f\rangle=\sum_jw_j$ converges absolutely.

Fix $s\in\R$ and $\Lambda\ge j_4$, and insert $\E_{\mu_\infty}[e^{is\langle\phi_\Lambda,f\rangle}]$ between $\mathcal{X}_\Lambda(s)$ and $\E_{\mu_\infty}[e^{is\langle\phi_\infty,f\rangle}]$:
\begin{equation}\label{eq:char-split}
    \Big|\mathcal{X}_\Lambda(s)-\E_{\mu_\infty}\big[e^{is\langle\phi_\infty,f\rangle}\big]\Big|
    \le\Big|\E_{\mu_\Lambda}\big[e^{is\langle\phi_\Lambda,f\rangle}\big]-\E_{\mu_\infty}\big[e^{is\langle\phi_\Lambda,f\rangle}\big]\Big|
    +\E_{\mu_\infty}\Big|e^{is\langle\phi_\Lambda,f\rangle}-e^{is\langle\phi_\infty,f\rangle}\Big|\;.
\end{equation}
The exponential $e^{is\langle\phi_\Lambda,f\rangle}$ has modulus one and depends only on the first $\Lambda{+}1$ shells, so~\eqref{eq:TV-tail} bounds the first term by $2C_T\sum_{j>\Lambda}2^{-2j}j^3$.  For the second term, pointwise,
\begin{equation*}
    \big|e^{is\langle\phi_\Lambda,f\rangle}-e^{is\langle\phi_\infty,f\rangle}\big|
    \;\le\;(2+|s|)\sum_{j>\Lambda}\min\{|w_j|,1\}\;.
\end{equation*}
Indeed, the left side is at most $2$ and at most $|s|\,|\langle\phi_\infty,f\rangle-\langle\phi_\Lambda,f\rangle|\le|s|\sum_{j>\Lambda}|w_j|$, while the right side is at least $2$ if some $|w_j|$ exceeds $1$, and at least $|s|\sum_{j>\Lambda}|w_j|$ otherwise.  By~\eqref{eq:d2-w-first}, the second term of \eqref{eq:char-split} is therefore at most $(2+|s|)\,C_1\sum_{j>\Lambda}2^{-2j}j^3$.  Adding the two bounds,
\begin{equation}\label{eq:char-rate}
    \Big|\mathcal{X}_\Lambda(s)-\E_{\mu_\infty}\big[e^{is\langle\phi_\infty,f\rangle}\big]\Big|
    \;\le\;C_{\mathcal{X}}\,(1+|s|)\sum_{j>\Lambda}2^{-2j}j^3\;,
\end{equation}
with $C_{\mathcal{X}}:=2C_T+2C_1$.

Given $f_1,\ldots,f_r\in\cS(\R^2)$ and $s\in\R^r$, the function $h:=\sum_ks_kf_k$ is again Schwartz, and~\eqref{eq:char-rate} for $h$ at the point $1$ gives $\E_{\mu_\Lambda}[e^{i\langle\phi_\Lambda,h\rangle}]\to\E_{\mu_\infty}[e^{i\langle\phi_\infty,h\rangle}]$.  Since the pairing is linear and the shell series converge absolutely, $\langle\phi_\Lambda,h\rangle=\sum_ks_k\langle\phi_\Lambda,f_k\rangle$, and almost surely $\langle\phi_\infty,h\rangle=\sum_ks_k\langle\phi_\infty,f_k\rangle$.  The characteristic functions of $(\langle\phi_\Lambda,f_k\rangle)_{k\le r}$ under $\mu_\Lambda$ therefore converge pointwise to that of $(\langle\phi_\infty,f_k\rangle)_{k\le r}$ under $\mu_\infty$, and the L\'evy continuity theorem \cite[Thm.~6.23]{Kallenberg2021} gives~\eqref{eq:joint-conv}.
\end{proof}

\section{The infinite-width limit and the standard measure}
\label{sec:d2-standard}

The field theory requires both the ultraviolet limit and the infinite-width limit.  \Cref{thm:d2-existence} removes the ultraviolet cutoff at a fixed width sequence.  This section lets the width grow at a fixed cutoff, shows that the two limits commute, and identifies their common value with the standard finite-volume $\phi^4_2$ measure.

\subsection{The width limit at fixed cutoff}\label{sec:infwidth}

The argument of this subsection works in every dimension $d\ge2$, and we give it in that generality.

The measures $\mu_\Lambda^{(n)}$ live on coefficient spaces $\prod_{j\le\Lambda}\R^{N_j^{(n)}}$ of different dimensions for different $n$, so any statement comparing them must first place them on a common space.  We compare them in field space.  Let $\cC(\R^d)$ be the space of continuous real-valued functions on $\R^d$ with the topology of uniform convergence on compact sets, a Polish space, and let
\begin{equation}\label{eq:pushforward}
    \iota_\Lambda^{(n)}:\;(a_0,\dots,a_\Lambda)\;\longmapsto\;
    \phi_\Lambda^{(n)}\;\in\;\cC(\R^d)\;.
\end{equation}
For each $n$, write $P_\Lambda^{(n)}:=\mu_\Lambda^{(n)}\circ(\iota_\Lambda^{(n)})^{-1}$ and $P^{0,(n)}_\Lambda:=\gamma_\Lambda^{(n)}\circ(\iota_\Lambda^{(n)})^{-1}$ for the laws of the interacting and free fields on $\cC(\R^d)$.  Write $P^{0,(\infty)}_\Lambda$ for the centered Gaussian law on $\cC(\R^d)$ with covariance $Q_{\le\Lambda}$.

The law $P^{0,(\infty)}_\Lambda$ exists because a centered Gaussian process $Y$ on $\R^d$ with covariance $Q_{\le\Lambda}$ has a continuous version.  Writing the covariance in momentum space and using $1-\cos u\le\tfrac12u^2$,
\begin{align}
    \Var\big(Y(x)-Y(y)\big)
    &\;=\;2\int_{\R^d}\frac{d^dp}{(2\pi)^d}
    \Big(\sum_{j\le\Lambda}S_j(p)\Big)
    \big(1-\cos\big(p\cdot(x-y)\big)\big)
    \notag\\
    &\;\le\;|x-y|^2\int_{\R^d}\frac{d^dp}{(2\pi)^d}\,|p|^2
    \sum_{j\le\Lambda}S_j(p)\;,
    \label{eq:trunc-incr}
\end{align}
and the last integral is finite because the truncated spectral density is bounded and supported in $|p|\le2^{\Lambda+1}$. The increments are Gaussian, so the moment of order $2p$ of $Y(x)-Y(y)$ is at most $(2p-1)!!$ times the $p$-th power of the right side of~\eqref{eq:trunc-incr}.  For any $p$ with $2p>d$, the Kolmogorov--Chentsov criterion \cite[Thm.~4.23]{Kallenberg2021}, with exponents $a=2p$ and $b=2p-d$ in its notation, gives a continuous version of $Y$.

The interaction $V_\Lambda^{(n)}$ depends on the coefficients only through the values of the field $\phi_\Lambda^{(n)}$ on $K$, so it can be written in field space.  For $\phi\in\cC(\R^d)$ and continuous $D\ge0$ on $K$, set
\begin{equation}\label{eq:V-field}
    V(\phi;D) \;:=\; t\!\int\! g\,H_4(\phi;D)\,dx\;,
\end{equation}
so that $V^{(n)}_\Lambda=V\big(\,\cdot\,;D^{\omega^{(n)}}_\Lambda\big)\circ\iota^{(n)}_\Lambda$.

\begin{lemma}[Free-field convergence in $\cC(\R^d)$]\label{lem:CK-conv}
For each fixed $\Lambda$, the laws $P^{0,(n)}_\Lambda$ are tight in $\cC(\R^d)$, and almost surely in $\omega^\infty$ they converge weakly to $P^{0,(\infty)}_\Lambda$ as $n\to\infty$.
\end{lemma}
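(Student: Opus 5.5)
The plan is the standard one: convergence of finite-dimensional distributions plus tightness in $\cC(\R^d)$ gives weak convergence. For each $n$, $P^{0,(n)}_\Lambda$ is the law of a centered Gaussian process with continuous paths and covariance $\sum_{j\le\Lambda}\Gamma_j^{\omega^{(n)}}(x,y)$. Conditioned on the features, $\phi^{(n)}_\Lambda$ is a finite linear combination of the Gaussian coefficients with continuous coefficient functions, so it is a Gaussian element of $\cC(\R^d)$. Its finite-dimensional laws are centered Gaussians on $\R^k$, with covariance matrices given by the quenched kernel at the chosen points.

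\emph{Step 1: finite-dimensional convergence.} I will work on the probability-one event of \Cref{prop:kernel-conv}. On that event, for every $j\le\Lambda$ and every compact $K'$, $\Gamma_j^{\omega^{(n)}}\to Q_j$ uniformly on $K'\times K'$. Hence for any points $x_1,\dots,x_k$ the covariance matrices converge to $\big(Q_{\le\Lambda}(x_i-x_k)\big)_{i,k}$. Centered Gaussian characteristic functions $\exp(-\tfrac12 s^TCs)$ are continuous in $C$, so the L\'evy continuity theorem gives convergence of the finite-dimensional laws to those of $P^{0,(\infty)}_\Lambda$. This event does not depend on the choice of points, so it serves for all of them at once.

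\emph{Step 2: tightness.} This is the main point. I will use \Cref{lem:incr-var}: for every realization, $\Var(\phi^{(n)}_\Lambda(x)-\phi^{(n)}_\Lambda(y)\mid\omega)\le\Upsilon_\Lambda|x-y|^2$, with $\Upsilon_\Lambda$ independent of the widths and hence of $n$. Since the increments are Gaussian,
\begin{equation*}
\E|\phi^{(n)}_\Lambda(x)-\phi^{(n)}_\Lambda(y)|^{2p}\le(2p-1)!!\,\Upsilon_\Lambda^p|x-y|^{2p}.
\end{equation*}
The one-point laws are also tight, since $\Var\phi^{(n)}_\Lambda(0)=D^{\omega^{(n)}}_\Lambda(0)\le 2D_\Lambda$. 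For $2p>d$, the Kolmogorov--Chentsov tightness criterion in $\cC(\R^d)$ (e.g.\ \cite[Thm.~23.7]{Kallenberg2021} or its moment form) then gives tightness on each ball $K_R$, uniformly in $n$. Tightness in $\cC(\R^d)$ with the topology of local uniform convergence follows from tightness of the restrictions to every $K_R$, by a diagonal choice of compact sets (Arzel\`a--Ascoli with moduli of continuity controlled on each ball). The deterministic $\Upsilon_\Lambda$ bound matters here: it makes tightness hold for every $\omega^\infty$, not only almost surely.

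\emph{Step 3: conclusion.} By Prohorov, every subsequence of $P^{0,(n)}_\Lambda$ has a weakly convergent further subsequence. Any limit has the finite-dimensional laws of $P^{0,(\infty)}_\Lambda$ by Step 1, and finite-dimensional laws determine a measure on $\cC(\R^d)$ because cylinder sets generate its Borel $\sigma$-algebra. So all subsequential limits equal $P^{0,(\infty)}_\Lambda$, and the full sequence converges on the event of \Cref{prop:kernel-conv}. I expect only minor friction, namely checking that the tightness criterion applies on the non-compact $\R^d$; it is handled by restricting to balls.
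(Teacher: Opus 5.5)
Your proposal is correct and follows the paper's proof essentially step for step. Tightness comes from the width-free increment bound of \Cref{lem:incr-var}, upgraded to Gaussian $2p$-th moments, together with the one-point bound $D^{\omega^{(n)}}_\Lambda\le 2D_\Lambda$; finite-dimensional convergence comes from \Cref{prop:kernel-conv} and L\'evy continuity. The only difference is presentational: you spell out the Prohorov and uniqueness argument and the reduction to balls, which the paper absorbs into its citations of Kallenberg's Theorem~23.7 (which already applies directly on all of $\R^d$) and of Theorem~23.4 and Corollary~23.5.
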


\begin{proof}
Under $\gamma_\Lambda^{(n)}$ the field $\phi_\Lambda^{(n)}$ is a centered Gaussian process on $\R^d$, continuous because it is a finite sum of cosines.  Its pointwise variance is $D^{\omega^{(n)}}_\Lambda(x)\le2D_\Lambda$, and by \Cref{lem:incr-var} its increments have variance at most $\Upsilon_\Lambda|x-y|^2$.  Neither bound involves $n$.  Since the increments are Gaussian, the second bound controls every moment:
\begin{equation}\label{eq:CK-moments}
    \sup_n\,\E_{\gamma^{(n)}_\Lambda}\big|\phi_\Lambda^{(n)}(x)-\phi_\Lambda^{(n)}(y)\big|^{2p}
    \;\le\;(2p-1)!!\,\Upsilon_\Lambda^{p}\,|x-y|^{2p},
    \qquad x,y\in\R^d,\quad p\ge1\;.
\end{equation}
Fix $p$ with $2p>d$.  The values at a single point are tight, because their variances are at most $2D_\Lambda$, and~\eqref{eq:CK-moments} is the moment condition of \cite[Thm.~23.7]{Kallenberg2021} with exponents $a=2p$ and $b=2p-d$.  By that theorem, the family $\{P^{0,(n)}_\Lambda\}_n$ is tight in $\cC(\R^d)$.

It remains to identify the limit, first for the finite-dimensional distributions.  Fix $x_1,\dots,x_r\in\R^d$.  The vector $(\phi_\Lambda^{(n)}(x_i))_{i\le r}$ is centered Gaussian with covariance $\sum_{j\le\Lambda}\Gamma_j^{\omega^{(n)}}(x_i,x_{i'})$, which converges to $Q_{\le\Lambda}(x_i-x_{i'})$ by \Cref{prop:kernel-conv}.  The characteristic function of a centered Gaussian vector depends continuously on its covariance, so the L\'evy continuity theorem \cite[Thm.~6.23]{Kallenberg2021} gives convergence of the finite-dimensional distributions of $P^{0,(n)}_\Lambda$ to those of $P^{0,(\infty)}_\Lambda$.

Together with tightness, convergence of the finite-dimensional distributions gives $P^{0,(n)}_\Lambda\to P^{0,(\infty)}_\Lambda$ weakly in $\cC(\R^d)$ \cite[Thm.~23.4 and Cor.~23.5]{Kallenberg2021}.
\end{proof}

\Cref{lem:CK-conv} gives the limit of the free laws, but the interacting measure is the free law reweighted by a density that itself depends on $n$ through the quenched Wick constant $D^{\omega^{(n)}}_\Lambda$. Passing to the limit therefore needs a form of weak convergence in which the integrand varies along with the measure.

\begin{lemma}[Continuous convergence]\label{lem:cont-conv}
Let $S$ be a metric space, let $P_n$ converge weakly to $P$ on $S$, and let $h_n,\,h:S\to\R$ satisfy $\sup_n\|h_n\|_\infty<\infty$, $h$ continuous, and $h_n(x_n)\to h(x)$ whenever $x_n\to x$ in $S$.  Then $\int h_n\,dP_n\to\int h\,dP$.
\end{lemma}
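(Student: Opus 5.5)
The plan is to reduce to the case where the measures live on a common probability space, using Skorokhod's representation theorem, and then apply dominated convergence. First I would handle separability. Weak convergence $P_n\to P$ on a general metric space need not come with a Skorokhod coupling, unless the limit $P$ has separable support. In the intended application $S=\cC(\R^d)$ is Polish, so I will either assume separability of $S$ or avoid Skorokhod altogether. To keep the argument general, I will plan the second route and use Skorokhod only as a fallback.

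The direct route splits the error:
\[
\Big|\int h_n\,dP_n-\int h\,dP\Big|\;\le\;\int|h_n-h|\,dP_n+\Big|\int h\,dP_n-\int h\,dP\Big|\;.
\]
The second term tends to zero by weak convergence, since $h$ is bounded and continuous. Boundedness of $h$ follows from continuous convergence with constant sequences: $h(x)=\lim h_n(x)$, so $\|h\|_\infty\le M:=\sup_n\|h_n\|_\infty$.

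For the first term, I would show that continuous convergence is locally uniform: for each $x$ and $\epsilon>0$ there are $r>0$ and $n_0$ with $|h_n(y)-h(y)|<\epsilon$ for all $y\in B(x,r)$ and $n\ge n_0$. The proof is by contradiction. Otherwise there are $n_k\to\infty$ and $y_k\to x$ with $|h_{n_k}(y_k)-h(y_k)|\ge\epsilon$. Padding the sequence $(y_k)$ to a full sequence converging to $x$ gives $h_{n_k}(y_k)\to h(x)$, while continuity of $h$ gives $h(y_k)\to h(x)$, which is a contradiction.

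Local uniformity by itself does not control $\int|h_n-h|\,dP_n$ without some compactness. This is where I expect the main obstacle. Weak convergence to $P$ only gives tightness of $\{P_n\}$ when $P$ is tight, for instance when $S$ is Polish, as in \Cref{lem:CK-conv}, where tightness was in fact proved. Given tightness, I would choose a compact $\mathcal K$ with $P_n(\mathcal K^c)<\epsilon$ for all $n$. Covering $\mathcal K$ by finitely many of the balls above then gives $\sup_{\mathcal K}|h_n-h|<\epsilon$ for large $n$. Hence
\[
\int|h_n-h|\,dP_n\;\le\;\epsilon+2M\epsilon\;.
\]
There is a subtlety here: a point that is merely close to $\mathcal K$ need not lie in a ball of the cover, so the bound really does need the compact set itself, and tightness supplies exactly that.

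If tightness is unavailable, the fallback is Skorokhod's theorem for separable $S$. It gives random variables $Y_n\sim P_n$ and $Y\sim P$ with $Y_n\to Y$ almost surely. Then $h_n(Y_n)\to h(Y)$ pointwise by continuous convergence, and bounded convergence gives $\E[h_n(Y_n)]\to\E[h(Y)]$, which is the claim. In the application $S=\cC(\R^d)$ is Polish, so either route applies. I would present the Skorokhod version as the primary proof, since it is shorter, and note that separability of the support of $P$ is the hypothesis being used.
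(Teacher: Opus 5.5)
Your Skorokhod argument and your tightness argument are both correct, and either is enough for the paper's only use of the lemma, where $S=\cC(\R^d)$ is Polish. Neither proves the lemma as stated for an arbitrary metric space, however. The Skorokhod route needs $P$ to have separable support, and the direct route needs $\{P_n\}$ to be uniformly tight.

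The obstacle you identify is not real. Local uniformity needs no compactness, only the portmanteau inequality for open sets. Fix $\epsilon>0$ and let $O_m$ be the interior of $\{s\in S:\ |h_k(s)-h(s)|<\epsilon\ \text{for all}\ k\ge m\}$. These sets are open and increase in $m$. Your local-uniformity claim gives $\bigcup_m O_m=S$, so $P(O_m)\uparrow1$; choose $m$ with $P(O_m)>1-\epsilon$. Then $\liminf_n P_n(O_m)\ge P(O_m)>1-\epsilon$, and $|h_n-h|<\epsilon$ on $O_m$ for $n\ge m$. Hence $\limsup_n\int|h_n-h|\,dP_n\le\epsilon+2M\epsilon$. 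Combined with your splitting and $\int h\,dP_n\to\int h\,dP$, this proves the lemma with no separability or tightness.

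For the same reason, your remark that the bound ``really does need the compact set itself'' is mistaken. A finite union of open balls covering a compact $\mathcal K$ is an open set containing $\mathcal K$. The portmanteau inequality applied to that union needs only $P(\mathcal K)>1-\epsilon$, that is, tightness of $P$ alone.

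The paper's proof takes $X_n\sim P_n$ and $X\sim P$ and applies the extended continuous mapping theorem (Kallenberg, Thm.~5.27) to get $h_n(X_n)\to h(X)$ in distribution. It then concludes because uniformly bounded variables that converge in distribution have convergent expectations. That theorem holds in any metric space because its proof is essentially the open-set argument above. Your Skorokhod route is the usual proof of the same fact under separability. So the gap is one of generality, not of correctness in the setting where the paper uses the lemma.
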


\begin{proof}
Let $X_n\sim P_n$ and $X\sim P$, so that $\int h_n\,dP_n=\E[h_n(X_n)]$ and $\int h\,dP=\E[h(X)]$.  The hypothesis on the $h_n$ is that of the continuous mapping theorem \cite[Thm.~5.27]{Kallenberg2021}, which gives $h_n(X_n)\to h(X)$ in distribution.  These variables are bounded by $\sup_n\|h_n\|_\infty$, which also bounds $h$, and for uniformly bounded variables convergence in distribution gives convergence of expectations.
\end{proof}

\begin{theorem}[Infinite-width limit at fixed cutoff]
\label{thm:infwidth}
Almost surely in $\omega^\infty$, for each fixed $\Lambda\in\N$ the laws $P^{(n)}_\Lambda$ converge weakly in $\cC(\R^d)$ as $n\to\infty$ to
\begin{equation}\label{eq:mu-std-def}
    dP^{(\infty)}_\Lambda
    \;:=\;\frac{e^{-V(\phi;\,D_\Lambda)}}
    {Z^{(\infty)}_\Lambda}\,dP^{0,(\infty)}_\Lambda\;,
    \qquad
    Z^{(\infty)}_\Lambda
    \;:=\;\int e^{-V(\phi;\,D_\Lambda)}\,
    dP^{0,(\infty)}_\Lambda\;,
\end{equation}
the Gibbs measure of the truncated free field.
\end{theorem}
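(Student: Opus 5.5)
The plan is to test weak convergence against bounded continuous functionals $F$ on $\cC(\R^d)$ and to write both sides in terms of the free laws:
\begin{equation*}
    \int F\,dP^{(n)}_\Lambda
    \;=\;\frac{\int F(\phi)\,e^{-V(\phi;D^{\omega^{(n)}}_\Lambda)}\,dP^{0,(n)}_\Lambda(\phi)}
    {\int e^{-V(\phi;D^{\omega^{(n)}}_\Lambda)}\,dP^{0,(n)}_\Lambda(\phi)}\;.
\end{equation*}
This holds because $V^{(n)}_\Lambda=V(\,\cdot\,;D^{\omega^{(n)}}_\Lambda)\circ\iota^{(n)}_\Lambda$, so the Gibbs density factors through the pushforward. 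The conclusion then follows once numerator and denominator converge to the corresponding integrals against $P^{0,(\infty)}_\Lambda$ with $D_\Lambda$ in place of $D^{\omega^{(n)}}_\Lambda$. The limiting denominator $Z^{(\infty)}_\Lambda$ is at least $1$ by the same Jensen argument as in \Cref{lem:V-lower}, since under $P^{0,(\infty)}_\Lambda$ the field at $x$ is $\cN(0,D_\Lambda(x))$. Hence the ratio converges and $P^{(\infty)}_\Lambda$ is a probability measure.

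Both convergences come from \Cref{lem:cont-conv}, with $S=\cC(\R^d)$, $P_n=P^{0,(n)}_\Lambda$ and $P=P^{0,(\infty)}_\Lambda$. The weak convergence $P_n\to P$ holds almost surely by \Cref{lem:CK-conv}. We take $h_n(\phi):=F(\phi)\,e^{-V(\phi;D^{\omega^{(n)}}_\Lambda)}$ (or $F\equiv1$ for the denominator) and $h(\phi):=F(\phi)\,e^{-V(\phi;D_\Lambda)}$.

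\emph{Uniform boundedness.} By \eqref{eq:H4-lower}, $V(\phi;D)\ge-6t\|g\|_1\|D\|_{\infty,K}^2$. Since $D^{\omega^{(n)}}_\Lambda\le2D_\Lambda$ deterministically, this gives $\sup_n\|h_n\|_\infty\le\|F\|_\infty e^{24t\|g\|_1D_\Lambda^2}$, independently of $n$. The same bound covers $h$.

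\emph{Continuous convergence.} Suppose $\phi_n\to\phi$ uniformly on compacts, in particular uniformly on $K$. By \Cref{prop:kernel-conv}, $D^{\omega^{(n)}}_\Lambda\to D_\Lambda$ uniformly on $K$, almost surely. Since $H_4(u;C)$ is a polynomial and all quantities stay in a bounded set on $K$, $H_4(\phi_n;D^{\omega^{(n)}}_\Lambda)\to H_4(\phi;D_\Lambda)$ uniformly on $K$. Integrating against $g$ gives $V(\phi_n;D^{\omega^{(n)}}_\Lambda)\to V(\phi;D_\Lambda)$, and continuity of $F$ then gives $h_n(\phi_n)\to h(\phi)$. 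Taking constant sequences shows that $h$ is continuous.

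The almost-sure statement needs one full-measure event valid for every $\Lambda$ at once. This is available because the events in \Cref{lem:CK-conv} and \Cref{prop:kernel-conv} can be intersected over the countably many $\Lambda\in\N$. The proof is otherwise routine. The only step needing care is recognizing that the $n$-dependence of the density, through the quenched Wick constant, is handled by continuous convergence rather than ordinary weak convergence, together with the deterministic bound $D^\omega\le2D$ for uniform boundedness.
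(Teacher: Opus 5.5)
Your proposal is correct and follows the paper's proof essentially step for step. You use the same ratio representation against the free laws $P^{0,(n)}_\Lambda$, and you apply \Cref{lem:cont-conv} to numerator and denominator, with uniform boundedness from $H_4\ge-6D^2$ and $D^{\omega^{(n)}}_\Lambda\le 2D_\Lambda$ and continuous convergence from \Cref{prop:kernel-conv}. The only cosmetic differences are that you obtain $Z^{(\infty)}_\Lambda\ge1$ by Jensen directly rather than as the limit of $Z^{(n)}_\Lambda\ge1$, and that you spell out the countable intersection over $\Lambda$ of the full-measure events.
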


\begin{proof}
Let $G:\cC(\R^d)\to\R$ be bounded and continuous, and set $\tilde h_n(\phi):=e^{-V(\phi;\,D^{\omega^{(n)}}_\Lambda)}$, so that the Gibbs density in~\eqref{eq:mu-def} is $e^{-V^{(n)}_\Lambda}/Z^{(n)}_\Lambda=\big(\tilde h_n\circ\iota^{(n)}_\Lambda\big)/Z^{(n)}_\Lambda$.  The substitution rule \cite[Lem.~1.24]{Kallenberg2021} gives the first and last equalities below,
\begin{align*}
    \E_{P^{(n)}_\Lambda}[G]
    &\;=\;\E_{\mu^{(n)}_\Lambda}\big[G\circ\iota^{(n)}_\Lambda\big]
    \;=\;\frac{1}{Z^{(n)}_\Lambda}\,\E_{\gamma^{(n)}_\Lambda}\big[(G\,\tilde h_n)\circ\iota^{(n)}_\Lambda\big]\\
    &\;=\;\frac{1}{Z^{(n)}_\Lambda}\int G\,\tilde h_n\,dP^{0,(n)}_\Lambda\;.
\end{align*}
That is, $P^{(n)}_\Lambda$ has density $\tilde h_n/Z^{(n)}_\Lambda$ against $P^{0,(n)}_\Lambda$, the finite-width form of~\eqref{eq:mu-std-def}.  At $G\equiv1$ this reads $Z^{(n)}_\Lambda=\int\tilde h_n\,dP^{0,(n)}_\Lambda$, so with $h_n:=G\,\tilde h_n$,
\begin{equation}\label{eq:ratio-form}
    \E_{P^{(n)}_\Lambda}[G]
    \;=\;\frac{\int h_n\,dP^{0,(n)}_\Lambda}
    {\int\tilde h_n\,dP^{0,(n)}_\Lambda}\;.
\end{equation}
We verify the hypotheses of \Cref{lem:cont-conv} for both integrands, with $S=\cC(\R^d)$, $P_n=P^{0,(n)}_\Lambda$, and $P=P^{0,(\infty)}_\Lambda$, which converge weakly by \Cref{lem:CK-conv}.

\textit{Uniform boundedness.}  By~\eqref{eq:H4-lower}, $V(\phi;D^{\omega^{(n)}}_\Lambda)\ge-6t\|g\|_1\|D^{\omega^{(n)}}_\Lambda\|_\infty^2$ for every $\phi\in\cC(\R^d)$, and $\|D^{\omega^{(n)}}_\Lambda\|_\infty\le2D_\Lambda$ because $q^\omega_j\le2q_j$.  Hence, for every $n$,
\begin{equation*}
    \|h_n\|_\infty \;\le\; \|G\|_\infty\,
    \exp\big(24t\|g\|_1D_\Lambda^2\big)\;.
\end{equation*}

\textit{Continuous convergence.}  Let $\phi_n\to\phi$ in $\cC(\R^d)$, so that $\phi_n\to\phi$ uniformly on $K$.  By \Cref{prop:kernel-conv}, the Wick constants $D^{\omega^{(n)}}_\Lambda$ also converge to $D_\Lambda$ uniformly on $K$.  For $x\in K$, the pairs $(\phi_n(x),D^{\omega^{(n)}}_\Lambda(x))$ stay in a fixed compact set, and $(u,D)\mapsto H_4(u;D)$ is uniformly continuous there.  Hence $H_4(\phi_n;D^{\omega^{(n)}}_\Lambda)\to H_4(\phi;D_\Lambda)$ uniformly on $K$, and integrating against the finite measure $g\,dx$ gives $V(\phi_n;D^{\omega^{(n)}}_\Lambda)\to V(\phi;D_\Lambda)$.  Since $G$ is continuous, it follows that $h_n(\phi_n)\to h(\phi)$ and $\tilde h_n(\phi_n)\to\tilde h(\phi)$, where
\begin{equation*}
    h(\phi):=G(\phi)\,e^{-V(\phi;\,D_\Lambda)}\;,
    \qquad
    \tilde h(\phi):=e^{-V(\phi;\,D_\Lambda)}
\end{equation*}
are continuous on $\cC(\R^d)$.

\Cref{lem:cont-conv} thus applies to the numerator and the denominator of~\eqref{eq:ratio-form}.  The denominator converges to $Z^{(\infty)}_\Lambda$, which is at least $1$ because every $Z^{(n)}_\Lambda$ is (\Cref{lem:V-lower}). The ratio therefore converges, and
\begin{equation}
    \E_{P^{(n)}_\Lambda}[G]\;\to\;\E_{P^{(\infty)}_\Lambda}[G]\;.
\end{equation}
Since $G$ was an arbitrary bounded continuous functional on $\cC(\R^d)$, the laws $P^{(n)}_\Lambda$ converge weakly to $P^{(\infty)}_\Lambda$.
\end{proof}

\subsection{Interchange of the UV and infinite-width limits}
\label{sec:interchange}

\Cref{sec:d2-setup,sec:d2-converge} work at a single feature realization, but the interchange argument of \Cref{thm:interchange} needs their estimates at every network of \Cref{def:nested}, with one set of constants.

\begin{lemma}[Width-uniformity of the constants]
\label{prop:uniform-N}
Work on the almost-sure event of \Cref{lem:unif-gg}.  Then \Cref{sec:d2-setup,sec:d2-converge} apply verbatim to every network of \Cref{def:nested}, with $j_0$ read as $j_0^*$, and their constants and thresholds do not depend on $n$.  In particular, the constant $C_{\mathcal{X}}$ of~\eqref{eq:char-rate} is free of $n$.
\end{lemma}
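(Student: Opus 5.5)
The plan is an audit of where the features and the widths enter \Cref{sec:d2-setup,sec:d2-converge}, followed by a check that each entry point is controlled by an $n$-independent quantity on the almost-sure event of \Cref{lem:unif-gg}. The features enter in only three ways. First, through the concentration bounds \eqref{eq:gg-approx}--\eqref{eq:gg-off}, which \Cref{lem:unif-gg} supplies for every $n$ simultaneously, with threshold $j_0^*$ and the same $\kappa$. Second, through the deterministic bounds $q^{\omega}_j\le2q_j$, hence $q^{\omega}_j\le c_q$ and $D^{\omega}_J\le c_q(J{+}1)$. These come from $\cos^2\le1$ and hold for every truncation $\omega^{(n)}$. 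Third, through the increment bound of \Cref{lem:incr-var}, whose constant $\Upsilon_J$ is width-free by its statement. Every network of \Cref{def:nested} has $N_j^{(n)}\ge c_0(j{+}1)2^{4j}$, so it satisfies \eqref{eq:width-gen} and \Cref{not:quench} applies to it with $j_0$ replaced by $j_0^*$.

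I would then trace the thresholds and constants in the order they are defined, checking at each step that only the fixed data, $c_q$, $\kappa$, $j_0^*$, and previously fixed constants appear. $j_1$ depends on $d,\chi,m,t,\kappa,j_0$ (\Cref{cor:BL}). $j_2$ depends on $t,\kappa,c_q,j_1$, and $c_1$ on $t,\kappa$. $C_H$ depends on $c_q$, and $C_\star$ on $t,\kappa,C_H$. For \Cref{lem:d2-sup-tail}, I need to confirm that the chaining argument of \Cref{app:sup-tail} uses the features only through $D^\omega_{J-1}\le c_qJ$ and \Cref{lem:incr-var}. That would make $c_2$ and $C_K$ free of $n$, and then $c_3$ and $j_3$ are too. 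Next come $j_4$, $C_M=512ec_q$, $c_V=6t\|g\|_1c_q^2$, and $J_\star$, which is defined by \eqref{eq:d2-Jstar} from these constants, and then $\mathcal{K}$. After that, $C_T$ in \Cref{prop:d2-exp-int} follows. In \Cref{thm:d2-existence}(c), the constant $C_f$ uses $q_{\max}$ and $f$, $C_f'$ uses $C_H$, $\mathcal{K}$, $C_M$, $c_q$, $t$, $\kappa$, $\sigma$, and $C_1$ and $C_{\mathcal X}=2C_T+2C_1$ are built from these. The estimate $|v|\le C_f2^{-2j}$ also cancels the width exactly. Each threshold is a deterministic function of $n$-free quantities, and the only random input is $j_0^*$, which is common to all $n$.

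The main obstacle is making sure no step secretly used the specific realization beyond these three channels. The places to check are the finite-shell bounds below the thresholds: Step~0 of \Cref{thm:d2-moments}, the shells $j<j_4$ in \Cref{prop:d2-exp-int}, and the rescaled interpolation argument for \Cref{lem:mean-shift}. All of these use $e^{-V_J}\le e^{c_V(J+1)^2}$, which is width-free via $c_q$. The interpolation only lowers $q^\omega_j$ and $A^\omega_j$, so it is also unaffected. I would also verify that the Lipschitz/continuity arguments (for example, the continuity in $\alpha$ of \Cref{lem:mean-shift}) are qualitative and contribute no constants. Once this check is done, the statement follows verbatim, including the conclusion that $C_{\mathcal X}$ is free of $n$.
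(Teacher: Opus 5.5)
Your proposal is correct and follows essentially the same route as the paper's proof: an audit showing that the features enter only through the bounds of \Cref{thm:good-geom}, which \Cref{lem:unif-gg} supplies for all $n$ at once with threshold $j_0^*$ and constant $\kappa$, and through realization-independent bounds, after which every constant and threshold is built from $n$-free data. The one difference is bookkeeping, and it matters for your claim that the features enter in ``only three ways'': the paper names the underlying deterministic fact that every frequency of shell $j\ge1$ lies in the annulus $2^{j-1}\le|p|\le2^{j+1}$. Its upper edge gives \Cref{lem:incr-var}, and its lower edge gives the estimate \eqref{eq:d2-fourier} behind $C_f$. That lower edge is therefore a separate input that your enumeration should list explicitly.
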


\begin{proof}
The features enter the estimates of \Cref{sec:one-shell,sec:d2-setup,sec:d2-converge} in only two ways.  The first is through the bounds of \Cref{thm:good-geom} above the threshold $j_0$, and \Cref{lem:unif-gg} gives these for every $n$ with the same threshold $j_0^*$ and the same constant $\kappa$.  The second is through bounds that hold for every realization, namely $q^\omega_j\le2q_j$ and the fact that every frequency of shell $j\ge1$ lies in the annulus $2^{j-1}\le|p|\le2^{j+1}$.  Every other constant and threshold in these sections is built from these bounds, from $j_0^*$, and from the fixed data, together with $\|\Delta f\|_{L^1}$ for the constant of~\eqref{eq:char-rate}.  None of them depends on $n$.
\end{proof}

At fixed cutoff, \Cref{thm:infwidth} gives weak convergence on $\cC(\R^2)$ of the laws of the network fields $\phi_\Lambda^{(n)}$, and hence convergence in law of every continuous function of those fields \cite[Thm.~5.27]{Kallenberg2021}.  The map $\phi\mapsto(\langle\phi,f_k\rangle)_{k\le r}$ is continuous on $\cC(\R^2)$ when every $f_k$ has compact support, but not for general Schwartz functions, because convergence in $\cC(\R^2)$ is only uniform on compact sets.

\begin{lemma}[Width limit of the smeared field]\label{lem:width-smeared}
Almost surely in $\omega^\infty$, for every $\Lambda\in\N$, every $r$, and all $f_1,\dots,f_r\in\cS(\R^2)$, the law of $\big(\langle\phi_\Lambda^{(n)},f_k\rangle\big)_{k\le r}$ under $\mu^{(n)}_\Lambda$ converges weakly on $\R^r$, as $n\to\infty$, to the law of $\big(\langle\phi,f_k\rangle\big)_{k\le r}$ under $P^{(\infty)}_\Lambda$.
\end{lemma}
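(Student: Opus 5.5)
By the Cram\'er--Wold device (or by L\'evy continuity applied to characteristic functions, as in the proof of \Cref{thm:d2-existence}(c)), it is enough to treat a single Schwartz function $h=\sum_k s_kf_k$. We must show that $\E_{\mu^{(n)}_\Lambda}[e^{i\langle\phi^{(n)}_\Lambda,h\rangle}]\to\E_{P^{(\infty)}_\Lambda}[e^{i\langle\phi,h\rangle}]$. The obstruction is that $\phi\mapsto\langle\phi,h\rangle$ is not continuous on $\cC(\R^2)$. We therefore truncate $h$ spatially. Fix a smooth cutoff $\eta_R$ with $\eta_R=1$ on the ball $K_R$, support in $K_{R+1}$, and values in $[0,1]$, and split $h=h\eta_R+h(1-\eta_R)$. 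The functional $\phi\mapsto\langle\phi,h\eta_R\rangle$ is continuous on $\cC(\R^2)$ because $h\eta_R$ has compact support. By \Cref{thm:infwidth} and the continuous mapping theorem, the $\eta_R$-part therefore converges in law for each fixed $R$, on the almost-sure event where \Cref{thm:infwidth} holds. That event does not depend on $h$, $\Lambda$, or $R$, so it serves for all test functions simultaneously.

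The remaining work is a tail estimate that is uniform in $n$:
\[
\sup_n\E_{\mu^{(n)}_\Lambda}\big|\langle\phi^{(n)}_\Lambda,h(1-\eta_R)\rangle\big|\longrightarrow0\qquad(R\to\infty),
\]
together with the same bound under $P^{(\infty)}_\Lambda$. For this I would move to the free measure using the bounds $e^{-V^{(n)}_\Lambda}\le e^{C_V(\Lambda)}$ and $Z^{(n)}_\Lambda\ge1$ from \Cref{lem:V-lower}, where $C_V(\Lambda)\le 24t\|g\|_1D_\Lambda^2$ does not depend on $n$. This reduces the estimate to $\E_{\gamma^{(n)}_\Lambda}|\langle\phi^{(n)}_\Lambda,h(1-\eta_R)\rangle|$. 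By Cauchy--Schwarz and Fubini, that quantity is at most
\[
\Big(\int\!\!\int|h(1-\eta_R)|(x)\,|h(1-\eta_R)|(y)\,\big|\textstyle\sum_{j\le\Lambda}\Gamma^{\omega^{(n)}}_j(x,y)\big|\,dx\,dy\Big)^{1/2}.
\]
Here $|\Gamma_j^\omega(x,y)|\le\sqrt{q_j^\omega(x)q_j^\omega(y)}\le2q_j$ holds for every realization, so the bound is at most $(2D_\Lambda)^{1/2}\|h(1-\eta_R)\|_{L^1}$, which is free of $n$ and tends to zero because $h\in L^1$. The limit measure is handled the same way using $|Q_{\le\Lambda}|\le D_\Lambda$ and $Z^{(\infty)}_\Lambda\ge1$. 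This also shows that $\langle\phi,h\rangle$ is well defined $P^{(\infty)}_\Lambda$-almost surely as an absolutely convergent integral.

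To conclude, note that $|e^{ia}-e^{ib}|\le|a-b|$. The characteristic functions for $h$ and for $h\eta_R$ therefore differ by at most the tail bound, uniformly in $n$ and also in the limit. A standard $\epsilon/3$ argument follows: first choose $R$, then let $n\to\infty$ using the convergence of the truncated part. This gives pointwise convergence of the characteristic functions, and L\'evy continuity then yields the claim.

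The main obstacle is getting the tail estimate uniformly in $n$ over all of $\R^2$, and not only on $K$. \Cref{lem:unif-gg} controls the features only on $K$. The argument above instead relies on the pointwise bounds $q^\omega_j\le2q_j$ and $|\Gamma^\omega_j|\le 2q_j$, which hold everywhere, so this difficulty is avoided.
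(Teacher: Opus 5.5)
Your proposal is correct and follows the paper's proof. Both truncate $h$ to compact support, apply \Cref{thm:infwidth} to the truncated pairing, move to the Gaussian prior via \Cref{lem:V-lower} and $q^\omega_j\le 2q_j$ to bound the tail uniformly in $n$ by a constant times $\int_{|x|>R}|h|\,dx$, and conclude with L\'evy continuity. The only cosmetic difference is how the tail is estimated: you bound the second moment of the whole pairing through the covariance kernel, whereas the paper bounds the pointwise first moments $\E|\phi(x)|$ and integrates them against $|h|$.
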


\begin{proof}
Work on the probability-one event of \Cref{thm:infwidth}, fix $\Lambda$ and $s\in\R^r$, and set $h:=\sum_ks_kf_k$ and $h_R:=h\,\mathbf{1}_{\{|x|\le R\}}$.  Because convergence in $\cC(\R^2)$ is uniform on compact sets, the map $\phi\mapsto e^{i\langle\phi,h_R\rangle}$ is bounded and continuous.  \Cref{thm:infwidth} therefore gives convergence of the characteristic functions at $s$ with $h_R$ in place of $h$.  By \Cref{lem:V-lower}, Cauchy--Schwarz under the Gaussian prior, and $q^\omega_j\le2q_j$, the first moments $\E|\phi(x)|$ are bounded by a constant free of $x$ and $n$, under every $\mu^{(n)}_\Lambda$ and under $P^{(\infty)}_\Lambda$.  Since $|e^{ia}-e^{ib}|\le|a-b|$, replacing $h$ by $h_R$ therefore changes each characteristic function by at most a constant times $\int_{|x|>R}|h|\,dx$, uniformly in $n$.  Letting $n\to\infty$ and then $R\to\infty$ gives convergence of the characteristic functions at $s$, and the L\'evy continuity theorem \cite[Thm.~6.23]{Kallenberg2021} gives the claimed weak convergence.
\end{proof}

For Schwartz test functions, each limit now exists with the other parameter held fixed, the width limit by \Cref{lem:width-smeared} and the cutoff limit by \Cref{thm:d2-existence}(c).  Because the bound~\eqref{eq:char-rate} makes the cutoff limit uniform in the width (\Cref{prop:uniform-N}), the Moore--Osgood theorem gives both iterated limits and shows that they agree.

\begin{theorem}[Commutativity of limits]\label{thm:interchange}
Almost surely in $\omega^\infty$, for every $r$ and all $f_1,\dots,f_r\in\cS(\R^2)$, the double sequence
\begin{equation*}
    \mathcal{X}^{(n)}_\Lambda(s)
    \;:=\;\E_{\mu^{(n)}_\Lambda}
    \Big[\exp\Big(i\textstyle\sum_k s_k
    \langle\phi_\Lambda^{(n)},f_k\rangle\Big)\Big],
    \qquad s\in\R^r\;,
\end{equation*}
has both iterated limits, and they agree:
\begin{equation}\label{eq:interchange}
    \lim_{\Lambda\to\infty}\lim_{n\to\infty}\mathcal{X}^{(n)}_\Lambda(s)
    \;=\;\lim_{n\to\infty}\lim_{\Lambda\to\infty}\mathcal{X}^{(n)}_\Lambda(s)
    \;=:\;\mathcal{X}^{(\infty)}_\infty(s)\;.
\end{equation}
The common value is continuous on $\R^r$ with $\mathcal{X}^{(\infty)}_\infty(0)=1$.  Along either order of limits the law of the smeared vector $\big(\langle\phi_\Lambda^{(n)},f_k\rangle\big)_{k\le r}$ under $\mu^{(n)}_\Lambda$ therefore converges weakly on $\R^r$ to the law with characteristic function $\mathcal{X}^{(\infty)}_\infty$.
\end{theorem}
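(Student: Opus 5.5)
The plan is to apply the Moore--Osgood theorem to the double sequence $\mathcal{X}^{(n)}_\Lambda(s)$ for fixed $s\in\R^r$. Work on the intersection of the probability-one events of \Cref{lem:unif-gg}, \Cref{thm:infwidth}, and \Cref{lem:width-smeared}. Set $h:=\sum_k s_kf_k\in\cS(\R^2)$. By linearity of the pairing, $\mathcal{X}^{(n)}_\Lambda(s)=\E_{\mu^{(n)}_\Lambda}[e^{i\langle\phi^{(n)}_\Lambda,h\rangle}]$, so everything reduces to a single test function $h$.

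First, for each fixed $\Lambda$, \Cref{lem:width-smeared} gives the row limit
\[
\lim_{n\to\infty}\mathcal{X}^{(n)}_\Lambda(s)=\E_{P^{(\infty)}_\Lambda}\big[e^{i\langle\phi,h\rangle}\big].
\]
Second, for each fixed $n$, the network of width $N^{(n)}$ satisfies \eqref{eq:width-gen}. Hence \Cref{thm:d2-existence}(c) applies, and \eqref{eq:char-rate} at the point $1$ with test function $h$ gives
\[
\big|\mathcal{X}^{(n)}_\Lambda(s)-\mathcal{X}^{(n)}_\infty(s)\big|\le 2C_{\mathcal{X}}\sum_{j>\Lambda}2^{-2j}j^3,
\]
where $\mathcal{X}^{(n)}_\infty(s):=\E_{\mu^{(n)}_\infty}[e^{i\langle\phi^{(n)}_\infty,h\rangle}]$. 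By \Cref{prop:uniform-N}, $C_{\mathcal{X}}$ (which depends on $h$ through $\|\Delta h\|_{L^1}$) and the thresholds are free of $n$. So the column limits $\Lambda\to\infty$ exist uniformly in $n$. This uniformity is the only place real work is needed, and it has already been done in \Cref{prop:uniform-N}. The one point I would check carefully is that the almost-sure event there does not depend on $h$. It does not, because it is the event of \Cref{lem:unif-gg}, and $h$ enters only through deterministic constants.

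Moore--Osgood then says the following. Uniform convergence in $n$ as $\Lambda\to\infty$, together with the existence of each row limit, implies that the limit $\lim_\Lambda\lim_n$ exists, that $\lim_n\mathcal{X}^{(n)}_\infty(s)$ exists, and that the two agree. The Cauchy argument is short enough to write out directly. The row-limit values $y_\Lambda:=\lim_n\mathcal{X}^{(n)}_\Lambda$ satisfy $|y_\Lambda-y_{\Lambda'}|\le 4C_{\mathcal{X}}\sum_{j>\min}2^{-2j}j^3$, obtained by passing to the limit in $n$. Hence $y_\Lambda$ converges to some $y$. Then
\[
|\mathcal{X}^{(n)}_\infty-y|\le|\mathcal{X}^{(n)}_\infty-\mathcal{X}^{(n)}_\Lambda|+|\mathcal{X}^{(n)}_\Lambda-y_\Lambda|+|y_\Lambda-y|,
\]
and choosing $\Lambda$ large and then $n$ large makes each term small. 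Call the common value $\mathcal{X}^{(\infty)}_\infty(s)$.

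For the remaining claims, $\mathcal{X}^{(\infty)}_\infty(0)=1$ is immediate, since every $\mathcal{X}^{(n)}_\Lambda(0)=1$. For continuity, I would show that the tail bound gives uniform convergence of $y_\Lambda(s)$ on compact sets of $s$. For $|s|\le S$, the constant $C_{\mathcal{X}}$ for $h=\sum_k s_kf_k$ is bounded by a constant times $S\max_k\|\Delta f_k\|_{L^1}$, because it is affine in $\|\Delta h\|_{L^1}$. Each $y_\Lambda$ is a characteristic function of a random vector, so it is continuous in $s$. A locally uniform limit of continuous functions is continuous. Finally, the Lévy continuity theorem \cite[Thm.~6.23]{Kallenberg2021} applied along either iterated limit gives weak convergence of the laws on $\R^r$ to the law with characteristic function $\mathcal{X}^{(\infty)}_\infty$. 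In one order the intermediate objects are the laws under $P^{(\infty)}_\Lambda$, and in the other the laws under $\mu^{(n)}_\infty$ from \Cref{thm:d2-existence}(c).
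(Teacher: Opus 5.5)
Your proposal is correct and follows the paper's proof essentially step for step. You take the width limits from \Cref{lem:width-smeared}, make the cutoff limits uniform in $n$ through \eqref{eq:char-rate} and \Cref{prop:uniform-N}, apply Moore--Osgood (which you unpack as a Cauchy argument), get continuity from locally uniform convergence of the characteristic functions $\mathcal{X}^{(\infty)}_\Lambda$, and finish with L\'evy continuity in each order. One trivial slip: because $C_{\mathcal{X}}$ is affine in $\|\Delta h\|_{L^1}$, on $|s|\le S$ it is bounded by a constant times $1+S\max_k\|\Delta f_k\|_{L^1}$, not by a multiple of $S\max_k\|\Delta f_k\|_{L^1}$; the weaker bound is still all your continuity argument needs.
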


\begin{proof}
Work on the intersection of the probability-one events of \Cref{lem:unif-gg} and \Cref{thm:infwidth}, and fix $r$ and $f_1,\dots,f_r$. At each cutoff, \Cref{lem:width-smeared} gives the width limit $\mathcal{X}^{(n)}_\Lambda\to\mathcal{X}^{(\infty)}_\Lambda$, where $\mathcal{X}^{(\infty)}_\Lambda$ is the characteristic function of $(\langle\phi,f_k\rangle)_{k\le r}$ under $P^{(\infty)}_\Lambda$.  For the cutoff limit at each width, write $\mu^{(n)}_\infty$ for the measure $\mu_\infty$ of \Cref{thm:d2-existence} at the widths~\eqref{eq:width-schedule}, and $\langle\phi_\infty^{(n)},f\rangle$ for the corresponding series $\langle\phi_\infty,f\rangle$.  Part~(c) of that theorem gives $\mathcal{X}^{(n)}_\Lambda\to\mathcal{X}^{(n)}_\infty$ as $\Lambda\to\infty$, where $\mathcal{X}^{(n)}_\infty$ is the characteristic function of $(\langle\phi_\infty^{(n)},f_k\rangle)_{k\le r}$ under $\mu^{(n)}_\infty$.  The cutoff limit is uniform in the width.  Indeed, $\mathcal{X}^{(n)}_\Lambda(s)$ is the characteristic function of $\langle\phi_\Lambda^{(n)},h\rangle$ at $1$, with $h:=\sum_ks_kf_k$, so~\eqref{eq:char-rate} applied to $h$ gives, for $\Lambda\ge j_4$,
\begin{equation}\label{eq:unif-width}
    \big|\mathcal{X}^{(n)}_\Lambda(s)-\mathcal{X}^{(n)}_\infty(s)\big|
    \;\le\; 2C_{\mathcal{X}}\sum_{j>\Lambda}2^{-2j}j^3\;,
\end{equation}
in which $C_{\mathcal{X}}$, computed for $h$, does not depend on $n$ (\Cref{prop:uniform-N}) and is bounded for $s$ in compact subsets of $\R^r$.

The Moore--Osgood theorem \cite[Thm.~7.11]{Rudin1976} therefore applies, with $\Lambda$ as its sequence index and $1/n$ ranging over $\{1/n:n\ge1\}$, which has $0$ as a limit point.  Its uniformity hypothesis is~\eqref{eq:unif-width}, and its limit hypothesis is the width limit $\mathcal{X}^{(n)}_\Lambda\to\mathcal{X}^{(\infty)}_\Lambda$.  The theorem then yields~\eqref{eq:interchange}, with common value $\mathcal{X}^{(\infty)}_\infty(s)=\lim_\Lambda\mathcal{X}^{(\infty)}_\Lambda(s)=\lim_n\mathcal{X}^{(n)}_\infty(s)$.

Letting $n\to\infty$ in~\eqref{eq:unif-width} preserves the bound, so $\mathcal{X}^{(\infty)}_\Lambda\to\mathcal{X}^{(\infty)}_\infty$ uniformly on compact subsets of $\R^r$.  Each $\mathcal{X}^{(\infty)}_\Lambda$ is a characteristic function and hence continuous, so the limit is continuous. Furthermore, $\mathcal{X}^{(\infty)}_\infty(0)=1$ because $\mathcal{X}^{(n)}_\Lambda(0)=1$ for every $n$ and $\Lambda$.  The L\'evy continuity theorem \cite[Thm.~6.23]{Kallenberg2021}, applied to $\mathcal{X}^{(\infty)}_\Lambda\to\mathcal{X}^{(\infty)}_\infty$ and to $\mathcal{X}^{(n)}_\infty\to\mathcal{X}^{(\infty)}_\infty$, gives in each case weak convergence on $\R^r$ to the law with characteristic function $\mathcal{X}^{(\infty)}_\infty$.
\end{proof}

\subsection{Identification with the standard measure}
\label{sec:d2-identify}

\Cref{thm:interchange} shows that the two limits agree, but not which measure their common value describes.  The classical constructions regularize by mollification or by a lattice, and the Littlewood--Paley truncation used here is neither.  The identification below uses two properties of the truncation.  Its covariance converges to $Q$ in $L^4(\R^2)$ at rate $2^{-\Lambda/2}$, and the exponential of the interaction is bounded in $L^2$ uniformly in the cutoff, by the partition-function bound of \Cref{thm:d2-convergence} at coupling $2t$.  Glimm and Jaffe prove the corresponding two estimates, convergence of the Wick-ordered interaction and a uniform bound on its exponential, for the mollified field \cite[Prop.~8.5.1]{GlimmJaffe1987} and for the lattice \cite[\S9.6]{GlimmJaffe1987}.

To avoid confusion with the neural network fields, everything belonging to the free field of covariance $Q$ is written upright.  We write $\upphi_Q$ for the Gaussian measure on $\cS'(\R^2)$ with covariance $Q=(-\Delta+m^2)^{-1}$ and realize its field as
\begin{equation}\label{eq:phi-shells}
    \upphi \;=\; \sum_{j\ge0}\upphi^{(j)}\;,
    \qquad
    \upphi_{\le\Lambda} \;:=\; \sum_{j\le\Lambda}\upphi^{(j)}\;,
\end{equation}
the $\upphi^{(j)}$ independent centered Gaussian fields with covariances $Q_j$.  The sum has covariance $Q=\sum_{j\ge0}Q_j$ and hence law $\upphi_Q$.  \Cref{prop:gff}(a) constructs this measure, and \Cref{prop:gff}(b) gives the decomposition.  The partial sum $\upphi_{\le\Lambda}$ is a centered Gaussian process with covariance $Q_{\le\Lambda}$, so~\eqref{eq:trunc-incr} and the Kolmogorov--Chentsov criterion \cite[Thm.~4.23]{Kallenberg2021} give it a continuous version, whose law on $\cC(\R^2)$ is $P^{0,(\infty)}_\Lambda$.  We accordingly read $P^{(\infty)}_\Lambda$ as a law for $\upphi_{\le\Lambda}$.  Since $\upphi_{\le\Lambda}(x)$ has variance $D_\Lambda$ at every $x$, $\E\int|\upphi_{\le\Lambda}f|\,dx\le D_\Lambda^{1/2}\|f\|_{L^1}$, and the pairing $\langle\upphi_{\le\Lambda},f\rangle$ converges absolutely almost surely.

\begin{theorem}[Identification with the standard $\phi^4_2$ measure]
\label{thm:LP-standard}
Let $d=2$ and $t>0$, and write $V(\,\cdot\,;\,\cdot\,)$ for the field-space interaction~\eqref{eq:V-field}.  Then:
\begin{enumerate}[label=(\alph*)]
\item $Q\in L^4(\R^2)$ and $\|Q-Q_{\le\Lambda}\|_{L^4(\R^2)}\le c\,2^{-\Lambda/2}$.  Consequently the interaction $V(\upphi_{\le\Lambda};D_\Lambda)$ converges in $L^2(d\upphi_Q)$, and we write
\begin{equation}\label{eq:wick-limit}
    V\big(\upphi_{\le\Lambda};D_\Lambda\big)
    \;\xrightarrow[\Lambda\to\infty]{L^2(d\upphi_Q)}\;
    V\;=:\;t\int g(x):\!\upphi(x)^4\!:_Q dx\;.
\end{equation}
The right side is the smeared Wick power, defined by this limit, since $:\!\upphi(x)^4\!:_Q$ exists only after integration against a test function. The same limit arises from mollifying the field, as in \cite[Prop.~8.5.1]{GlimmJaffe1987}, and $V$ is a function of $\upphi$.
\item Let $\nu$ be the finite-volume Wick-ordered $\phi^4_2$ measure built from the full propagator,
\begin{equation}\label{eq:phi42-standard}
    d\nu\;=\;\frac{e^{-V}}{\int e^{-V}d\upphi_Q}\,d\upphi_Q\;.
\end{equation}
For all $f_1,\dots,f_r\in\cS(\R^2)$ the law of $\big(\langle\upphi_{\le\Lambda},f_k\rangle\big)_{k\le r}$ under $P^{(\infty)}_\Lambda$ converges weakly on $\R^r$, as $\Lambda\to\infty$, to the law of $\big(\langle\upphi,f_k\rangle\big)_{k\le r}$ under $\nu$.  Consequently, almost surely in $\omega^\infty$, the common value $\mathcal{X}^{(\infty)}_\infty$ of~\eqref{eq:interchange} is the characteristic function of $\big(\langle\upphi,f_k\rangle\big)_{k\le r}$ under $\nu$, and the law of $\big(\langle\phi_\infty^{(n)},f_k\rangle\big)_{k\le r}$ under $\mu^{(n)}_\infty$ converges weakly to the same law as $n\to\infty$.
\end{enumerate}
\end{theorem}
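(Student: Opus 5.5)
For part (a), I will first bound $\|Q-Q_{\le\Lambda}\|_{L^4}$ by summing shell bounds, $\|Q-Q_{\le\Lambda}\|_{L^4}\le\sum_{j>\Lambda}\|Q_j\|_{L^4}$. By \Cref{lem:shell-dim}(b),(c) in $d=2$, $\|Q_j\|_{L^1}=O(2^{-2j})$ and $\|Q_j\|_\infty=q_j=O(1)$. Interpolating gives $\|Q_j\|_{L^4}\le\|Q_j\|_\infty^{3/4}\|Q_j\|_{L^1}^{1/4}=O(2^{-j/2})$, and the geometric sum gives $c\,2^{-\Lambda/2}$. The same estimate for $j\le\Lambda$ shows $\sum_j\|Q_j\|_{L^4}<\infty$. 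The series therefore converges in $L^4$, and its limit is $Q$ since it converges to $Q$ pointwise off the origin, so $Q\in L^4$.

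For the $L^2$ convergence of the interaction, I will use Wick's theorem (Hermite orthogonality for jointly Gaussian variables). For $\Lambda<\Lambda'$, both $H_4(\upphi_{\le\Lambda}(x);D_\Lambda)$ and $H_4(\upphi_{\le\Lambda'}(y);D_{\Lambda'})$ are fourth Wick powers, and their covariance is $24\,\Cov(\upphi_{\le\Lambda}(x),\upphi_{\le\Lambda'}(y))^4=24\,Q_{\le\Lambda}(x-y)^4$. Hence
\[
\E\big|V_{\Lambda'}-V_\Lambda\big|^2=24t^2\iint g(x)g(y)\big[Q_{\le\Lambda'}^4-2Q_{\le\Lambda}^4+Q_{\le\Lambda}^4\big](x-y)\,dx\,dy ,
\]
which equals $24t^2\iint g\,g\,(Q_{\le\Lambda'}^4-Q_{\le\Lambda}^4)$. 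Factoring $a^4-b^4$ and applying H\"older (with $\|Q_{\le\Lambda}\|_{L^4}$ bounded uniformly) together with Young, $\iint g(x)g(y)|F(x-y)|\le\|g\|_\infty\|g\|_1\|F\|_{L^1}$, bounds this by $C\|Q_{\le\Lambda'}-Q_{\le\Lambda}\|_{L^4}=O(2^{-\Lambda/2})$. So the sequence is Cauchy in $L^2(d\upphi_Q)$. It is $\upphi$-measurable because every $\upphi_{\le\Lambda}$ is. Agreement with the mollified limit follows from the same covariance computation applied to the cross term.

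For part (b), I will set $V_\Lambda:=V(\upphi_{\le\Lambda};D_\Lambda)$ and aim for $e^{-V_\Lambda}\to e^{-V}$ in $L^1(d\upphi_Q)$. Along a subsequence, $V_\Lambda\to V$ almost surely, and uniform integrability follows from $\sup_\Lambda\E[e^{-2V_\Lambda}]<\infty$. That supremum bound is the main obstacle. Replacing $t$ by $2t$ turns $\E[e^{-2V_\Lambda}]$ into the infinite-width partition function $Z^{(\infty)}_\Lambda$ at coupling $2t$. By \Cref{thm:infwidth} (the denominator step), this is the limit as $n\to\infty$ of $Z^{(n)}_\Lambda$ at coupling $2t$. \Cref{thm:d2-convergence} and \Cref{prop:uniform-N}, which impose no smallness on the coupling, bound $\log Z^{(n)}_\Lambda\le\log Z_0+C_T\sum_j2^{-2j}j^3$ uniformly in $n$ and $\Lambda$; the bound on $Z_0$ is $e^{c_V}$, free of $n$. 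Hence $\sup_\Lambda Z^{(\infty)}_\Lambda(2t)<\infty$. With uniform integrability, Vitali's theorem gives full-sequence $L^1$ convergence, since every subsequence has a further subsequence converging to the same limit, and in particular $Z^{(\infty)}_\Lambda\to\int e^{-V}d\upphi_Q\ge1$.

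Next I will handle the characteristic functions. Write $h=\sum_k s_kf_k$. Then
\[
\E_{P^{(\infty)}_\Lambda}\big[e^{i\langle\upphi_{\le\Lambda},h\rangle}\big]=\E\big[e^{-V_\Lambda}e^{i\langle\upphi_{\le\Lambda},h\rangle}\big]\big/Z^{(\infty)}_\Lambda .
\]
To pass to the limit, I will split
\[
\big|e^{-V_\Lambda}e^{i\langle\upphi_{\le\Lambda},h\rangle}-e^{-V}e^{i\langle\upphi,h\rangle}\big|\le|e^{-V_\Lambda}-e^{-V}|+e^{-V}\big|e^{i\langle\upphi_{\le\Lambda},h\rangle}-e^{i\langle\upphi,h\rangle}\big| .
\]
The first term tends to zero in $L^1$. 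For the second, $\langle\upphi_{\le\Lambda},h\rangle\to\langle\upphi,h\rangle$ in $L^2(d\upphi_Q)$, because its variance is $\int|\hat h|^2\sum_{j>\Lambda}S_j\to0$. Cauchy--Schwarz with $e^{-V}\in L^2$ (from the uniform bound and Fatou) then finishes it. L\'evy continuity gives the weak convergence. Finally, $\mathcal X^{(\infty)}_\infty=\lim_\Lambda\mathcal X^{(\infty)}_\Lambda$ by \Cref{thm:interchange}, so it is the characteristic function under $\nu$. The last claim follows from the $n$-limit in \Cref{thm:interchange} together with L\'evy continuity.
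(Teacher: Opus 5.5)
Your route is the paper's: the $L^\infty$--$L^1$ interpolation for $\|Q_j\|_{L^4}$, Hermite orthogonality with the nested cross-covariance, and uniform integrability from $Z^{(\infty)}_\Lambda(2t)$ bounded as the width limit of the width-uniform telescoping estimate. Your endgame in (b) is a valid variant of the paper's convergence in probability plus $L^2$-boundedness of $F(\cdot)\,e^{-V_\Lambda}$. It uses characteristic functions, the split $|e^{-V_\Lambda}-e^{-V}|+e^{-V}|e^{ia}-e^{ib}|$, and Cauchy--Schwarz with $e^{-V}\in L^2$ by Fatou.

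\textbf{The measurability step fails.} You write ``It is $\upphi$-measurable because every $\upphi_{\le\Lambda}$ is.'' In the realization \eqref{eq:phi-shells}, $\upphi_{\le\Lambda}$ is \emph{not} a function of $\upphi$.
\begin{itemize}
\item Adjacent spectral functions $S_\Lambda$ and $S_{\Lambda+1}$ overlap on $2^\Lambda\le|p|\le2^{\Lambda+1}$.
\item Set $\psi_\Lambda:=(p^2+m^2)\sum_{j\le\Lambda}S_j$. The independent fields $\upphi_{\le\Lambda}$ and $\upphi-\upphi_{\le\Lambda}$ carry Fourier weights $\psi_\Lambda$ and $1-\psi_\Lambda$, and both are nonzero on a set of positive measure.
\item Consequently $\E[\langle\upphi_{\le\Lambda},f\rangle\mid\upphi]=\langle\upphi,g\rangle$ with $\hat g=\psi_\Lambda\hat f$. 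Its variance is $\int\frac{d^2p}{(2\pi)^2}\,\psi_\Lambda^2|\hat f|^2/(p^2+m^2)$.
\item This is strictly smaller than the variance of $\langle\upphi_{\le\Lambda},f\rangle$ whenever $\hat f$ does not vanish where $0<\psi_\Lambda<1$.
\end{itemize}
So neither $\upphi_{\le\Lambda}$ nor $V(\upphi_{\le\Lambda};D_\Lambda)$ is $\sigma(\upphi)$-measurable. The reason you give for $V$ being a function of $\upphi$ is therefore invalid. This clause is load-bearing: without it, $e^{-V}d\upphi_Q$ in \eqref{eq:phi42-standard} does not define a measure on $\cS'(\R^2)$, and (b) has no $\nu$ to identify.

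\textbf{The fix.} Use the mollifier comparison, which you mention only as a side remark. Take $V_\varepsilon=t\int g\,H_4\big(\upphi_\varepsilon;Q_\varepsilon(0)\big)\,dx$ with $\upphi_\varepsilon(x)=\langle\upphi,\varrho_\varepsilon(x-\cdot)\rangle$; this is a function of $\upphi$. The Hermite computation with cross-covariance $Q_{\le\Lambda}*\varrho_\varepsilon$, with all three kernels converging to $Q$ in $L^4$, shows $V_\varepsilon\to V$ in $L^2$. Hence $V$ lies in the closed subspace of $\sigma(\upphi)$-measurable elements of $L^2$. This is precisely why the paper carries out that comparison. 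With this repair, the rest of your proposal goes through.
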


\begin{proof}
\textit{Part (a).}  By \Cref{lem:shell-dim}(a)--(c) at $d=2$, the shell kernels obey $\|Q_j\|_{L^\infty}= q_j\le C$ and $\|Q_j\|_{L^1}\le C2^{-2j}$, so
\begin{equation}
    \|Q_j\|_{L^4}^4 \;\le\; \|Q_j\|_{L^\infty}^3\,\|Q_j\|_{L^1}\leq C^42^{-2j}\;,
    \qquad\text{whence}\qquad
    \|Q_j\|_{L^4} \;\le\; C\,2^{-j/2}\;.
\end{equation}
Summing the tail gives $\|Q-Q_{\le\Lambda}\|_{L^4}\le\sum_{j>\Lambda}\|Q_j\|_{L^4}\le c\,2^{-\Lambda/2}$. At $\Lambda=0$ this bound covers $\sum_{j\ge1}Q_j$, and $Q_0$, the Fourier transform of the smooth, compactly supported $S_0$, is a Schwartz function, so $Q\in L^4$.

Because the $\upphi^{(j)}$ in~\eqref{eq:phi-shells} are independent, for $\Lambda\le\Lambda'$ those above $\Lambda$ contribute nothing to the cross-covariance,
\begin{equation}\label{eq:nesting}
    \Cov\big(\upphi_{\le\Lambda}(x),\upphi_{\le\Lambda'}(y)\big)
    \;=\; Q_{\le\Lambda}(x-y)\;,
\end{equation}
and the covariance at coincident points gives $\Var\upphi_{\le\Lambda}(x)=Q_{\le\Lambda}(0)=D_\Lambda$.  For centered jointly Gaussian $X$ and $Y$, Hermite orthogonality \cite[Thm.~3.9]{Janson1997} gives
\begin{equation*}
    \E\big[H_4(X;\Var X)\,H_4(Y;\Var Y)\big]\;=\;4!\,\Cov(X,Y)^4\;.
\end{equation*}
Expanding the square below gives two diagonal terms and a cross term, each a double integral against $g(x)g(y)$ to which this identity applies. The diagonal terms contribute $Q_{\le\Lambda'}^4(x-y)$ and $Q_{\le\Lambda}^4(x-y)$, while the cross term's covariance is set by the smaller truncation by \eqref{eq:nesting}.  Hence,
\begin{equation}\label{eq:wick-cauchy}
    \E_{\upphi_Q}\Big[\big(V(\upphi_{\le\Lambda'};D_{\Lambda'})
    -V(\upphi_{\le\Lambda};D_\Lambda)\big)^{2}\Big]
    \;=\; 4!\,t^2\!\iint g(x)g(y)
    \big[Q_{\le\Lambda'}^4-Q_{\le\Lambda}^4\big](x-y)\,dx\,dy\;.
\end{equation}
It remains to bound the right side of~\eqref{eq:wick-cauchy}.  Since $\iint g(x)g(y)|h(x-y)|\,dx\,dy\le\|g\|_{L^\infty}\|g\|_{L^1}\|h\|_{L^1}$, it is enough to bound the kernel difference in $L^1$.  From $|a^4-b^4|\leq|a-b|(|a|+|b|)^3$ and H\"older with exponents $4$ and $4/3$,
\begin{equation}\label{eq:fourth-power}
    \big\|Q_{\le\Lambda'}^4-Q_{\le\Lambda}^4\big\|_{L^1}
    \;\le\; \big\|Q_{\le\Lambda'}-Q_{\le\Lambda}\big\|_{L^4}
    \big(\|Q_{\le\Lambda'}\|_{L^4}+\|Q_{\le\Lambda}\|_{L^4}\big)^3\;.
\end{equation}
The first factor is at most $\sum_{\Lambda<j\le\Lambda'}\|Q_j\|_{L^4}\le c\,2^{-\Lambda/2}$, while $\|Q_{\le\Lambda}\|_{L^4}\le\|Q\|_{L^4}+c$ keeps the second bounded uniformly in $\Lambda$ and $\Lambda'$. The right side of~\eqref{eq:wick-cauchy} is therefore $O(2^{-\Lambda/2})$ uniformly in $\Lambda'>\Lambda$. Thus,
\begin{equation*}
    \big\|V(\upphi_{\le\Lambda'};D_{\Lambda'})
    -V(\upphi_{\le\Lambda};D_\Lambda)\big\|_{L^2(d\upphi_Q)}
    \;=\;O\big(2^{-\Lambda/4}\big)\;,
\end{equation*}
so the family is Cauchy in $L^2(d\upphi_Q)$ and therefore converges to a limit $V$, which is~\eqref{eq:wick-limit}.

The same computation identifies this limit with the one of \cite[Prop.~8.5.1]{GlimmJaffe1987}, which mollifies the field.  Fix an even $\varrho\in C_c^\infty(\R^2)$, nonnegative with integral one, and set $\varrho_\varepsilon(z):=\varepsilon^{-2}\varrho(z/\varepsilon)$.  The mollified field $\upphi_\varepsilon(x):=\langle\upphi,\varrho_\varepsilon(x-\cdot)\rangle$ is a function of $\upphi$ with covariance $Q_\varepsilon:=\varrho_\varepsilon*Q*\varrho_\varepsilon$, and the corresponding interaction is $V_\varepsilon:=t\int g\,H_4\big(\upphi_\varepsilon;Q_\varepsilon(0)\big)\,dx$.  Because the shells are independent, the truncation and the mollified field have cross-covariance $\Cov\big(\upphi_{\le\Lambda}(x),\upphi_\varepsilon(y)\big)=(Q_{\le\Lambda}*\varrho_\varepsilon)(x-y)$, and Hermite orthogonality gives, as in~\eqref{eq:wick-cauchy},
\begin{equation}\label{eq:molli}
    \E_{\upphi_Q}\Big[\big(V(\upphi_{\le\Lambda};D_\Lambda)-V_\varepsilon\big)^2\Big]
    \;=\;4!\,t^2\!\iint g(x)g(y)\Big[Q_{\le\Lambda}^4-2\,(Q_{\le\Lambda}*\varrho_\varepsilon)^4+Q_\varepsilon^4\Big](x-y)\,dx\,dy\;.
\end{equation}
Convolution with $\varrho_\varepsilon$ is a contraction on $L^4(\R^2)$ that converges to the identity as $\varepsilon\to0$, so together with the tail bound above, all three kernels converge to $Q$ in $L^4(\R^2)$ as $\Lambda\to\infty$ and $\varepsilon\to0$.  Write the bracket as $\big[Q_{\le\Lambda}^4-(Q_{\le\Lambda}*\varrho_\varepsilon)^4\big]+\big[Q_\varepsilon^4-(Q_{\le\Lambda}*\varrho_\varepsilon)^4\big]$.  As in~\eqref{eq:fourth-power}, the $L^1$ norm of each term is at most a constant times the $L^4$ distance between its two kernels, and because these distances tend to zero, the right side of~\eqref{eq:molli} tends to zero. Thus $V$ is also the $L^2$ limit of the $V_\varepsilon$, which is $t$ times the smeared Wick power of \cite[Prop.~8.5.1]{GlimmJaffe1987}.

Each $V_\varepsilon$ is a function of $\upphi$, and therefore so is their $L^2$ limit.

\medskip\noindent\textit{Part (b).}  Fix $f_1,\dots,f_r\in\cS(\R^2)$ and a bounded continuous $F:\R^r\to\R$.  By~\eqref{eq:mu-std-def} the measure $P^{(\infty)}_\Lambda$ has density $e^{-V(\upphi_{\le\Lambda};D_\Lambda)}/Z^{(\infty)}_\Lambda$ against the law of $\upphi_{\le\Lambda}$, so the Gibbs ratio may be read on the space carrying $\upphi$,
\begin{equation}\label{eq:LP-ratio}
    \E_{P^{(\infty)}_\Lambda}\big[F\big]
    \;=\;\frac{\E_{\upphi_Q}\big[F\big(\langle\upphi_{\le\Lambda},f_k\rangle\big)\,
    e^{-V(\upphi_{\le\Lambda};D_\Lambda)}\big]}
    {\E_{\upphi_Q}\big[e^{-V(\upphi_{\le\Lambda};D_\Lambda)}\big]}\;.
\end{equation}
Equation~\eqref{eq:phi42-standard} presents $\E_{\nu}[F]$ as the same ratio with $\upphi$ in place of $\upphi_{\le\Lambda}$ and $V$ in place of $V(\upphi_{\le\Lambda};D_\Lambda)$.  With
\begin{equation*}
    \xi_\Lambda\;:=\;F\big(\langle\upphi_{\le\Lambda},f_k\rangle\big)\,
    e^{-V(\upphi_{\le\Lambda};D_\Lambda)}\;,
    \qquad
    \xi\;:=\;F\big(\langle\upphi,f_k\rangle\big)\,e^{-V}\;,
\end{equation*}
it suffices to show that $\xi_\Lambda\to\xi$ in $L^1(d\upphi_Q)$, and likewise at $F\equiv1$ for the denominator, since $|\E_{\upphi_Q}[\xi_\Lambda]-\E_{\upphi_Q}[\xi]|\le\|\xi_\Lambda-\xi\|_{L^1(d\upphi_Q)}$.  By~\cite[Thm.~5.12]{Kallenberg2021} at $p=1$, convergence in $L^1$ follows from convergence in probability together with uniform integrability of the $|\xi_\Lambda|$.

Convergence in probability holds factor by factor.  The difference $\langle\upphi-\upphi_{\le\Lambda},f_k\rangle$ is centered, with second moment
\begin{equation}
    \E_{\upphi_Q}\big|\langle\upphi-\upphi_{\le\Lambda},f_k\rangle\big|^2
    \;=\;\int\frac{d^2p}{(2\pi)^2}
    \Big(\sum_{j>\Lambda}S_j(p)\Big)\big|\hat f_k(p)\big|^2
    \;\le\;\int_{|p|\ge2^{\Lambda}}\frac{d^2p}{(2\pi)^2}\,
    \frac{\big|\hat f_k(p)\big|^2}{p^2+m^2}\;,
\end{equation}
because the shells above $\Lambda$ are supported in $|p|\ge2^{\Lambda}$.  The right side is the tail of a convergent integral and tends to zero as $\Lambda\to\infty$.  By Chebyshev's inequality, $\langle\upphi_{\le\Lambda},f_k\rangle\to\langle\upphi,f_k\rangle$ in probability for each $k$, and since $F$ is continuous, $F(\langle\upphi_{\le\Lambda},f_k\rangle)\to F(\langle\upphi,f_k\rangle)$ in probability.  Part~(a) gives $V(\upphi_{\le\Lambda};D_\Lambda)\to V$ in $L^2(d\upphi_Q)$, hence in probability, so $e^{-V(\upphi_{\le\Lambda};D_\Lambda)}\to e^{-V}$ in probability, and the product $\xi_\Lambda$ converges to $\xi$ in probability.

A family bounded in $L^2$ is uniformly integrable, so it suffices to bound the second moments of $\xi_\Lambda$ and of its $F\equiv1$ counterpart uniformly in $\Lambda$.

Since $|\xi_\Lambda|\le\|F\|_\infty\,e^{-V(\upphi_{\le\Lambda};D_\Lambda)}$ and squaring $e^{-V}$ doubles the coupling,
\begin{equation}\label{eq:LP-UI}
    \E_{\upphi_Q}|\xi_\Lambda|^2
    \;\le\;\|F\|_\infty^2\,\E_{\upphi_Q}\big[e^{-2V(\upphi_{\le\Lambda};D_\Lambda)}\big]
    \;=\;\|F\|_\infty^2\,Z^{(\infty)}_\Lambda(2t)\;.
\end{equation}
It is therefore enough to bound $Z^{(\infty)}_\Lambda(2t)$ uniformly in $\Lambda$.  Since $Z^{(\infty)}_\Lambda(2t)$ does not involve the features, it suffices to bound it along one realization of $\omega^\infty$ in the probability-one events of \Cref{lem:unif-gg} and \Cref{thm:infwidth}. At each width, \Cref{thm:d2-convergence} at coupling $2t$ and the base bound $Z^{(n)}_0(2t)\le e^{2c_V}$ from \Cref{lem:V-lower} and~\eqref{eq:d2-C1} give
\begin{equation*}
    \log Z^{(n)}_\Lambda(2t)
    \;\le\;\log Z^{(n)}_0(2t)+C_T(2t)\sum_{j\ge1}2^{-2j}j^3
    \;\le\;2c_V+C_T(2t)\sum_{j\ge1}2^{-2j}j^3\;,
\end{equation*}
where $C_T(2t)$ is the constant of \Cref{prop:d2-exp-int} at coupling $2t$, free of $n$ by \Cref{prop:uniform-N}.  The right side depends on neither $\Lambda$ nor $n$.  Because $Z^{(\infty)}_\Lambda(2t)$ is the width limit of $Z^{(n)}_\Lambda(2t)$ (\Cref{thm:infwidth}), the same bound holds for $\log Z^{(\infty)}_\Lambda(2t)$.  Both families are therefore uniformly integrable.  Together with the convergence in probability, we have $\xi_\Lambda\to\xi$ in $L^1(d\upphi_Q)$, and likewise for the denominators.

The denominators converge to $\int e^{-V}d\upphi_Q$.  Because each $V(\upphi_{\le\Lambda};D_\Lambda)$ is Wick ordered at the variance of $\upphi_{\le\Lambda}$ and has mean zero, its $L^2$ limit $V$ has mean zero too, and Jensen's inequality gives $\int e^{-V}d\upphi_Q\ge 1$.  Hence $\E_{P^{(\infty)}_\Lambda}[F]\to\E_{\nu}[F]$, which is the stated weak convergence on $\R^r$.

Applied to the real and imaginary parts of $u\mapsto\exp(i\sum_ks_ku_k)$, this weak convergence shows that $\mathcal{X}^{(\infty)}_\Lambda$ converges pointwise to the characteristic function of $\big(\langle\upphi,f_k\rangle\big)_{k\le r}$ under $\nu$, and~\eqref{eq:interchange} identifies that limit as $\mathcal{X}^{(\infty)}_\infty$.  The second weak limit is then the one given by \Cref{thm:interchange} for the other order of limits.
\end{proof}

\Cref{thm:LP-standard} writes $\nu$ in~\eqref{eq:phi42-standard} without showing that its density defines a probability measure.  The following corollary proves that it does and states the convergence for the neural-network measures $\mu^{(n)}_\Lambda$ themselves.

\begin{corollary}[The finite-volume $\phi^4_2$ measure]
\label{cor:d2-construction}
The measure $\nu$ of~\eqref{eq:phi42-standard} is a probability measure on $\cS'(\R^2)$.  It is absolutely continuous with respect to the free-field measure $\upphi_Q$, with density $e^{-V}/\int e^{-V}d\upphi_Q$ and normalization $\int e^{-V}d\upphi_Q\in[1,\infty)$.  Almost surely in $\omega^\infty$, for every $r$, all $f_1,\dots,f_r\in\cS(\R^2)$, and either order of the limits $\Lambda\to\infty$ and $n\to\infty$, the law of the smeared vector $\big(\langle\phi_\Lambda^{(n)},f_k\rangle\big)_{k\le r}$ under the neural-network measure $\mu^{(n)}_\Lambda$ of \Cref{def:nested} converges weakly on $\R^r$ to the law of $\big(\langle\upphi,f_k\rangle\big)_{k\le r}$ under $\nu$.
\end{corollary}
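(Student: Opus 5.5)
We split the corollary into two claims: that $\nu$ is a probability measure, and that the network measures converge to it. For the first, the density $e^{-V}$ is nonnegative and measurable, because \Cref{thm:LP-standard}(a) shows that $V$ is a function of $\upphi$ (an $L^2(d\upphi_Q)$ limit of measurable functions). The normalization $\mathcal Z:=\int e^{-V}d\upphi_Q$ is at least $1$ by the Jensen argument at the end of the proof of \Cref{thm:LP-standard}(b), since $V$ has mean zero. It remains to show $\mathcal Z<\infty$, and this is the only new estimate. Along a subsequence, $V(\upphi_{\le\Lambda};D_\Lambda)\to V$ $\upphi_Q$-almost surely, since $L^2$ convergence gives a.s.\ convergence along a subsequence. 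Fatou's lemma then gives
\[
\int e^{-V}d\upphi_Q\;\le\;\liminf_{\Lambda}\,Z^{(\infty)}_\Lambda(t)\;\le\;\sup_\Lambda Z^{(\infty)}_\Lambda(2t)^{1/2}\;,
\]
using Cauchy--Schwarz (or simply the coupling-$t$ version). The uniform bound on $\log Z^{(\infty)}_\Lambda(2t)$ proved inside \Cref{thm:LP-standard}(b) then makes the right side finite. Therefore $d\nu=e^{-V}\mathcal Z^{-1}d\upphi_Q$ has total mass one and is absolutely continuous with respect to $\upphi_Q$. Since $\upphi_Q$ is a measure on $\cS'(\R^2)$ (\Cref{prop:gff}(a)), so is $\nu$.

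For the convergence statement, we work on the intersection of the probability-one events of \Cref{lem:unif-gg}, \Cref{thm:infwidth} and \Cref{thm:interchange}. This intersection is a single event, independent of $r$ and $f_k$, because each of those theorems is stated for all test functions simultaneously. On it, \Cref{thm:interchange} says that along either order of limits the law of $(\langle\phi^{(n)}_\Lambda,f_k\rangle)_{k\le r}$ under $\mu^{(n)}_\Lambda$ converges weakly to the law with characteristic function $\mathcal X^{(\infty)}_\infty$. \Cref{thm:LP-standard}(b) identifies $\mathcal X^{(\infty)}_\infty$ with the characteristic function of $(\langle\upphi,f_k\rangle)_{k\le r}$ under $\nu$. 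Uniqueness of laws with a given characteristic function then finishes the argument.

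The main obstacle is bookkeeping rather than analysis. We must make sure that the finiteness of $\mathcal Z$ rests on the width-uniform bound at coupling $2t$ (or $t$), transferred to $Z^{(\infty)}_\Lambda$ through \Cref{thm:infwidth}, and that the almost-sure event does not depend on the choice of $f_k$. The first point was already handled in the proof of \Cref{thm:LP-standard}; we only need to cite it in Fatou form. The second follows because each cited theorem quantifies over all Schwartz functions inside its probability-one event.
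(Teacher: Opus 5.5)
Your proposal is correct and follows the paper's proof: you get measurability of the density from \Cref{thm:LP-standard}(a) and the Jensen lower bound on the normalization, and the convergence from \Cref{thm:interchange} combined with the identification of $\mathcal{X}^{(\infty)}_\infty$ in \Cref{thm:LP-standard}(b). The only variation is that you obtain $\int e^{-V}d\upphi_Q<\infty$ by Fatou along an almost surely convergent subsequence, which needs only a uniform bound on $Z^{(\infty)}_\Lambda$. The paper instead reads off the sharper identity $\int e^{-V}d\upphi_Q=\lim_\Lambda Z^{(\infty)}_\Lambda$ from the $L^1$ convergence $e^{-V(\upphi_{\le\Lambda};D_\Lambda)}\to e^{-V}$, which it has already established there.
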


\begin{proof}
The proof of \Cref{thm:LP-standard}(b) gives $e^{-V(\upphi_{\le\Lambda};D_\Lambda)}\to e^{-V}$ in $L^1(d\upphi_Q)$, and shows that
\begin{equation*}
    1\;\le\;\int e^{-V}\,d\upphi_Q\;=\;\lim_{\Lambda\to\infty}Z^{(\infty)}_\Lambda\;<\;\infty\;.
\end{equation*}
By \Cref{thm:LP-standard}(a), $V$ is a function of the field $\upphi$, so $e^{-V}/\int e^{-V}d\upphi_Q$ is a nonnegative measurable function on $\cS'(\R^2)$ with integral one against $\upphi_Q$, and~\eqref{eq:phi42-standard} defines a probability measure absolutely continuous with respect to $\upphi_Q$.

For the convergence, \Cref{thm:interchange} gives weak convergence along either order of limits to the law with characteristic function $\mathcal{X}^{(\infty)}_\infty$, and \Cref{thm:LP-standard}(b) identifies that law as the law of $\big(\langle\upphi,f_k\rangle\big)_{k\le r}$ under $\nu$.
\end{proof}

\section{Open problems}\label{sec:open}

The construction reaches finite volume in two dimensions.  We consider in turn what higher dimensions, infinite volume, the Osterwalder--Schrader axioms, and other models would require.

\subsection{Three and four dimensions}\label{sec:higher-dim}

Above two dimensions Wick ordering no longer removes every divergence, and the interaction~\eqref{eq:V-def} must be supplemented.  In three dimensions the one-shell perturbation of the quartic still decays, like $t\,2^{-j}$~\eqref{eq:quartic-pert-size}, but the interaction needs a mass counterterm (\Cref{rem:no-counterterms}).  The one-shell analysis of \Cref{sec:one-shell}, carried out for the Wick-ordered quartic alone, would have to be extended to include it.  One path forward is a counterterm that runs with the shell index, chosen at each shell to cancel the mass shift from integrating out that shell.  What the cancellation leaves involves the kernels of several shells at once, and it could be bounded by keeping track of the scale of each.

Scalar $\phi^4$ is known to be trivial in four dimensions, in the sense that the continuum limits of its lattice regularizations are Gaussian \cite{Aizenman2021}, as they are in higher dimensions \cite{Aizenman1981,Frohlich1982}.  An NN-FT proof of triviality may be within reach at small coupling, where convexity is available.  Because the one-shell perturbation is $O(t)$ at every shell in four dimensions, the Hessian bound makes every tilted measure uniformly log-concave once the coupling is small (\Cref{prop:hessian}).  The argument of \Cref{cor:tilted-moments} then holds the fluctuations of every shell field to the free-field scale uniformly in the background.  Unlike in two dimensions, the shell integrations do not become negligible.  At one loop, each lowers the effective quartic coupling by an amount proportional to its square, and summed over the shells these amounts drive the coupling at any fixed scale to zero as the cutoff is removed. Whether convexity of the tilted measure gives enough control to turn the one-loop picture into a proof of triviality is open.

\subsection{Infinite volume}

Every estimate of \Cref{sec:d2-setup,sec:d2-converge} holds at a fixed spatial cutoff $g$, and several of their constants grow with the volume.  The worst-case bound $e^{-V_J}\le e^{c_V(J+1)^2}$ of~\eqref{eq:d2-C1} has $c_V$ proportional to $\|g\|_1$, and through the returns to the free measure it enters the prefactor $\mathcal{K}$ of \Cref{thm:d2-moments} exponentially.  Because $\nu$ is the standard measure, infinite volume can be approached by any method, and the question here is whether NN-FT can reach it itself.  One route would be to make the estimates local in space, but each neuron of the present architecture is a plane wave spread over all of $\R^2$.  Features localized in position as well as in frequency, such as wavelets, would keep the shell structure while making each neuron local.  Whether the one-shell analysis survives that substitution is open.

\subsection{The Osterwalder--Schrader axioms}

Most of the axioms, Euclidean invariance and cluster decomposition among them, require the thermodynamic limit, but reflection positivity can be tested on the regularized measures themselves.  At infinite width and every finite cutoff $\Lambda$, the free field already fails it.  Its law $P^{0,(\infty)}_\Lambda$ is Gaussian with a continuous, Euclidean-invariant covariance kernel $Q_{\le\Lambda}$ that is finite at coincident points.  By a theorem of Harvey \cite[Thm.~2.1]{Harvey2026}, a field with these properties is reflection positive only if its two-point function is constant.  Ferko and Mutchler \cite{FerkoMutchler2026} trace the failure to Lehmann's spectral formula \cite[Thm.~6.2.4]{GlimmJaffe1987}, by which the two-point function of a Euclidean-invariant, reflection-positive theory is a superposition of free propagators, each of which diverges at coincident points in two or more dimensions. Both results say that a Euclidean-invariant, reflection-positive field in two or more dimensions has either a constant two-point function or infinite variance at a point.

However, Harvey's theorem does not apply to the quenched measures at finite width because they are not Euclidean invariant (\Cref{sec:shell-fields}).  The lattice regularization gives up rotation invariance but preserves reflection positivity \cite[\S9.6]{GlimmJaffe1987}, and whether an NN-FT construction can similarly retain reflection positivity at finite width or at finite cutoff is open.

\subsection{Other models}

The framework is available whenever the field can be organized into dyadic shells, each parametrized linearly by Gaussian coefficients.  The architecture must also give a sampled covariance that concentrates around the shell kernel, as in \Cref{thm:good-geom}, and a shell kernel whose row integral is $O(2^{-2j})$.  The nearest case is $P(\phi)_2$ \cite{Simon1974}, whose interaction is the Wick-ordered integral of a polynomial of degree $2n$ with positive leading coefficient.  Its second derivative in the shell field is at least $-C(1+\mathcal{D}_j)^{n-1}$, with $\mathcal{D}_j$ as in \Cref{prop:hessian} and $C$ depending only on the polynomial.  Rescaling the field by $\mathcal{D}_j^{1/2}$ turns that derivative into $\mathcal{D}_j^{\,n-1}$ times a polynomial of even degree with a fixed positive leading coefficient and bounded lower coefficients.  Such a polynomial is bounded below uniformly in $\mathcal{D}_j$.  The one-shell perturbation is then $O(j^{n-1}2^{-2j})$, and uniform log-concavity above a threshold follows as in \Cref{cor:BL}.  We expect the rest of the argument of \Cref{sec:d2-setup,sec:d2-converge} to follow the same lines, with higher moments of the background.

Whether the construction extends to gauge theories or sigma models is open.

\section*{Acknowledgments}
The author thanks Jim Halverson and Christian Ferko for helpful discussions, and especially Thomas Harvey for feedback on a previous version. This work is supported in part by the National Science Foundation under Cooperative Agreement PHY-2525568 (The NSF AI Institute for Artificial Intelligence and Fundamental Interactions). The author used Claude Opus 5.5 to draft and revise text, to check proofs, and to compare the paper's statements about the cited literature with the sources. The author reviewed and verified all such output and takes full responsibility for the content of this paper.

\bibliographystyle{unsrt}
\bibliography{main_d2_refs}

\clearpage
%=====================================================

\appendix
\crefalias{section}{appendix}
%=====================================================

\section{The free field and its recovery from the architecture}
\label{app:free-field}

This appendix constructs the Gaussian free field together with its decomposition into independent shells, which \Cref{sec:d2-identify} uses (\Cref{app:gff}).  It then shows that the architecture of \Cref{sec:arch} reproduces the free field once the widths are sent to infinity, in every dimension $d\ge2$ and with no renormalization (\Cref{app:recovery}).

Throughout, $d\ge2$, the test functions $f$ are real-valued Schwartz functions, and $\langle h,f\rangle:=\int h(x)f(x)\,d^dx$ is the pairing of \Cref{sec:d2-existence} in $d$ dimensions.  We write $\hat{f}(p):=\int f(x)\,e^{-ip\cdot x}\,d^dx$ for the Fourier transform, and
\begin{equation}\label{eq:fQ-norm}
    \|f\|_Q^2 \;:=\; \int_{\R^d}\frac{d^dp}{(2\pi)^d}\,
    \frac{|\hat f(p)|^2}{p^2+m^2}
    \;=\;\iint f(x)\,Q(x-y)\,f(y)\,d^dx\,d^dy
\end{equation}
for the variance that a Gaussian field of covariance $Q$ assigns to $\langle\,\cdot\,,f\rangle$.  It is finite because $\hat f$ is a Schwartz function and $m>0$.  Half the corresponding variance for the shell kernel $Q_j$ is
\begin{equation}\label{eq:theta-j}
    \theta_j \;:=\; \frac12\int_{\R^d}\frac{d^dp}{(2\pi)^d}\,
    S_j(p)\,\big|\hat f(p)\big|^2 \;\ge\;0\;,
    \qquad
    \sum_{j\ge0}\theta_j \;=\; \tfrac12\|f\|_Q^2\;,
\end{equation}
where the second identity follows from~\eqref{eq:dyadic} by monotone convergence, since the $S_j$ are nonnegative.

\subsection{The free field and its shells}\label{app:gff}

\begin{proposition}[The free field and its shell decomposition]\label{prop:gff}
\leavevmode
\begin{enumerate}[label=(\alph*)]
\item There is a unique probability measure $\upphi_Q$ on $\cS'(\R^d)$ with
\begin{equation}\label{eq:minlos}
    \int_{\cS'(\R^d)}e^{i\langle\upphi,f\rangle}\,d\upphi_Q
    \;=\; e^{-\|f\|_Q^2/2}
    \qquad\text{for every }f\in\cS(\R^d)\;,
\end{equation}
namely the Gaussian free field of covariance $Q=(-\Delta+m^2)^{-1}$.
\item The shell fields can be realized on one probability space as independent centered Gaussian fields $\upphi^{(j)}$ with covariances $Q_j$, together with a field $\upphi$ of law $\upphi_Q$, such that for every $f\in\cS(\R^d)$ the series $\sum_{j\ge0}\langle\upphi^{(j)},f\rangle$ converges to $\langle\upphi,f\rangle$ almost surely and in $L^2$.  The truncations $\upphi_{\le\Lambda}=\sum_{j\le\Lambda}\upphi^{(j)}$ have covariance $Q_{\le\Lambda}$.  This is the decomposition~\eqref{eq:phi-shells}.
\end{enumerate}
\end{proposition}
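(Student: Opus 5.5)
For part (a) I would use the Bochner--Minlos theorem on the nuclear space $\cS(\R^d)$.  The functional $\mathcal{C}(f):=e^{-\|f\|_Q^2/2}$ has $\mathcal{C}(0)=1$.  It is positive definite because $\|\cdot\|_Q^2$ is a positive semidefinite quadratic form.  Concretely, $\sum_{a,b}z_a\bar z_b\,\mathcal{C}(f_a-f_b)\ge0$ follows by realizing the values of a finite family of test functions as a Gaussian vector with covariance matrix $\big(\langle f_a,Qf_b\rangle\big)_{a,b}$.  This matrix is positive semidefinite by the Fourier representation~\eqref{eq:fQ-norm}.  Continuity on $\cS(\R^d)$ comes from the bound $\|f\|_Q^2\le m^{-2}(2\pi)^{-d}\|\hat f\|_{L^2}^2=m^{-2}\|f\|_{L^2}^2$, since the $L^2$ norm is dominated by a Schwartz seminorm.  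Minlos then gives a probability measure on $\cS'(\R^d)$ with characteristic functional $\mathcal{C}$.  Uniqueness holds because a measure on $\cS'(\R^d)$ is determined by its finite-dimensional distributions: the cylinder $\sigma$-algebra generated by the pairings coincides with the Borel $\sigma$-algebra, and the finite-dimensional laws are fixed by~\eqref{eq:minlos} through the Cram\'er--Wold device.

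For part (b) I would apply the same argument to each shell.  Replacing $1/(p^2+m^2)$ by $S_j$ gives the seminorm $2\theta_j$ of~\eqref{eq:theta-j}, which is again continuous, so Minlos gives Gaussian measures $\upphi^{(j)}_{Q_j}$ on $\cS'(\R^d)$ with covariance $Q_j$.  The product of these over $j\ge0$, which exists by Kolmogorov/Daniell as in \Cref{thm:d2-existence}(b), realizes independent fields $\upphi^{(j)}$ on one space.  For a fixed $f$, the variables $X_j:=\langle\upphi^{(j)},f\rangle$ are independent, centered, and Gaussian with variances $2\theta_j$.  Since $\sum_j2\theta_j=\|f\|_Q^2<\infty$, Kolmogorov's two-series (or $L^2$-martingale) theorem gives convergence of $\sum_jX_j$ almost surely and in $L^2$.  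The covariance of the truncation $\upphi_{\le\Lambda}$ is $Q_{\le\Lambda}$ by independence.

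The main obstacle is producing a single random distribution $\upphi$ whose pairings are these limits simultaneously for all $f$, rather than a family of limits defined $f$ by $f$ with $f$-dependent null sets.  I would handle this using the regularity of the series in Sobolev spaces.  Since $\E\|\upphi^{(j)}\|_{H^{-s}_{\mathrm{loc}}}^2$ is controlled by $\int S_j(p)(1+|p|^2)^{-s}\,dp$, I would work in a weighted negative Sobolev space $\mathcal{H}:=(1+|x|^2)^{k}H^{-s}(\R^d)$ with $s,k$ large enough that the series $\sum_j\E\|\upphi^{(j)}\|_{\mathcal{H}}^2$ converges.  For this I would use the weight $(1+|x|^2)^{-k}$ with $2k>d$, and $s$ with $\sum_j 2^{dj}2^{-2j}2^{-2sj}<\infty$.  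By the same two-series argument in the Hilbert space $\mathcal{H}$ (independent centered summands with summable second moments), the partial sums $\upphi_{\le\Lambda}$ converge almost surely in $\mathcal{H}\hookrightarrow\cS'(\R^d)$ to a limit $\upphi$.  Pairing with any $f\in\cS$ is continuous on $\mathcal{H}$, so $\langle\upphi,f\rangle=\sum_jX_j$ almost surely, with one null set that serves every $f$.

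Finally, I would identify the law of $\upphi$.  By independence, $\E e^{i\langle\upphi,f\rangle}=\prod_je^{-\theta_j}=e^{-\|f\|_Q^2/2}$, using the identity $\sum_j\theta_j=\tfrac12\|f\|_Q^2$ from~\eqref{eq:theta-j} and dominated convergence.  The uniqueness in (a) then shows that $\upphi$ has law $\upphi_Q$.
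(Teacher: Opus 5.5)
Your proposal is correct and follows the paper's proof. Both use Bochner--Minlos for (a) and for each shell, take the shells independent on one space, and obtain $\upphi$ as the limit of the partial sums in a weighted negative-order Sobolev Hilbert space on which pairing with Schwartz functions is continuous, identifying its law by the uniqueness in (a). The differences are minor. The paper bounds the second moments via $\sum_j Q_j=Q\le m^{-2}$ (with $s>d/2$) and gets mean-square convergence in $H$, plus almost-sure convergence of each pairing separately. You use the annulus geometry, which allows $s>(d-2)/2$, and a Hilbert-space martingale argument, which gives one almost-sure limit serving every $f$ at once.
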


\begin{proof}
\textit{Part (a).}  By~\eqref{eq:fQ-norm}, $f\mapsto\|f\|_Q^2$ is a positive semidefinite quadratic form on $\cS(\R^d)$, so $f\mapsto e^{-\|f\|_Q^2/2}$ is positive definite and equals $1$ at $f=0$.  The Bochner--Minlos theorem \cite[Thm.~3.4.2]{GlimmJaffe1987} also asks that this functional be continuous on $\cS(\R^d)$.  Because $\|\cdot\|_Q$ is a norm and $(p^2+m^2)^{-1}\le m^{-2}$, for $f,f'\in\cS(\R^d)$
\begin{equation*}
    \big|\,\|f\|_Q-\|f'\|_Q\,\big|\;\le\;\|f-f'\|_Q
    \;\le\;m^{-1}\,\|f-f'\|_{L^2}\;,
\end{equation*}
and $\|\cdot\|_{L^2}$ is continuous on $\cS(\R^d)$.  The Bochner--Minlos theorem therefore gives a unique probability measure $\upphi_Q$ satisfying~\eqref{eq:minlos}, and by~\eqref{eq:fQ-norm} it is the centered Gaussian measure of covariance $Q$.

\medskip\noindent\textit{Part (b).}  Part~(a) with $S_j$ in place of $(p^2+m^2)^{-1}$ gives, for each $j$, a centered Gaussian measure on $\cS'(\R^d)$ with covariance $Q_j$.  Take the $\upphi^{(j)}$ independent with these laws.  A finite sum of independent centered Gaussian fields is centered Gaussian with the sum of the covariances, which for $\upphi_{\le\Lambda}$ is $Q_{\le\Lambda}$.

To construct $\upphi$, fix $s>d/2$, set $w(x):=(1+|x|^2)^{-d}$, and let $H$ be the Hilbert space of distributions $u$ with $\|u\|_H:=\|(1-\Delta)^{-s/2}(wu)\|_{L^2}<\infty$, which embeds continuously in $\cS'(\R^d)$.  Since the $S_j$ sum to $(p^2+m^2)^{-1}\le m^{-2}$ by~\eqref{eq:dyadic}, the covariance operators satisfy $\sum_jQ_j=Q\le m^{-2}$ on $L^2(\R^d)$, and therefore
\begin{equation*}
    \sum_{j\ge0}\E\big\|\upphi^{(j)}\big\|_H^2
    \;\le\;m^{-2}\,\|w\|_{L^2}^2\int_{\R^d}\frac{d^dp}{(2\pi)^d}\,(1+p^2)^{-s}
    \;<\;\infty\;.
\end{equation*}
Because the shells are independent and centered, for $\Lambda<\Lambda'$
\begin{equation*}
    \E\big\|\upphi_{\le\Lambda'}-\upphi_{\le\Lambda}\big\|_H^2
    \;=\;\sum_{\Lambda<j\le\Lambda'}\E\big\|\upphi^{(j)}\big\|_H^2\;,
\end{equation*}
a tail of the convergent series above.  The truncations are therefore Cauchy in mean square in $H$, and they converge in mean square to an $H$-valued random field $\upphi$.  Pairing with $f$ is continuous on $H$, so $\sum_{j\le\Lambda}\langle\upphi^{(j)},f\rangle\to\langle\upphi,f\rangle$ in $L^2$.  The summands are independent and centered, with variances $2\theta_j$ summing to $\|f\|_Q^2$ by~\eqref{eq:theta-j}, so the series also converges almost surely, and $\langle\upphi,f\rangle$ is centered Gaussian with variance $\|f\|_Q^2$.  By the uniqueness in part~(a), the law of $\upphi$ is $\upphi_Q$.
\end{proof}

\subsection{Recovery from the architecture}\label{app:recovery}

The architecture is that of \Cref{sec:shell-fields}, with the widths $N^{(n)}_j$ of the nested family~\eqref{eq:width-schedule}.  In this subsection $\E$ denotes the \emph{annealed} expectation, which averages over the coefficients and the features together, in contrast to the quenched $\E_{a_j}$ of \Cref{not:quench}.  For $f\in\cS(\R^d)$, the annealed characteristic functional of the field at cutoff $\Lambda$ is
\begin{equation}\label{eq:char-factorize}
    \upchi^{(n)}_\Lambda(f)
    \;:=\; \E\big[e^{i\langle\phi_\Lambda,f\rangle}\big]
    \;=\; \prod_{j=0}^{\Lambda}
    \E\big[e^{i\langle\varphi_j,f\rangle}\big]\;,
\end{equation}
where the factorization holds because the shells are independent.  It is written upright, like the free-field objects of \Cref{sec:d2-identify}, to distinguish it from the quenched $\mathcal{X}$ of \Cref{sec:interchange}.  At finite width the smeared field is not Gaussian, since it is a mixture over the features.

\begin{lemma}[Shell characteristic function and its width error]
\label{lem:shell-char}
For $f\in\cS(\R^d)$, write $I(b,c):=\int f(x)\cos(b\cdot x+c)\,d^dx$, and let $\E_{b,c}$ average over $(b,c)\sim\pi_j\otimes\mathrm{Unif}([-\pi,\pi])$.  Then
\begin{equation}\label{eq:shell-char}
    \E\big[e^{i\langle\varphi_j,f\rangle}\big]
    \;=\;\Big(\E_{b,c}\big[
    e^{-q_jI(b,c)^2/N^{(n)}_j}\big]\Big)^{N^{(n)}_j}
    \;>\;0\;,
\end{equation}
and its logarithm differs from $-\theta_j$, the logarithm of the free-field value, by at most a constant times $1/N^{(n)}_j$,
\begin{equation}\label{eq:shell-rate}
    0 \;\le\; \log\E\big[e^{i\langle\varphi_j,f\rangle}\big]+\theta_j
    \;\le\; \frac{\|f\|_{L^1}^2\,\theta_j}{2}\cdot
    \frac{q_j}{N^{(n)}_j}\;.
\end{equation}
\end{lemma}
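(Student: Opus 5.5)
The plan is to compute the characteristic function in two stages: first conditionally on the features, where the smeared shell field is Gaussian, and then by averaging over the features, which are i.i.d. The rate \eqref{eq:shell-rate} then follows from a two-sided Taylor bound on $e^{-x}$.

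\textbf{Conditional Gaussian computation.} By \Cref{def:shell-field}, the smeared shell field is
\begin{equation*}
    \langle\varphi_j,f\rangle\;=\;\sum_{\ell=1}^{N}a_{j,\ell}\sqrt{2q_j/(N\sigma^2)}\;I(b_{j,\ell},c_{j,\ell})\;,
\end{equation*}
writing $N:=N^{(n)}_j$. This is linear in the coefficients. Conditionally on the features it is therefore a centered Gaussian with variance $\sum_\ell 2q_jI(b_{j,\ell},c_{j,\ell})^2/N$, where the factor $\sigma^2$ cancels. The conditional characteristic function is thus
\begin{equation*}
    \prod_\ell e^{-q_jI(b_{j,\ell},c_{j,\ell})^2/N}\;.
\end{equation*}
Since the pairs $(b_{j,\ell},c_{j,\ell})$ are i.i.d., averaging over the features gives the $N$-th power in \eqref{eq:shell-char}. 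It is strictly positive because the integrand is positive.

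\textbf{First moment of the exponent.} Set $x:=q_jI(b,c)^2/N\ge0$. I will check that $\E_{b,c}[x]=\theta_j/N$.
\begin{enumerate}[label=(\roman*)]
\item Writing $\cos(b\cdot x+c)=\mathrm{Re}\,e^{i(b\cdot x+c)}$ gives $I(b,c)=\mathrm{Re}\big(e^{ic}\,\overline{\hat f(b)}\big)$, using that $f$ is real.
\item Averaging over the uniform phase, as in \Cref{lem:cov-mean}, gives $\E_c[I^2]=\tfrac12|\hat f(b)|^2$.
\item Averaging over $b\sim\pi_j=S_j/((2\pi)^dq_j)$ turns $q_j\E_{b,c}[I^2]$ into $\tfrac12\int S_j|\hat f|^2\,d^dp/(2\pi)^d=\theta_j$, by \eqref{eq:theta-j}.
\end{enumerate}

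\textbf{Two-sided bound.} The logarithm of \eqref{eq:shell-char} is $N\log\E_{b,c}[e^{-x}]$.
\begin{enumerate}[label=(\roman*)]
\item For the lower bound, Jensen's inequality gives $\log\E[e^{-x}]\ge-\E[x]=-\theta_j/N$. Multiplying by $N$ yields the left inequality of \eqref{eq:shell-rate}.
\item For the upper bound, use $e^{-x}\le1-x+x^2/2$ for $x\ge0$ together with $\log(1+u)\le u$. This gives $N\log\E[e^{-x}]\le-N\E[x]+\tfrac N2\E[x^2]$.
\item Bound $\E[x^2]\le(\sup x)\,\E[x]$. From $|I|\le\|f\|_{L^1}$ we get $\sup x\le q_j\|f\|_{L^1}^2/N$.
\item Together these give the extra term $\tfrac12\|f\|_{L^1}^2\theta_j\,q_j/N$, which is the right side of \eqref{eq:shell-rate}.
\end{enumerate}

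The only step needing care is the phase average in the first-moment identity, since it is what produces exactly $\theta_j$, the free-field value. The remaining steps are elementary inequalities.
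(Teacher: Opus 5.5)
Your proposal is correct and follows the paper's proof essentially step for step: the conditional Gaussian computation, the phase average giving $q_j\,\E_{b,c}[I(b,c)^2]=\theta_j$, Jensen for the lower bound, and $e^{-u}\le 1-u+\tfrac12u^2$ together with $\log(1+w)\le w$ and $\E[x^2]\le(\sup x)\,\E[x]$ for the upper bound. The only blemish is cosmetic: you use $x$ both for the spatial variable and for the exponent $q_jI(b,c)^2/N$, a clash the paper avoids by writing $X:=q_jI(b,c)^2$.
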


\begin{proof}
Write $N:=N^{(n)}_j$ and $X:=q_j\,I(b,c)^2\ge0$, so that the $N$ features of shell $j$ give independent copies $X_1,\dots,X_N$ of $X$.  Conditionally on the features, the smeared shell field
\begin{equation*}
    \langle\varphi_j,f\rangle
    \;=\;\sum_{\ell=1}^{N}a_{j,\ell}\,
    \sqrt{\frac{2q_j}{N\sigma^2}}\;I(b_{j,\ell},c_{j,\ell})
\end{equation*}
is a centered Gaussian of variance $\tfrac{2}{N}\sum_\ell X_\ell$, so its conditional characteristic function is $\prod_\ell e^{-X_\ell/N}$.  Averaging over the independent features gives $\E[e^{i\langle\varphi_j,f\rangle}]=\big(\E_{b,c}[e^{-X/N}]\big)^N$, which is~\eqref{eq:shell-char}.

Since $I(b,c)=\tfrac12\big(e^{ic}\hat f(-b)+e^{-ic}\hat f(b)\big)$, $\E_c[e^{\pm2ic}]=0$, and $\hat f(-b)=\overline{\hat f(b)}$ for real $f$, only the cross term of $I(b,c)^2$ survives the average over $c$, and $\E_c[I(b,c)^2]=\tfrac12|\hat f(b)|^2$.  Averaging over $b\sim\pi_j$ and multiplying by $q_j$ gives $\E_{b,c}[X]=\theta_j$, so the quantity in~\eqref{eq:shell-rate} is $N\log\E_{b,c}[e^{-X/N}]+\E_{b,c}[X]$.  Jensen's inequality for $u\mapsto e^{-u}$ shows that it is nonnegative.  For the upper bound, $e^{-u}\le1-u+\tfrac12u^2$ on $u\ge0$ gives
\begin{equation*}
    \E_{b,c}\big[e^{-X/N}\big]
    \;\le\; 1-\frac{\E_{b,c}[X]}{N}+\frac{\E_{b,c}[X^2]}{2N^2}\;,
\end{equation*}
and the right side is positive because it dominates the left.  Taking logarithms with $\log(1+w)\le w$ and multiplying by $N$ bounds the quantity by $\E_{b,c}[X^2]/(2N)$.  Finally, $|I(b,c)|\le\|f\|_{L^1}$ gives $X\le q_j\|f\|_{L^1}^2$, hence $\E_{b,c}[X^2]\le q_j\|f\|_{L^1}^2\,\theta_j$, which is~\eqref{eq:shell-rate}.
\end{proof}

\begin{theorem}[Free field recovery]\label{thm:free-field}
For every $f\in\cS(\R^d)$, both iterated limits of $\upchi^{(n)}_\Lambda(f)$ exist and agree,
\begin{equation}\label{eq:char-limit}
    \lim_{\Lambda\to\infty}\lim_{n\to\infty}
    \upchi^{(n)}_\Lambda(f)
    \;=\;\lim_{n\to\infty}\lim_{\Lambda\to\infty}
    \upchi^{(n)}_\Lambda(f)
    \;=\; e^{-\|f\|_Q^2/2}\;,
\end{equation}
which by~\eqref{eq:minlos} is the characteristic functional of the free field $\upphi_Q$.
\end{theorem}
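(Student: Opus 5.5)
The plan is to work with logarithms of the factorized characteristic functional~\eqref{eq:char-factorize}. By \Cref{lem:shell-char} every factor is positive, so
\begin{equation*}
    \log\upchi^{(n)}_\Lambda(f)
    \;=\;\sum_{j=0}^{\Lambda}\Big(-\theta_j+r^{(n)}_j\Big),
    \qquad
    0\;\le\; r^{(n)}_j\;\le\;\frac{\|f\|_{L^1}^2\,\theta_j\,q_j}{2N^{(n)}_j}\;.
\end{equation*}
Each limit then reduces to a statement about the double sequence $\sum_{j\le\Lambda}r^{(n)}_j$. The target value follows from~\eqref{eq:theta-j}: $\sum_{j\ge0}\theta_j=\tfrac12\|f\|_Q^2$, which is finite.

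The key estimate is a summable, $n$-uniform majorant for $r^{(n)}_j$. By the schedule~\eqref{eq:width-schedule}, $N^{(n)}_j\ge n\,c_0(j{+}1)2^{2dj}$. By \Cref{lem:shell-dim}(a), $q_j\le C2^{(d-2)j}$ for $j\ge1$, and $q_0$ is finite. Hence
\begin{equation*}
    r^{(n)}_j\;\le\;\frac{C\,\|f\|_{L^1}^2}{n}\,\theta_j\,2^{-(d+2)j}
    \;\le\;\frac{C\,\|f\|_{L^1}^2}{n}\,\theta_j\;.
\end{equation*}
The only facts used here are $\theta_j\ge0$ and $\sum_j\theta_j<\infty$; no decay of $\hat f$ beyond finiteness of $\|f\|_Q$ is needed. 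The same bound also shows $\sum_{j\ge0}r^{(n)}_j\le C\|f\|_{L^1}^2\|f\|_Q^2/n$.

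Now take the limits in turn. For fixed $\Lambda$, each of the finitely many $r^{(n)}_j$ tends to $0$ as $n\to\infty$. Therefore $\lim_n\upchi^{(n)}_\Lambda(f)=\exp(-\sum_{j\le\Lambda}\theta_j)$, and letting $\Lambda\to\infty$ gives $e^{-\|f\|_Q^2/2}$ by~\eqref{eq:theta-j}.

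In the other order, fix $n$. The series $\sum_j(-\theta_j+r^{(n)}_j)$ converges absolutely because both parts are dominated by multiples of $\theta_j$. So $\lim_\Lambda\upchi^{(n)}_\Lambda(f)=\exp\big(-\tfrac12\|f\|_Q^2+R^{(n)}\big)$ with $0\le R^{(n)}\le C\|f\|_{L^1}^2\|f\|_Q^2/n$. Letting $n\to\infty$ sends $R^{(n)}\to0$, and continuity of $\exp$ gives the same value.

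Alternatively, the $n$-uniform tail bound is exactly the uniformity hypothesis of the Moore--Osgood theorem \cite[Thm.~7.11]{Rudin1976}, as in \Cref{thm:interchange}, but the direct computation is simpler. The identification with $\upphi_Q$ is then~\eqref{eq:minlos} from \Cref{prop:gff}(a).

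Only one step needs care: the bound on $r^{(n)}_j$ must be uniform in $n$ and summable in $j$ simultaneously. The factor $q_j/N^{(n)}_j$ might look dangerous in $d\ge3$, where $q_j$ grows like $2^{(d-2)j}$. The width growth $2^{2dj}$ of the schedule more than compensates for this. Moreover, the surviving factor $\theta_j$ carries the summability, so no extra regularity of $f$ is required.
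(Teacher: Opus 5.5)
Your proposal is correct and follows the paper's proof essentially step by step. Like the paper, you take logarithms, bound the shell remainders by \Cref{lem:shell-char} together with $q_j/N^{(n)}_j\le C/(nc_0)$ and $\sum_j\theta_j=\tfrac12\|f\|_Q^2$ to obtain a total remainder in $[0,C'/n]$ uniformly in $\Lambda$, and then take the two iterated limits directly; the only cosmetic difference is that for $\Lambda\to\infty$ at fixed $n$ the paper invokes monotonicity of the nonnegative partial remainders where you invoke absolute convergence.
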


\begin{proof}
By~\eqref{eq:shell-char}, every factor in~\eqref{eq:char-factorize} is strictly positive, so $\upchi^{(n)}_\Lambda(f)>0$, and~\eqref{eq:char-factorize} gives
\begin{equation*}
    r_{n,\Lambda}\;:=\;\log\upchi^{(n)}_\Lambda(f)
    +\sum_{j\le\Lambda}\theta_j
    \;=\;\sum_{j\le\Lambda}\Big(
    \log\E\big[e^{i\langle\varphi_j,f\rangle}\big]+\theta_j\Big)\;.
\end{equation*}
By~\eqref{eq:shell-rate} each summand is nonnegative, so $r_{n,\Lambda}$ is nondecreasing in $\Lambda$, and summing the upper bounds with $\sum_j\theta_j=\tfrac12\|f\|_Q^2$ gives
\begin{equation}\label{eq:ann-rate}
    0 \;\le\; r_{n,\Lambda}
    \;\le\; \frac{\|f\|_{L^1}^2\,\|f\|_Q^2}{4}\,
    \sup_{j\ge0}\frac{q_j}{N^{(n)}_j}\;.
\end{equation}
By \Cref{lem:shell-dim}(a), $q_j\le C\,2^{(d-2)j}$, and the schedule~\eqref{eq:width-schedule} gives $N^{(n)}_j\ge nc_0(j{+}1)2^{2dj}$, so
\begin{equation*}
    \frac{q_j}{N^{(n)}_j}
    \;\le\;\frac{C}{n\,c_0}\,2^{-(d+2)j}
    \;\le\;\frac{C}{n\,c_0}\;.
\end{equation*}
Hence $0\le r_{n,\Lambda}\le C'/n$ with $C'$ free of $\Lambda$.  Taking $n\to\infty$ first, $r_{n,\Lambda}\to0$ at each fixed $\Lambda$, so the inner limit of $\log\upchi^{(n)}_\Lambda(f)$ is $-\sum_{j\le\Lambda}\theta_j$, and the outer one is $-\tfrac12\|f\|_Q^2$.  Taking $\Lambda\to\infty$ first, $r_{n,\Lambda}$ is nondecreasing and bounded, so it converges at each fixed $n$ to some $r_n\in[0,C'/n]$.  The inner limit is then $-\tfrac12\|f\|_Q^2+r_n$, and letting $n\to\infty$ gives $-\tfrac12\|f\|_Q^2$.  Exponentiating gives~\eqref{eq:char-limit}.
\end{proof}

\section{Derivation of the interpolation derivative}
\label{app:interpolation}

We derive~\eqref{eq:psi-prime}, with $\langle\,\cdot\,\rangle_\alpha$ the tilted expectation of \Cref{sec:d2-special} and the conditioning suppressed as in \Cref{not:quench}.

Recall that $\Psi_j(\alpha)=\log\E_{a_j}[e^{-X_{j,\alpha}}]$, with $X_{j,\alpha}$ as in~\eqref{eq:Xj-alpha}.  The integrand has $\alpha$-derivative $-(\partial_\alpha X_{j,\alpha})\,e^{-X_{j,\alpha}}$.  On an interval $[\epsilon,1]$ with $\epsilon>0$, the first factor is a polynomial in $a_j$ whose coefficients are bounded uniformly in $\alpha$, and the second is at most $e^{C_B}$ by \Cref{lem:X-lower}.  The derivative of the integrand is therefore bounded on all of $[\epsilon,1]$ by $e^{C_B}$ times a single polynomial in $|a_j|$, and that bound is integrable under the Gaussian prior.  By dominated convergence, differentiation under the expectation is justified on $[\epsilon,1]$, hence on $(0,1]$. The endpoint at $\alpha=0$ is recovered at the end of this appendix.  Differentiating,
\begin{equation}\label{eq:psi-deriv}
    \Psi_j'(\alpha)
    \;=\; -\big\langle\partial_\alpha X_{j,\alpha}
    \big\rangle_\alpha\;.
\end{equation}
The interpolation enters $X_{j,\alpha}$ only through the Wick constant $\alpha\,q^\omega_j$ and the field $\sqrt{\alpha}\,\varphi_j$, with
\begin{equation}\label{eq:alpha-derivs}
    \partial_\alpha\big(\alpha\,q^\omega_j\big) = q^\omega_j,
    \qquad
    \partial_\alpha\big(\sqrt\alpha\,\varphi_j\big)
    = \frac{\varphi_j}{2\sqrt\alpha}\;,
\end{equation}
so the chain rule, with $\partial_z T=4H_3^{(\alpha)}$ and $\partial_q T=-6H_2^{(\alpha)}$ from~\eqref{eq:H4-derivs} (both evaluated at $B=B_{j-1}(x)$, $D=D_{j-1}^\omega(x)$, $q=\alpha q^\omega_j(x)$, and $z=\sqrt{\alpha}\,\varphi_j(x)$), gives
\begin{equation}\label{eq:dalpha-Y4}
    \partial_\alpha X_{j,\alpha}
    \;=\; t\int g(x)\,\bigg[
    \frac{2\,\varphi_j(x)}{\sqrt{\alpha}}\,
    H_3^{(\alpha)}(x)
    - 6\,q^\omega_j(x)\,H_2^{(\alpha)}(x)\bigg]\,dx\;,
\end{equation}
where $\Phi_\alpha(x):=B_{j-1}(x)+\varphi_{j,\alpha}(x)$ and $H_n^{(\alpha)}(x):=H_n\big(\Phi_\alpha(x);\, D_{j-1}^\omega(x)+\alpha\,q^\omega_j(x)\big)$.  The first term is the field contribution (carrying the $\alpha^{-1/2}$ from~\eqref{eq:alpha-derivs}), and the second is the Wick-constant-shift contribution.

To simplify the field term we use Gaussian integration by parts.  For the prior $\gamma_\sigma=\cN(0,\sigma^2 I)$ and the tilt $e^{-X_{j,\alpha}}$, every polynomial $F(a_j)$ satisfies
\begin{equation}\label{eq:IBP}
    \big\langle a_{j,\ell}\,F\big\rangle_\alpha
    \;=\; \sigma^2\,\big\langle\partial_\ell F
    \big\rangle_\alpha
    - \sigma^2\,\big\langle F\,
    \partial_\ell X_{j,\alpha}
    \big\rangle_\alpha\;,
    \qquad
    \partial_\ell:=\partial/\partial a_{j,\ell}\;.
\end{equation}
With $\varphi_j(x)=\sum_\ell a_{j,\ell}\,U_{j,\ell}(x)$, the field term is of the form $\langle\varphi_j(x)\,F(x)\rangle_\alpha =\sum_\ell U_{j,\ell}(x)\,\langle a_{j,\ell}\,F(x)\rangle_\alpha$, and~\eqref{eq:IBP} splits $\langle a_{j,\ell}\,F(x)\rangle_\alpha$ into a first term $\sigma^2\langle\partial_\ell F\rangle$ and a second term $-\sigma^2\langle F\,\partial_\ell X\rangle$.  We now show that the first term cancels the Wick-shift contribution and that the second gives~\eqref{eq:psi-prime}.

In \eqref{eq:dalpha-Y4}, take $F=H_3^{(\alpha)}(x)$. Using $\partial_u H_3(u;C)=3H_2(u;C)$ and $\partial_\ell\Phi_\alpha(x)=\sqrt{\alpha}\,U_{j,\ell}(x)$, we have $\partial_\ell H_3^{(\alpha)}(x) =3\sqrt{\alpha}\,H_2^{(\alpha)}(x)\,U_{j,\ell}(x)$, so multiplying by $\sigma^2 U_{j,\ell}(x)$, summing over $\ell$, and using $\sigma^2\sum_\ell U_{j,\ell}(x)^2=q^\omega_j(x)$,
\begin{equation}\label{eq:first-IBP-quartic}
    \frac{2}{\sqrt{\alpha}}\sum_\ell U_{j,\ell}(x)\,\sigma^2\,
    \big\langle\partial_\ell H_3^{(\alpha)}(x)
    \big\rangle_\alpha
    \;=\; 6\,q^\omega_j(x)\,
    \big\langle H_2^{(\alpha)}(x)\big\rangle_\alpha\;,
\end{equation}
which exactly cancels the $-6\,q^\omega_j\,\langle H_2^{(\alpha)}\rangle_\alpha$ Wick-shift term.

Only the second term of~\eqref{eq:IBP} now survives:
\begin{equation}\label{eq:pre-assembly}
    \Psi_j'(\alpha)
    \;=\; \frac{2t\,\sigma^2}{\sqrt{\alpha}}\int g(x)
    \sum_\ell U_{j,\ell}(x)\,
    \big\langle H_3^{(\alpha)}(x)\,
    \partial_\ell X_{j,\alpha}\big\rangle_\alpha\,dx\;.
\end{equation}
Since $X_{j,\alpha}$ depends on $a_{j,\ell}$ only through $z=\sqrt\alpha\,\varphi_j$, with $\partial_z T=4H_3^{(\alpha)}$ and $\partial_\ell z=\sqrt\alpha\,U_{j,\ell}$,
\begin{equation}
    \partial_\ell X_{j,\alpha}
    \;=\; 4t\sqrt{\alpha}\int g(y)\,
    H_3^{(\alpha)}(y)\,
    U_{j,\ell}(y)\,dy\;.
\end{equation}
In \eqref{eq:pre-assembly}, the factors $2t\sigma^2/\sqrt\alpha$ and $4t\sqrt\alpha$ combine to $8t^2\sigma^2$, and $\sigma^2\sum_\ell U_{j,\ell}(x)\,U_{j,\ell}(y) =\Gamma_j^\omega(x,y)$ gives
\begin{equation}
    \Psi_j'(\alpha)
    \;=\; 8t^2\iint g(x)\,g(y)\,\Gamma_j^\omega(x,y)\,
    \big\langle H_3^{(\alpha)}(x)\,
    H_3^{(\alpha)}(y)\big\rangle_\alpha\,dx\,dy,
\end{equation}
which is~\eqref{eq:psi-prime}.

The factor $\alpha^{-1/2}$ in~\eqref{eq:dalpha-Y4} cancels against the $\sqrt\alpha$ from $\partial_\ell\Phi_\alpha=\sqrt\alpha\,U_{j,\ell}$, once in the first integration-by-parts term ($2/\sqrt\alpha\times\sqrt\alpha$) and once in the assembly ($2t\sigma^2/\sqrt\alpha\times 4t\sqrt\alpha$).  The right side of~\eqref{eq:psi-prime} is therefore continuous on the closed interval $[0,1]$, and together with $\Psi_j(0)=0$ and the continuity of $\Psi_j$ on $[0,1]$, this justifies the identity $\Psi_j(1)=\int_0^1\Psi_j'(\alpha)\,d\alpha$ used in \Cref{thm:d2-one-shell}.\qed

\section{The supremum of the background field}
\label{app:sup-tail}

\begin{proof}[Proof of \Cref{lem:d2-sup-tail}]
Under $\gamma_{J-1}$, for fixed $x,y\in K$ the background $B_{J-1}(x)$ and the increment $B_{J-1}(x)-B_{J-1}(y)$ are centered Gaussians with variances
\begin{subequations}\label{eq:d2-sup-var}
\begin{align}
    \Var\big(B_{J-1}(x)\big)
    &\;=\;D^\omega_{J-1}(x)\;\le\; c_qJ\;,
    \label{eq:d2-sup-var-a}\\
    \Var\big(B_{J-1}(x)-B_{J-1}(y)\big)
    &\;\le\; \Upsilon_{J-1}\,|x-y|^2\;.
    \label{eq:d2-sup-var-b}
\end{align}
\end{subequations}
The first is~\eqref{eq:sigmaJ} together with $D^\omega_{J-1}\le c_qJ$, and the second is \Cref{lem:incr-var} at index $J-1$.

For $k\ge J$ let $\Delta_k\subset K$ be a $2^{-k}$-net over $K$ with $|\Delta_k|\le c_K4^k$. We emphasize that the index $k$ labels spatial resolution, not a shell. Fix $x\in K$ and let $x_k\in\Delta_k$ be a nearest net point, so that $|x-x_k|\le2^{-k}$ and $|x_k-x_{k-1}|\le3\cdot2^{-k}$.  Since $B_{J-1}$ is a finite sum of cosines, it is continuous, so $B_{J-1}(x_k)\to B_{J-1}(x)$ and
\begin{equation*}
    B_{J-1}(x) \;=\; B_{J-1}(x_J)
    +\sum_{k>J}\Big[B_{J-1}(x_k)
    -B_{J-1}(x_{k-1})\Big]\;.
\end{equation*}
Split the threshold $s$ into $s/2$ for $B_{J-1}(x_J)$ and $u_k:=\tfrac{s}{8}\,2^{-(k-J)/2}$ for the $k$-th increment, so that $\sum_{k>J}u_k\le s/2$.  If $|B_{J-1}(x)|>s$, then either $\max_{\Delta_J}|B_{J-1}|>s/2$, or, for some $k>J$, a pair of points from $\Delta_{k-1}$ and $\Delta_k$ at distance at most $3\cdot2^{-k}$ has increment exceeding $u_k$ in absolute value. Neither event depends on $x$, so one of them occurs whenever $\sup_K|B_{J-1}|>s$.

By~\eqref{eq:d2-sup-var-a} and the Gaussian tail $\gamma(|Z|>u)\le2e^{-u^2/(2v)}$ for a centered Gaussian $Z$ with variance~$v$,
\begin{equation*}
    \gamma_{J-1}\Big(\{\max_{y\,\in\Delta_J}|B_{J-1}(y)|>s/2\}\Big)\leq2c_K4^Je^{-s^2/(8c_qJ)}\;.
\end{equation*}
At $k=J+m$ ($m\ge1$), there are at most $|\Delta_k|\cdot|\Delta_{k-1}|\leq c_K^2\,4^{2k}$ pairs, each with increment variance at most $9\Upsilon_{J-1}4^{-k}$ by~\eqref{eq:d2-sup-var-b}.  Then the probability that one of these pairs has increment exceeding $u_k$ in absolute value is at most
\begin{equation*}
    2c_K^2\,16^{J}\,16^{m}\,e^{-\xi\,2^{m}},
    \qquad
    \xi:=\frac{s^2}{3072\,q_{\max}}\;,
\end{equation*}
using $u_k^2\,4^{k}=\tfrac{s^2}{64}\,4^{J}2^{m}$ and $18\Upsilon_{J-1}\le48\,q_{\max}4^J$.  Set
\begin{equation*}
    c_2:=\frac{1}{3072}\;,
    \qquad
    C_K:=\max\{16,\,2c_K+4c_K^2\}\;,
\end{equation*}
and distinguish two cases according to the size of $\xi$.

If $\xi\ge2\ln16$, then $m\ln16\le\tfrac{\xi}{2}2^m$ gives $16^me^{-\xi2^m}\le e^{-\xi2^{m-1}}\le e^{-\xi m}$, and the levels sum to at most $\sum_{m\geq1}e^{-\xi m}\leq2e^{-\xi}$.  Collecting, and using $q_{\max}\le c_qJ$ in the second exponent,
\begin{equation*}
    \gamma_{J-1}\Big(\sup_K|B_{J-1}|>s\Big)
    \;\le\; 2c_K\,4^Je^{-s^2/(8c_qJ)}
    +4c_K^2\,16^{J}e^{-s^2/(3072\,q_{\max})}
    \;\le\; C_K\,16^J
    \exp\!\Big(-\frac{c_2\,s^2}{c_qJ}\Big)\;.
\end{equation*}
If instead $\xi<2\ln16$, then $c_2s^2/(c_qJ)\le\xi<2\ln16$. The right side of \eqref{eq:d2-sup-tail} then already exceeds $1$, so the bound holds trivially.
\end{proof}

\end{document}